\providecommand{\tabularnewline}{\\}
\theoremstyle{remark}
\newtheorem{rem}{\protect\remarkname}
\theoremstyle{plain}
\newtheorem{lyxalgorithm}{\protect\algorithmname}
\theoremstyle{plain}
\newtheorem{assumption}{\protect\assumptionname}
\theoremstyle{plain}
\newtheorem{lem}{\protect\lemmaname}
\theoremstyle{plain}
\newtheorem{prop}{\protect\propositionname}
\theoremstyle{plain}
\newtheorem{thm}{\protect\theoremname}
\numberwithin{equation}{section}
\providecommand{\tabularnewline}{\\}
\newcommand{\lep}{\stackrel{\mathrm{p}}{\preccurlyeq}}
\newcommand{\lec}{\preccurlyeq}
\newcommand{\nep}{\stackrel{\mathrm{p}}{\asymp}}
\newcommand{\convp}{\stackrel{\mathrm{p}}{\to}}
\newcommand{\convd}{\stackrel{\mathrm{d}}{\to}}
\renewcommand{\hat}{\widehat}
\renewcommand{\tilde}{\widetilde}
\date{}
\providecommand{\algorithmname}{Algorithm}
\providecommand{\assumptionname}{Assumption}
\providecommand{\lemmaname}{Lemma}
\providecommand{\propositionname}{Proposition}
\providecommand{\remarkname}{Remark}
\providecommand{\theoremname}{Theorem}
\begin{document}
\title{LASSO Inference for High Dimensional Predictive Regressions\thanks{Zhan Gao: \protect\url{zhangao@smu.edu}; Ji Hyung Lee: \protect\url{jihyung@illinois.edu};
Ziwei Mei (Corresponding author): \protect\url{ziweimei@um.edu.mo},
Faculty of Business Administration, University of Macau, Taipa, Macao
SAR, China; Zhentao Shi: \protect\url{zhentao.shi@cuhk.edu.hk}. 
}}
\author{Zhan Gao$^{a}$, Ji Hyung Lee$^{b}$, Ziwei Mei$^{c}$, Zhentao Shi$^{d}$
\\
\\
$^{a}$Department of Economics, Southern Methodist University \\
$^{b}$Department of Economics, University of Illinois at Urbana-Champaign\\
$^{c}$Faculty of Business Administration, University of Macau\\
$^{d}$Department of Economics, The Chinese University of Hong Kong}
\maketitle
\begin{abstract}
LASSO inflicts shrinkage bias on estimated coefficients, which undermines
asymptotic normality and invalidates standard inferential procedures
based on the \emph{t}-statistic. Given cross sectional data, the desparsified
LASSO has emerged as a well-known remedy for correcting the shrinkage
bias. In the context of high dimensional predictive regression, the
desparsified LASSO faces an additional challenge: the Stambaugh bias
arising from nonstationary regressors modeled as local unit roots.
To restore standard inference, we propose a novel estimator called
IVX-desparsified LASSO (XDlasso). XDlasso simultaneously eliminates
both shrinkage bias and Stambaugh bias and does not require prior
knowledge about the identities of nonstationary and stationary regressors.
We establish the asymptotic properties of XDlasso for hypothesis testing,
and our theoretical findings are supported by Monte Carlo simulations.
Applying our method to real-world applications from the FRED-MD database,
we investigate two important empirical questions: (i) the predictability
of the U.S. stock returns based on the earnings-price ratio, and (ii)
the predictability of the U.S. inflation using the unemployment.
\end{abstract}
\thispagestyle{empty}

\vspace{0.8cm}

\noindent Key words: Data-rich environment, Forecast, Hypothesis
testing, LASSO, Local unit root, Shrinkage 

\noindent JEL code: C22, C53, C55

\vspace{0.8cm}

\small \noindent 

\newpage{}

\normalsize

\section{Introduction}

The evaluation of economic and financial predictability has attracted
widespread interest from theoretical and applied researchers for many
decades. In today's era of big data, we have unprecedented access
to a vast amount of digital information about the economy. Recent
advancements in inference with high dimensional data have uncovered
new empirical patterns in predictive practices using large-scale datasets
with temporal features.

 This paper aims at a plain quest: in a high dimensional linear predictive
regression model, where the number of potential regressors is larger
than the sample size, how can one conduct valid inference for a regressor
of primary interest? No research has solved this question before.
The challenges are twofold. First, predictive regressions were mainly
studied in the low-dimensional context. A defining feature of predictive
regression theory lies in persistent regressors \citep{stambaugh1999predictive},
which upend standard inference based on the standard $t$-statistics.
Second, one must estimate the coefficient via some regularization
methods to cope with high dimensionality. For example, when the underlying
true regression model is sparse, LASSO \citep{tibshirani1996regression}
is the off-the-shelf method. It is well known that the LASSO estimator
is biased toward zero due to absolute-value shrinkage and exhibits
a nonstandard asymptotic distribution that is distinct from the normal
distribution. If we intend to provide an asymptotically normally distributed
estimator to facilitate standard statistical inference, we must simultaneously
combat two evils: the \emph{Stambaugh bias} due to persistent regressors
and the \emph{shrinkage bias} caused by the LASSO penalty. 

The above diagnosis hints at a plausible solution path. In low-dimensional
predictive regressions, \citet{phillips2009econometric}'s IVX method
leverages a self-generated instrument to alleviate the regressor persistence,
thereby overcoming the Stambaugh bias. In high dimensional cross sectional
regressions, \citet{zhang2014confidence}'s desparsified LASSO (Dlasso)
constructs a score vector for the parameter of interest and removes
the shrinkage bias via an auxiliary regression. Each method provides
an asymptotic normal estimator in its respective environment. 

Can we combine these two methods into a single procedure to address
inference in high dimensional predictive regressions? We find that
the answer is both \emph{no} and \emph{yes}. ``No'' is in the sense
that a naive combination of the two does not lead to desirable results.
``Yes'', on the other hand, is established upon a deep understanding
of the mechanisms of both components and their adaptation to the context.
This research culminates in a new \emph{IVX-Desparsified LASSO} (XDlasso)
estimator that is free from both biases and has an asymptotic normal
distribution. 

With a predictor of interest in mind, the construction of XDlasso,
detailed in Algorithm \ref{alg:XDlasso} in Section \ref{subsec:IVX-Desparsified-LASSO},
is summarized as follows. First, a workhorse estimator is needed to
lay the groundwork for a high dimensional predictive regression. When
both nonstationary and stationary predictors are present in the regression,
\citet[MS24 hereafter]{mei2022lasso} has recently established the
consistency of the standardized LASSO (Slasso), making it a natural
candidate for the workhorse estimator. Beyond the pure unit roots
considered in MS24, we extend the characterization of nonstationary
regressors by generalizing our framework to allow for local unit root
(LUR) processes. The convergence rates of Slasso for LURs align with
those for pure unit roots. However, the technical proofs for LURs
are more involved than those for pure unit roots. We address the complexity
arising from high dimensional predictors with both LURs and stationary
regressors, and derive the convergence rates of the initial Slasso
estimator. 

Second, the common practice of generating the instrument in IVX is
insufficient --- the IV must be scale-standardized to have the stochastic
order aligned with all other predictors in Slasso. The standardized
IV serves as the target variable for the auxiliary Slasso regression
in Dlasso to estimate the shrinkage bias. The XDlasso estimator of
the parameter of interest is defined as the initial Slasso estimator
plus the bias-correction term, and the companion $t$-statistic is
employed for statistical inference by comparing it to critical values
from the standard normal distribution. 

We further establish the asymptotic normality of our proposed XDlasso
estimator and the convergence rate of its standard error. Specifically,
the XDlasso estimator is $\sqrt{n}$-consistent for a stationary regressor
while its convergence accelerates for an LUR regressor. Moreover,
to conduct simultaneous inference for multiple parameters of interest,
we develop a Wald statistic with an asymptotic $\chi^{2}$ distribution
based on XDlasso. This Wald test is valid even when the parameters
involve both stationary and nonstationary regressors.

To tackle persistent regressors, the self-generated IVX instrument
in the second step is the key ingredient\textbf{.} The generated IV
is less persistent than the nonstationary regressors modeled as LURs.
This important feature enables us to decorrelate the IV from other
covariates in the auxiliary LASSO regression, so that the resulting
XDlasso estimator possesses these two properties: (i) It is free from
the Stambaugh bias and thus enjoys asymptotic normality; (ii) It reduces
the order of shrinkage bias to make it correctable. In contrast, the
ordinary Dlasso encounters a \emph{spurious} auxiliary regression,
failing to correct the bias arising from persistent regressors (see
Section \ref{subsec:Necessity-of-IVX} for details). More importantly,
XDlasso inference does not require \emph{a priori} knowledge of the
persistence of the regressor of interest and is thus immune to pretesting
bias. To the best of our knowledge, this is the first methodology
to handle the inferential problem in high dimensional predictive regressions
($p\gg n$) with nonstationary predictors. This is also the first
paper that extends the IVX technique into the high dimensional framework. 

Monte Carlo simulations show that XDlasso successfully removes the
bias for inference on the coefficient of a nonstationary regressor,
but the ordinary\emph{ }Dlasso fails to do so. Our procedure is applied
to the high dimensional macroeconomic FRED-MD dataset \citep{mccracken2016fred}
with both stationary and persistent variables, to study two important
macro-finance problems: financial market return predictability and
the Phillips curve in macroeconomics.

\textbf{Literature review}. With the advent of big data, machine learning
methods have spread to time series topics such as nonstationarity
\citep{phillips2021boosting,smeekes2021automated,mei2024boosted},
cointegration testing \citep{onatski2018alternative,zhang2019identifying,bykhovskaya2022cointegration},
and structural breaks \citep{deshpande2023online,tu2023penetrating}.
This paper builds on several strands of literature. First, LASSO is
one of the most studied methods in recent years, with well-developed
theory in high dimension \citep{bickel2009simultaneous}. It is well
received and used for economic applications; see \citet{belloni2012sparse},
\citet{shi2016estimation}, \citet{caner2018asymptotically}, and
\citet{babii2022machine}, to name a few. In recent years, the properties
of LASSO are studied in various topics in high dimensional time series,
including nonstationary time series models \citep{koo2020high,lee2022lasso}
and inference based on the heteroskedasticity and autocorrelation
consistent (HAC) estimation \citep{babii2021highdimensionalgrangercausalitytests,babii2024high}.
None of these works has considered hypothesis testing problem for
high dimensional predictive regressions with both LURs and stationary
regressors.

Hypothesis testing after LASSO is challenging because of the shrinkage
bias. To validate hypothesis testing in high dimensions, \citet{zhang2014confidence},
\citet{van2014asymptotically}, and \citet{javanmard2014confidence}
have developed the desparsified (debiased) LASSO estimators under
the independently and identically distributed (i.i.d.) setting. \citet{adamek2022lasso}
generalize the Dlasso inference\emph{ }to high dimensional stationary
time series. We follow this line of desparsified LASSO literature
thanks to its convenience, which requires a baseline regression and
an auxiliary regression only. On the other hand, \citet{chernozhukov2018}'s
\emph{double machine learning }(DML) is a more general theoretical
framework of debiased inference, widely used in cross sectional data
where sample-splitting is readily implementable. However, none of
the aforementioned works has devised any inferential procedure for
high dimensional nonstationary time series. \citet{hecq2023inference}
apply a post-double selection procedure to test the Granger causality
in high dimensional nonstationary vector autoregressive models with
cointegrated data. In contrast, our procedure relies on desparsified
LASSO without variable selection. 

The other strand is the vast literature on predictive regressions.
As highlighted by \citet{campbell2006efficient} and \citet{jansson2006optimal},
non-standard distortion in the asymptotic distribution arises from
persistent regressors. The peculiar asymptotic distributions invalidate
the standard inferential procedures. There have been multiple proposals
for valid inference, for example, the Bonferroni method \citep{campbell2006efficient},
the conditional likelihood method \citep{jansson2006optimal}, the
linear projection method \citep{cai2014testing}, the weighted empirical
likelihood approach \citep{zhu2014predictive,liu2019unified,yang2021unified},
and the implication-based inference \citep{xu2020testing}. Some of
these methods are designed for univariate predictive regressions;
it would be difficult to extend them to the high dimensional case,
where regularization is required to handle many parameters. On the
other hand, \citet{phillips2009econometric}'s IVX estimator gained
its popularity by recovering asymptotic normality, enabling valid
inference for mean regressions \citep{kostakis2015robust,kostakis2018taking,phillips2013predictive,phillips2016robust,yang2020testing,demetrescu2023extensions}
and quantile regressions \citep{lee2016predictive,fan2019predictive,cai2023new,liu2023unified}
with low dimensional regressors. IVX recovers asymptotic normality
by projecting the persistent regressor onto a self-generated IV.

\textbf{Layout}. The rest of the paper is organized as follows. Section
\ref{sec:Model-and-Methodology} introduces the high dimensional predictive
regression model with a mixture of stationary and nonstationary regressors
and proposes XDlasso. Section \ref{sec:Asymptotic-Theory} establishes
the theoretical results, justifying the size and power of the XDlasso
inference procedure. Section \ref{sec:Simulation} carries out simulation
studies that corroborate the theory. Section \ref{sec:Empirical-Application}
applies XDlasso inference to two macro-finance empirical examples.
Technical proofs are relegated to the Online Appendices. 

\textbf{Notations.} We set up the notation before the formal discussion.
We define $\boldsymbol{1}\left\{ \cdot\right\} $ as the indicator
function, and $\Delta$ as the difference operator so that $\Delta x_{t}=x_{t}-x_{t-1}$.
The set of natural numbers, integers, and real numbers are denoted
as $\mathbb{N}$, $\mathbb{Z}$, and $\mathbb{R}$, respectively.
For some $n\in\mathbb{N}$, the integer set $\{1,2,\cdots,n\}$ is
denoted as $[n]$, and the space of $n$-dimensional vectors is denoted
as $\mathbb{R}^{n}.$ For $x=(x_{t})_{t\in[n]}\in\mathbb{R}^{n}$,
the $L_{1}$-norm is $\left\Vert x\right\Vert _{1}=\sum_{t=1}^{n}\left|x_{t}\right|$,
and the sup-norm is $\|x\|_{\infty}=\sup_{t\in[n]}|x_{t}|$. Let $0_{n}$
be an $n\times1$ zero vector, $1_{n}$ be an $n\times1$ vector of
ones, and $I_{n}$ be the $n\times n$ identity matrix. For a generic
matrix $B,$ let $B_{ij}$ be the $(i,j)$-th element, and $B^{\top}$
be its transpose. Let $\|B\|_{\infty}=\max_{i,j}|B_{ij}|$, and $\lambda_{\min}(B)$
and $\lambda_{\max}(B)$ be the minimum and maximum eigenvalues, respectively.
Define $a\wedge b:=\min\left\{ a,b\right\} $, and $a\vee b:=\max\left\{ a,b\right\} $.
An \emph{absolute constant} is a positive, finite constant that is
invariant with the sample size. The abbreviation ``w.p.a.1'' is short
for ``with probability approaching one''. We use $\convp$ and $\convd$
to denote convergence in probability and in distribution, respectively.
For any time series $\{a_{t}\}_{t=1}^{n}$, we use $\bar{a}$ to denote
its sample mean $n^{-1}\sum_{t=1}^{n}a_{t}$. For any time series
$\{a_{t}\}$ and $\{b_{t}\}$, we say they are \emph{asymptotically
uncorrelated} if their sample correlation coefficient $\frac{\sum_{t=1}^{n}(a_{t}-\bar{a})(b_{t}-\bar{b})}{\sqrt{\sum_{t=1}^{n}(a_{t}-\bar{a})^{2}\sum_{t=1}^{n}(b_{t}-\bar{b})^{2}}}\convp0$
as $n\to\infty.$ 

\section{Model and Procedure \protect\label{sec:Model-and-Methodology}}

\subsection{High Dimensional Predictive Regression}

Suppose that a time series of the outcome $y_{t}$ is generated by
the following linear predictive regression:\footnote{For simplicity of exposition, an intercept in \eqref{eq:DGP} is omitted,
without loss of generality. As explained by MS24, the intercept in
LASSO can be handled by the well-known Frisch-Waugh-Lovell theorem.
In practical implementation --- throughout all simulations and empirical
exercises in this paper --- we keep an unpenalized intercept in the
model.}
\begin{align}
y_{t} & =W_{t-1}^{\top}\theta^{*}+u_{t}=X_{t-1}^{\top}\beta^{*}+Z_{t-1}^{\top}\gamma^{*}+u_{t},\label{eq:DGP}
\end{align}
where the error term $u_{t}$ is a stationary martingale difference
sequence (m.d.s.)~with mean zero and conditional variance $\sigma_{u}^{2}$.
We consider two types of regressors with different stochastic properties.
Firstly, the $p_{x}\times1$ vector $X_{t}=(x_{1,t},\cdots,x_{p_{x},t})^{\top}$
collects the LURs:
\begin{equation}
x_{j,t}=\rho_{j}^{*}x_{j,t-1}+e_{j,t}\text{ for }j=1,2,...,p_{x},\label{eq:AR1}
\end{equation}
where $e_{t}=(e_{1,t},\cdots,e_{p_{x},t})^{\top}$ is a $p_{x}$-dimensional
vector of stationary time series. The AR(1) coefficient $\rho_{j}^{*}$
in (\ref{eq:AR1}) is close to 1 when the sample size $n$ is large,
specified as 
\begin{equation}
\rho_{j}^{*}=1+\frac{c_{j}^{*}}{n}\text{ for }j=1,2,...,p_{x},\label{eq:LUR}
\end{equation}
where $c_{j}^{*}\in\mathbb{R}$ is allowed to be negative, positive,
or zero. Therefore, our framework accommodates nonstationary regressors
that are locally integrated ($c_{j}^{*}<0$), unit roots ($c_{j}^{*}=0$),
and locally explosive ($c_{j}^{*}>0$). Secondly, stationary regressors
are stored in the $p_{z}\times1$ vector $Z_{t}=(z_{1,t},\cdots,z_{p_{z},t})^{\top}$.
The two types of regressors are combined into a long vector $W_{t}=(X_{t}^{\top},Z_{t}^{\top})^{\top}=(w_{1,t},\cdots,w_{p,t})^{\top}$
of $p$ ($=p_{x}+p_{z}$) elements, and the associated coefficients
are placed into $\theta^{*}=(\beta^{*\top},\gamma^{*\top})^{\top}\in\mathbb{R}^{p}$.
Following the literature, we refer to $W_{t}$, which has multiple
degrees of persistence, as \emph{mixed root }regressors. For simplicity,
let the initial value $\|W_{t=0}\|_{\infty}=O_{p}(1)$. Define the
sparsity index $s=\sum_{j=1}^{p}\boldsymbol{1}\{\theta_{j}^{*}\neq0\}$
as the number of nonzero components in the coefficient vector $\theta^{*}$.

As in the default \texttt{R} program option \texttt{glmnet::glmnet(x,y)},
it is a common practice in LASSO to scale-standardize each regressor
$w_{j,t}$ by its sample standard deviation (s.d.)~ $\hat{\sigma}_{j}=\sqrt{\frac{1}{n}\sum_{t=1}^{n}(w_{j,t-1}-\bar{w}_{j})^{2}},$
where $\bar{w}_{j}=n^{-1}\sum_{t=1}^{n}w_{j,t-1}$ is the sample mean.
Let the diagonal matrix $D={\rm diag}\left(\hat{\sigma}_{1},\hat{\sigma}_{2},\cdots,\hat{\sigma}_{p}\right)$
store the sample standard deviations. The standardized LASSO (Slasso)
estimator is 
\begin{equation}
\hat{\theta}^{{\rm S}}:=\arg\min_{\theta\in\mathbb{R}^{p}}\dfrac{1}{n}\sum_{t=1}^{n}(y_{t}-W_{t-1}^{\top}\theta)^{2}+\lambda\|D\theta\|_{1}.\label{eq:Slasso}
\end{equation}
The Slasso estimator is scale-invariant: if the regressor $w_{j,t-1}$
is multiplied by a nonzero constant $m$, then the $j$-th coefficient
estimator changes proportionally to $\hat{\theta}_{j}^{{\rm S}}/m.$
The standardization renders the magnitudes of LURs into the same order
as those of stationary regressors, so that the same LASSO tuning parameter
$\lambda$ in (\ref{eq:Slasso}) is valid for both stationary and
persistent regressors. In contrast, the plain LASSO (Plasso) with
the matrix $D$ in (\ref{eq:Slasso}) replaced with the identity matrix
is scale-variant. What is worse, equipped with a single tuning parameter
$\lambda$, Plasso favors the LURs of a larger order, and shrinks
the coefficients of stationary regressors with a smaller order all
the way to zero --- thus becoming inconsistent.
\begin{rem}[Choice of Scaling in Slasso]
We follow the default option of the statistical software to use the
s.d.~for scaling in the Slasso estimator (\ref{eq:Slasso}). The
purpose of scaling is to ensure consistency of Slasso (\ref{eq:Slasso}).
The long-run variance, which requires tuning a bandwidth in its estimation,
provides no additional benefit. We therefore prefer and stick to the
vanilla s.d.~for scaling. 
\end{rem}

\subsection{Two Types of Biases \protect\label{subsec:The-Biases}}

When data are i.i.d.~or stationary, LASSO is subject to shrinkage
bias as well as a nonstandard asymptotic distribution, which cannot
be used for standard inference \citep{fu2000asymptotics}. This motivates
\citet{zhang2014confidence} to bring forth the desparsified LASSO
to correct the bias and recover asymptotic normality. With high dimensional
LURs, $\hat{\theta}^{{\rm S}}$ is subject to not only the shrinkage
bias, but also the Stambaugh bias due to persistence, which further
distorts the standard \emph{t}-statistics inference. In this section,
we examine both the shrinkage bias and the Stambaugh bias, and propose
XDlasso for correcting both biases. 

We are interested in inference on a null hypothesis $\mathbb{H}_{0}:\theta_{j}^{*}=\theta_{0,j}$
for a $j\in[p]$, a prevalent practice in empirical studies. In a
low dimensional linear regression where $p$ is fixed, the Frisch-Waugh-Lovell
theorem yields the following formulation of the ordinary least squares
(OLS) estimator:
\[
\hat{\theta}_{j}^{{\rm OLS}}=\frac{\sum_{t=1}^{n}w_{j,t-1}^{\perp}y_{t}}{\sum_{t=1}^{n}w_{j,t-1}^{\perp}w_{j,t-1}},
\]
where $w_{j}^{\perp}=(w_{j,0}^{\perp},\dots,w_{j,n-1}^{\perp})^{\top}$
is the OLS residual from regressing $w_{j,t}$ on all other regressors
$W_{-j,t}=(w_{k,t})_{k\neq j}$. OLS induces a large variance as $p$
gets large, and becomes infeasible when $p>n$. Now, consider replacing
the OLS residual $w_{j}^{\perp}$ by a generic \emph{score vector}
$r_{j}=(r_{j,0},\dots,r_{j,n-1})^{\top}$ to construct an estimator
of $\theta_{j}^{*}$ that is linear in $y_{t}$ in the form of $\hat{\theta}_{j}^{({\rm lin})}=\frac{\sum_{t=1}^{n}r_{j,t-1}y_{t}}{\sum_{t=1}^{n}r_{j,t-1}w_{j,t-1}}.$
Since 
\[
y_{t}=w_{j,t-1}\theta_{j}^{*}+W_{-j,t-1}^{\top}\theta_{-j}^{*}+u_{t}
\]
 where $\theta_{-j}^{*}=(\theta_{k}^{*})_{k\neq j}$ is the vector
of coefficients excluding the $j$-th entry, the generic estimator
$\hat{\theta}_{j}^{({\rm lin})}$ can be decomposed into
\[
\hat{\theta}_{j}^{({\rm lin})}=\theta_{j}^{*}+\frac{\sum_{t=1}^{n}r_{j,t-1}u_{t}}{\sum_{t=1}^{n}r_{j,t-1}w_{j,t-1}}+\frac{\sum_{t=1}^{n}r_{j,t-1}W_{-j,t-1}^{\top}\theta_{-j}^{*}}{\sum_{t=1}^{n}r_{j,t-1}w_{j,t-1}}=:\theta_{j}^{*}+{\rm N}_{j}+{\rm B}_{j},
\]
where ${\rm N}_{j}$ is the noise component that determines the asymptotic
distribution of $\hat{\theta}_{j}^{({\rm lin})}$, and ${\rm B}_{j}$
is the potential bias due to the choice of $r_{j}$. 

For OLS, the score vector $r_{j}=w_{j}^{\perp}$ is orthogonal to
the column space of $W_{-j,\cdot}:=(W_{-j,0},\dots,W_{-j,n-1})^{\top}$,
under which ${\rm B}_{j}=0$ and no bias is present. The bias term
${\rm B}_{j}$ pops up whenever $r_{j}$ is not orthogonal to $W_{-j,\cdot}$,
which happens if we add a penalty to the OLS objective function. With
the LASSO penalty at place, we call ${\rm B}_{j}$ \emph{shrinkage
bias. }

Following \citet{zhang2014confidence}, we replace the unknown parameter
$\theta_{-j}^{*}$ by the feasible workhorse Slasso estimator $\hat{\theta}_{-j}^{{\rm S}}=(\hat{\theta}_{k}^{{\rm S}})_{k\neq j}$
to obtain
\begin{equation}
\hat{\theta}_{j}=\hat{\theta}_{j}^{({\rm lin})}-\widehat{{\rm B}}_{j}\text{ where }\widehat{{\rm B}}_{j}=\frac{\sum_{t=1}^{n}r_{j,t-1}W_{-j,t-1}^{\top}\hat{\theta}_{-j}^{{\rm S}}}{\sum_{t=1}^{n}r_{j,t-1}w_{j,t-1}}\label{eq:debiased}
\end{equation}
to compensate for ${\rm B}_{j}$. Equivalently, $\hat{\theta}_{j}$
can be written as 
\begin{equation}
\hat{\theta}_{j}=\hat{\theta}_{j}^{{\rm S}}+\frac{\sum_{t=1}^{n}r_{j,t-1}\hat{u}_{t}}{\sum_{t=1}^{n}r_{j,t-1}w_{j,t-1}},\label{eq:debiased-1}
\end{equation}
where $\hat{u}_{t}=y_{t}-W_{t-1}^{\top}\hat{\theta}^{{\rm S}}$ is
the Slasso residual. Though LASSO may shrink $\hat{\theta}_{j}^{{\rm S}}$
all the way to exactly zero, the second term in (\ref{eq:debiased-1})
is continuously distributed and therefore $\hat{\theta}_{j}$ is a
desparsified version of $\hat{\theta}_{j}^{{\rm S}}$. A straightforward
calculation yields 
\begin{equation}
\hat{\theta}_{j}-\theta_{j}^{*}=\frac{\sum_{t=1}^{n}r_{j,t-1}u_{t}}{\sum_{t=1}^{n}r_{j,t-1}w_{j,t-1}}+\frac{\sum_{t=1}^{n}r_{j,t-1}W_{-j,t-1}^{\top}(\theta_{-j}^{*}-\hat{\theta}_{-j}^{{\rm S}})}{\sum_{t=1}^{n}r_{j,t-1}w_{j,t-1}}={\rm N}_{j}-(\widehat{{\rm B}}_{j}-{\rm B}_{j}),\label{eq:N+AE}
\end{equation}
where $\widehat{{\rm B}}_{j}-{\rm B}_{j}$ is the approximation error
of the shrinkage bias. 

Let $\omega_{j}$ denote the standard deviation of ${\rm N}_{j}$.
To secure asymptotic normality for $\hat{\theta}_{j}$, we need an
appropriate score vector $r_{j}$ such that as $n\to\infty$:
\begin{center}
(\textbf{R1}) ${\rm N}_{j}/\omega_{j}\convd\mathcal{N}(0,1),$ and
(\textbf{R2}) $(\widehat{{\rm B}}_{j}-{\rm B}_{j})/\omega_{j}\stackrel{\mathrm{p}}{\to}0$.
\par\end{center}

\noindent These two results will furnish $\hat{\theta}_{j}$ with
asymptotic normality, validating inference based on the standard $t$-statistic. 

The result (R1) is well established by \citet{zhang2014confidence}
under i.i.d.~data when the score $r_{j,t}$ is taken as the LASSO
residual of regressing $w_{j,t}$ on all other regressors. However,
when $w_{j,t}$ is an LUR, this practice leads to a highly persistent
$r_{j,t}$ and ruins the asymptotic normality of ${\rm N}_{j}$. This
is the Stambaugh bias --- a highly persistent regressor produces
an asymptotically non-normally distributed OLS estimator skewed away
from zero. To safeguard (R1), we must seek a score that is less persistent
than an LUR. 

To retain (R2), we again examine the estimation error of the shrinkage
bias 
\begin{align}
{\rm \widehat{B}}_{j}-{\rm B}_{j} & =g_{j}^{\top}(\hat{\theta}_{-j}^{{\rm S}}-\theta_{-j}^{*}),\ \text{where}\ g_{j}=\dfrac{\sum_{t=1}^{n}W_{-j,t-1}r_{j,t-1}}{\sum_{t=1}^{n}w_{j,t-1}r_{j,t-1}}\cdot\label{eq:RB=000020demonstrate}
\end{align}
Note that $|{\rm \widehat{B}}_{j}-{\rm B}_{j}|\leq\|g_{j}\|_{\infty}\|\hat{\theta}_{-j}^{{\rm S}}-\theta_{-j}^{*}\|_{1}.$
Slasso's $L_{1}$ estimation error $\|\hat{\theta}_{-j}^{{\rm S}}-\theta_{-j}^{*}\|_{1}$
is invariant with the choice of score vector $r_{j,t}$, and diminishes
if $\hat{\theta}^{{\rm S}}$ is consistent. Thus, it suffices to control
the order of $\|g_{j}\|_{\infty}$ to achieve (R2). The explicit expression
of $g_{j}$ in (\ref{eq:RB=000020demonstrate}) suggests that a weak
correlation in sup-norm between $W_{-j,t}$ and the score $r_{j,t}$
(relative to the correlation between $w_{j,t}$ and $r_{j,t}$) helps.

With these routes in mind, we devise XDlasso in the following section.
It proceeds with two key steps: (i) conducting an IVX transformation
to get a new variable less persistent than the LUR regressors to eliminate
the Stambaugh bias; (ii) running an auxiliary LASSO regression to
construct an $r_{j,t}$ to remove the shrinkage bias. 

\subsection{IVX-Desparsified LASSO\protect\label{subsec:IVX-Desparsified-LASSO}}

For inference of $\theta_{j}^{*}$ in low dimensions, \citet{phillips2009econometric}'s
IVX method generates an instrument by quasi-differencing $w_{j,t}$.
When $w_{j,t}$ is an LUR, this self-generated instrument is mildly
integrated. The mitigation of persistence will remove the Stambaugh
bias when the sample size passes to infinity, and thus recovers asymptotic
normality of the IVX estimator. Due to the coexistence of Stambaugh
bias and shrinkage bias, the wisdom of IVX cannot be directly transplanted
into the high dimensional case. Instead, we integrate IVX with the
idea of desparsified LASSO after comprehending the mechanism of the
biases illustrated in Section \ref{subsec:The-Biases}.

Specifically, IVX adopts the following instrumental variable:
\begin{equation}
\zeta{}_{j,t}=\sum_{s=1}^{t}\rho_{\zeta}^{t-s}\Delta w_{j,s}\label{eq:IVX_original}
\end{equation}
where $\rho_{\zeta}\in(0,1)$ is a user-determined tuning parameter.
Define the s.d.~of the instrument as $\hat{\varsigma}_{j}=\sqrt{n^{-1}\sum_{t=0}^{n-1}\left(\zeta_{j,t}-\bar{\zeta}_{j}\right)^{2}.}$
We unify the scale for LURs and stationary regressors by standardizing
the instrument with its s.d.:
\begin{equation}
\tilde{\zeta}_{j,t}=\zeta_{j,t}/\hat{\varsigma}_{j}.\label{eq:IVX_std}
\end{equation}
For low dimensional predictive regressions, the IVX literature has
established  asymptotic normality of ${\rm N}_{j}$ in (\ref{eq:N+AE})
taking $r_{j,t}=\tilde{\zeta}_{j,t}$; see \citet{phillips2009econometric}.
When $w_{j,t}$ is an LUR, the IV $\zeta_{j,t}$ is \emph{mildly integrated
}and less persistent than an LUR. Furthermore, recall that the vector
of regressors $W_{-j,t}$ includes either LURs or stationary regressors.
Due to different degrees of persistence, the mildly integrated IV
$\tilde{\zeta}_{j,t}$ and the regressors $W_{-j,t}$ are asymptotically
uncorrelated. Thus, when $w_{j,t}$ is an LUR, we can choose the score
vector as $r_{j,t}=\tilde{\zeta}_{j,t}$ to deliver a small order
of $\|g_{j}\|_{\infty}$. If $w_{j,t}$ is stationary, however, this
score vector fails. When $n$ is large, the instrumental variable
$\zeta_{j,t}$ behaves similarly as the stationary regressor $w_{j,t}$,
and thus its correlation to the high dimensional stationary regressors
in $W_{-j,t}$ is not negligible. 

The above analysis implies that we must decorrelate the score vector
with the other regressors $W_{-j,t}$ to reduce the order of $g_{j}$
to control for the approximation error $\widehat{{\rm B}}_{j}-{\rm B}_{j}$.
This decorrelation will produce a unified testing approach for both
LURs and stationary regressors. To this end, we construct a residual
score vector $\hat{r}_{j}=(\hat{r}_{j,0},\cdots,\hat{r}_{j,n-1})^{\top}$
by the following auxiliary LASSO regression 
\begin{align}
\hat{r}_{j,t} & =\tilde{\zeta}_{j,t}-W_{-j,t}^{\top}\hat{\varphi}^{(j)},\ \text{where}\label{eq:def-rhat}\\
\hat{\varphi}^{(j)} & =\arg\min_{\varphi\in\mathbb{R}^{p-1}}\dfrac{1}{n}\sum_{t=1}^{n}(\tilde{\zeta}_{j,t-1}-W_{-j,t-1}^{\top}\varphi)^{2}+\mu_{j}\|D_{-j}\varphi\|_{1}\label{eq:def-lasso-proj}
\end{align}
with the LASSO tuning parameter $\mu_{j}$ and $D_{-j}={\rm diag}(\{\hat{\sigma}_{k}\}_{k\neq j})$.
In low dimensional multivariate predictive regressions, IVX transforms
\emph{each} regressor into a less persistent instrumental variable
parallel to (\ref{eq:IVX_original}), and constructs a two-stage least
squares estimator using \emph{all} the self-generated instrumental
variables. In contrast, we only transform the variable of interest
$w_{j,t}$ and estimate one auxiliary regression (\ref{eq:def-lasso-proj}).

The score vector $\hat{r}_{j}$ in (\ref{eq:def-rhat}) accommodates
stationary and LUR regressors. Recall that when $w_{j,t}$ is stationary,
the magnitude of the instrument $\zeta{}_{j,t}$ behaves similarly
as $w_{j,t}$. Thus, the score $\hat{r}_{j}$ is asymptotically equivalent
to the standardized residual of the LASSO regression of $w_{j,t}$
on $W_{-j,t}$. The latter is proportional to the score in \citet{zhang2014confidence}
for cross-sectional data, which is also used in \citet{adamek2022lasso}
for stationary time series. When $w_{j,t}$ is an LUR, the instrument
$\zeta_{j,t}$ is mildly integrated and has a different degree of
persistence from $W_{-j,t}$. Therefore, $\zeta_{j,t}$ is asymptotically
uncorrelated to each regressor in $W_{-j,t}$. For notational conciseness,
define $\tilde{W}_{-j,t}=(D_{-j})^{-1}W_{-j,t}$ and $\tilde{\varphi}^{(j)}=D_{-j}\hat{\varphi}^{(j)}$.
The analysis above suggests $\tilde{\varphi}^{(j)}\approx0$ so that
$\hat{r}_{j,t}=\tilde{\zeta}_{j,t}-\tilde{W}_{-j,t}^{\top}\tilde{\varphi}^{(j)}\approx\tilde{\zeta}_{j,t}$.
Recall from the discussion right after (\ref{eq:IVX_std}) that the
standardized instrument $\tilde{\zeta}_{j,t}$ is a valid score process
for an LUR regressor, and thus the score $\hat{r}_{j,t}$ can also
remove the biases asymptotically. As a result, the residual of the
auxiliary LASSO regression provides a unified construction of the
score for either stationary or nonstationary $w_{j,t}$. It allows
practitioners to conduct hypothesis testing on the coefficient in
high dimensional predictive regression regardless of the order of
integration of $w_{j,t}$. We maintain an agnostic attitude about
the persistence of the regressors; in practical implementation, we
need not distinguish these two types.
\begin{rem}[Standardized Instrumental Variable]
\label{rem:std_IV}The dependent variable of the auxiliary LASSO
regression (\ref{eq:def-lasso-proj}) is the standardized instrumental
variable (\ref{eq:IVX_std}). It is possible to use its original form
(\ref{eq:IVX_original}). However, the s.d.~of $\zeta_{j,t}$ passes
to infinity if $w_{j,t}$ is an LUR. Thus, without standardizing the
instrument, the theoretical order of the tuning parameter $\mu_{j}$
in (\ref{eq:def-lasso-proj}) for an LUR $w_{j,t}$ would be much
larger than the order for a stationary regressor. This difference
complicates the theoretical justifications. The standardization in
(\ref{eq:IVX_std}) unifies the convergence rate of $\mu_{j}$ regardless
of $w_{j,t}$ being an LUR or a stationary regressor. 
\end{rem}
\begin{rem}[Relation to DML]
The central idea of debiasing technique via orthogonalization in
high dimensional regressions is shared by the \citet{van2014asymptotically},
\citet{javanmard2014confidence} and \citet{chernozhukov2018}. DML
by \citet{chernozhukov2018} provides a more general framework by
allowing nonlinear and semiparametric models. Given cross sectional
data, DML advocates using sample splitting to eliminate a remainder
term in approximation to the normal distribution (denoted as $c^{*}$
in \citet[p. C4]{chernozhukov2018}), since with sample splitting
this remainder term could vanish asymptotically under much weaker
conditions. In other words, the quality of normal approximation using
the sample splitting technique is much better than without it. Justifying
sample splitting in time series is not as straightforward as that
in cross sectional data \citep[Remark 4]{beutner2021justification,adamek2022lasso}.
Without sample splitting, the theory of empirical processes with nonstationary
data will be challenging, and this topic deserves thorough investigations
in future research. 
\end{rem}
Following (\ref{eq:debiased-1}), XDlasso is constructed as 
\begin{equation}
\hat{\theta}_{j}^{{\rm XD}}=\hat{\theta}_{j}^{{\rm S}}+\frac{\sum_{t=1}^{n}\hat{r}_{j,t-1}\hat{u}_{t}}{\sum_{t=1}^{n}\hat{r}_{j,t-1}w_{j,t-1}}\label{eq:XDlasso}
\end{equation}
with the standard error
\begin{equation}
\hat{\omega}_{j}^{{\rm XD}}=\dfrac{\hat{\sigma}_{u}\sqrt{\sum_{t=1}^{n}\hat{r}_{j,t-1}^{2}}}{\left|\sum_{t=1}^{n}\hat{r}_{j,t-1}w_{j,t-1}\right|}\label{eq:s.e.}
\end{equation}
where $\hat{\sigma}_{u}^{2}=n^{-1}\sum_{t=1}^{n}\hat{u}_{t}^{2}$.
For simplicity, we focus on homoskedastic errors in the main text.
In Section B.3 of the appendix, we provide a formula of the heteroskedasticity-robust
standard error, and show that our method is robust to conditional
heteroskedasticity in both Monte Carlo simulations (Section B.3) and
empirical applications (Section C.2). 

With the point estimator (\ref{eq:XDlasso}) and the associated standard
error in (\ref{eq:s.e.}), we perform the $t$-test for the null hypothesis
$\mathbb{H}_{0}:\theta_{j}^{*}=\theta_{0,j}$. We summarize the testing
procedure using the XDlasso estimator below. 

\singlespacing%
\noindent\fbox{\begin{minipage}[t]{1\columnwidth - 2\fboxsep - 2\fboxrule}%
\begin{lyxalgorithm}[XDlasso Inference for $\mathbb{H}_{0}:\theta_{j}^{*}=\theta_{0,j}$]
$\mbox{}$\label{alg:XDlasso}
\end{lyxalgorithm}
\begin{description}
\item [{Step1}] Obtain $\hat{\theta}^{{\rm S}}$ from the Slasso regression
(\ref{eq:Slasso}). Save the residual $\hat{u}_{t}=y_{t}-W_{t-1}^{\top}\hat{\theta}^{{\rm S}}$
and $\hat{\sigma}_{u}^{2}=n^{-1}\sum_{t=1}^{n}\hat{u}_{t}^{2}$.
\item [{Step2}] Obtain the IV $\zeta_{j,t}$ by the transformation (\ref{eq:IVX_original}),
and standardize it by (\ref{eq:IVX_std}).
\item [{Step3}] Run the auxiliary LASSO regression (\ref{eq:def-lasso-proj}),
and save the residual $\hat{r}_{j,t}$ in (\ref{eq:def-rhat}).
\item [{Step4}] Compute the XDlasso estimator (\ref{eq:XDlasso}) and the
standard error (\ref{eq:s.e.}).
\item [{Step5}] Obtain the $t$-statistic 
\begin{equation}
t_{j}^{{\rm XD}}=(\hat{\theta}_{j}^{{\rm XD}}-\theta_{0,j})\big/\hat{\omega}_{j}^{{\rm XD}}.\label{eq:XD=000020t=000020stat}
\end{equation}
Reject $\mathbb{H}_{0}$ under the significance level $\alpha$ if
$|t_{j}^{{\rm XD}}|>\Phi_{1-\alpha/2}$, where $\Phi_{1-\alpha/2}$
is the $100(1-\alpha/2)$-th percentile of the standard normal distribution.
\end{description}
\end{minipage}}

\bigskip
\onehalfspacing

The testing procedure in Algorithm \ref{alg:XDlasso} refines the
conventional predictive regression inference using modern high dimensional
inference techniques. In the following, we elaborate on the necessity
of the IVX transformation by explaining the drawback of Dlasso under
high dimensional LURs.

\subsection{Necessity of IVX Transformation\protect\label{subsec:Necessity-of-IVX}}

Given i.i.d.~data, Dlasso proceeds with the following residual as
the score vector: 
\begin{align}
r_{j,t} & =\tilde{w}_{j,t}-W_{-j,t}^{\top}\hat{\psi}^{(j)}\text{ where }\text{\ensuremath{\tilde{w}_{j,t}=w_{j,t}/\hat{\sigma}_{j}},}\label{eq:def-rhat-1}\\
\hat{\psi}^{(j)} & =\arg\min_{\psi\in\mathbb{R}^{p-1}}\dfrac{1}{n}\sum_{t=1}^{n}(\tilde{w}_{j,t-1}-W_{-j,t-1}^{\top}\psi)^{2}+\mu_{j}\|D_{-j}\psi\|_{1}.\label{eq:def-lasso-proj-1}
\end{align}
where the dependent variable in (\ref{eq:def-rhat-1}) is also scaled
to be kept in line with (\ref{eq:def-rhat}) and (\ref{eq:def-lasso-proj}).
For simplicity, in this subsection we temporarily restrict all nonstationary
regressors to be pure unit roots ($\rho_{j}^{*}=1$). We show that
the standard Dlasso procedure fails to correct the bias in this special
case of LURs, thereby invalidating the inference in general cases. 

First, (\ref{eq:def-lasso-proj-1}) is a \emph{spurious regression}
as in \citet{granger1974spurious}. In a low dimensional regression
where all regressors have unit roots, the standardized least squares
counterpart of the regression (\ref{eq:def-lasso-proj-1}) follows
a functional central limit theorem (FCLT)
\begin{eqnarray}
\tilde{\psi}^{(j,{\rm OLS})} & = & D_{-j}\left(\sum_{t=1}^{n}W_{-j,t-1}W_{-j,t-1}^{\top}\right)^{-1}\sum_{t=1}^{n}W_{-j,t-1}\tilde{w}_{j,t-1}\nonumber \\
 & \stackrel{\mathrm{d}}{\to} & \tilde{\psi}^{*}:={\rm diag}(\{\sigma_{k}^{*}/\sigma_{j}^{*}\}_{k\neq j})\left(\int_{0}^{1}\mathcal{B}_{-j}\mathcal{B}_{-j}^{\top}\right)^{-1}\int_{0}^{1}\mathcal{B}_{-j}\mathcal{B}_{j},\label{eq:FCLT_spurious}
\end{eqnarray}
where $\mathcal{B}=(\mathcal{B}_{1},\dots,\mathcal{B}_{p})^{\top}$
is a $p$-dimensional Brownian motion, $\sigma_{j}^{*}=\sqrt{\int_{0}^{1}\mathcal{B}_{j}^{2}-(\int_{0}^{1}\mathcal{B}_{j})^{2}}$,
and $\mathcal{B}_{-j}=(\mathcal{B}_{k})_{k\neq j}$. In other words,
the OLS estimator converges \emph{in distribution }to a nondegenerate
random variable. This is the well-known spurious regression\emph{
}phenomenon in unit root regressions. While (\ref{eq:FCLT_spurious})
suggests that the limit target coefficients are random and continuously
distributed, the asymptotics for the LASSO estimator (\ref{eq:def-lasso-proj-1})
in high dimensions depends on the sparsity of the target coefficients.
The randomness of coefficients is contradictory to the sparsity required
in LASSO.

Second, the score vector in (\ref{eq:def-rhat-1}) cannot remove the
Stambaugh bias. Even in low dimensions, the least squares residual
$\hat{r}_{j,t}^{{\rm OLS}}=\tilde{w}_{j,t}-\tilde{W}_{-j,t}^{\top}\tilde{\psi}^{(j,{\rm OLS})}$remains
highly persistent due to spurious regression. Using $\hat{r}_{j,t}^{{\rm OLS}}$
as the score vector in low dimensions would therefore keep the Stambaugh
bias in the noise component ${\rm N}_{j}$. This issue in low dimension
is also present in the score vector $r_{j,t}$ construction in (\ref{eq:def-rhat-1})
by high dimensional LASSO. 

\begin{figure}[t]

\medskip{}

\centering{}\includegraphics[scale=0.5]{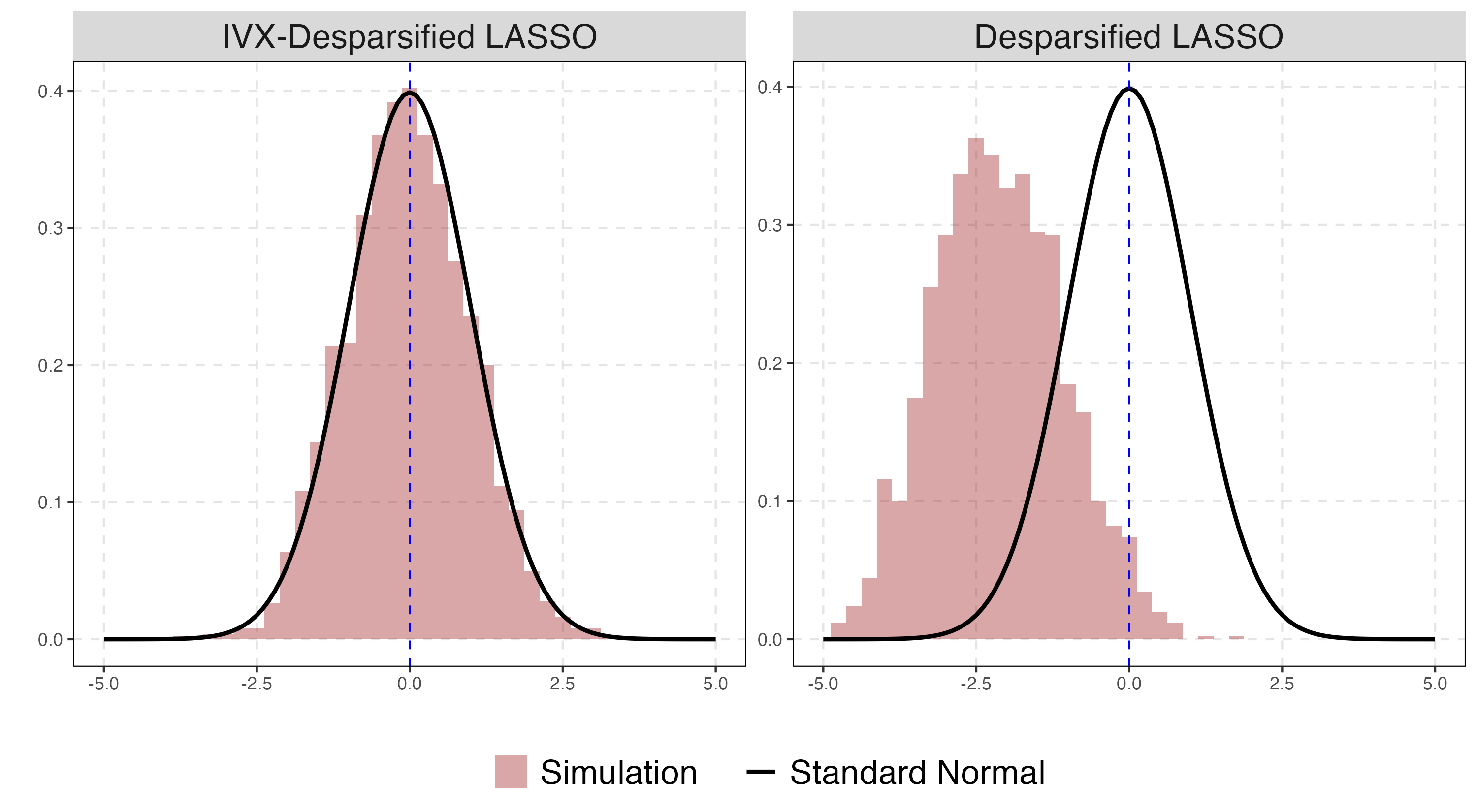}
\caption{Histograms of $t$-statistics from XDlasso and Dlasso\protect\label{fig:Comparison-xdlasso-zz}}
\end{figure}

In contrast, with the help of the IVX transformation, the score vector
in (\ref{eq:def-rhat}) achieves (R1) and (R2). Figure \ref{fig:Comparison-xdlasso-zz}
provides an illustrative simulation to compare Dlasso with XDlasso.
We set $\beta_{1}^{*}$ (the coefficient of the first unit root) as
zero, $n=300$, $(p_{x},p_{z})=(150,300)$, and use i.i.d.~innovations
(See Eq.~(\ref{eq:dgp_iid_inno})). We generate the data following
the DGP (\ref{eq:DGP}), with a mixture of stationary and nonstationary
regressors as in Section \ref{subsec:Setup}. Figure \ref{fig:Comparison-xdlasso-zz}
displays the histograms of the $t$-statistics over 2000 replications.
The density of XDlasso $t$-statistic is well approximated by $\mathcal{N}(0,1)$,
whereas the Dlasso $t$-statistic suffers from a substantial bias. 

\subsection{Joint Inference for Low-Dimensional Coefficients\protect\label{subsec:joint}}

We have devised the XDlasso inference for a scalar coefficient $\theta_{j}^{*}$.
In low dimensional predictive regression, the IVX estimator is applicable
to multiple coefficients, which jointly follows an asymptotic multivariate
normal distribution. Therefore, a Wald statistic by IVX is available
to jointly test the predictability of multiple regressors, provided
there are a finite number of them. This test statistic is shown to
be valid, even when the parameters of interest involve both stationary
and nonstationary regressors. 

The validity of Wald test extends to XDlasso in high dimensional predictive
regression. Specifically, suppose that we are interested in a subset
of regressors indexed by $\mathcal{J}\subset[p]$ with a fixed cardinality
$|\mathcal{J}|$. In this case, the XDlasso estimators $\hat{\theta}_{\mathcal{J}}^{{\rm XD}}=(\hat{\theta}_{j}^{{\rm XD}})_{j\in\mathcal{J}}$
defined in (\ref{eq:XDlasso}) asymptotically follow a multivariate
normal distribution. Let the null hypothesis $\mathbb{H}_{0}:\theta_{\mathcal{J}}^{*}=\theta_{0,\mathcal{J}}$
involve $|\mathcal{J}|$ restrictions, where $\theta_{0,\mathcal{J}}$
is a $|\mathcal{J}|$-dimensional vector. We construct the following
Wald statistic 
\begin{equation}
{\rm Wald}_{\mathcal{J}}^{{\rm XD}}=(\hat{\theta}_{\mathcal{J}}^{{\rm XD}}-\theta_{0,\mathcal{J}})^{\top}[\hat{\Omega}_{\mathcal{J}}^{{\rm XD}}]^{-1}(\hat{\theta}_{\mathcal{J}}^{{\rm XD}}-\theta_{0,\mathcal{J}}),\label{eq:Wald}
\end{equation}
where $\hat{\Omega}_{\mathcal{J}}^{{\rm XD}} = (\hat{\Omega}_{j,k}^{{\rm XD}})_{j,k\in\mathcal{J}}$ estimates the covariance
matrix of $\hat{\theta}_{\mathcal{J}}^{{\rm XD}}$, with
\[
\hat{\Omega}_{j,k}^{{\rm XD}}=\hat{\sigma}_{u}^{2}\dfrac{\sum_{t=1}^{n}\hat{r}_{j,t-1}\hat{r}_{k,t-1}}{\sum_{t=1}^{n}\hat{r}_{j,t-1}w_{j,t-1}\cdot\sum_{t=1}^{n}\hat{r}_{k,t-1}w_{k,t-1}}
\]
measuring the covariance between $\hat{\theta}_{j}^{{\rm XD}}$ and
$\hat{\theta}_{k}^{{\rm XD}}$. Under the null hypothesis, the Wald
statistic in (\ref{eq:Wald}) will follow an asymptotic $\chi^{2}$
distribution with the degree of freedom $|\mathcal{J}|$, enabling
joint inference on $\theta_{\mathcal{J}}^{*}$. 

\section{Asymptotic Theory \protect\label{sec:Asymptotic-Theory}}

This section develops the limit theory to shed light on the asymptotic
behaviors of the XDlasso estimator. Unsurprisingly, this paper's assumptions
share similarities with those in MS24. We state our theoretical assumptions
and then highlight and explain the differences between the assumptions
in these two papers. Regarding the asymptotic framework, we define
the number of regressors $p=p\left(n\right)$ and the sparsity index
$s=s\left(n\right)$ as deterministic functions of the sample size
$n$. In asymptotic statements, we will explicitly send $n\to\infty$,
and it is understood that $p(n)\to\infty$ as $n\to\infty$, while
$s\left(n\right)$ is allowed to be either fixed or divergent. Recall
that the stationary vector $e_{t}\in\mathbb{R}^{p_{x}}$ is the innovation
of the LUR regressors. We assume that the stationary high dimensional
vector $v_{t}=(e_{t}^{\top},Z_{t}^{\top})^{\top}$ is generated by
the innovations $\varepsilon_{t}=(\varepsilon_{k,t})_{k\in[p]}$ via
a linear transformation 
\begin{equation}
v_{t}=\Phi\varepsilon_{t},\label{eq:error_transform}
\end{equation}
where $\Phi$ is a $p\times p$ deterministic matrix. Let $\mathcal{F}_{t}$
denote the $\sigma$-field generated by $\{u_{s},\varepsilon_{s}\}_{s\leq t}$. 
\begin{assumption}
\label{assu:tail} Suppose that $u_{t}$ and $\varepsilon_{t}$ are
strictly stationary. Moreover, $u_{t}$ is a martingale difference
sequence (m.d.s.)~such that $\mathbb{E}\left(u_{t}|\mathcal{F}_{t-1}\right)=0$
and $\mathbb{E}(u_{t}^{2}|\mathcal{F}_{t-1})=\sigma_{u}^{2}>0$. There
exist absolute constants $C_{u}$, $b_{u}$, $C_{\varepsilon}$, and
$b_{\varepsilon}$ such that for all $t\in\mathbb{Z}$ and $a>0$,
\begin{align}
\Pr\left(|u_{t}|>a\right) & \leq C_{u}\exp(-a/b_{u}),\label{eq:innovtail}\\
\Pr\left(|\varepsilon_{k,t}|>a\right) & \leq C_{\varepsilon}\exp(-a/b_{\varepsilon}),\ \ \forall k\in[p].\label{eq:epstail}
\end{align}
 Furthermore, $\{\varepsilon_{k,t}\}_{t\in\mathbb{Z}}$ and $\{\varepsilon_{\ell,t}\}_{t\in\mathbb{Z}}$
are independent for all $k\neq\ell.$
\end{assumption}
\begin{rem}[Sub-Exponential Tails]
\label{rem:heavy-tail} In Assumption \ref{assu:tail}, Eqs.~(\ref{eq:innovtail})
and (\ref{eq:epstail}) impose the \emph{sub-exponential} tails for
the innovations, which includes the familiar \emph{sub-Gaussian} tail
as a special case. The sub-exponential condition is also imposed by
MS24 and it is needed in Section \ref{subsec:Consistency-of-Slasso}
to deduce the \emph{restricted eigenvalue }and \emph{deviation bound}
under high dimensional nonstationary data. The heavy-tail features
of financial data like extreme returns are not covered in the current
paper, and will be an important extension in future studies.
\end{rem}
The following Assumption \ref{assu:alpha} imposes restrictions on
the $\alpha$-mixing coefficients that characterize the time dependence
of the innovations $u_{t}$ and $\varepsilon_{t}$. For any two $\sigma$-fields
$\mathcal{A}$ and $\mathcal{B}$, define $\alpha(\mathcal{A},\mathcal{B})=\sup_{A\in\mathcal{A},B\in\mathcal{B}}|\Pr\left(A\cap B\right)-\Pr(A)\Pr(B)|$
and $\alpha(d)=\sup_{s\in\mathbb{Z}}\alpha(\sigma(\{u_{t},\varepsilon_{t}\}_{t\leq s}),\sigma(\{u_{t},\varepsilon_{t}\}_{t\geq s+d}))$. 
\begin{assumption}
\label{assu:alpha}There exist some absolute constants $C_{\alpha}$,
$c_{\alpha}$, $r$, $c_{\varepsilon}$ such that 
\begin{equation}
\alpha(d)\leq C_{\alpha}\exp\left(-c_{\alpha}d^{r}\right),\ \ \forall d\in\mathbb{Z},\label{eq:alpha_u}
\end{equation}
and the long-run variance $\mathbb{\mathbb{E}}\left[\sum_{d=-\infty}^{\infty}\varepsilon_{k,t}\varepsilon_{k,t-d}\right]\geq c_{\varepsilon}$
for all $k\in[p].$
\end{assumption}
The following Assumption \ref{assu:covMat} depicts the contemporary
correlation of $v_{t}$ defined as (\ref{eq:error_transform}), as
well as the constants in the local-to-unity AR coefficients specified
in (\ref{eq:LUR}). Define $\Omega=\Phi\Phi^{\top}$ where $\Phi$
has appeared in (\ref{eq:error_transform}). 
\begin{assumption}
\label{assu:covMat} There are absolute constants $\underline{c}$
and $\overline{C}$ such that: (a) $\underline{c}\leq\lambda_{\min}(\Omega)\leq\lambda_{\max}(\Omega)\leq\overline{C}$;
(b) $\max_{j\in[p]}\sum_{\ell=1}^{p}\left|\Phi_{j\ell}\right|\leq\overline{C}$;
(c) $\max_{j\in[p_{x}]}|c_{j}^{*}|\leq\overline{C}.$
\end{assumption}
We specify the user-determined parameter in (\ref{eq:IVX_original})
as
\begin{equation}
\rho_{\zeta}=1-C_{\zeta}/n^{\tau}\label{eq:rho=000020zeta}
\end{equation}
with absolute constants $C_{\zeta}>0$ and $\tau\in(0,1)$. The choice
of $\tau$ determines the persistence of the IV $\zeta_{j,t}$, which
will be elaborated in Remark \ref{rem:tau}. The following Assumption
\ref{assu:asym_n} characterizes the number of regressors $p$ and
the sparsity index $s$ relative to the sample size $n$. 
\begin{assumption}
\label{assu:asym_n} (a) $p=O(n^{\nu})$ for an arbitrary $\nu>0$
and (b) $s=O(n^{\frac{1}{4}(\tau\wedge(1-\tau))-\xi}\wedge p^{1-\xi})$
for an arbitrary small $\xi>0$.
\end{assumption}
In the following, we expound the differences between the assumptions
above and those in MS24. First, Assumption \ref{assu:tail} imposes
the m.d.s.~and conditional homoskedasticity conditions for the error
term $u_{t}$. Three remarks are in order to justify these two conditions.
\begin{rem}[m.d.s.~Condition]
\label{rem:mds=000020necessity}  Although the m.d.s.~assumption
is not required in MS24, in this paper it is essential for the asymptotic
normality of XDlasso. Without the m.d.s.~assumption, we would need
to use long-run covariances that not only complicate the procedures
but also rule out stationary regressors in the theory. See \citet[Remark 2.3]{phillips2016robust}
for detailed discussions. 
\end{rem}
\begin{rem}[Empirical Implications of the m.d.s.]
\label{rem:mds=000020reasonable} In empirical finance, the m.d.s.~assumption
is commonly imposed on the error term, especially when testing asset
return predictability \citep{zhu2014predictive,kostakis2015robust}.
It indicates that the dependent variable (financial asset return)
is not predictable if the null hypothesis of zero regression coefficients
is not rejected, which aligns with the Efficient Market Hypothesis.
In macroeconomic applications, the high dimensional covariates with
different degrees of persistence alleviate the concern of variable
omission, which makes the m.d.s.~assumption plausible.\footnote{Mild model misspecification with \emph{approximate sparsity }can be
accommodated by our framework; see \citet{belloni2012sparse} and
\citet[Remark 1]{mei2022lasso}.}
\end{rem}
\begin{rem}[Conditional Heteroskedasticity]
\label{rem:hetero}  It is possible to extend our methodology and
theoretical results to conditional heteroskedastic errors. Under low
dimensional predictive regressions, \citet[Theorem 1]{kostakis2015robust}
show that the homoskedastic-only standard error of the IVX estimator
is robust to conditional heteroskedastic error terms when the regressor
of interest is persistent. We conjecture that this result applies
to XDlasso, and the expression of our standard error (\ref{eq:s.e.})
is robust to conditional heteroskedasticity in our two empirical applications,
where each predictor of interest is persistent. Simulation results
in Tables B.5 and B.6
provide supportive evidence on the conjecture. In Appendix B.3,
we also consider a heteroskedasticity-robust standard error (B.5),
and verify its validity by simulations. A complete theory of conditional
heteroskedasticity in high dimensional predictive regression deserves
a standalone paper for future research. 
\end{rem}
Second, MS24 assume the innovations follow linear processes, and impose
the mixing condition and the lower bounded long run variances through
the coefficients in the linear process. In contrast, our Assumption
\ref{assu:alpha} does not assume any specific form of the linear
process for $u_{t}$ and $\varepsilon_{t}$, but directly imposes
the exponentially decaying rate for the mixing coefficient and the
lower bound of the long run variances. 

Third, Assumption \ref{assu:asym_n}(a) follows MS24 by allowing $p$
to diverge at a polynomial rate of $n$; it can be extended to an
exponential rate of $n$ at the cost of expositional complications.
Assumption \ref{assu:asym_n}(b) imposes a more restrictive condition
for the sparsity index $s$, compared to $s=o(n^{1/4})$ in MS24.
This is understandable as asymptotic normality is more delicate and
demanding than consistency. This condition ensures that the shrinkage
bias is accurately estimated to achieve the result (R2) in Section
\ref{subsec:The-Biases}, so that XDlasso asymptotically follows a
normal distribution centered at the true coefficient; also see the
second term of (\ref{eq:t=000020xd=000020decom}) below. 
\begin{rem}[Choice of $\tau$]
\label{rem:tau} In practice, the sparsity index $s$ is unknown.
We therefore recommend $\tau=1/2$ for practitioners, under which
the quantity $(\tau\wedge(1-\tau))/4$ achieves its maximum $1/8$
and thus permits the weakest sparsity condition. This is different
from the conventional wisdom of IVX \citep{phillips2016robust,kostakis2015robust}
where $\tau$ is recommended to be as large as 0.95 to minimize the
loss of rate efficiency (or local power). Their context is different
from our setting of high dimensional predictive regression, under
which an excessively large $\tau$ will permit a very small sparsity
index $s$ relative to $n$ and thus cause a severe distortion in
the size of the test under finite sample. As discussed in Theorem
\ref{thm:length} below, the hypothesis testing based on XDlasso is
consistent for a wide class of local-to-zero $\theta_{j}^{*}$, despite
a slower convergence rate compared with the case when $\tau$ is close
to 1 \citep{campbell2006efficient}.
\end{rem}
\newpage
\begin{rem}[Cointegration]
Our theoretical results do not cover cointegrated regressors. Theorem
4 in MS24 shows that in the presence of cointegration, Slasso over-penalizes
the coefficients of cointegrated regressors and shrinks them all the
way to zero, regardless of their true values. We are unaware of any
regularization method that achieves consistent estimation in $p>n$
regime with cointegrated variables mixed with LURs and stationary
ones. Without consistency, inferential theory on the cointegrated
variable's true parameter is beyond reach at this moment. Our Appendix
B.2 uses a numerical example to show that
under some specifications XDlasso can remain robust despite the presence
of cointegrated control variables. 
\end{rem}
Assumptions \ref{assu:tail}--\ref{assu:asym_n} will be sufficient
for the consistency of Slasso with high dimensional LURs and stationary
regressors. As the bias correction of XDlasso is mounted on the workhorse
estimator Slasso, the consistency of the latter is a prerequisite
for the ensuing maneuver. 

\subsection{Consistency of Slasso\protect\label{subsec:Consistency-of-Slasso}}

The leading case of persistent regressors is LUR in the low-dimensional
predictive regressions \citep{campbell2006efficient}. LUR includes
the unit root as a special case, and is thus more general in modeling
nonstationary behaviors. \citet{lee2022lasso} study the variable
selection properties by the adaptive LASSO under a finite number of
LUR regressors, and MS24 cover the consistency of Slasso under high
dimensional unit roots. In this paper, the first theoretical result
extends Slasso's consistency in the latter paper to incorporate the
LUR processes in the former one. This generalization calls for sophisticated
arguments, as briefed in Remark \ref{rem:LUR-tech} about the technical
challenges of LURs. 

The consistency of Slasso is founded on two building blocks. The first
one, which is essential and challenging, is the \emph{restricted eigenvalue
}(RE) condition of the Gram matrix of the standardized regressors.
For any $L>1$, the RE of any $p\times p$ matrix $\Sigma$ is defined
as 
\begin{equation}
\kappa_{H}(\Sigma,L,s):=\inf_{\delta\in\mathcal{R}(L,s)}\dfrac{\delta^{\top}H^{-1}\Sigma H^{-1}\delta}{\delta^{\top}\delta},\label{eq:def=000020RE}
\end{equation}
where $\mathcal{R}(L,s)=\{\delta\in\mathbb{R}^{p}\backslash\{0_{p}\}:\|\delta_{\mathcal{M}^{c}}\|_{1}\leq L\|\delta_{\mathcal{M}}\|_{1},\ \text{for all }|\mathcal{M}|\leq s\}.$
The generic matrix $H$ is a placeholder and varies in different contexts.
Let $\hat{\Sigma}=W^{\top}W/n$ be the sample Gram matrix of all regressors.
In the context of Slasso, we consider $\Sigma=\hat{\Sigma}$ and $H=D$
along with the scale standardization in (\ref{eq:Slasso}). The choice
of the constant $L$ is related to the procedures of technical proofs
and does not impact the rate of convergence. Following the common
practice \citep{bickel2009simultaneous}, we set $L=3$ as a convenient
choice, and simplify the notation as $\hat{\kappa}_{D}=\kappa_{D}(\hat{\Sigma},3,s)$.
The quantity $\hat{\kappa}_{D}$ will appear, according to Lemma 1
in MS24, in the denominator of Slasso's convergence rates. Therefore,
a lower bound for $\hat{\kappa}_{D}$ is essential for the consistency
of Slasso.

The second condition for Slasso's consistency is the \emph{deviation
bound} (DB) of the cross-product between the error term $u_{t}$ in
(\ref{eq:DGP}) and the standardized regressors. The theoretical order
of the tuning parameter $\lambda$ must be no smaller than that of
$\|n^{-1}\sum_{t=1}^{n}D^{-1}W_{t-1}u_{t}\|_{\infty}$ to avoid overfitting.
On the other hand, an excessively large $\lambda$ causes over shrinkage
and damages consistency. A tight upper bound of $\|n^{-1}\sum_{t=1}^{n}D^{-1}W_{t-1}u_{t}\|_{\infty}$
is therefore indispensable. 

Next, we establish the RE and DB conditions for high dimensional mixed
roots, and highlight their similarities to and differences from those
in MS24. We then leverage them to derive the convergence rates of
Slasso. 
\begin{lem}
\label{lem:std=000020RE=000020all} Under Assumptions \ref{assu:tail}--\ref{assu:asym_n},
there exists an absolute constant $c_{\kappa}$ such that\textcolor{red}{{}
}
\begin{equation}
\hat{\kappa}_{D}\geq\frac{c_{\kappa}}{s(\log p)^{4}}\cdot\label{eq:RE}
\end{equation}
 w.p.a.1 as $n\to\infty.$ In addition, there exists some absolute
constant $C_{{\rm DB}}$ sch that 
\begin{equation}
4\,\bigg\Vert\dfrac{1}{n}\sum_{t=1}^{n}D^{-1}W_{t-1}u_{t}\bigg\Vert_{\infty}\leq\dfrac{C_{{\rm DB}}(\log p)^{\frac{3}{2}+\frac{1}{2r}}}{\sqrt{n}}\label{eq:DB}
\end{equation}
 w.p.a.1 as $n\to\infty$, where $r$ is defined in Assumption \ref{assu:alpha}.
\end{lem}
Since LURs share similar asymptotic behavior with unit roots, the
orders of the RE (Eq.~(\ref{eq:RE})) and DB (Eq.~(\ref{eq:DB}))
are the same as those in Proposition 3 of MS24. Nevertheless, the
technical proofs for LURs are challenging.
\begin{rem}[Technical Challenges of LURs]
\label{rem:LUR-tech} For illustration, in this remark we suppose
that all regressors are LURs. In low dimensions, the Gram matrix of
LURs, after scaled by $n^{-1}$, converges \emph{in distribution }to
a non-degenerate stochastic integral 
\begin{equation}
n^{-1}\hat{\Sigma}\stackrel{{\rm d}}{\to}\int_{0}^{1}\mathcal{\mathcal{U}}_{{\rm {\bf C}}^{*}}(r)\mathcal{\mathcal{\mathcal{U}}}_{{\rm {\bf C}}^{*}}(r)^{\top}{\rm d}r,\label{eq:lim=000020OU=000020integral}
\end{equation}
where $\mathcal{\mathcal{\mathcal{U}}}_{{\rm {\bf {\rm {\bf C}}^{*}}}}(t):=\int_{0}^{t}{\rm e}^{{\rm {\bf {\rm {\bf C}}^{*}}}(t-r)}{\rm d}\mathcal{B}(r)$
is a vector of Ornstein--Uhlenbeck processes, with ${\rm {\bf C}}^{*}:={\rm diag}(c_{1}^{*},c_{2}^{*},\cdots,c_{p_{x}}^{*})$
storing the constants in the AR(1) coefficients of LURs in (\ref{eq:LUR}),
and $\mathcal{B}(r)$ being a multivariate Brownian motion. The diagonal
entries of the stochastic integral on the right-hand side of (\ref{eq:lim=000020OU=000020integral})
are nonnegative and continuously distributed, with a non-trivial probability
in a neighborhood of zero. Consequently, the minimum diagonal entry
of $n^{-1}\hat{\Sigma}$ diminishes to zero as the dimension of LUR
regressors passes to infinity. Eq.~(\ref{eq:RE}) establishes a lower
bound of RE that shrinks to zero in a sufficiently slow speed, thereby
still ensuring the consistency of Slasso. Under LUR, the linear coefficients
in $x_{j,t}=\sum_{\ell\geq0}(1+c_{j}^{*}/n)^{\ell}e_{j,t-\ell}$ depends
on $n$ and $\ell$, which is much more complicated to deal with than
the special case $c_{j}^{*}=0$ in MS24, where all the linear coefficients
become $1$ and the representation of $x_{j,t}$ becomes a simple
partial sum of stationary components. Interested readers may refer
to Lemma A.4 in Section
A.1.2 of the Appendix for details.
\end{rem}
With RE and DB, we formulate the consistency of the initial workhorse
Slasso estimator in the following lemma. 
\begin{lem}
\label{lem:SlassoError} Under Assumptions \ref{assu:tail}--\ref{assu:asym_n},
there exists some absolute constant $C_{{\rm m}}$ such that when
the tuning parameter in (\ref{eq:Slasso}) for the main regression
satisfies $\lambda=C_{{\rm m}}(\log p)^{\frac{3}{2}+\frac{1}{2r}}/\sqrt{n}$,
as $n\to\infty$, we have
\begin{align}
\|D(\hat{\theta}^{{\rm S}}-\theta^{*})\|_{1} & =O_{p}\left(\dfrac{s^{2}}{\sqrt{n}}(\log p)^{6+\frac{1}{2r}}\right).\label{eq:UnitL1-1}
\end{align}
\end{lem}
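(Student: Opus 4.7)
The plan is to cast the Slasso problem in \eqref{eq:Slasso} as a conventional LASSO on rescaled regressors, and then combine a sharp bound on the empirical score $\|n^{-1}\tilde{W}^{\top}u\|_{\infty}$ with a compatibility condition on the standardized sample Gram $\hat{\Sigma}=n^{-1}\tilde{W}^{\top}\tilde{W}$. Writing $\tilde{W}_{t}=D^{-1}W_{t}$ and $\eta=D\theta$, the objective becomes $\hat{\eta}=\arg\min_{\eta}n^{-1}\sum_{t=1}^{n}(y_{t}-\tilde{W}_{t-1}^{\top}\eta)^{2}+\lambda\|\eta\|_{1}$, and the quantity to be bounded is $\|\hat{\eta}-\eta^{*}\|_{1}=\|D(\hat{\theta}^{{\rm S}}-\theta^{*})\|_{1}$ with $\eta^{*}=D\theta^{*}$. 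The standard basic inequality then implies that on the event $\{\lambda\geq 2\|n^{-1}\tilde{W}^{\top}u\|_{\infty}\}$ the error $\hat{\eta}-\eta^{*}$ lies in the usual cone of vectors whose off-support $L_{1}$-norm is bounded by three times the on-support $L_{1}$-norm, so everything reduces to verifying the score event and a compatibility constant on that cone.

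First I would verify the score bound. Standardization equalizes the scales of unit-root and stationary regressors, so that $|\tilde{w}_{j,t}|$ behaves like an $O_{p}(1)$ random variable uniformly in $j$ after accounting for the sub-exponential tails in Assumption~\ref{assu:tail} and the mixing rate in \eqref{eq:alpha_u}. A Bernstein-type inequality for $\alpha$-mixing sequences applied coordinate-wise to $n^{-1}\tilde{W}^{\top}u$, followed by a union bound over $p$ coordinates, shows that $\|n^{-1}\tilde{W}^{\top}u\|_{\infty}$ is $O_{p}\bigl((\log p)^{(1+1/r)/2}/\sqrt{n}\bigr)$ up to further logarithmic factors coming from truncating the sub-exponential tails and from approximating $\hat{\sigma}_{j}$ by its population analogue (which is of order $\sqrt{n}$ in the unit-root block and a constant in the stationary block). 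The prescribed $\lambda=C_{\lambda}(\log p)^{3/2+1/(2r)}/\sqrt{n}$ is chosen with enough slack on the $\log p$ powers so that this event holds with probability approaching one.

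Next I would establish a compatibility condition for $\hat{\Sigma}$ on the cone, following the strategy of \citet{mei2022lasso}. The decomposition into sub-blocks corresponding to the unit-root group $X$ and the stationary group $Z$ is crucial: the $(Z,Z)$ block concentrates around the population covariance, whose smallest eigenvalue is bounded below by Assumption~\ref{assu:covMat}; the $(X,X)$ block converges in distribution, after standardization, to a non-degenerate but random Brownian functional whose smallest eigenvalue is positive almost surely without a deterministic absolute lower bound; and the cross blocks vanish in sup-norm at polynomial rates because the two groups have different integration orders. A perturbation argument then delivers a compatibility constant $\kappa^{2}\gtrsim 1/s$ modulo $\mathrm{polylog}(p)$ factors. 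Plugging the score bound and the compatibility constant into the oracle inequality yields
\[
\|\hat{\eta}-\eta^{*}\|_{1}=O_{p}\bigl(s\lambda/\kappa^{2}\bigr)=O_{p}\bigl(s^{2}\lambda\cdot\mathrm{polylog}(p)\bigr)=O_{p}\Bigl(\tfrac{s^{2}}{\sqrt{n}}(\log p)^{6+1/(2r)}\Bigr),
\]
with the exponent $6+1/(2r)$ on $\log p$ accumulating contributions from $\lambda$, from the tail truncation in the score bound, and from controlling the fluctuations of $\hat{\Sigma}$ around its mixed asymptotic limit.

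The main obstacle is the compatibility step. Conventional LASSO theory relies on a strictly positive-definite population Gram, which fails here because the unit-root block has a random limit with only an almost-surely positive smallest eigenvalue, and because the two groups have radically different orders of magnitude before standardization. The technical heart of the argument is to show that sparse directions cannot align with the near-null subspaces of the limiting matrix except on an event of vanishing probability, and to quantify this alignment loss by a factor $1/s$ in the compatibility constant; this is precisely the source of the extra $s$ relative to the classical $s/\sqrt{n}$ oracle rate. Once the compatibility constant and score bound are in hand, the remainder is a standard oracle-inequality manipulation.
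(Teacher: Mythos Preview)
Your high-level plan---rewrite Slasso as LASSO in the standardized variables, then combine a score bound with a compatibility constant via the basic inequality---is the right skeleton and coincides with the route the paper (and \citet{mei2022lasso}) takes. However, there is a genuine gap: the paper's proof rests on a strong Gaussian approximation of the partial sums $\sum_{s\leq t}\varepsilon_{k,s}$ by Brownian motions (its Lemma~\ref{lem:Gaussian=000020Approximation}, which replaces the linear-process version in \citet{mei2022lasso}), and this device is what makes both the score bound and the compatibility step go through for the unit-root block. Your proposal does not invoke this approximation and the substitutes you suggest do not work.

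For the score bound, a Bernstein inequality for $\alpha$-mixing sequences does not apply to $n^{-1}\sum_{t}\tilde{x}_{j,t-1}u_{t}$ because $\tilde{x}_{j,t}=x_{j,t}/\hat{\sigma}_{j}$ is not a stationary mixing process: it is a normalized partial sum whose dependence on the entire past does not decay. Moreover, there is no ``population analogue'' of $\hat{\sigma}_{j}$ for a unit root; $\hat{\sigma}_{j}^{2}/n$ converges only in distribution to a random integral of squared Brownian motion, so you cannot replace $\hat{\sigma}_{j}$ by a deterministic scale and then apply a mixing concentration bound. The paper instead bounds $\sum_{t}x_{j,t-1}u_{t}$ directly (its \eqref{eq:sup_Xu}) by coupling $x_{j,t}$ to a Brownian path and exploiting the martingale structure of $u_{t}$, and only afterward divides by $\hat{\sigma}_{j}$ via the two-sided bound \eqref{eq:sigma=000020x=000020bound}.

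For the compatibility condition, your description---the $(X,X)$ block converges in distribution to a Brownian functional, then a perturbation argument gives $\kappa^{2}\gtrsim 1/s$---is a low-dimensional heuristic that does not survive $p_{x}\gg n$. There is no FCLT for a growing-dimensional $(X,X)$ Gram, and ``sparse directions cannot align with near-null subspaces'' needs a concrete mechanism. The paper's mechanism is again the Gaussian approximation: replace the standardized unit-root design by its Brownian surrogate uniformly in $j$ (with error $O_{p}(n^{-1/4}(\log p)^{3/2})$), and then control restricted eigenvalues of the Gaussian surrogate via small-ball/anti-concentration arguments that are uniform over the $s$-sparse cone. This is exactly what produces the extra factor $1/(s\,\mathrm{polylog}\,p)$ in the compatibility constant and ultimately the $s^{2}$ in the final rate. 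Without the coupling step, your ``perturbation argument'' has nothing to perturb from.
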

Lemma \ref{lem:SlassoError} provides the consistency of the standardized
coefficients, which is necessary to establish the asymptotic normality
of XDlasso. The rates of tuning parameters are technical devices for
proofs. In practice, we recommend cross-validation to select the tuning
parameter $\lambda$.

\subsection{The Auxiliary Regression\protect\label{subsec:The-Auxiliary-Regression} }

While Lemma \ref{lem:SlassoError} generalizes the convergence of
Slasso from unit roots to LUR regressors, brand new theory must be
developed for the auxiliary regression (\ref{eq:def-lasso-proj}),
again by the workhorse Slasso when $w_{j,t}$ is either LUR or stationary.
Importantly, as we maintain an agnostic attitude about the persistence
of the variable of interest, which is the dependent variable of the
auxiliary regression, the theory must be applicable in a unified manner
to accommodate two very different types of time series. Without loss
of generality, let $\mathcal{M}_{x}=[p_{x}]$ and $\mathcal{M}_{z}=[p]\backslash[p_{x}]$
denote the integer sets indexing the locations of LURs and stationary
regressors, respectively. 

We first examine the case when $j\in\mathcal{M}_{x}.$ The standardized
instrument $\tilde{\zeta}_{j,t}$ has a different degree of persistence
from $X_{-j,t}$ and $Z_{t}$, due to the different orders of integration.
In the low dimensional framework, the FCLT in \citet{phillips2016robust}
yields the following rate of convergence in OLS: 
\begin{equation}
D_{-j}\hat{\varphi}^{(j){\rm OLS}}=D_{-j}\left(\sum_{t=1}^{n}W_{-j,t}W_{-j,t}^{\top}\right)^{-1}\sum_{t=1}^{n}W_{-j,t}\tilde{\zeta}_{j,t}=O_{p}\left(1/\sqrt{n^{\tau\wedge(1-\tau)}}\right).\label{eq:LS_psi}
\end{equation}
In high dimensional models, the sample Gram matrix is rank-deficient
and the FCLT no longer works. Thanks to the $L_{1}$ penalization,
Slasso has a comparable local-to-zero order as (\ref{eq:LS_psi}),
which is shown in Proposition \ref{prop:DB-Aux} below. 

We then turn to $j\in\mathcal{M}_{z}$, and slightly abuse the notation
$Z_{-j,t}$ to denote the vector of stationary regressors excluding
$w_{j,t}$. Define 
\begin{equation}
\varphi_{0z}^{(j)*}:=\mathbb{E}\left(Z_{-j,t}Z_{-j,t}^{\top}\right)^{-1}\mathbb{E}\left(Z_{-j,t}w_{j,t}\right)\label{eq:phi=000020j=0000200=000020z}
\end{equation}
 as the linear projection of $w_{j,t}$ on $Z_{-j,t}$, and normalize
it by the standard deviation of the IV as 
\begin{equation}
\varphi_{z}^{(j)*}=\hat{\varsigma}_{j}^{-1}\varphi_{0z}^{(j)*}.\label{eq:phi=000020j=000020z}
\end{equation}
 The ``pseudo-true'' model for the LASSO regression (\ref{eq:def-lasso-proj})
is 
\begin{equation}
\tilde{\zeta}_{j,t}=X_{t}^{\top}0_{p_{x}}+Z_{-j,t}^{\top}\varphi_{z}^{(j)*}+\tilde{\eta}_{j,t},\label{eq:zeta=000020I0=000020pseudo=000020true}
\end{equation}
where $\text{\ensuremath{\tilde{\eta}}}_{j,t}=(\zeta_{j,t}-Z_{-j,t}^{\top}\varphi_{0z}^{(j)*})/\hat{\varsigma}_{j}$.
Note that when $w_{j,t}$ is stationary, the IV $\zeta_{j,t}$ is
close to $w_{j,t}$ under a large sample size, and thus asymptotically
uncorrelated with the LUR regressors $X_{t}$. As a result, the coefficients
associated with $X_{t}$ in the pseudo-true model (\ref{eq:zeta=000020I0=000020pseudo=000020true})
are zero. 

In addition, the error term $\tilde{\eta}_{j,t}$ is close to the
stationary time series $(w_{j,t}-Z_{-j,t}^{\top}\varphi_{0z}^{(j)*})/\hat{\varsigma}_{j}$
and thus asymptotically uncorrelated to the nonstationary regressors
$X_{t}$ due to different persistence. Furthermore, the coefficient
$\varphi_{0z}^{(j)*}$ satisfies $\mathbb{E}\left(Z_{-j,t}(w_{j,t}-Z_{-j,t}^{\top}\varphi_{0z}^{(j)*})\right)=0$.
Therefore, $\tilde{\eta}_{j,t}$ is also asymptotically uncorrelated
to the stationary regressors $Z_{-j,t}$, thereby ensuring the
consistency of the Slasso estimator $\hat{\varphi}^{(j)}$ in (\ref{eq:def-lasso-proj}).
We impose the following assumption on the coefficient $\varphi_{0z}^{(j)*}$. 
\begin{assumption}
\label{assu:phi=000020z}$\|\varphi_{0z}^{(j)*}\|_{0}\leq s$ with
$\varphi_{0z}^{(j)*}$ defined in (\ref{eq:phi=000020j=0000200=000020z})
and $s$ specified in Assumption \ref{assu:asym_n}. Moreover, $\|\varphi_{0z}^{(j)*}\|_{1}\leq C_{1}$
 for some absolute constant $C_{1}.$
\end{assumption}
To bound the LASSO estimation errors, we need sparsity of not only
the main regression (\ref{eq:DGP}), but also the auxiliary regression
(\ref{eq:zeta=000020I0=000020pseudo=000020true}). This is  similar
to the high dimensional sparse instrumental variable regression; see
\citet{zhu2018sparse}, \citet{gold2020inference}, and \citet{fan2025heteroscedasticity}. In Assumption
\ref{assu:phi=000020z}, we slightly abuse the sparsity index $s$
to bound the number of nonzero coefficients in the vector $\varphi_{0z}^{(j)*}$.
For simplicity, we directly impose this high-level sparsity assumption
on $\varphi_{0z}^{(j)*}$, which can be deduced under the commonly
used sparsity restriction on the precision matrix $\left(\mathbb{E}\left(Z_{t}Z_{t}^{\top}\right)\right)^{-1}$;
see the definition of $s_{j}$ and the conditions in Theorem 2.1 of
\citet[p.~759]{zhang2017simultaneous}. Finally, the upper bound of
the $L_{1}$-norm of $\varphi_{0z}^{(j)*}$ controls the variance
of the error term in the pseudo-true model (\ref{eq:zeta=000020I0=000020pseudo=000020true}).
Parallel conditions naturally hold for $j\in\mathcal{M}_{x}$ as the
pseudo-true coefficients are zero in view of the local-to-zero OLS
estimate displayed in (\ref{eq:LS_psi}).

The following proposition formally lays out the convergence rate of
the auxiliary estimator (\ref{eq:def-lasso-proj}). 
\begin{prop}
\label{prop:DB-Aux}Suppose that Assumptions \ref{assu:tail}--\ref{assu:phi=000020z}
hold. Then, there exists some absolute constant $C_{{\rm a},j}>0$
such that when $\mu_{j}=C_{{\rm a},j}(\log p)^{2+\frac{1}{2r}}/\sqrt{n^{(1-\tau)\wedge\tau}}$,
we have 
\begin{equation}
\|D_{-j}(\hat{\varphi}^{(j)}-\varphi^{(j)*})\|_{1}=O_{p}\left(\dfrac{s^{2}(\log p)^{6+\frac{1}{2r}}}{\sqrt{n^{\tau\wedge(1-\tau)}}}\right),\label{eq:local_L1}
\end{equation}
where $\varphi^{(j)*}={\bf 1}\{j\in\mathcal{M}_{x}\}\cdot0_{p-1}+{\bf 1}\{j\in\mathcal{M}_{z}\}\cdot(0_{p_{x}}^{\top},\varphi_{z}^{(j)*\top})^{\top}$. 
\end{prop}
\begin{rem}[Tuning Parameter $\mu_{j}$]
The rate of tuning parameter $\mu_{j}$ specified in Proposition
\ref{prop:DB-Aux} induces the following Karush--Kuhn--Tucker (KKT)
condition
\begin{equation}
\|n^{-1}\sum_{t=1}^{n}(D_{-j})^{-1}W_{-j,t}\hat{r}_{j,t}\|_{\infty}\leq\dfrac{\mu_{j}}{2}\leq\dfrac{C_{{\rm a},j}(\log p)^{2+\frac{1}{2r}}}{2\sqrt{n^{(1-\tau)\wedge\tau}}},\label{eq:KKT}
\end{equation}
which is sufficient to bound the approximation error $(\widehat{{\rm B}}_{j}-{\rm B}_{j})$
in (\ref{eq:N+AE}). Nevertheless, it is still necessary to establish
the consistency of $\hat{\varphi}^{(j)}$. This consistency result
not only helps us show the asymptotic normality of the $t$-statistic
$t_{j}^{{\rm XD}}$ defined in (\ref{eq:XD=000020t=000020stat}) to
guarantee the asymptotic size of the test, but is also critical for
the convergence rate of the standard error that governs the asymptotic
power.
\end{rem}
The consistency in the main equation and the auxiliary regression
has paved the way for statistical inference. We will analyze the asymptotic
size and power of the concerning test statistics in the next section.

\subsection{Asymptotic Distributions \protect\label{subsec:Asymptotic-Normality}}

The desirable rate of the auxiliary regression built in Proposition
\ref{prop:DB-Aux} guarantees the following decomposition of the $t$-statistic: 

\begin{equation}
\frac{\hat{\theta}_{j}^{{\rm XD}}-\theta_{j}^{*}}{\hat{\omega}_{j}^{{\rm XD}}}={\rm sgn}_{j}\cdot\frac{\sum_{t=1}^{n}\hat{r}_{j,t-1}u_{t}}{\sigma_{u}\sqrt{\sum_{t=1}^{n}\hat{r}_{j,t-1}^{2}}}+O_{p}\left(\dfrac{s^{2}(\log p)^{8+\frac{1}{r}}}{\sqrt{n^{(1-\tau)\wedge\tau}}}\right),\label{eq:t=000020xd=000020decom}
\end{equation}
where ${\rm sgn}_{j}=\frac{\left|\sum_{t=1}^{n}\hat{r}_{j,t-1}w_{j,t-1}\right|}{\sum_{t=1}^{n}\hat{r}_{j,t-1}w_{j,t-1}}$
is either 1 or $-1$ with probability one as $\sum_{t=1}^{n}\hat{r}_{j,t-1}w_{j,t-1}$
is continuously distributed. The first term of the above expression
is a counterpart of $\mathrm{N}_{j}/\omega_{j}$ in the discussion
of generic desparsifying argument in (\ref{eq:N+AE}). The second
term is the convergence rate of the approximation error of the shrinkage
bias analogous to $(\widehat{{\rm B}}_{j}-{\rm B}_{j})/\omega_{j}$.
Under Assumption \ref{assu:asym_n}, the second term on the right-hand
side of (\ref{eq:t=000020xd=000020decom}) is asymptotically negligible,
thereby yielding the following asymptotic normality for any $j\in[p]$. 
\begin{thm}
\label{thm:CLT} Suppose Assumptions \ref{assu:tail}-\ref{assu:phi=000020z}
hold. There exist absolute constants $C_{{\rm m}}$ and $C_{{\rm a},j}$
such that when $\lambda=C_{{\rm m}}(\log p)^{\frac{3}{2}+\frac{1}{2r}}/\sqrt{n}$
and $\mu_{j}=C_{{\rm a},j}(\log p)^{2+\frac{1}{2r}}/\sqrt{n^{(1-\tau)\wedge\tau}}$,
as $n\to\infty$ we have 
\begin{equation}
(\hat{\theta}_{j}^{{\rm XD}}-\theta_{j}^{*})\big/\hat{\omega}_{j}^{{\rm XD}}\stackrel{\mathrm{d}}{\to}\mathcal{N}(0,1).\label{eq:CLT}
\end{equation}
\end{thm}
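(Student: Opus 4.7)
The plan starts from the decomposition (\ref{eq:t=000020xd=000020decom}) already established in the excerpt as a consequence of Propositions \ref{prop:DB-I(1)-w} and \ref{prop:DB-I(0)}. The first task is to verify that the $O_{p}$ remainder is $o_{p}(1)$, which follows from Assumption \ref{assu:asym_n}(b) combined with $\log p=O(\log n)$ from Assumption \ref{assu:asym_n}(a). The sign multiplier $\mathrm{sgn}_{j}\in\{-1,+1\}$ is harmless because $\mathcal{N}(0,1)$ is symmetric. Implicitly the consistency $\hat{\sigma}_{u}^{2}\convp\sigma_{u}^{2}$ enters as well; this follows from Lemma \ref{lem:SlassoError} combined with an ergodic theorem for the stationary m.d.s.~$u_{t}$. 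The remaining task is therefore to show
\begin{equation*}
T_{n}:=\frac{\sum_{t=1}^{n}\hat{r}_{j,t-1}u_{t}}{\sqrt{\sigma_{u}^{2}\sum_{t=1}^{n}\hat{r}_{j,t-1}^{2}}}\convd\mathcal{N}(0,1).
\end{equation*}

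Since $\hat{r}_{j,t-1}$ depends on the entire sample through $\hat{\varphi}^{(j)}$ and is therefore not $\mathcal{F}_{t-1}$-measurable, I would not apply a martingale CLT to $T_{n}$ directly but would first switch to the adapted oracle score $r_{j,t}^{*}:=\tilde{\zeta}_{j,t}-W_{-j,t}^{\top}\varphi^{(j)*}$, with $\varphi^{(j)*}=0$ when $j\in\mathcal{M}_{x}$ (Proposition \ref{prop:DB-I(1)-w}) and $\varphi^{(j)*}=(0_{p_{x}}^{\top},\varphi_{z}^{(j)*\top})^{\top}$ when $j\in\mathcal{M}_{z}$ (the pseudo-true model (\ref{eq:zeta=000020I0=000020pseudo=000020true})). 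For the numerator the key inequality is
\begin{equation*}
\Bigl|\sum_{t=1}^{n}(\hat{r}_{j,t-1}-r_{j,t-1}^{*})u_{t}\Bigr|\leq\|D_{-j}(\hat{\varphi}^{(j)}-\varphi^{(j)*})\|_{1}\cdot\Bigl\|\sum_{t=1}^{n}\tilde{W}_{-j,t-1}u_{t}\Bigr\|_{\infty},
\end{equation*}
whose first factor is controlled by Proposition \ref{prop:DB-I(1)-w} or \ref{prop:DB-I(0)} depending on the persistence of $w_{j,t}$, and whose second factor is $O_{p}(\sqrt{n\log p})$ via a Bernstein-type maximal inequality for martingale differences under the exponential tails of Assumption \ref{assu:tail}. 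An analogous expansion of $\sum\hat{r}_{j,t-1}^{2}$ around $\sum(r_{j,t-1}^{*})^{2}$ splits into a linear cross-term absorbed by the KKT bound (\ref{eq:KKT}) and a quadratic term in $\hat{\varphi}^{(j)}-\varphi^{(j)*}$ controlled by the same $L_{1}$-rates together with a compatibility condition, yielding $\sum\hat{r}_{j,t-1}^{2}=(1+o_{p}(1))\sum(r_{j,t-1}^{*})^{2}$ and hence $T_{n}=T_{n}^{*}+o_{p}(1)$, where $T_{n}^{*}:=\sum_{t=1}^{n}r_{j,t-1}^{*}u_{t}/\sqrt{\sigma_{u}^{2}\sum_{t=1}^{n}(r_{j,t-1}^{*})^{2}}$.

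With the adapted oracle in place, $\{r_{j,t-1}^{*}u_{t}\}_{t=1}^{n}$ is a true martingale difference array, and I would apply a standard martingale CLT for triangular arrays to $T_{n}^{*}$. Conditional-variance convergence is automatic because $\mathbb{E}(u_{t}^{2}|\mathcal{F}_{t-1})=\sigma_{u}^{2}$ makes the normalized sum of conditional variances identically one. The Lindeberg condition reduces to $\max_{t\leq n}(r_{j,t-1}^{*})^{2}/\sum_{s\leq n}(r_{j,s-1}^{*})^{2}\convp 0$, which holds in both regimes: when $j\in\mathcal{M}_{z}$, $r_{j,t}^{*}$ is strictly stationary with exponential tails so $\max_{t}(r_{j,t}^{*})^{2}=O_{p}((\log n)^{2})$ whereas $\sum(r_{j,t}^{*})^{2}=\Theta_{p}(n)$ by ergodicity; when $j\in\mathcal{M}_{x}$, $r_{j,t}^{*}=\tilde{\zeta}_{j,t}$ is mildly integrated of order $\rho_{\zeta}$ from (\ref{eq:rho=000020zeta}) and the ratio vanishes because the sample-variance standardization in (\ref{eq:IVX_std}) puts $\sum\tilde{\zeta}_{j,t}^{2}=\Theta_{p}(n)$. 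Slutsky's theorem then combines $T_{n}=T_{n}^{*}+o_{p}(1)$, $T_{n}^{*}\convd\mathcal{N}(0,1)$, and the consistency of $\hat{\sigma}_{u}^{2}$ with the vanishing remainder from step one to yield (\ref{eq:CLT}).

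The chief obstacle is the unified treatment of the two persistence regimes for $w_{j,t}$ without pretesting: the pseudo-true $\varphi^{(j)*}$, the asymptotic law of $r_{j,t}^{*}$, and the orders of $\sum\hat{r}_{j,t-1}^{2}$ and $\sum\hat{r}_{j,t-1}w_{j,t-1}$ all differ qualitatively between the unit-root and stationary cases. The standardization in (\ref{eq:IVX_std}) and the matching $L_{1}$-rates in Propositions \ref{prop:DB-I(1)-w} and \ref{prop:DB-I(0)} are engineered precisely to enable this unification, but propagating the bounds carefully enough that all cross-product remainders are $o_{p}(1)$ in both regimes simultaneously is where most of the analytical work lies.
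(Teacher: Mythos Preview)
Your overall strategy matches the paper's: replace $\hat{r}_{j,t-1}$ by the oracle score $r_{j,t-1}^{*}$ using the $L_{1}$-rates from Propositions \ref{prop:DB-I(1)-w} and \ref{prop:DB-I(0)}, then apply a martingale CLT to the self-normalized sum built from $r_{j,t-1}^{*}u_{t}$, treating the unit-root and stationary cases separately. The denominator analysis and the consistency of $\hat{\sigma}_{u}^{2}$ are likewise handled the same way in the paper (packaged there as Lemma \ref{lem:s.e.consis}).

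There is, however, one genuine gap. Your claim that ``the sign multiplier $\mathrm{sgn}_{j}\in\{-1,+1\}$ is harmless because $\mathcal{N}(0,1)$ is symmetric'' is not valid when $j\in\mathcal{M}_{x}$. In the unit-root case $\mathrm{sgn}_{j}$ does not converge to a constant: the paper shows $n^{-(1+\tau)}\sum_{t}\hat{r}_{j,t-1}w_{j,t-1}\convd G_{j}^{*}:=C_{\zeta}^{-1}\bigl(\mathrm{var}(e_{j,t})+\int_{0}^{1}\mathcal{B}_{j}\,d\mathcal{B}_{j}\bigr)$, so $\mathrm{sgn}_{j}\convd\mathrm{sgn}(G_{j}^{*})$, a genuinely random $\pm 1$. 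Symmetry of the target law is not enough to conclude $\mathrm{sgn}_{j}\cdot T_{n}^{*}\convd\mathcal{N}(0,1)$, because $\mathrm{sgn}_{j}$ and $T_{n}^{*}$ share the same innovations $e_{j,t}$ through $\zeta_{j,t}$ and could be asymptotically dependent; if the limiting sign were, say, $\mathrm{sgn}(Z)$ the product would converge to $|Z|$, not $Z$. What actually makes the argument go through is an IVX asymptotic-independence result: \citet{phillips2009econometric}'s Lemma 3.2 establishes that the limiting $\mathcal{N}(0,1)$ of $\sum\zeta_{j,t-1}u_{t}\big/\sqrt{\sigma_{u}^{2}\sum\zeta_{j,t-1}^{2}}$ is independent of the Brownian motion $\mathcal{B}_{j}$ appearing in $G_{j}^{*}$, so the product has the mixed-normal limit $\mathcal{MN}(0,\mathrm{sgn}(G_{j}^{*})^{2})=\mathcal{N}(0,1)$. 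This independence is the substantive IVX mechanism that removes the Stambaugh bias, not a by-product of symmetry, and you must invoke it explicitly. A smaller point in the same spirit: in the stationary case your oracle $r_{j,t}^{*}=\tilde{\eta}_{j,t}$ is not strictly stationary (it still contains $\zeta_{j,t}$ and $\hat{\varsigma}_{j}$); the paper peels off the residual piece $C_{\zeta}n^{-\tau}\phi_{j,t}$ via (\ref{eq:decom=000020zeta=000020z}) before applying the martingale CLT to the truly stationary $\eta_{j,t}^{(1)}=w_{j,t}-Z_{-j,t}^{\top}\varphi_{0z}^{(j)*}$.
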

The asymptotic normality in Theorem \ref{thm:CLT} is our main theoretical
result that delivers the valid asymptotic size of the hypothesis testing
for $\mathbb{H}_{0}:\theta_{j}^{*}=\theta_{0,j}$ using the $t$-statistic
$t_{j}^{{\rm XD}}$ in (\ref{eq:XD=000020t=000020stat}). The following
Theorem \ref{thm:length} provides the convergence rate of the estimated
standard error, which characterizes the asymptotic power of the test.
\begin{thm}
\label{thm:length} Conditions in Theorem \ref{thm:CLT} yield 
\[
\hat{\omega}_{j}^{{\rm XD}}=\boldsymbol{1}\left\{ j\in\mathcal{M}_{x}\right\} \cdot O_{p}(1/\sqrt{n^{1+\tau}})+\boldsymbol{1}\left\{ j\in\mathcal{M}_{z}\right\} \cdot O_{p}\left(1/\sqrt{n}\right).
\]
\end{thm}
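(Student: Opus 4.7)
\textbf{Proof sketch for Theorem~\ref{thm:length}.}
The plan is to bound the numerator and denominator of
\[
\hat{\omega}_j^{{\rm XD}}=\frac{\sqrt{\hat{\sigma}_u^2\sum_{t=1}^n\hat{r}_{j,t-1}^2}}{\left|\sum_{t=1}^n\hat{r}_{j,t-1}w_{j,t-1}\right|}
\]
separately. For the numerator, the decomposition $\hat{u}_t=u_t-W_{t-1}^\top(\hat{\theta}^{{\rm S}}-\theta^*)$, combined with Lemma~\ref{lem:SlassoError} and a standard martingale-difference deviation bound on $\|n^{-1}\sum_t u_t\tilde{W}_{t-1}\|_\infty$, gives $\hat{\sigma}_u^2\convp\sigma_u^2\in(0,\infty)$. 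Evaluating the auxiliary LASSO objective (\ref{eq:def-lasso-proj}) at $\varphi=0$ yields $\sum_{t=1}^n\hat{r}_{j,t-1}^2\leq\sum_{t=1}^n\tilde{\zeta}_{j,t-1}^2=n(1+o_p(1))$ because the standardized IV has sample variance one by construction, so the numerator is $O_p(\sqrt n)$.

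For the denominator, I split
\[
\sum_{t=1}^n\hat{r}_{j,t-1}w_{j,t-1}=\sum_{t=1}^n\tilde{\zeta}_{j,t-1}w_{j,t-1}-\hat{\varphi}^{(j)\top}\sum_{t=1}^n W_{-j,t-1}w_{j,t-1}.
\]
When $j\in\mathcal{M}_x$, the low-dimensional IVX limit theory of \citet{phillips2009econometric} gives $n^{-(1+\tau/2)}\sum_t\tilde{\zeta}_{j,t-1}w_{j,t-1}\convp V>0$, so the first term has order $n^{1+\tau/2}$ in probability. When $j\in\mathcal{M}_z$, Proposition~\ref{prop:DB-I(0)} together with the pseudo-true model (\ref{eq:zeta=000020I0=000020pseudo=000020true}) ensures that $\hat{r}_{j,t-1}$ is close (in $\ell_2$) to the stationary $\tilde{\eta}_{j,t-1}$, and an LLN applied to $\tilde{\eta}_{j,t-1}w_{j,t-1}$ delivers $n^{-1}|\sum_t\hat{r}_{j,t-1}w_{j,t-1}|\convp c>0$, placing the denominator at order $n$.

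The main obstacle is showing that the correction $|\hat{\varphi}^{(j)\top}\sum_t W_{-j,t-1}w_{j,t-1}|$ is $o_p(n^{1+\tau/2})$ when $j\in\mathcal{M}_x$. I would substitute $w_{j,t-1}=\tilde{\varsigma}_j\tilde{\zeta}_{j,t-1}+(w_{j,t-1}-\zeta_{j,t-1})$ together with $\tilde{\zeta}_{j,t-1}=\hat{r}_{j,t-1}+W_{-j,t-1}^\top\hat{\varphi}^{(j)}$ to split this correction into three pieces: (i) one controlled by the KKT bound (\ref{eq:KKT}) via $\|D_{-j}^{-1}\sum_t W_{-j,t-1}\hat{r}_{j,t-1}\|_\infty\leq n\mu/2$; (ii) a quadratic form $\hat{\varphi}^{(j)\top}(\sum_t W_{-j,t-1}W_{-j,t-1}^\top)\hat{\varphi}^{(j)}$ controlled by $\|D_{-j}\hat{\varphi}^{(j)}\|_1^2$ times the sup-norm of the standardized Gram matrix; and (iii) a residual $\hat{\varphi}^{(j)\top}\sum_t W_{-j,t-1}u_{t-1}$ with $u_{t-1}:=w_{j,t-1}-\zeta_{j,t-1}$ obeying the slow recursion $u_t=u_{t-1}+(1-\rho_\zeta)\zeta_{j,t-1}$. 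The $L_1$-rate from Proposition~\ref{prop:DB-I(1)-w} together with Assumption~\ref{assu:asym_n}(b) close pieces (i) and (ii) comfortably. Piece (iii) is the tightest: a crude H\"older bound combining the $L_1$-rate of $\hat{\varphi}^{(j)}$ with $\|D_{-j}^{-1}\sum_t W_{-j,t-1}u_{t-1}\|_\infty=O_p(n^{3/2})$ is only marginal against $n^{1+\tau/2}$, so a sharper treatment is required, either by summation by parts that exploits the slow increment $\Delta u_t=(1-\rho_\zeta)\zeta_{j,t-1}$, or by partitioning $\hat{\varphi}^{(j)}$ into blocks according to the persistence of the corresponding coordinate of $W_{-j}$ and invoking differential persistence block by block.

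Combining, $\hat{\omega}_j^{{\rm XD}}=O_p(\sqrt n)/n^{1+\tau/2}=O_p(1/\sqrt{n^{1+\tau}})$ when $j\in\mathcal{M}_x$, and $\hat{\omega}_j^{{\rm XD}}=O_p(\sqrt n)/n=O_p(1/\sqrt n)$ when $j\in\mathcal{M}_z$, which is the stated result.
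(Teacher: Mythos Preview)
Your overall strategy---bound numerator and denominator separately, isolate the correction term in the denominator as the crux when $j\in\mathcal{M}_x$---is exactly the paper's route through Lemma~A.5. The numerator bound and the $j\in\mathcal{M}_z$ case are fine. Two points, one minor and one substantive.

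Minor: for $j\in\mathcal{M}_x$ the normalized sum $n^{-(1+\tau/2)}\sum_t\tilde{\zeta}_{j,t-1}w_{j,t-1}$ does \emph{not} converge in probability to a positive constant; by the IVX functional CLT it converges \emph{in distribution} to a random variable proportional to $G_j^*=C_\zeta^{-1}\bigl(\mathrm{var}(e_{j,t})+\int_0^1\mathcal{B}_j\,d\mathcal{B}_j\bigr)$. Since $G_j^*>0$ a.s., the inverse is still $O_p(1)$ and your order survives, but the argument as written is inaccurate.

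Substantive: piece~(iii) is a genuine gap, and neither of your proposed fixes closes it. The summation-by-parts idea is circular: with $u_{t-1}:=w_{j,t-1}-\zeta_{j,t-1}$ and $\Delta u_t=(1-\rho_\zeta)\zeta_{j,t-1}$ you get $u_{t-1}=(1-\rho_\zeta)\sum_{s\le t-1}\zeta_{j,s-1}$, but summing the IVX recursion shows this equals $x_{j,t-1}-\zeta_{j,t-1}=u_{t-1}$ again---$u_{t-1}$ is itself essentially a unit root, so for $k\in\mathcal{M}_x$ the cross-product $\sum_t x_{k,t-1}u_{t-1}$ stays at $O_p(n^2)$. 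Block partitioning also fails: Proposition~\ref{prop:DB-I(1)-w} gives only $\|D_{-j}\hat{\varphi}^{(j)}\|_1=O_p\bigl(n^{-(\tau\wedge(1-\tau))/2}\bigr)$ (with no separate control on the $\mathcal{M}_x$ block), and combined with $\|D_{-j}^{-1}\sum_t W_{-j,t-1}u_{t-1}\|_\infty=O_p(n^{3/2})$ this yields $O_p\bigl(n^{3/2-(\tau\wedge(1-\tau))/2}\bigr)$, which is never $o_p(n^{1+\tau/2})$ for any $\tau\in(0,1)$.

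The paper's resolution is the idea you are missing: go back inside the \emph{proof} of Proposition~\ref{prop:DB-I(1)-w} and observe that the basic inequality (\ref{eq:basic=000020ineq}) with the deviation bound (\ref{eq:DB_MI}) holds already for $\varphi^{(j)*}=0$, because $\tilde{\zeta}_{j,t}$ is asymptotically uncorrelated with every coordinate of $W_{-j,t}$. Plugging $\varphi^{(j)*}=0$ gives $\|D_{-j}\hat{\varphi}^{(j)}\|_1\le 2\|D_{-j}\varphi^{(j)*}\|_1=0$ w.p.a.1; the paper phrases this as ``let $C_\varphi$ shrink to zero'' (taking $C_\varphi\asymp n^{-(1-\tau)}$). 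Either way, $\hat{r}_{j,t}=\tilde{\zeta}_{j,t}$ w.p.a.1 when $j\in\mathcal{M}_x$, the entire correction term vanishes, and the crude H\"older step becomes trivial.
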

The convergence rates displayed in Theorem \ref{thm:length} for the
two types of regressors are coherent with the results in low dimensional
predictive regressions. When $j\in\mathcal{M}_{x}$, the standard
error converges faster than the rate $1/n^{\delta_{j}}$ for any $\delta_{j}\in(0,(1+\tau)/2)$.
Thus, for the null hypothesis $\mathbb{H}_{0}:\beta_{j}^{*}=0$, the
hypothesis testing based on the $t$-statistic $t_{j}^{{\rm XD}}$
is consistent under the alternative with $\beta_{j}^{*}=c/n^{\delta_{j}}$
for some $c>0$ over this class of $\delta_{j}$. The range $\delta_{j}\in(0,(1+\tau)/2)$
includes the important $1/\sqrt{n}$ rate of coefficients for the
LUR regressor $X_{j,t}$, under which ${\rm var}(X_{j,t}\beta_{j}^{*})=O(1).$
The $1/\sqrt{n}$ factor thus balances the larger order of LUR regressors
$X_{t}$ and the standard $O_{p}(1)$ stochastic order of $y_{t}$
\citep{phillips2015halbert,lee2022lasso}, and the test by XDlasso
is consistent under this class of alternatives. When $j\in\mathcal{M}_{z}$,
XDlasso achieves the standard $\sqrt{n}$-consistency. 
\begin{rem}[Convergence Rate of XDlasso]
Unlike the convergence rates of Slasso in Lemma \ref{lem:SlassoError},
the convergence rate of XDlasso in Theorem \ref{thm:length} is independent
of the variable dimension $p$ and the sparsity index $s$. In view
of (\ref{eq:N+AE}), in XDlasso the order of the approximation error
of the shrinkage bias $(\widehat{{\rm B}}_{j}-{\rm B}_{j})$ depends
on $p$ and $s$, but it is dominated by the order of the noise component
${\rm N}_{j}$. The convergence rate of XDlasso therefore follows
the order of ${\rm N}_{j}$, which relates only to the sample size
$n$.
\end{rem}
Finally, when we are interested in a joint null hypothesis $\mathbb{H}_{0}:\theta_{\mathcal{J}}^{*}=\theta_{0,\mathcal{J}}$
for the coefficients indexed by $\mathcal{J}\subset[p]$ with a fixed
cardinality, we can use the Wald statistic in (\ref{eq:Wald}) for
this test. 
\begin{thm}
\label{thm:wald} Suppose that $|\mathcal{J}|$ is fixed as $n\to\infty$,
and the conditions in Theorem \ref{thm:CLT} hold for all $j\in\mathcal{J}$.
Under the null hypothesis $\mathbb{H}_{0}:\theta_{\mathcal{J}}^{*}=\theta_{0,\mathcal{J}}$,
we have 
\[
{\rm Wald}_{\mathcal{J}}^{{\rm XD}}\stackrel{\mathrm{d}}{\to}\chi_{|\mathcal{J}|}^{2}.
\]
\end{thm}
Theorem \ref{thm:wald} is a natural extension from the $t$-statistic
for a one-dimensional univariate hypothesis to a simultaneous multivariate
hypothesis. Notice that the Wald statistic is valid even if $\mathcal{J}$
includes a mixture of LUR regressors and stationary ones. It allows
the researcher to maintain the agnostic attitude when conducting the
joint hypothesis testing. \medskip

Let us summarize the insights gained from the theoretical development.
The fundamental principle behind the Dlasso method, as discussed in
\citet{zhang2014confidence}, is based on the Frisch-Waugh-Lovell
theorem. This theorem purges the influence of other control variables
to overcome shrinkage bias. In the context of persistent predictors,
the two-stage least squares approach developed in \citet{magdalinos2009limit}
is used to eliminate Stambaugh bias. Each piece of XDlasso is a machine
learning version of a classical idea. To adapt these existing procedures
to high dimensional predictive regressions, we must rely on the consistency
of Slasso, with its extension to LURs. This consistency is crucial
for both the main regression and the auxiliary regression. Moreover,
as mentioned in Remark \ref{rem:tau}, the construction of the IV
and the choice of $\tau$ must be tailored to balance the two types
of predictors. This approach is original in that such a restriction
is unique to high dimensional models and has not been studied before
even in conventional low dimensional settings.

\section{Monte Carlo Simulations\protect\label{sec:Simulation}}

In this section, we evaluate the performance of the proposed XDlasso
inference procedure by comparing its test size and power with those
of Dlasso. Although Dlasso and other existing LASSO bias-correction
procedures are not designed to handle persistent regressors, we include
this comparison to highlight the value added by the IVX transformation.
Furthermore, to demonstrate the robustness of the XDlasso approach,
we benchmark it using tests based on infeasible estimators, where
an oracle reveals the locations of the nonzero coefficients.

\subsection{Setup\protect\label{subsec:Setup}}

We consider the linear predictive regression model in (\ref{eq:DGP}).
The vector of the stationary components, denoted as $v_{t}=\left(u_{t},e_{t}^{\top},Z_{t}^{\top}\right)^{\top}$,
is generated by the two different processes: 
\begin{align}
\text{Case I (IID Innovations): } & v_{t}\sim\text{i.i.d. }\mathcal{N}\left(0,\Sigma\right),\label{eq:dgp_iid_inno}\\
\text{Case II (AR(1) Innovations): } & v_{t}=R_{n}v_{t-1}+\xi_{t},\ \ \xi_{t}\sim\text{i.i.d. }\mathcal{N}\left(0,\Sigma\right),\label{eq:dgp_ar1_inno}
\end{align}
where $R_{n}=\text{diag}(0,0.3,0.3,\ldots,0.3)$. Under this $R_{n}$
the error term $u_{t}$ in the main regression (\ref{eq:DGP}) remains
i.i.d., satisfying the m.d.s.~condition in Assumption \ref{assu:tail},
while the (local) unit root innovations $e_{t}$ and the stationary
regressors $Z_{t}$ are AR(1) processes. The covariance matrix $\Sigma=(\Sigma_{ij})_{i,j\in[p+1]}$
is specified as 
\[
\Sigma_{ij}=\begin{cases}
0, & \text{if }(i,j)\text{ is associated with }Z_{t}\text{ and }u_{t};\\
0.5^{|i-j|}, & \text{otherwise}.
\end{cases}
\]
The persistent regressors $X_{t}$ are generated by 
\begin{equation}
X_{t}=\text{diag}(\rho^{*})X_{t-1}+e_{t},\label{eq:dgp_lur}
\end{equation}
where $\rho^{*}=(1,1-1/n,1+1/n,1,1-1/n,1+1/n,\cdots)^{\top}\in\mathbb{R}^{p_{x}}$.
Recall $0_{p}$ is a $p$-dimensional zero vector and $1_{p}$ is
a $p$-dimensional vector of ones. The true coefficient vectors are:
\begin{equation}
\beta^{*}=\left(\beta_{1}^{*},\dfrac{0.5}{\sqrt{n}}\times1_{4}^{\top},0_{p_{x}-5}^{\top}\right)^{\top},\ \ \gamma^{*}=(\gamma_{1}^{*},0.5\times1_{2}^{\top},0.25\times1_{2}^{\top},0_{p_{z}-5}^{\top})^{\top}.\label{eq:true_coef}
\end{equation}
The specification involves four active LUR regressors and four active
stationary regressors.\footnote{In Section B.1, we consider a setting
with more nonzero coefficients and find robust performance of XDlasso.} The $1/\sqrt{n}$ scaling balances the regression by normalizing
the coefficients of the LUR regressors. We test the hypothesis
$\mathbb{H}_{0}:\beta_{1}^{*}=0$, $\mathbb{H}_{0}:\gamma_{1}^{*}=0$,
and the joint null hypothesis $\mathbb{H}_{0}:\beta_{1}^{*}=\gamma_{1}^{*}=0$,
respectively, under the sample sizes $n\in\{200,300,400,500,600\}$
and the dimensionality pairs $(p_{x},p_{z})\in\{(50,100),(100,150),(150,300)\}.$
We conduct 2000 replications in each setting.

We compare the finite sample performance of XDlasso, as described
in Algorithm \ref{alg:XDlasso}, and Dlasso with the score vector
in \eqref{eq:def-rhat-1} and \eqref{eq:def-lasso-proj-1}. In addition,
we consider two infeasible testing procedures as benchmarks. Using
the known active set of regressors, we conduct IVX inference (IVX
oracle) and the standard \textbf{$t$}-test based on the OLS estimator
(OLS oracle), employing only the regressor of interest and the active
regressors, which form a low-dimensional predictive regression model.
We set $C_{\zeta}=5$ and $\tau=0.5$ for the parameter $\rho_{\zeta}$
specified as (\ref{eq:rho=000020zeta}). As discussed in Remark \ref{rem:tau},
the choice $\tau=$ 0.5 admits the weakest sparsity condition, and
thus effectively improves the finite sample performance. 

Both XDlasso and Dlasso involve Slasso, where the selection of tuning
parameters $\lambda$ and $\mu$ affects finite sample performance.
In our experiments, we employ the \emph{block} 10-fold cross-validation
(CV), splitting the sample into 10 equally sized chronologically ordered
consecutive blocks for validations. Though the unconditional variances
of nonstationary regressors vary in different chronological blocks,
the standardization in the Slasso estimators like (\ref{eq:Slasso})
and (\ref{eq:def-lasso-proj}) account for such variation. This explains
the robustness of the block CV to nonstationary time series, as shown
by our simulation results.

In Theorem \ref{thm:CLT}, the tuning parameters are specified as
constants multiplied by the appropriate rates of convergence determined
by the sample size $n$, dimensionality $p$ and the mixing condition
constant $r$. As a benchmark, we also calibrate the tuning parameters
following \citet{lee2022lasso} to examine the validity of the theoretical
orders of tuning parameters specified in Theorem \ref{thm:CLT}. Specifically,
we perform $500$ pilot replications for each DGP, with $n_{0}=400$
, $\left(p_{x0},p_{z0}\right)=\left(100,150\right)$, and $p_{0}=p_{x0}+p_{z0}$.
In each replication $q=1,2,\dots,500$, we use the 10-fold cross-validation
to choose the tuning parameters $\lambda^{(q)}$ and $\mu^{(q)}$,
and calibrate the constants as 
\[
C_{{\rm m}}^{\left(q\right)}=\lambda^{(q)}n_{0}^{1/2}/(\log p_{0})^{\frac{3}{2}+\frac{1}{2r}},\ \ C_{{\rm a}}^{\left(q\right)}=\mu^{(q)}n_{0}^{[(1-\tau)\wedge\tau]/2}/(\log p_{0})^{2+\frac{1}{2r}},
\]
where $r=1$ and $\tau=0.5$ are chosen in the simulation. We then
fix $\widehat{C}_{\star}=\mathrm{median}(C_{\star}^{\left(1\right)},\ldots,C_{\star}^{\left(500\right)})$
for $\star\in\left\{ {\rm m},{\rm a}\right\} $ in the full-scale
experiments. The tuning parameters are then set as $\widehat{\lambda}=\widehat{C}_{{\rm m}}(\log p)^{\frac{3}{2}+\frac{1}{2r}}/\sqrt{n}$
and $\widehat{\mu}=\widehat{C}_{{\rm a}}(\log p)^{2+\frac{1}{2r}}/\sqrt{n^{(1-\tau)\wedge\tau}}$
as in Theorem \ref{thm:CLT}.

\subsection{Results}

We first investigate the empirical size of different testing methods
at a 5\% nominal significance level when the true coefficients $\beta_{1}^{*}=0$
and $\gamma_{1}^{*}=0$. Tables \ref{tab:mc_size_iid} and \ref{tab:mc_size_ar1}
report the empirical sizes under the IID and AR(1) innovations, respectively.
Foremost among the findings is that XDlasso effectively controls the
empirical size for both $\beta_{1}^{*}$, associated with a unit root
regressor, and $\gamma_{1}^{*}$, associated with a stationary regressor.
This performance stands in sharp contrast to that of Dlasso and OLS
oracle, which exhibit severe size distortions for $\beta_{1}^{*}$.
Such distortions can be attributed to the failure to account for the
Stambaugh bias arising from nonstationarity. Furthermore, the results
yield noteworthy insights regarding the tuning parameter selection.
The empirical size of XDlasso with both cross-validated and calibrated
tuning parameters is close to the nominal level. This result not only
validates the asymptotic rates of tuning parameters specified in Theorem
\ref{thm:CLT} but also supports CV as a feasible data-driven tuning
parameter selection method in practice. In addition, we investigate
the efficiency and robustness of XDlasso in comparison with alternatives.
When compared to the unbiased but infeasible ``IVX oracle'' estimator,
the confidence intervals produced by XDlasso are only slightly wider.
In contrast to the estimators solely for low-dimensional data, XDlasso
demonstrates robustness by accommodating high dimensional covariates
without compromising much in efficiency. Lastly, the empirical sizes
for testing the joint null hypothesis $\mathbb{H}_{0}:\beta_{1}^{*}=\gamma_{1}^{*}=0$,
as reported in Table \ref{tab:mc_size_joint}, are also well controlled
around the nominal level across setups. The results validate the theoretical
result in Theorem \ref{thm:wald} and is consistent with the findings
for testing $\mathbb{H}_{0}:\beta_{1}^{*}=0$ and $\mathbb{H}_{0}:\gamma_{1}^{*}=0$
individually. 
\begin{table}[hp]
\begin{centering}
\caption{Empirical size and length of confidence interval: IID innovations\protect\label{tab:mc_size_iid}}
{\small\begin{tabular}{c|cccc|cccc|cccc}
\hline 
\multirow{3}{*}{$n$ } & \multicolumn{4}{c|}{Oracle} & \multicolumn{4}{c|}{Calibrated} & \multicolumn{4}{c}{CV}\tabularnewline
 & \multicolumn{2}{c}{IVX Oracle} & \multicolumn{2}{c|}{OLS Oracle} & \multicolumn{2}{c}{XDlasso} & \multicolumn{2}{c|}{Dlasso} & \multicolumn{2}{c}{XDlasso} & \multicolumn{2}{c}{Dlasso}\tabularnewline
\cline{2-13}
 & Size  & Len.  & Size  & Len.  & Size  & Len.  & Size  & Len.  & Size  & Len.  & Size  & Len. \tabularnewline
\hline 
\multicolumn{13}{c}{$\mathbb{H}_{0}:\beta_{1}^{*}=0$ for nonstationary regressor}\tabularnewline
\hline 
\multicolumn{13}{c}{$\left(p_{x},p_{z}\right)=\left(50,100\right)$}\tabularnewline
\hline 
200  & 0.035  & 0.216  & 0.150  & 0.098  & 0.050  & 0.222  & 0.354  & 0.104  & 0.060  & 0.227  & 0.438  & 0.159 \tabularnewline
300  & 0.044  & 0.153  & 0.145  & 0.066  & 0.052  & 0.163  & 0.436  & 0.079  & 0.060  & 0.168  & 0.527  & 0.122 \tabularnewline
400  & 0.055  & 0.122  & 0.148  & 0.050  & 0.049  & 0.132  & 0.474  & 0.064  & 0.061  & 0.136  & 0.572  & 0.096 \tabularnewline
500  & 0.049  & 0.102  & 0.149  & 0.040  & 0.056  & 0.112  & 0.512  & 0.054  & 0.070  & 0.114  & 0.591  & 0.076 \tabularnewline
600  & 0.046  & 0.087  & 0.141  & 0.033  & 0.049  & 0.098  & 0.553  & 0.047  & 0.061  & 0.100  & 0.602  & 0.065 \tabularnewline
\hline 
\multicolumn{13}{c}{$\left(p_{x},p_{z}\right)=\left(100,150\right)$}\tabularnewline
\hline 
200  & 0.041  & 0.217  & 0.158  & 0.099  & 0.052  & 0.220  & 0.382  & 0.102  & 0.060  & 0.225  & 0.484  & 0.160 \tabularnewline
300  & 0.047  & 0.154  & 0.158  & 0.066  & 0.059  & 0.163  & 0.487  & 0.077  & 0.077  & 0.169  & 0.601  & 0.126 \tabularnewline
400  & 0.048  & 0.123  & 0.156  & 0.050  & 0.051  & 0.130  & 0.540  & 0.062  & 0.067  & 0.135  & 0.681  & 0.106 \tabularnewline
500  & 0.047  & 0.103  & 0.159  & 0.040  & 0.045  & 0.109  & 0.599  & 0.053  & 0.065  & 0.115  & 0.714  & 0.088 \tabularnewline
600  & 0.042  & 0.088  & 0.156  & 0.033  & 0.045  & 0.096  & 0.632  & 0.046  & 0.065  & 0.099  & 0.745  & 0.077 \tabularnewline
\hline 
\multicolumn{13}{c}{$\left(p_{x},p_{z}\right)=\left(150,300\right)$}\tabularnewline
\hline 
200  & 0.047  & 0.219  & 0.157  & 0.100  & 0.052  & 0.213  & 0.349  & 0.097  & 0.064  & 0.222  & 0.508  & 0.140 \tabularnewline
300  & 0.046  & 0.154  & 0.138  & 0.065  & 0.046  & 0.157  & 0.421  & 0.073  & 0.055  & 0.166  & 0.585  & 0.115 \tabularnewline
400  & 0.046  & 0.122  & 0.134  & 0.049  & 0.043  & 0.126  & 0.500  & 0.059  & 0.051  & 0.134  & 0.665  & 0.097 \tabularnewline
500  & 0.045  & 0.101  & 0.136  & 0.039  & 0.057  & 0.105  & 0.547  & 0.050  & 0.068  & 0.114  & 0.694  & 0.084 \tabularnewline
600  & 0.045  & 0.087  & 0.144  & 0.033  & 0.052  & 0.093  & 0.593  & 0.044  & 0.066  & 0.099  & 0.742  & 0.075 \tabularnewline
\hline 
\multicolumn{13}{c}{$\mathbb{H}_{0}:\gamma_{1}^{*}=0$ for stationary regressor}\tabularnewline
\hline 
\multicolumn{13}{c}{$\left(p_{x},p_{z}\right)=\left(50,100\right)$}\tabularnewline
\hline 
200  & 0.043  & 0.373  & 0.064  & 0.324  & 0.068  & 0.324  & 0.065  & 0.288  & 0.071  & 0.323  & 0.066  & 0.288 \tabularnewline
300  & 0.042  & 0.294  & 0.050  & 0.264  & 0.062  & 0.265  & 0.065  & 0.240  & 0.061  & 0.265  & 0.065  & 0.240 \tabularnewline
400  & 0.039  & 0.251  & 0.045  & 0.229  & 0.053  & 0.229  & 0.060  & 0.210  & 0.056  & 0.229  & 0.060  & 0.211 \tabularnewline
500  & 0.041  & 0.222  & 0.045  & 0.203  & 0.050  & 0.204  & 0.052  & 0.189  & 0.047  & 0.204  & 0.054  & 0.190 \tabularnewline
600  & 0.043  & 0.200  & 0.045  & 0.185  & 0.049  & 0.186  & 0.049  & 0.174  & 0.049  & 0.187  & 0.050  & 0.174 \tabularnewline
\hline 
\multicolumn{13}{c}{$\left(p_{x},p_{z}\right)=\left(100,150\right)$}\tabularnewline
\hline 
200  & 0.041  & 0.371  & 0.055  & 0.323  & 0.075  & 0.324  & 0.074  & 0.287  & 0.080  & 0.320  & 0.080  & 0.284 \tabularnewline
300  & 0.049  & 0.294  & 0.057  & 0.263  & 0.070  & 0.264  & 0.066  & 0.239  & 0.072  & 0.263  & 0.065  & 0.239 \tabularnewline
400  & 0.058  & 0.251  & 0.057  & 0.228  & 0.074  & 0.228  & 0.073  & 0.209  & 0.074  & 0.228  & 0.075  & 0.209 \tabularnewline
500  & 0.051  & 0.222  & 0.063  & 0.203  & 0.071  & 0.203  & 0.074  & 0.189  & 0.072  & 0.204  & 0.071  & 0.189 \tabularnewline
600  & 0.051  & 0.201  & 0.057  & 0.186  & 0.062  & 0.186  & 0.067  & 0.173  & 0.061  & 0.186  & 0.067  & 0.174 \tabularnewline
\hline 
\multicolumn{13}{c}{$\left(p_{x},p_{z}\right)=\left(150,300\right)$}\tabularnewline
\hline 
200  & 0.040  & 0.375  & 0.056  & 0.324  & 0.060  & 0.329  & 0.056  & 0.291  & 0.066  & 0.317  & 0.065  & 0.282 \tabularnewline
300  & 0.037  & 0.297  & 0.040  & 0.264  & 0.054  & 0.265  & 0.055  & 0.241  & 0.059  & 0.260  & 0.056  & 0.237 \tabularnewline
400  & 0.036  & 0.251  & 0.046  & 0.228  & 0.053  & 0.228  & 0.047  & 0.210  & 0.053  & 0.227  & 0.050  & 0.208 \tabularnewline
500  & 0.030  & 0.222  & 0.048  & 0.204  & 0.051  & 0.203  & 0.052  & 0.188  & 0.050  & 0.203  & 0.049  & 0.188 \tabularnewline
600  & 0.043  & 0.201  & 0.049  & 0.186  & 0.057  & 0.185  & 0.057  & 0.172  & 0.052  & 0.185  & 0.054  & 0.173 \tabularnewline
\hline 
\end{tabular}
}
\par\end{centering}
{

{\footnotesize\textit{Notes}}{\footnotesize : The data generating process
corresponds to \eqref{eq:dgp_iid_inno}. The upper and lower panels
report the empirical size of testing the null hypotheses $\mathbb{H}_{0}:\beta_{1}^{*}=0$
and $\mathbb{H}_{0}:\gamma_{1}^{*}=0$, respectively, at a 5\% nominal
significance level. ``Size'' is calculated as $R^{-1}\sum_{r=1}^{R}\mathbf{1}\left[|t^{\text{XD}(r)}|>\mathrm{\Phi}_{0.975}\right]$
across $R=2,000$ replications, where $t^{\text{XD}(r)}$ is computed
based on \eqref{eq:XD=000020t=000020stat} for the $r$-th replication,
and the critical value $\mathrm{\Phi}_{0.975}$ ($\approx1.96$) is
the 97.5-th percentile of the standard normal distribution. ``Len.''
refers to the median length of the 95\% confidence intervals across
replications. The IVX oracle and OLS oracle are infeasible estimators.
The ``Calibrated'' and ``CV'' columns refer to the methods used
for choosing the tuning parameters through calibration and cross-validation,
respectively.  }}{\footnotesize\par}
\end{table}
\begin{table}[hp]
\begin{centering}
\caption{Empirical size and length of confidence interval: AR(1) innovations\protect\label{tab:mc_size_ar1}}
{\small\begin{tabular}{c|cccc|cccc|cccc}
\hline 
\multirow{3}{*}{$n$ } & \multicolumn{4}{c|}{Oracle} & \multicolumn{4}{c|}{Calibrated } & \multicolumn{4}{c}{CV}\tabularnewline
 & \multicolumn{2}{c}{IVX Oracle} & \multicolumn{2}{c|}{OLS Oracle} & \multicolumn{2}{c}{XDlasso} & \multicolumn{2}{c|}{Dlasso} & \multicolumn{2}{c}{XDlasso} & \multicolumn{2}{c}{Dlasso}\tabularnewline
\cline{2-13}
 & Size  & Len.  & Size  & Len.  & Size  & Len.  & Size  & Len.  & Size  & Len.  & Size  & Len. \tabularnewline
\hline 
\multicolumn{13}{c}{$\mathbb{H}_{0}:\beta_{1}^{*}=0$ for nonstationary regressor}\tabularnewline
\hline 
\multicolumn{13}{c}{$\left(p_{x},p_{z}\right)=\left(50,100\right)$}\tabularnewline
\hline 
200  & 0.052  & 0.162  & 0.162  & 0.072  & 0.064  & 0.167  & 0.391  & 0.077  & 0.089  & 0.172  & 0.489  & 0.134 \tabularnewline
300  & 0.045  & 0.113  & 0.154  & 0.048  & 0.058  & 0.122  & 0.458  & 0.058  & 0.091  & 0.125  & 0.551  & 0.096 \tabularnewline
400  & 0.052  & 0.088  & 0.160  & 0.036  & 0.056  & 0.098  & 0.507  & 0.047  & 0.081  & 0.100  & 0.582  & 0.070 \tabularnewline
500  & 0.050  & 0.073  & 0.145  & 0.029  & 0.056  & 0.081  & 0.547  & 0.040  & 0.082  & 0.084  & 0.617  & 0.056 \tabularnewline
600  & 0.049  & 0.063  & 0.147  & 0.024  & 0.053  & 0.071  & 0.574  & 0.035  & 0.075  & 0.073  & 0.625  & 0.047 \tabularnewline
\hline 
\multicolumn{13}{c}{$\left(p_{x},p_{z}\right)=\left(100,150\right)$}\tabularnewline
\hline 
200  & 0.052  & 0.161  & 0.149  & 0.073  & 0.059  & 0.164  & 0.410  & 0.075  & 0.095  & 0.172  & 0.541  & 0.130 \tabularnewline
300  & 0.053  & 0.113  & 0.149  & 0.048  & 0.059  & 0.120  & 0.511  & 0.057  & 0.098  & 0.126  & 0.646  & 0.105 \tabularnewline
400  & 0.047  & 0.088  & 0.158  & 0.036  & 0.046  & 0.093  & 0.564  & 0.046  & 0.082  & 0.099  & 0.711  & 0.078 \tabularnewline
500  & 0.046  & 0.074  & 0.154  & 0.028  & 0.055  & 0.080  & 0.620  & 0.039  & 0.076  & 0.084  & 0.735  & 0.063 \tabularnewline
600  & 0.046  & 0.063  & 0.157  & 0.023  & 0.056  & 0.068  & 0.641  & 0.034  & 0.082  & 0.072  & 0.762  & 0.056 \tabularnewline
\hline 
\multicolumn{13}{c}{$\left(p_{x},p_{z}\right)=\left(150,300\right)$}\tabularnewline
\hline 
200  & 0.051  & 0.161  & 0.157  & 0.072  & 0.050  & 0.159  & 0.368  & 0.071  & 0.081  & 0.170  & 0.536  & 0.113 \tabularnewline
300  & 0.046  & 0.112  & 0.146  & 0.047  & 0.049  & 0.113  & 0.461  & 0.054  & 0.075  & 0.122  & 0.624  & 0.086 \tabularnewline
400  & 0.052  & 0.088  & 0.142  & 0.035  & 0.057  & 0.091  & 0.532  & 0.043  & 0.076  & 0.099  & 0.662  & 0.072 \tabularnewline
500  & 0.051  & 0.073  & 0.140  & 0.028  & 0.058  & 0.077  & 0.587  & 0.037  & 0.076  & 0.084  & 0.718  & 0.062 \tabularnewline
600  & 0.039  & 0.062  & 0.150  & 0.023  & 0.056  & 0.065  & 0.619  & 0.032  & 0.076  & 0.072  & 0.756  & 0.057 \tabularnewline
\hline 
\multicolumn{13}{c}{$\mathbb{H}_{0}:\gamma_{1}^{*}=0$ for stationary regressor}\tabularnewline
\hline 
\multicolumn{13}{c}{$\left(p_{x},p_{z}\right)=\left(50,100\right)$}\tabularnewline
\hline 
200  & 0.044  & 0.380  & 0.057  & 0.312  & 0.067  & 0.333  & 0.065  & 0.275  & 0.067  & 0.331  & 0.068  & 0.273 \tabularnewline
300  & 0.051  & 0.298  & 0.055  & 0.254  & 0.065  & 0.269  & 0.061  & 0.229  & 0.068  & 0.269  & 0.066  & 0.228 \tabularnewline
400  & 0.048  & 0.252  & 0.053  & 0.219  & 0.059  & 0.231  & 0.060  & 0.200  & 0.061  & 0.231  & 0.064  & 0.200 \tabularnewline
500  & 0.043  & 0.222  & 0.044  & 0.195  & 0.052  & 0.205  & 0.049  & 0.180  & 0.053  & 0.205  & 0.049  & 0.180 \tabularnewline
600  & 0.041  & 0.200  & 0.046  & 0.178  & 0.055  & 0.186  & 0.050  & 0.165  & 0.056  & 0.187  & 0.048  & 0.165 \tabularnewline
\hline 
\multicolumn{13}{c}{$\left(p_{x},p_{z}\right)=\left(100,150\right)$}\tabularnewline
\hline 
200  & 0.049  & 0.381  & 0.054  & 0.313  & 0.080  & 0.334  & 0.078  & 0.274  & 0.084  & 0.330  & 0.080  & 0.272 \tabularnewline
300  & 0.050  & 0.299  & 0.056  & 0.254  & 0.071  & 0.268  & 0.074  & 0.228  & 0.074  & 0.268  & 0.083  & 0.227 \tabularnewline
400  & 0.053  & 0.252  & 0.055  & 0.219  & 0.073  & 0.230  & 0.070  & 0.200  & 0.075  & 0.230  & 0.074  & 0.199 \tabularnewline
500  & 0.050  & 0.221  & 0.058  & 0.195  & 0.069  & 0.204  & 0.067  & 0.179  & 0.069  & 0.205  & 0.066  & 0.179 \tabularnewline
600  & 0.053  & 0.200  & 0.056  & 0.178  & 0.063  & 0.185  & 0.065  & 0.165  & 0.060  & 0.186  & 0.064  & 0.165 \tabularnewline
\hline 
\multicolumn{13}{c}{$\left(p_{x},p_{z}\right)=\left(150,300\right)$}\tabularnewline
\hline 
200  & 0.037  & 0.382  & 0.052  & 0.314  & 0.066  & 0.334  & 0.068  & 0.276  & 0.068  & 0.326  & 0.067  & 0.268 \tabularnewline
300  & 0.036  & 0.300  & 0.050  & 0.254  & 0.058  & 0.268  & 0.061  & 0.228  & 0.061  & 0.265  & 0.060  & 0.225 \tabularnewline
400  & 0.042  & 0.252  & 0.050  & 0.218  & 0.064  & 0.229  & 0.062  & 0.199  & 0.065  & 0.228  & 0.065  & 0.197 \tabularnewline
500  & 0.041  & 0.222  & 0.045  & 0.196  & 0.048  & 0.203  & 0.060  & 0.179  & 0.048  & 0.203  & 0.058  & 0.178 \tabularnewline
600  & 0.047  & 0.200  & 0.053  & 0.178  & 0.058  & 0.184  & 0.061  & 0.164  & 0.056  & 0.185  & 0.061  & 0.164 \tabularnewline
\hline 
\end{tabular}}
\par\end{centering}
{

{\footnotesize\textit{Notes}}{\footnotesize : The data generating process
corresponds to \eqref{eq:dgp_ar1_inno}. The upper and lower panels
report the empirical size of testing the null hypotheses $\mathbb{H}_{0}:\beta_{1}^{*}=0$
and $\mathbb{H}_{0}:\gamma_{1}^{*}=0$ at a 5\% nominal significance
level, respectively. ``Size'' is calculated as $R^{-1}\sum_{r=1}^{R}\mathbf{1}\left[|t^{\text{XD}(r)}|>\mathrm{\Phi}_{0.975}\right]$
across $R=2,000$ replications, where $t^{\text{XD}(r)}$ is computed
based on \eqref{eq:XD=000020t=000020stat} for the $r$-th replication,
and the critical value $\mathrm{\Phi}_{0.975}$ ($\approx$ 1.96)
is the 97.5-th percentile of the standard normal distribution. ``Len.''
refers to the median length of the 95\% confidence intervals across
replications. The IVX oracle and OLS oracle are infeasible estimators.
The ``Calibrated'' and ``CV'' columns refer to the methods used
for choosing the tuning parameters through calibration and cross-validation,
respectively. }}{\footnotesize\par}
\end{table}
\begin{table}[h]
\begin{centering}
\caption{Empirical size: Joint null hypothesis $\mathbb{H}_{0}:\beta_{1}^{*}=\gamma_{1}^{*}=0$\protect\label{tab:mc_size_joint}}
{\small\begin{tabular}{c|cc|cc|cc}
\hline
     \multicolumn{7}{c}{Case I: IID Innovations} \tabularnewline
     \hline
    & \multicolumn{2}{c|}{$\left(p_{x},p_{z}\right)=\left(50,100\right)$} & \multicolumn{2}{c|}{$\left(p_{x},p_{z}\right)=\left(100,150\right)$} & \multicolumn{2}{c}{$\left(p_{x},p_{z}\right)=\left(150,300\right)$} \tabularnewline
$n$ & Calibrated & CV & Calibrated & CV & Calibrated & CV \tabularnewline
\hline
200 & 0.064 & 0.077 & 0.067 & 0.076 & 0.053 & 0.069 \tabularnewline
300 & 0.057 & 0.066 & 0.064 & 0.081 & 0.052 & 0.063 \tabularnewline
400 & 0.058 & 0.069 & 0.067 & 0.080 & 0.047 & 0.056 \tabularnewline
500 & 0.054 & 0.064 & 0.061 & 0.069 & 0.053 & 0.063 \tabularnewline
600 & 0.051 & 0.055 & 0.060 & 0.069 & 0.056 & 0.059 \tabularnewline
\hline
     \multicolumn{7}{c}{Case II: AR(1) Innovations} \tabularnewline
     \hline
    & \multicolumn{2}{c|}{$\left(p_{x},p_{z}\right)=\left(50,100\right)$} & \multicolumn{2}{c|}{$\left(p_{x},p_{z}\right)=\left(100,150\right)$} & \multicolumn{2}{c}{$\left(p_{x},p_{z}\right)=\left(150,300\right)$} \tabularnewline
$n$ & Calibrated & CV & Calibrated & CV & Calibrated & CV \tabularnewline
\hline
200 & 0.070 & 0.090 & 0.064 & 0.098 & 0.061 & 0.084 \tabularnewline
300 & 0.070 & 0.095 & 0.065 & 0.096 & 0.055 & 0.079 \tabularnewline
400 & 0.056 & 0.075 & 0.063 & 0.090 & 0.054 & 0.070 \tabularnewline
500 & 0.059 & 0.079 & 0.064 & 0.086 & 0.055 & 0.067 \tabularnewline
600 & 0.053 & 0.071 & 0.067 & 0.079 & 0.060 & 0.076 \tabularnewline
\hline
\end{tabular}}
\par\end{centering}
\medskip{}

{ {\footnotesize\textit{Notes}}{\footnotesize : The data generating
processes for Case I and Case II correspond to \eqref{eq:dgp_iid_inno}
and \eqref{eq:dgp_ar1_inno}, respectively. The table reports the
empirical size of testing the joint null hypothesis $\mathbb{H}_{0}:\beta_{1}^{*}=\gamma_{1}^{*}=0$
at a 5\% nominal significance level. ``Size'' is calculated as $R^{-1}\sum_{r=1}^{R}\mathbf{1}\left[\text{Wald}_{\mathcal{J}}^{\text{XD}(r)}>\chi_{0.95,2}^{2}\right]$
across $R=2,000$ replications, where $\text{Wald}_{\mathcal{J}}^{\text{XD}(r)}$
is computed based on \eqref{eq:Wald} for the $r$-th replication
with $\mathcal{J}=\left\{ 1,p_{x}+1\right\} $, and the critical value
$\chi_{0.95,2}^{2}(\approx5.99)$ is the 95-th percentile of the chi-squared
distribution with 2 degrees of freedom. The ``Calibrated'' and ``CV''
columns refer to the methods used for choosing the tuning parameters
through calibration and cross-validation, respectively.}}
\end{table}

We now turn to the empirical power of the XDlasso inference. Figure
\ref{fig:power_all} plots power curves for the null hypotheses $\mathbb{H}_{0}:\beta_{1}^{*}=0$
and $\mathbb{H}_{0}:\gamma_{1}^{*}=0$ under varying true coefficient
values. In our analysis, we vary either $\beta_{1}^{*}$ or $\gamma_{1}^{*}$
from $0$ to $0.5$. All remaining coefficients are held fixed as
specified in \eqref{eq:true_coef}. Across various configurations,
XDlasso exhibits increasingly high power against the null hypothesis
as the true coefficient moves away from 0 and the sample size $n$
increases. Furthermore, Figure \ref{fig:power_compare_I0_I1} compares
the power for testing $\beta_{1}^{*}$ and $\gamma_{1}^{*}$, which
reveals that the power associated with a unit root regressor surpasses
that of a stationary regressor. This observation provides finite sample
evidence to support the theoretical results in Theorem \ref{thm:length}.
Specifically, it corroborates the faster convergence of standard error
for unit root regressors compared with stationary regressors, thereby
inducing higher power in the hypothesis testing. 
\begin{figure}[h]
\begin{centering}
\begin{subfigure}[b]{0.48\textwidth} \centering 
\includegraphics[height=0.36\textheight]{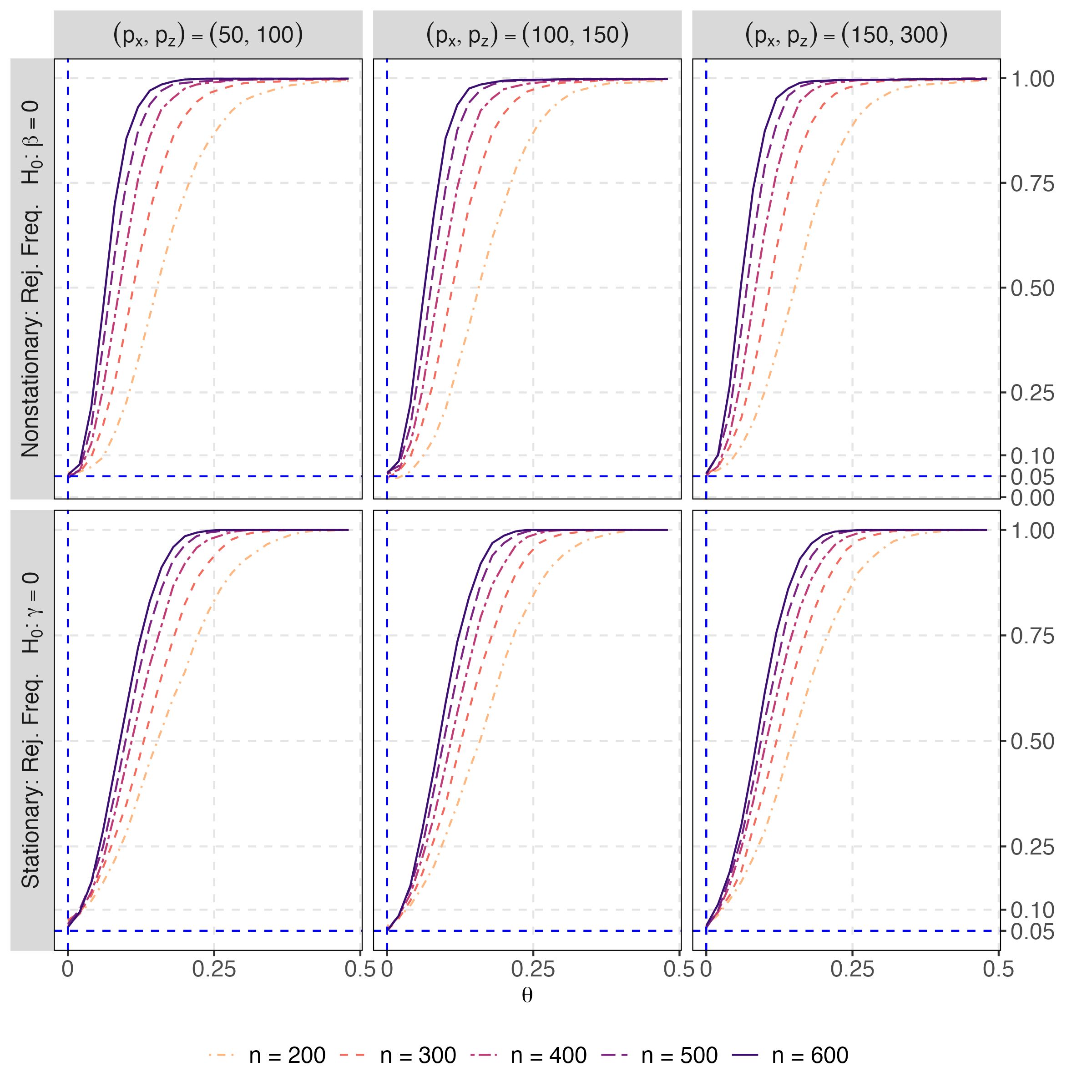} \label{fig:iid_power}
\caption{IID Innovations}
\end{subfigure} \begin{subfigure}[b]{0.48\textwidth} \centering
\includegraphics[height=0.36\textheight]{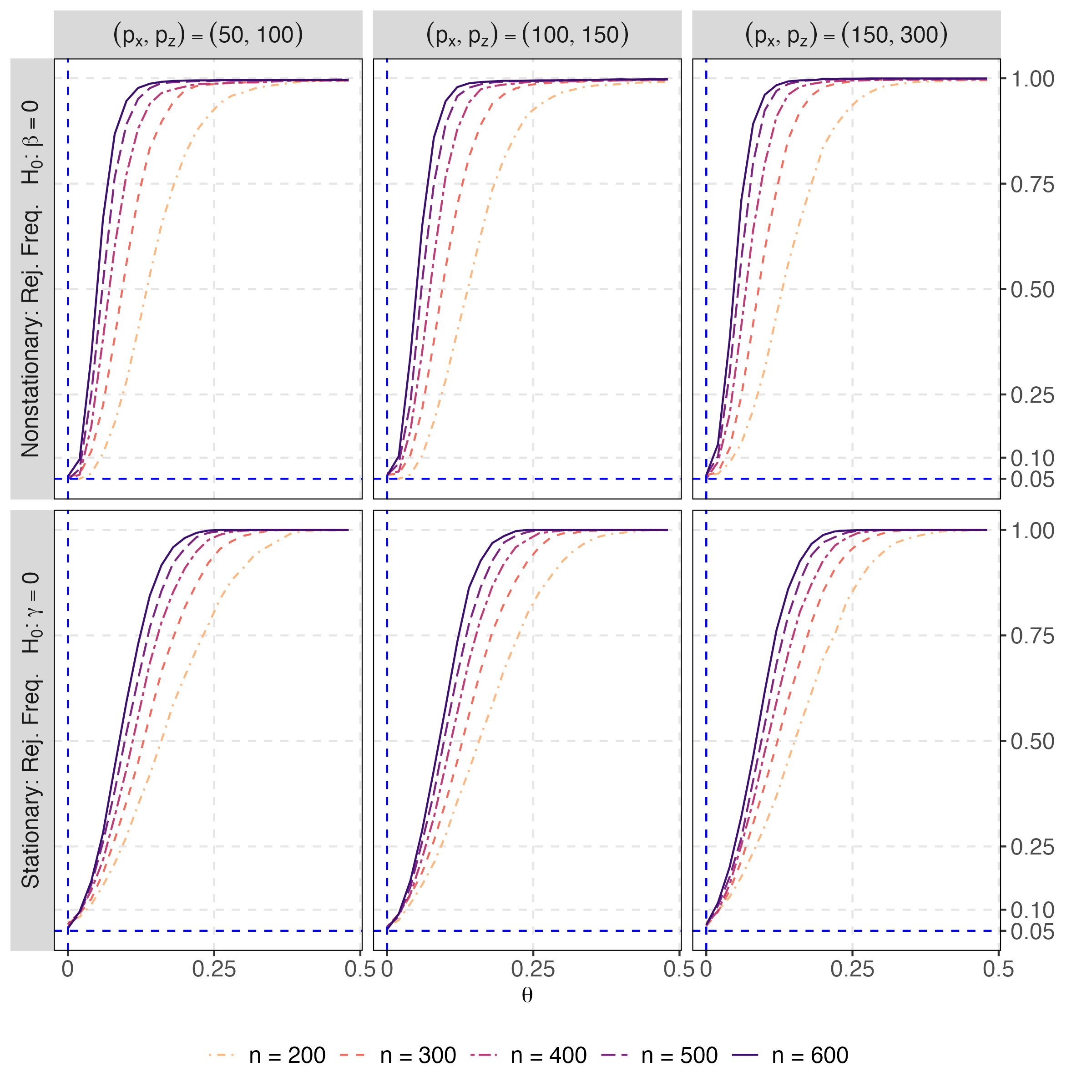} \label{fig:ar1_power}
\caption{AR(1) Innovations}
\end{subfigure}
\caption{Power curves of XDlasso inference\protect\label{fig:power_all}}
\par\end{centering}
\smallskip
{

{\footnotesize\textit{Notes}}{\footnotesize : The left and right panels
correspond to DGPs \eqref{eq:dgp_iid_inno} and \eqref{eq:dgp_ar1_inno},
respectively. In each subplot, the first row depicts the empirical
power function for $\beta_{1}^{*}$, associated with a nonstationary
regressor, across various $\left(p_{x},p_{z}\right)$ configurations,
while the second row pertains to $\gamma_{1}^{*}$, associated with
a stationary regressor. The empirical power is calculated as $R^{-1}\sum_{r=1}^{R}\mathbf{1}\left[|t^{\text{XD}(r)}|>\mathrm{\Phi}_{0.975}\right]$
across $R=2,000$ replications, where $t^{\text{XD}(r)}$ is computed
based on \eqref{eq:XD=000020t=000020stat} for the $r$-th replication,
and the critical value $\mathrm{\Phi}_{0.975}\left(\approx1.96\right)$
is the 97.5-th percentile of the standard normal distribution.}}{\footnotesize\par}
\end{figure}

\begin{figure}[h]
\begin{centering}
\includegraphics[height=0.4\textheight]{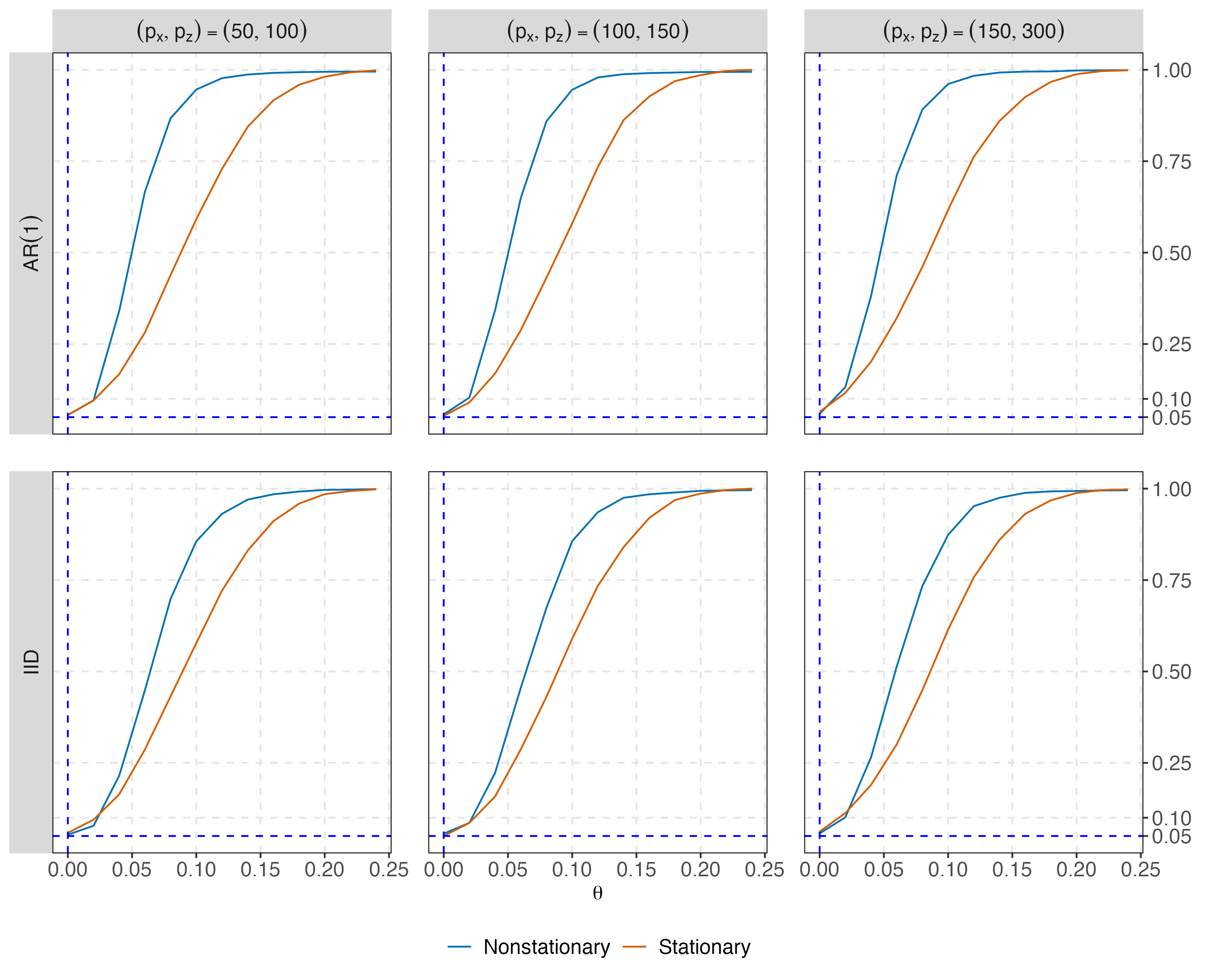}
\caption{Power of XDlasso inference for nonstationary and stationary regressors\protect\label{fig:power_compare_I0_I1}}
\par\end{centering}
\smallskip
{

{\footnotesize\textit{Notes}}{\footnotesize : This figure plots the
power curves under $n=600.$ The first and second rows correspond
to DGPs \eqref{eq:dgp_ar1_inno} and \eqref{eq:dgp_iid_inno}, respectively.
In each subplot, blue lines represent the empirical power function
for $\beta_{1}^{*}$, associated with a nonstationary regressor, while
red lines represent that for $\gamma_{1}^{*}$, associated with a
stationary regressor. Empirical power is calculated as $R^{-1}\sum_{r=1}^{R}\mathbf{1}\left[|t^{\text{XD}(r)}|>\mathrm{\Phi}_{0.975}\right]$
across $R=2,000$ replications, where $t^{\text{XD}(r)}$ is computed
based on \eqref{eq:XD=000020t=000020stat} for the $r$-th replication,
and the critical value $\mathrm{\Phi}_{0.975}\left(\approx1.96\right)$
is the 97.5-th percentile of the standard normal distribution.}}{\footnotesize\par}
\end{figure}

\section{Empirical Applications\protect\label{sec:Empirical-Application}}

In recent years, high dimensional macroeconomic data have been extensively
used to forecast key variables of interest; see \citet{smeekes2018macroeconomic},
\citet{medeiros2021forecasting}, and \citet{giannone2021economic},
for example. Researchers have primarily focused on the point estimation
of forecast. There has been limited empirical exploration of statistical
inference on the predictive power of specific predictors, due to a
lack of suitable toolkits. This section showcases two empirical applications
using our proposed XDlasso inference method. We utilize the monthly
data of 112 U.S. macroeconomic variables spanning from January 1960
to April 2025, sourced from the FRED-MD dataset by \citet{mccracken2016fred}. 

\subsection{Predictability of Stock Return Using Earnings-Price Ratio \protect\label{sec:emp-return-ep}}

In financial economics, testing the predictive power of valuation
ratios, particularly the \textit{earnings-price ratio}, on stock returns
has been subject to widespread discussions. Much of the literature
tests the predictability using univariate predictive regressions (e.g.,
\citet{welch2008comprehensive}, \citet{zhu2014predictive}, and \citet{goyal2024comprehensive}),
but inference can be sensitive to model misspecification arising from
omitted variables. Controlling for high dimensional covariates is
therefore necessary not only to enhance out-of-sample prediction but
also to mitigate the omitted variables for credible and accurate inference.
The literature on predictive regression for stock returns has focused
on identifying which variables possess significant predictive power
for future returns. For better out-of-sample prediction, recent literature
has documented gains in forecasting performance from incorporating
high-dimensional covariates into predictive regressions; see, for
instance, \citet{smeekes2018macroeconomic}, \citet{gu2020empirical}
and \citet{medeiros2021forecasting}. The increasing popularity of
statistical learning methods with high dimensional predictors calls
for suitable inference methods beyond univariate predictive regressions.
The goal is to provide investors with signals to adjust their portfolios
based on changes in these predictive variables. In this section, we
investigate the predictability of stock returns using the \emph{log
earnings-price ratio} in a data-rich environment with high dimensional
mixed-root control variables.

\subsubsection{Data}

Our analysis focuses on predicting the monthly return of the S\&P
500 index, calculated as $\text{Return}_{t}=\log(P_{t})-\log(P_{t-1}),$
where $P_{t}$ refers to \texttt{S\&P 500} (S\&P's Common Stock Price
Index: Composite). The primary predictor of interest is the log earnings-price
ratio. We obtain it from the variable \texttt{S\&P PE ratio} (S\&P's
Composite Common Stock: Price-to-Earnings Ratio), denoted as $\text{PE}_{t}$.
The log \emph{earnings-price ratio} is calculated by inverting the
original price-earnings ratio as $\text{logEP}_{t}=\log(1/\text{PE}_{t}).$

Figure \ref{fig:time_series_return_ep} displays the time series plot
of the monthly return of the S\&P 500 index and the log earnings-price
ratio from January 1960 to April 2025. The log earnings-price ratio
exhibits persistent patterns, while the S\&P 500 return appears stationary.
We further report the AR(1) coefficient estimates and Augmented Dickey--Fuller
(ADF) test $p$-values of both series under different sample periods
in Table \ref{tab:Persistence-of-return-ep}. The S\&P 500 return
is evidenced to be stationary under the full sample with an ADF test
$p$-value below 1\%, rejecting the null hypothesis of nonstationarity.
Conversely, the log earnings-price ratio shows high persistence with
an AR(1) coefficient estimate equal to $0.993$. Given the $p$-value
of $0.074$, nonstationarity is not rejected at the 5\% significance
level. Note that the nonstationary log earnings-price ratio can predict
the stationary monthly return of S\&P 500 in our model, since we allow
for a local-to-zero coefficient to balance the different scales between
a stationary outcome and a nonstationary regressor \citep{phillips2015halbert}.
Theorem \ref{thm:length} and the paragraph that follows illuminate
that inference by XDlasso has the power to detect a wide range of
local-to-zero alternatives. 

\begin{figure}[h!]
\centering{}\includegraphics[width=0.9\textwidth]{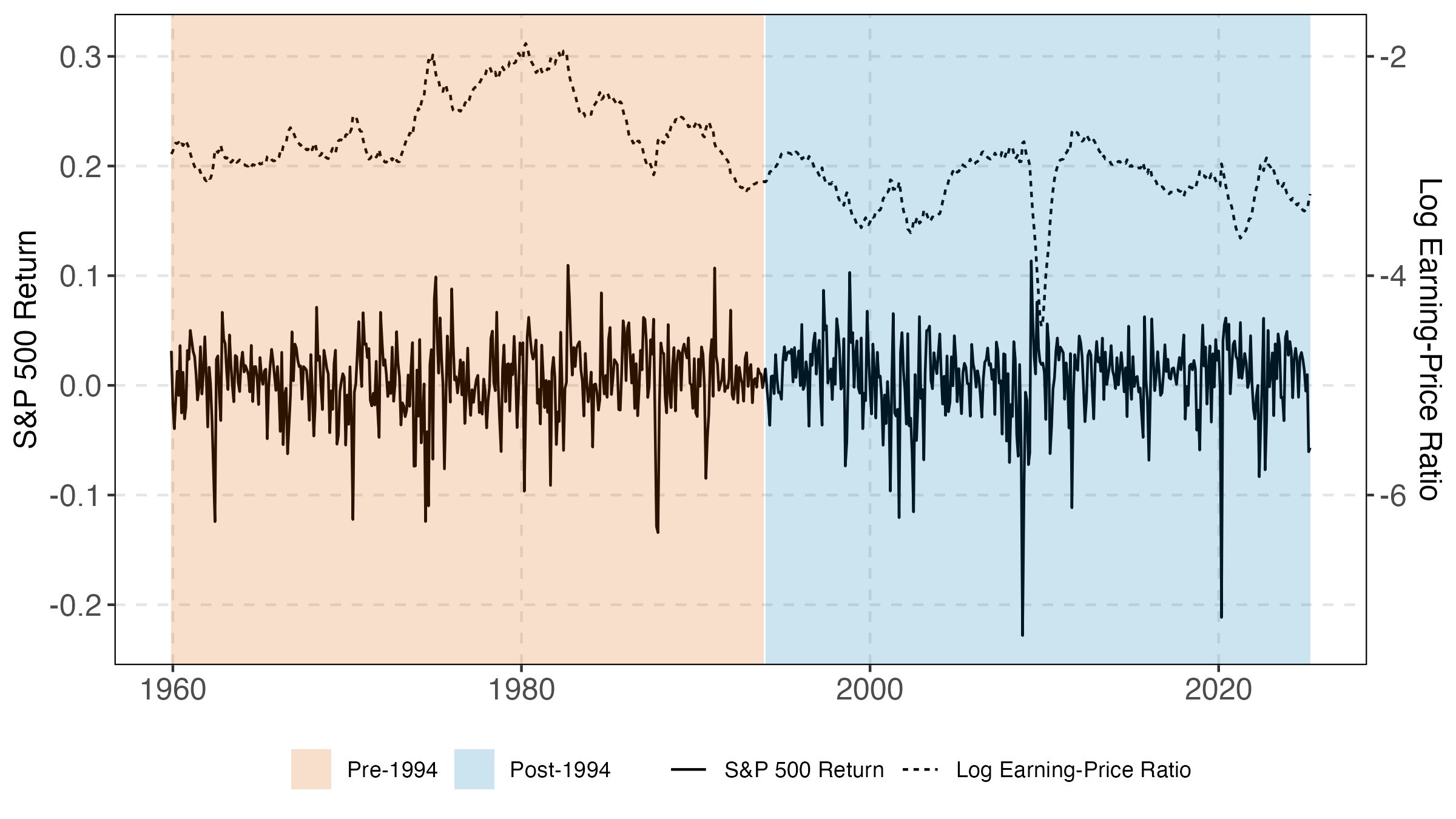}
\caption{S\&P 500 monthly Return and Log Earnings-price Ratio\protect\label{fig:time_series_return_ep}}
\end{figure}
 
\begin{table}[htbp]
\caption{Persistence of S\&P 500 Monthly Return and Log Earnings-Price Ratio\protect\label{tab:Persistence-of-return-ep}}

\begin{centering}
{\small{}%
\begin{tabular}{r|cc|cc}
\hline 
\multirow{1}{*}{{\small Sample Period}} & \multicolumn{2}{c|}{{\small S\&P 500 Monthly Return}} & \multicolumn{2}{c}{{\small Log Earnings-Price Ratio}}\tabularnewline
\hline 
 & {\small AR(1)} & {\small ADF $p$-value} & {\small AR(1)} & {\small ADF $p$-value}\tabularnewline
\hline 
{\small Full Sample }{\small\textit{(Jan. 1960 - Apr. 2025)}} & \multirow{1}{*}{{\small 0.227}} & \multirow{1}{*}{{\small$<$0.01}} & \multirow{1}{*}{{\small 0.993}} & \multirow{1}{*}{{\small 0.074}}\tabularnewline
\hline 
{\small Pre-1994 }{\small\textit{(Jan. 1960 - Dec. 1993)}} & \multirow{1}{*}{{\small 0.255}} & \multirow{1}{*}{{\small$<$0.01}} & \multirow{1}{*}{{\small 0.994}} & \multirow{1}{*}{{\small 0.776}}\tabularnewline
\hline 
{\small Post-1994}{\small\textit{ (Jan. 1994 - Apr. 2025)}} & \multirow{1}{*}{{\small 0.200}} & \multirow{1}{*}{{\small$<$0.01}} & \multirow{1}{*}{{\small 0.979}} & \multirow{1}{*}{{\small$<$0.01}}\tabularnewline
\hline 
\end{tabular}}{\small\par}
\par\end{centering}
{\footnotesize\textit{{}Notes:}}{\footnotesize{} The lag order for
the ADF test is set to $\lfloor n^{1/3}\rfloor$ where $n$ is the
effective sample size. The exact start and end dates of subperiods
are provided in the first column of the table.}{\footnotesize\par}
\end{table}

Return predictability can be time-varying and sporadic \citep{tu2023penetrating}.
\citet{campbell2006efficient} discover that the return predictability
test results can vary depending on the inclusion of the samples after
1994, and the break is revisited by \citet{zhu2014predictive}. Following
this empirical finding, we divide the sample into two periods: pre-1994
(January 1960 - December 1993) and post-1994 (January 1994 - April
2025). The S\&P 500 return demonstrates stationarity across both subperiods
with ADF $p$-values below 1\%. In contrast, the persistence of the
log earnings-price ratio depends on the sample period considered.
Nonstationarity is evident in the Pre-1994 period with a large ADF
test $p$-value $0.776$, while the null hypothesis of nonstationarity
is rejected at the 1\% significance level in the Post-1994 period.
The ambiguity in stationarity of the log earnings-price ratio motivates
the use of the XDlasso procedure, a unified approach for both stationary
and nonstationary regressors without prior knowledge of their persistence.

Besides the monthly return of the S\&P 500 index and the log earnings-price
ratio, our analysis incorporates high dimensional covariates including
all other 110 macroeconomic variables from the FRED-MD dataset. These
variables comprise a mixture of stationary and nonstationary time
series. In practice, it is common for empirical analysts to transform
potentially nonstationary time series into stationary ones to avoid
challenges arising from nonstationarity. To facilitate such stationarization,
FRED assigns a transformation code (TCODE) to each variable denoting
the recommended transformation.

In our analysis, we follow the common practice of using the TCODE
to transform the 110 time series, and subsequently use the transformed
variables as covariates. Nevertheless, we highlight that these elementary
transformations are not a silver bullet in taming nonstationarity.
We perform the ADF test for each of the transformed variables for
different sample periods. A nontrivial proportion (9.1\%) of the transformed
variables still demonstrate nonstationarity based on the ADF test
at the 5\% significance level for both Pre-1994 and Post-1994 subperiods.
The high persistence of the log earnings-price ratio and other covariates
suggests the necessity of XDlasso.

As highlighted by \citet{smeekes2020unit}, the predictive performance
in regressions using FRED-MD data is sensitive to the transformations.
To assess the robustness of our results, we also conduct our analysis
using the original (untransformed) FRED-MD time series as covariates.
To reduce the impact of highly nonstationary series, we exclude I(2)
variables that require second differencing for stationarity according
to their TCODE classification.\footnote{Appendix C.1 further investigates robustness by considering:
(i) excluding nonstationary variables based on integrated orders determined
by the bootstrap sequential testing procedure of \citet{smeekes2015bootstrap},
as reported in \citet{smeekes2020unit}, and (ii) applying only logarithmic
transformations, as indicated by TCODE, without differencing.}

\subsubsection{Results}

We study the one-month-ahead regression $\text{Return}_{t}=\alpha^{*}+\theta_{1}^{*}\times\text{logEP}_{t-1}+W_{-1,t-1}^{\top}\theta_{-1}^{^{*}}+u_{t},$
where $W_{-1,t-1}$ denotes a high dimensional vector that collects
all control variables. We carry out hypothesis testing on the key
parameter of interest $\theta_{1}^{*}$ that measures the predictive
power of the log earnings-price ratio in forecasting the S\&P 500
monthly return. Given the stationary pattern of the dependent variable,
we additionally consider specifications where $W_{-1,t-1}$ includes
the lagged dependent variable $\text{Return}_{t-1}$. For each model,
we apply the wild bootstrapped automatic variance ratio test \citep{kim2009automatic}
to the Slasso residuals as a diagnostic test for the martingale difference
sequence (m.d.s.)~condition in Assumption \ref{assu:tail}.\footnote{This practice serves as a heuristic diagnostic. Demonstrated by our
simulation results in Appendix B.4, the variance
ratio test on the Slasso residuals $\hat{u}_{t}$ tends to severely
over-reject the m.d.s.~condition for $u_{t}$. To the best of our
knowledge, at present the literature has no valid testing procedure
yet for m.d.s.~in high dimensional predictive regressions. It is
an open question for future research.}

In addition to XDlasso and Dlasso, we conduct the IVX inference in
a simple regression setup, omitting all control variables, to demonstrate
the necessity of incorporating high dimensional control variables
in practice. We compare the results of XDlasso to those of Dlasso
for the high dimensional regression and IVX for a simple regression
$\text{Return}_{t}=\alpha^{*}+\theta_{1}^{*}\times\text{logEP}_{t-1}+u_{t}.$

\begin{table}
\begin{centering}
\caption{Test $\mathbb{H}_{0}:\theta_{1}^{*}=0$ across sample periods and
specifications in stock return prediction\protect\label{tab:app_est_return_ep}}
\par\end{centering}
{\small 
\begin{subtable}[t]{1\textwidth}
\centering
\caption{\texttt{TCODE} Transformed Data}

\begin{tabular}{c|c|ccc|ccc}
  \hline
  & & \multicolumn{3}{c|}{Without \(\text{Return} _{t-1}\) } & \multicolumn{3}{c}{Include \(\text{Return} _{t-1}\)}   \tabularnewline 
  \hline
Sample Period & IVX & Dlasso & XDlasso & VR Test  & Dlasso & XDlasso & VR Test \tabularnewline 
  \hline
Full Sample & -0.023** & 0.009 & 0.003 & \multirow{2}{*}{0.074} & 0.009 & 0.005 & \multirow{2}{*}{0.216} \tabularnewline 
 {\footnotesize \textit{(Jan. 1960 - Apr. 2025)}} & \textit{(0.010)} & \textit{(0.006)} & \textit{(0.015)} & \multirow{2}{*}{} & \textit{(0.006)} & \textit{(0.015)} & \multirow{2}{*}{} \tabularnewline 
 \hline
Pre-1994 & -0.059** & 0.025** & 0.059* & \multirow{2}{*}{0.227} & 0.024** & 0.062* & \multirow{2}{*}{0.296} \tabularnewline 
 {\footnotesize \textit{(Jan. 1960 - Dec. 1993)}} & \textit{(0.026)} & \textit{(0.010)} & \textit{(0.033)} & \multirow{2}{*}{} & \textit{(0.010)} & \textit{(0.032)} & \multirow{2}{*}{} \tabularnewline 
Post-1994 & -0.022 & 0.002 & -0.001 & \multirow{2}{*}{0.053} & 0.002 & -0.001 & \multirow{2}{*}{0.049} \tabularnewline 
 {\footnotesize \textit{(Jan. 1994 - Apr. 2025)}} & \textit{(0.015)} & \textit{(0.007)} & \textit{(0.017)} & \multirow{2}{*}{} & \textit{(0.007)} & \textit{(0.017)} & \multirow{2}{*}{} \tabularnewline 
  \hline
\end{tabular}

\end{subtable}
\hfill
\medskip

\begin{subtable}[t]{1\textwidth}
\centering
\caption{Untransformed Data: Excluding I(2) Variables Based on \texttt{TCODE}}

\begin{tabular}{c|c|ccc|ccc}
    \hline
    & & \multicolumn{3}{c|}{Without \(\text{Return} _{t-1}\)} & \multicolumn{3}{c}{Include \(\text{Return} _{t-1}\)}   \tabularnewline 
    \hline
  Sample Period & IVX & Dlasso & XDlasso & VR Test  & Dlasso & XDlasso & VR Test \tabularnewline 
    \hline
  Full Sample & -0.023** & 0.013 & -0.008 & \multirow{2}{*}{0.001} & 0.019 & 0.012 & \multirow{2}{*}{0.811} \tabularnewline 
   {\footnotesize \textit{(Jan. 1960 - Apr. 2025)}} & \textit{(0.010)} & \textit{(0.014)} & \textit{(0.011)} & \textit{} & \textit{(0.014)} & \textit{(0.010)} & \textit{} \tabularnewline 
   \hline
  Pre-1994 & -0.059** & 0.064** & -0.312 & \multirow{2}{*}{0.046} & 0.055* & 0.096 & \multirow{2}{*}{0.467} \tabularnewline 
   {\footnotesize \textit{(Jan. 1960 - Dec. 1993)}} & \textit{(0.026)} & \textit{(0.031)} & \textit{(0.290)} & \textit{} & \textit{(0.033)} & \textit{(0.070)} & \textit{} \tabularnewline 
  Post-1994 & -0.022 & -0.003 & -0.022 & \multirow{2}{*}{0.016} & -0.000 & -0.004 & \multirow{2}{*}{0.280} \tabularnewline 
   {\footnotesize \textit{(Jan. 1994 - Apr. 2025)}} & \textit{(0.015)} & \textit{(0.009)} & \textit{(0.017)} & \textit{} & \textit{(0.008)} & \textit{(0.016)} & \textit{} \tabularnewline 
    \hline
  \end{tabular}

\end{subtable}
}
\medskip

{ {\footnotesize\textit{{}Notes:}}{\footnotesize{} We report estimates
and the standard error (in parentheses below the estimates) across
methods and setups. The upper and lower panels corresponds to the
scenarios where we use TCODE transformed or untransformed time series
as covariates, respectively. The symbols {*}, {*}{*}, and {*}{*}{*}
indicate significance levels at 10\%, 5\%, and 1\%, respectively.
``VR Test'' represents the $p$-value of the variance ratio test
\citep{kim2009automatic} on the LASSO residual. The tuning parameter
for LASSO estimation is selected through 10-fold block cross-validation.
In XDlasso, instruments are generated based on \eqref{eq:IVX_original}
and \eqref{eq:rho=000020zeta} with $C_{\zeta}=5$ and $\tau=0.5$.}}
\end{table}

Table \ref{tab:app_est_return_ep} presents the point estimates and
testing results for $\theta_{1}^{*}$. IVX delivers negative point
estimates across all sample periods and detects significant effects
in the full sample period and Pre-1994 subperiod at the 5\% significance
level. However, the resulting negative relation between stock return
and earning performance contradicts the economic mechanism between
the two variables. A large number of potential confounding variables
are present in the dataset, which may be the culprit in producing
the counterintuitive result from the simple regression.

Dlasso, which accounts for high dimensionality but not nonstationarity,
generally reverses the sign of the estimates compared to the simple
regression. For the full sample, Dlasso consistently yields positive
but statistically insignificant coefficients. Zooming into the Pre-1994
period, Dlasso detects strong evidence of predictability with transformed
data, with coefficients significant at the 5\% level. This result
persists with the untransformed data, though the significance weakens
to the 10\% level when the lagged dependent variable is included.
Accounting for the potential nonstationarity in the predictors, XDlasso
provides notably different estimates and standard errors from Dlasso.
XDlasso reveals no significant predictive power of log earnings-price
ratio for stock return in the full sample and the Post-1994 period.
Moreover, for the Pre-1994 period, the evidence of predictability
is weakened to the 10\% significance level with transformed data,
and becomes entirely insignificant with untransformed data. Our empirical
findings with XDlasso are consistent with the general recognition
that there is little predictability in the financial market, thereby
supporting the efficient market hypothesis. The divergence between
Dlasso and XDlasso underscores our theoretical prediction: XDlasso
mitigates the Stambaugh bias arising from highly persistent regressors
while Dlasso does not. Despite the short confidence intervals, Dlasso
results can be misleading when nonstationary time series is present,
as clearly shown by the illustrative simulation in Section \ref{subsec:Necessity-of-IVX}.

The variance ratio tests on the Slasso residuals provide a diagnostic
check on the m.d.s.~condition required by our asymptotic theory.
The results exhibit stark differences across data specifications:
with untransformed data and without the lagged dependent variable,
the tests strongly reject the m.d.s. condition ($p$-values ranging
from 0.001 to 0.046), suggesting potential model misspecification.
Including the lagged return or using transformed data substantially
improves the diagnostic results, with $p$-values generally exceeding
0.05. This pattern suggests that both data transformation and including
dynamics help satisfy the underlying assumptions, thereby boosting
credibility for the XDlasso results under these specifications.

\subsection{Predictability of Inflation Using Unemployment Rate\protect\label{subsec:emp_inflation_unrate}}

It is essential for monetary policymakers to understand the relationship
between unemployment and inflation. As \citet{engemann2020phillips}
pointed out, ``The Federal Reserve has a dual mandate to promote
maximum sustainable employment and price stability.'' First alluded
to by \citet{fisher1926statistical,fisher1973discovered} though,
\citet{phillips1958relation} popularized the \emph{Phillips curve}
--- a negative relationship between the level of unemployment and
the change rate of money wage rates. There has been a prolonged debate
about whether the unemployment is a credible barometer for inflation
among not only modern economic studies, but also policymakers.\footnote{Mary Daly, San Francisco Fed President, delivered at \citet{daly2019new}
a negative view on the Phillips curve that \textquotedblleft As for
the Phillips curve\ldots{} most arguments today center around whether
it\textquoteright s dead or just gravely ill. Either way, the relationship
between unemployment and inflation has become very difficult to spot.\textquotedblright{}
John Williams, New York Fed President, expressed a different opinion
that \textquotedblleft The Phillips curve is the connective tissue
between the Federal Reserve\textquoteright s dual mandate goals of
maximum employment and price stability. Despite regular declarations
of its demise, the Phillips curve has endured. It is useful, both
as an empirical basis for forecasting and for monetary policy analysis.\textquotedblright{}
See \citet{engemann2020phillips} for more details.} Empirical findings suggest that inflation rate can be either positively
or negatively correlated with unemployment, depending on the shocks
to the economies, the policies, and the lag orders \citep{niskanen2002death,gordon2011history,gordon2013phillips}.
Given the ongoing debate, we revisit the Phillips curve in a predictive
regression framework utilizing the FRED-MD dataset.

\subsubsection{Data}

The inflation rate, as the outcome variable in the predictive regression,
is calculated by $\text{Inflation}_{t}=\left(\log({\rm CPI}_{t})-\log({\rm CPI}_{t-1})\right)\times100,$
where ${\rm CPI}_{t}$ denotes the \textit{Consumer Price Index for
All Urban Consumers: All Items (CPI)}. The unemployment rate, as the
predictor of interest, denoted as $\text{Unrate}_{t}$, is retrieved
in its original form under the name \texttt{UNRATE}.

Similar to our first empirical application in Section \ref{sec:emp-return-ep},
we follow \citet{benati2015long} to delineate three subperiods in
addition to the full sample: \textit{Pre-Volcker} (January 1960 -
July 1979), \textit{Volcker and Greenspan} (August 1979 - January
2006), and \textit{Bernanke, Yellen, and Powell} (February 2006 -
April 2025). These subperiods correspond to different eras in U.S.
monetary policy, each named after the Federal Reserve chairperson
who presided during that time.\footnote{Pre-Volcker (before August 1979): This period was characterized by
high and volatile inflation, with the Federal Reserve lacking a clear
nominal anchor. Volcker and Greenspan (August 1979--January 2006):
Volcker, who served as the Federal Reserve Chairman from August 1979
to August 1987, implemented aggressive anti-inflation measures, notably
raising interest rates to historically high levels. Greenspan succeeded
Volcker and continued to focus on maintaining price stability during
his tenure, which contributed to a period of low and stable inflation
known as the ``Great Moderation''. Bernanke, Yellen, and Powell
(February 2006--December 2019): Bernanke's tenure as Fed Chairman
was marked by the Great Recession and the implementation of unconventional
monetary policies, such as quantitative easing, aimed at stimulating
the economy and preventing deflation. Yellen and Powell carried on
these policies.} This periodization allows us to examine how the relationship between
inflation and unemployment may have evolved across different policy
regimes and economic conditions.

Figure \ref{fig:time_series_unrate_infl} plots the inflation and
unemployment rates over our sample period. Visual inspection suggests
that the unemployment rate is more persistent than the inflation rate.
Table \ref{tab:Persistence-of-Inflation} further reports the AR(1)
coefficient estimates and ADF test $p$-values of the inflation and
the unemployment rate under different sample periods. The inflation
rate appears stationary in most periods, with AR(1) coefficient estimates
ranging from 0.54 to 0.64. However, the inflation rate during the
pre-Volcker period shows a slight upward trend, and nonstationarity
is indicated by the ADF test with a $p$-value of 0.22. In contrast,
the inflation rate is found to be stationary in the other two subperiods.
The unemployment rate, on the other hand, appears highly persistent
with AR(1) coefficient estimates close to 1. The ADF test rejects
the null hypothesis of nonstationarity at a 10\% significance level
for both the full sample and the Volcker-Greenspan period. However,
during the Pre-Volcker and Bernanke-Yellen-Powell periods, there is
strong evidence of nonstationarity in the unemployment rate. The stationarity
of both inflation and the unemployment rate in different periods is
again ambiguous, which prompts the use of XDlasso from an agnostic
perspective.

\begin{figure}[h!]
\centering{}\includegraphics[width=0.9\textwidth]{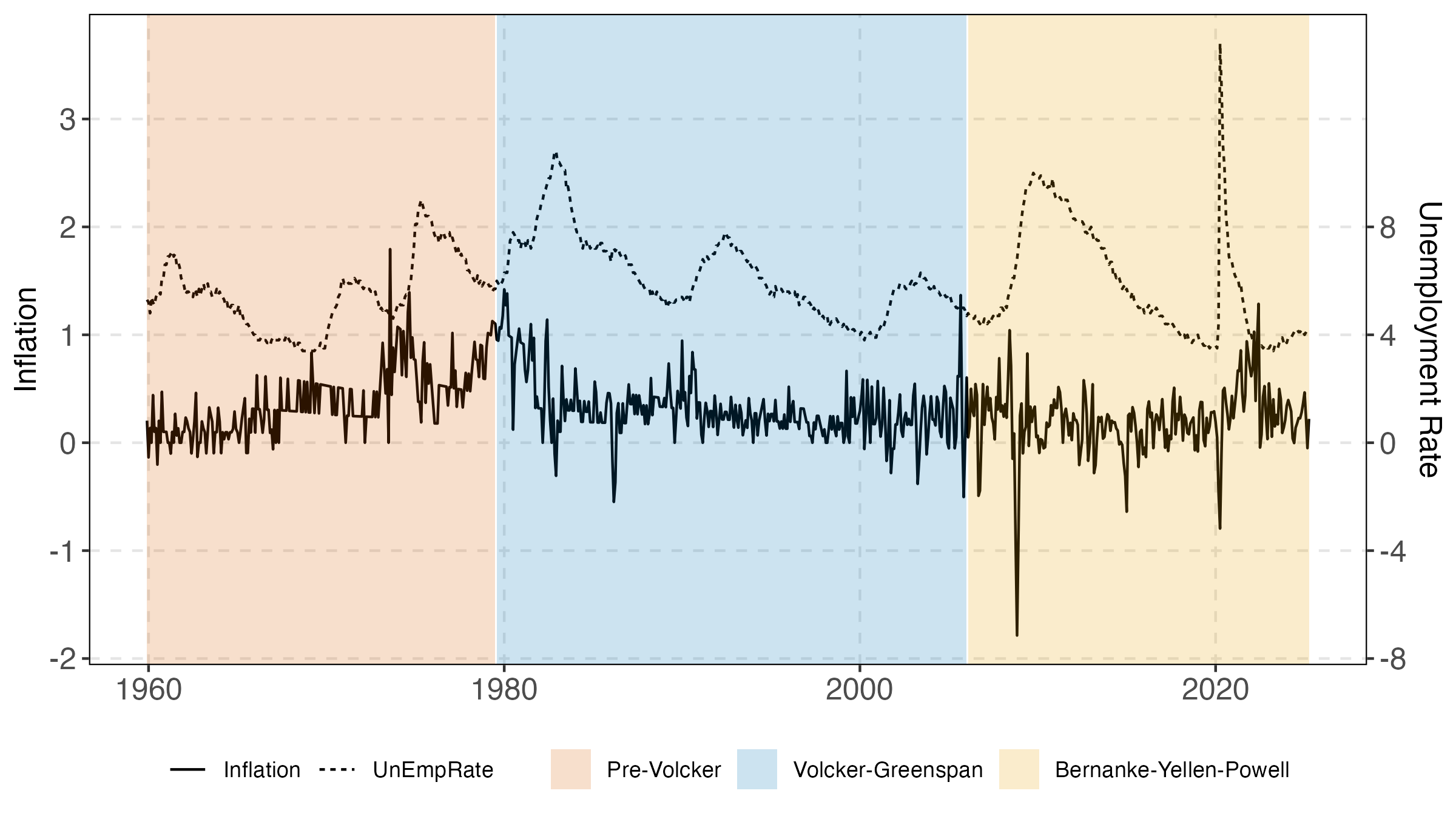}
\caption{Inflation and Unemployment Rate\protect\label{fig:time_series_unrate_infl}}
\end{figure}
 
\begin{table}
\caption{Persistence of Inflation Rate and Unemployment Rate\protect\label{tab:Persistence-of-Inflation}}

\begin{centering}
{\small{}%
\begin{tabular}{r|cc|cc}
\hline 
\multirow{1}{*}{{\small Sample Period}} & \multicolumn{2}{c|}{{\small Inflation Rate}} & \multicolumn{2}{c}{{\small Unemployment Rate}}\tabularnewline
\hline 
 & {\small AR(1)} & {\small ADF $p$-value} & {\small AR(1)} & {\small ADF $p$-value}\tabularnewline
\hline 
{\small Full Sample }{\small\textit{(Jan. 1960 - Apr. 2025)}} & \multirow{1}{*}{{\small 0.620}} & \multirow{1}{*}{{\small 0.0183}} & \multirow{1}{*}{{\small 0.999}} & \multirow{1}{*}{{\small 0.074}}\tabularnewline
\hline 
{\small Pre-Volcker }{\small\textit{(Jan. 1960 - July 1979)}} & \multirow{1}{*}{{\small 0.640}} & \multirow{1}{*}{{\small 0.220}} & \multirow{1}{*}{{\small 0.986}} & \multirow{1}{*}{{\small 0.406}}\tabularnewline
\hline 
{\small Volcker-Greenspan }{\small\textit{(Aug. 1979 - Jan. 2006)}} & \multirow{1}{*}{{\small 0.623}} & \multirow{1}{*}{{\small$<$0.01}} & \multirow{1}{*}{{\small 0.992}} & \multirow{1}{*}{{\small 0.071}}\tabularnewline
\hline 
{\small Bernanke-Yellen-Powell }{\small\textit{(Feb. 2006 - Apr. 2025)}} & \multirow{1}{*}{{\small 0.537}} & \multirow{1}{*}{{\small$<$0.01}} & \multirow{1}{*}{{\small 0.939}} & \multirow{1}{*}{{\small 0.325}}\tabularnewline
\hline 
\end{tabular}}{\small\par}
\par\end{centering}
{\footnotesize\textit{{}Notes:}}{\footnotesize{} The lag order for
the ADF test is set to $\lfloor n^{1/3}\rfloor$ where $n$ is the
effective sample size. The exact start and end dates of subperiods
are provided in the first column of the table.}{\footnotesize\par}
\end{table}

In addition to the unemployment rate, we incorporate all other 110
macroeconomic variables from the FRED-MD dataset as controls after
the TCODE transformation. Still, a significant proportion, from 11\%
to 17\% across subperiods, of the transformed variables exhibit nonstationarity
according to the ADF test at the 5\% significance level. Consistent
with Section \ref{sec:emp-return-ep}, we also perform the analysis
using the original (untransformed) FRED-MD time series as covariates.
To further address potential concerns regarding the high correlation
between control variables and the unemployment rate, we additionally
present results excluding variables from the labor market group, as
classified by FRED-MD.

\subsubsection{Results}

We study, in our predictive regression framework, a one-month-ahead
regression $\text{Inflation}_{t}=\alpha^{*}+\theta_{1}^{*}\times\text{Unrate}_{t-1}+W_{-1,t-1}^{\top}\theta_{-1}^{^{*}}+u_{t},$
where $W_{-1,t-1}$ denotes high-dimensional covariates. Given the
ambiguous persistence of the inflation rate discussed in the previous
section, including the lagged dependent variable in our model may
introduce further technical complications. Therefore, we do not recommend
its inclusion in this analysis, and will focus on the traditional
predictive regression setting. We carry out hypothesis testing on
the key parameter of interest $\theta_{1}^{*}$ that measures the
predictive power of the unemployment rate in forecasting inflation.
The predictive form is of particular interest among policymakers in
leveraging the relationship as a practical tool.

\begin{table}
\begin{centering}
\caption{Test $\mathbb{H}_{0}:\theta_{1}^{*}=0$ across sample periods and
specifications in inflation prediction\protect\label{tab:app_est_pc_result}}
\par\end{centering}
{\small 
\begin{subtable}[t]{1\textwidth}
\centering
\caption{\texttt{TCODE} Transformed Data}

\begin{tabular}{c|c|ccc}
  \hline
Sample Period & IVX & Dlasso & XDlasso & VR Test \tabularnewline 
  \hline
Full Sample & -0.014 & 0.018*** & -0.024 & \multirow{2}{*}{0.000} \tabularnewline 
 {\footnotesize \textit{(Jan. 1960 - Apr. 2025)}} & \textit{(0.018)} & \textit{(0.006)} & \textit{(0.077)} \tabularnewline 
 \hline
Pre-Volcker & 0.080 & 0.074*** & 0.013 & \multirow{2}{*}{0.125} \tabularnewline 
 {\footnotesize \textit{(Jan. 1960 - Jul. 1979)}} & \textit{(0.069)} & \textit{(0.017)} & \textit{(0.224)} \tabularnewline 
Volcker-Greenspan & 0.036 & -0.020 & 0.161 & \multirow{2}{*}{0.025} \tabularnewline 
 {\footnotesize \textit{(Aug. 1979 - Jan. 2006)}} & \textit{(0.063)} & \textit{(0.020)} & \textit{(0.118)} \tabularnewline 
Bernanke/Yellen/Powell & -0.043 & -0.002 & -0.054 & \multirow{2}{*}{0.002} \tabularnewline 
 {\footnotesize \textit{(Feb. 2006 - Apr. 2025)}} & \textit{(0.027)} & \textit{(0.011)} & \textit{(0.120)} \tabularnewline 
  \hline
\end{tabular}

\end{subtable}
\hfill
\medskip

\begin{subtable}[t]{1\textwidth}
\centering
\caption{Untransformed Data: Excluding I(2) Variables Based on \texttt{TCODE}}

\begin{tabular}{c|ccc|ccc}
  \hline
  & \multicolumn{3}{c|}{Include Labor Variables} & \multicolumn{3}{c}{Exclude Labor Variables}   \tabularnewline 
  \hline
Sample Period & Dlasso & XDlasso & VR Test  & Dlasso & XDlasso & VR Test \tabularnewline 
  \hline
Full Sample & -0.077 & 0.068 & \multirow{2}{*}{0.080} & -0.050*** & 0.033 & \multirow{2}{*}{0.065} \tabularnewline 
 {\footnotesize \textit{(Jan. 1960 - Apr. 2025)}} & \textit{(0.069)} & \textit{(0.218)} &  & \textit{(0.015)} & \textit{(0.032)} & \tabularnewline 
 \hline
Pre-Volcker & -0.129 & -0.014 & \multirow{2}{*}{0.522} & 0.007 & 0.113 & \multirow{2}{*}{0.548} \tabularnewline 
 {\footnotesize \textit{(Jan. 1960 - Jul. 1979)}} & \textit{(0.117)} & \textit{(0.276)} & & \textit{(0.051)} & \textit{(0.098)} & \tabularnewline 
Volcker-Greenspan & 0.094 & -0.272 & \multirow{2}{*}{0.741} & -0.092* & -0.259 & \multirow{2}{*}{0.339} \tabularnewline 
 {\footnotesize \textit{(Aug. 1979 - Jan. 2006)}} & \textit{(0.217)} & \textit{(0.287)} &  & \textit{(0.054)} & \textit{(0.228)} &  \tabularnewline 
Bernanke-Yellen-Powell & 0.550 & 0.585 & \multirow{2}{*}{0.283} & 0.001 & 0.040 & \multirow{2}{*}{0.230} \tabularnewline 
 {\footnotesize \textit{(Feb. 2006 - Apr. 2025)}} & \textit{(0.442)} & \textit{(0.975)} & & \textit{(0.055)} & \textit{(0.073)} & \tabularnewline 
  \hline
\end{tabular}

\end{subtable}
}
\medskip

{ {\footnotesize\textit{{}Notes:}}{\footnotesize{} We report estimates
and the standard error (in parentheses below the estimates) across
methods and setups. The upper and lower panels corresponds to the
scenarios where we use TCODE transformed or untransformed time series
as covariates, respectively. The symbols {*}, {*}{*}, and {*}{*}{*}
indicate significance levels at 10\%, 5\%, and 1\%, respectively.
``VR Test'' represents the $p$-value of the variance ratio test
\citep{kim2009automatic} on the LASSO residual. The tuning parameter
for LASSO estimation is selected through 10-fold block cross-validation.
In XDlasso, instruments are generated based on \eqref{eq:IVX_original}
and \eqref{eq:rho=000020zeta} with $C_{\zeta}=5$ and $\tau=0.5$.}}
\end{table}

Table \ref{tab:app_est_pc_result} reports the point estimates and
standard errors for $\theta_{1}^{*}$ using IVX, Dlasso, and XDlasso
across sample periods and specifications. In the benchmark setup with
the transformed data, the diagnostic check rejects the m.d.s.~condition
except for the Pre-Volcker period. Nevertheless, Dlasso, which ignores
nonstationarity, delivers significantly positive coefficients for
both the full sample and the Pre-Volcker period, in striking contrast
to XDlasso and IVX. Using untransformed data significantly alleviates
concerns about model misspecification, with $p$-values of the diagnostic
test greater than 5\% in all cases. Both XDlasso and IVX consistently
find no significant predictive power of unemployment for inflation
across all specifications and time periods. The empirical findings
add new insight to the recent debates on the Phillips curve and echo
\citet{Mankiw2024}'s latest pessimistic remark: ``The large confidence
intervals for the natural rate, together with the apparent futility
of this Holy Grail search, lead me to think that we should not expect
much from the Phillips curve as a guide for forecasting inflation
or for judging the stance of policy.'' On the other hand, with untransformed
data, Dlasso yields significantly negative coefficients at the 1\%
level for the full sample and at the 10\% level in the Volcker-Greenspan
period, without controlling for other labor market variables. The
unstable results of Dlasso across setups highlight the necessity of
accounting for nonstationarity in the inference.

\section{Conclusion\protect\label{sec:Conclusion}}

This paper proposes XDlasso to overcome the difficulties in hypothesis
testing for high dimensional predictive regressions with stationary
and nonstationary regressors. XDlasso fuses the IVX technique from
time series econometrics and the debiasing technique from the high
dimensional statistics, thereby reducing the order of biases to make
them readily correctable. We establish the asymptotic normality and
convergence rate of XDlasso. The validity of our methods is further
evidenced by simulation studies and empirical applications. 

\section*{Acknowledgement}
We thank the editor, the associate editor, and two referees for their constructive comments that led to substantial improvement of this paper. We thank Andrii Babii, Yongmiao Hong, Degui Li, Yiu Lim Lui, Kenwin Maung, Jun Yu, Wenyang Zhang, and Yichong Zhang for helpful comments and suggestions. Gao acknowledges the partial support from Southern Methodist University (SMU) University Research Council (URC) Grants. Mei acknowledges the partial  financial support from the Start-up Research Grant (SRG) from University of Macau (Project No.~SRG2025-00062-FBA). Shi acknowledges the partial financial support from the National Natural Science Foundation of China (Project No.~72425007, No.~72133002).

\bigskip \bigskip

\bibliographystyle{apalikeyear}
\bibliography{ref}

@article{gu2020empirical,
    author = {Gu, Shihao and Kelly, Bryan and Xiu, Dacheng},
    title = {Empirical Asset Pricing via Machine Learning},
    journal = {The Review of Financial Studies},
    volume = {33},
    number = {5},
    pages = {2223-2273},
    year = {2020},
    month = {02},
    issn = {0893-9454},
    doi = {10.1093/rfs/hhaa009},
    url = {https://doi.org/10.1093/rfs/hhaa009},
    eprint = {https://academic.oup.com/rfs/article-pdf/33/5/2223/33209812/hhaa009.pdf},
}

@article{hecq2023inference,
  title={Inference in Non-stationary High-Dimensional VARs},
  author={Hecq, Alain and Margaritella, Luca and Smeekes, Stephan},
  journal={arXiv preprint arXiv:2302.01434},
  year={2023}
}

@article{goyal2024comprehensive,
  title={A comprehensive 2022 look at the empirical performance of equity premium prediction},
  author={Goyal, Amit and Welch, Ivo and Zafirov, Athanasse},
  journal={The Review of Financial Studies},
  volume={37},
  number={11},
  pages={3490--3557},
  year={2024},
  publisher={Oxford University Press}
}

@book{hall1980martingale,
  title={Martingale limit theory and its application},
  author={Hall, Peter and Heyde, Christopher C},
  year={1980},
  publisher={Academic press}
}

@article{merikoski2004inequalities,
  title={Inequalities for spreads of matrix sums and products},
  author={Merikoski, Jorma K and Kumar, Ravinder},
  journal={Applied Mathematics E-Notes},
  volume={4},
  number={150-159},
  pages={9},
  year={2004}
}

@article{fan2025heteroscedasticity,
  title={A heteroscedasticity-robust overidentifying restriction test with high-dimensional covariates},
  author={Fan, Qingliang and Guo, Zijian and Mei, Ziwei},
  journal={Journal of Business \& Economic Statistics},
  volume={43},
  number={2},
  pages={413--422},
  year={2025},
  publisher={Taylor \& Francis}
}

@article{welch2008comprehensive,
  title={A comprehensive look at the empirical performance of equity premium prediction},
  author={Welch, Ivo and Goyal, Amit},
  journal={The Review of Financial Studies},
  volume={21},
  number={4},
  pages={1455--1508},
  year={2008},
  publisher={Society for Financial Studies}
}

@article{smeekes2020unit,
  title={Unit roots and cointegration},
  author={Smeekes, Stephan and Wijler, Etienne},
  journal={Macroeconomic forecasting in the era of big data: Theory and practice},
  pages={541--584},
  year={2020},
  publisher={Springer}
}

@article{smeekes2015bootstrap,
  title={Bootstrap sequential tests to determine the order of integration of individual units in a time series panel},
  author={Smeekes, Stephan},
  journal={Journal of Time Series Analysis},
  volume={36},
  number={3},
  pages={398--415},
  year={2015},
  publisher={Wiley Online Library}
}

@article{kim2009automatic,
  title={Automatic variance ratio test under conditional heteroskedasticity},
  author={Kim, Jae H},
  journal={Finance Research Letters},
  volume={6},
  number={3},
  pages={179--185},
  year={2009},
  publisher={Elsevier}
}

@article{babii2021highdimensionalgrangercausalitytests,
      title={Inference for high-dimensional regressions with heteroskedasticity and autocorrelation}, 
      author={Andrii Babii and Eric Ghysels and Jonas Striaukas},
      year={2020},
      journal={arXiv preprint arXiv:1912.06307}
}

@article{babii2024high,
  title={High-dimensional Granger causality tests with an application to {VIX} and news},
  author={Babii, Andrii and Ghysels, Eric and Striaukas, Jonas},
  journal={Journal of Financial Econometrics},
  volume={22},
  number={3},
  pages={605--635},
  year={2024},
  publisher={Oxford University Press}
}

@article{davydov1968convergence,
  title={Convergence of distributions generated by stationary stochastic processes},
  author={Davydov, Yu A},
  journal={Theory of Probability \& Its Applications},
  volume={13},
  number={4},
  pages={691--696},
  year={1968},
  publisher={SIAM}
}

@article{caner2018asymptotically,
  author    = {{C}aner, {M}ehmet and {K}ock, {A}nders {B}redahl},
  journal   = {{J}ournal of {E}conometrics},
  title     = {{A}symptotically Honest Confidence Regions for High Dimensional Parameters by the Desparsified Conservative {L}asso},
  year      = {2018},
  number    = {1},
  pages     = {143--168},
  volume    = {203},
  publisher = {{E}lsevier},
}

@article{bykhovskaya2022cointegration,
  title={Cointegration in large {VAR}s},
  author={Bykhovskaya, Anna and Gorin, Vadim},
  journal={The Annals of Statistics},
  volume={50},
  number={3},
  pages={1593--1617},
  year={2022},
  publisher={Institute of Mathematical Statistics}
}

@article{onatski2018alternative,
  title={Alternative asymptotics for cointegration tests in large {VAR}s},
  author={Onatski, Alexei and Wang, Chen},
  journal={Econometrica},
  volume={86},
  number={4},
  pages={1465--1478},
  year={2018},
  publisher={Wiley Online Library}
}

@article{zhu2014predictive,
author = {Fukang Zhu and Zongwu Cai and Liang Peng},
title = {{Predictive regressions for macroeconomic data}},
volume = {8},
journal = {The Annals of Applied Statistics},
number = {1},
publisher = {Institute of Mathematical Statistics},
pages = {577 -- 594},
keywords = {autoregressive process, empirical likelihood, long memory process, nearly integrated, predictive regressions, unit root, weighted estimation},
year = {2014},
doi = {10.1214/13-AOAS708},
URL = {https://doi.org/10.1214/13-AOAS708}
}

@article{gordon2011history,
  title={The history of the Phillips curve: Consensus and bifurcation},
  author={Gordon, Robert J},
  journal={Economica},
  volume={78},
  number={309},
  pages={10--50},
  year={2011},
  publisher={Wiley Online Library}
}

@article{niskanen2002death,
  title={On the death of the Phillips curve},
  author={Niskanen, William A},
  journal={Cato J.},
  volume={22},
  pages={193},
  year={2002},
  publisher={HeinOnline}
}

@techreport{gordon2013phillips,
  title={The Phillips curve is alive and well: Inflation and the NAIRU during the slow recovery},
  author={Gordon, Robert J},
  year={2013},
  institution={National Bureau of Economic Research}
}

@article{kostakis2018taking,
  title={Taking stock of long-horizon predictability tests: Are factor returns predictable?},
  author={Kostakis, Alexandros and Magdalinos, Tassos and Stamatogiannis, Michalis P},
  journal={Available at SSRN 3284149},
  year={2018}
}

@article{cai2023new,
  title={A new robust inference for predictive quantile regression},
  author={Cai, Zongwu and Chen, Haiqiang and Liao, Xiaosai},
  journal={Journal of Econometrics},
  volume={234},
  number={1},
  pages={227--250},
  year={2023},
  publisher={Elsevier}
}

@article{xu2020testing,
  title={Testing for multiple-horizon predictability: Direct regression based versus implication based},
  author={Xu, Ke-Li},
  journal={The Review of Financial Studies},
  volume={33},
  number={9},
  pages={4403--4443},
  year={2020},
  publisher={Oxford University Press}
}

@article{liu2019unified,
  title={A unified test for predictability of asset returns regardless of properties of predicting variables},
  author={Liu, Xiaohui and Yang, Bingduo and Cai, Zongwu and Peng, Liang},
  journal={Journal of Econometrics},
  volume={208},
  number={1},
  pages={141--159},
  year={2019},
  publisher={Elsevier}
}

@article{yang2021unified,
  title={Unified tests for a dynamic predictive regression},
  author={Yang, Bingduo and Liu, Xiaohui and Peng, Liang and Cai, Zongwu},
  journal={Journal of Business \& Economic Statistics},
  volume={39},
  number={3},
  pages={684--699},
  year={2021},
  publisher={Taylor \& Francis}
}

@article{fu2000asymptotics,
  title={Asymptotics for {L}asso-type estimators},
  author={Fu, Wenjiang and Knight, Keith},
  journal={The Annals of Statistics},
  volume={28},
  number={5},
  pages={1356--1378},
  year={2000},
  publisher={Institute of Mathematical Statistics}
}

@article{cai2014testing,
  title={Testing predictive regression models with nonstationary regressors},
  author={Cai, Zongwu and Wang, Yunfei},
  journal={Journal of Econometrics},
  volume={178},
  pages={4--14},
  year={2014},
  publisher={Elsevier}
}

@article{zhu2018sparse,
  title={Sparse linear models and $\ell_1$-regularized {2SLS} with high-dimensional endogenous regressors and instruments},
  author={Zhu, Ying},
  journal={Journal of Econometrics},
  volume={202},
  number={2},
  pages={196--213},
  year={2018},
  publisher={Elsevier}
}

@article{gold2020inference,
  title={Inference for high-dimensional instrumental variables regression},
  author={Gold, David and Lederer, Johannes and Tao, Jing},
  journal={Journal of Econometrics},
  volume={217},
  number={1},
  pages={79--111},
  year={2020},
  publisher={Elsevier}
}

@article{granger1974spurious,
  title={Spurious regressions in econometrics},
  author={Granger, Clive WJ and Newbold, Paul},
  journal={Journal of Econometrics},
  volume={2},
  number={2},
  pages={111--120},
  year={1974},
  publisher={Citeseer}
}

@article{belloni2012sparse,
  title={Sparse models and methods for optimal instruments with an application to eminent domain},
  author={Belloni, Alexandre and Chen, Daniel and Chernozhukov, Victor and Hansen, Christian},
  journal={Econometrica},
  volume={80},
  number={6},
  pages={2369--2429},
  year={2012},
  publisher={Wiley Online Library}
}

@article{jansson2006optimal,
  title={Optimal inference in regression models with nearly integrated regressors},
  author={Jansson, Michael and Moreira, Marcelo J},
  journal={Econometrica},
  volume={74},
  number={3},
  pages={681--714},
  year={2006},
  publisher={Wiley Online Library}
}

@article{campbell2006efficient,
  title={Efficient tests of stock return predictability},
  author={Campbell, John Y and Yogo, Motohiro},
  journal={Journal of Financial Economics},
  volume={81},
  number={1},
  pages={27--60},
  year={2006},
  publisher={Elsevier}
}

@article{demetrescu2023extensions,
  title={Extensions to {IVX} methods of inference for return predictability},
  author={Demetrescu, Matei and Georgiev, Iliyan and Rodrigues, Paulo MM and Taylor, AM Robert},
  journal={Journal of Econometrics},
  volume={237},
  number={2},
  pages={105271},
  year={2023},
  publisher={Elsevier}
}

@article{kostakis2015robust,
  title={Robust econometric inference for stock return predictability},
  author={Kostakis, Alexandros and Magdalinos, Tassos and Stamatogiannis, Michalis P},
  journal={The Review of Financial Studies},
  volume={28},
  number={5},
  pages={1506--1553},
  year={2015},
  publisher={Oxford University Press}
}

@article{stambaugh1999predictive,
	author = {Stambaugh, Robert F},
	journal = {Journal of Financial Economics},
	number = {3},
	pages = {375--421},
	publisher = {Elsevier},
	title = {Predictive regressions},
	volume = {54},
	year = {1999}}

@article{yang2020testing,
  title={Testing the predictability of US housing price index returns based on an {IVX-AR} model},
  author={Yang, Bingduo and Long, Wei and Peng, Liang and Cai, Zongwu},
  journal={Journal of the American Statistical Association},
  volume={115},
  number={532},
  pages={1598--1619},
  year={2020},
  publisher={Taylor \& Francis}
}

@article{deshpande2023online,
  title={Online debiasing for adaptively collected high-dimensional data with applications to time series analysis},
  author={Deshpande, Yash and Javanmard, Adel and Mehrabi, Mohammad},
  journal={Journal of the American Statistical Association},
  volume={118},
  number={542},
  pages={1126--1139},
  year={2023},
  publisher={Taylor \& Francis}
}

@article{tibshirani1996regression,
  title={Regression shrinkage and selection via the lasso},
  author={Tibshirani, Robert},
  journal={Journal of the Royal Statistical Society Series B: Statistical Methodology},
  volume={58},
  number={1},
  pages={267--288},
  year={1996},
  publisher={Oxford University Press}
}

@article{mccracken2016fred,
  title={{FRED-MD}: A monthly database for macroeconomic research},
  author={McCracken, Michael W and Ng, Serena},
  journal={Journal of Business \& Economic Statistics},
  volume={34},
  number={4},
  pages={574--589},
  year={2016},
  publisher={Taylor \& Francis}
}

@article{smeekes2018macroeconomic,
  title={Macroeconomic forecasting using penalized regression methods},
  author={Smeekes, Stephan and Wijler, Etienne},
  journal={International Journal of Forecasting},
  volume={34},
  number={3},
  pages={408--430},
  year={2018},
  publisher={Elsevier}
}

@article{giannone2021economic,
  title={Economic predictions with big data: The illusion of sparsity},
  author={Giannone, Domenico and Lenza, Michele and Primiceri, Giorgio E},
  journal={Econometrica},
  volume={89},
  number={5},
  pages={2409--2437},
  year={2021},
  publisher={Wiley Online Library}
}

@article{bickel2009simultaneous,
  title={Simultaneous analysis of {L}asso and {D}antzig selector},
  author={Bickel, Peter J and Ritov, Ya’acov and Tsybakov, Alexandre B},
  journal={The Annals of Statistics},
  volume={37},
  number={4},
  pages={1705--1732},
  year={2009},
  publisher={Institute of Mathematical Statistics}
}

@article{phillips2013predictive,
  title={Predictive regression under various degrees of persistence and robust long-horizon regression},
  author={Phillips, Peter CB and Lee, Ji Hyung},
  journal={Journal of Econometrics},
  volume={177},
  number={2},
  pages={250--264},
  year={2013},
  publisher={Elsevier}
}

@article{lee2016predictive,
  title={Predictive quantile regression with persistent covariates: {IVX-QR} approach},
  author={Lee, Ji Hyung},
  journal={Journal of Econometrics},
  volume={192},
  number={1},
  pages={105--118},
  year={2016},
  publisher={Elsevier}
}

@article{koo2020high,
  title={High-dimensional predictive regression in the presence of cointegration},
  author={Koo, Bonsoo and Anderson, Heather M and Seo, Myung Hwan and Yao, Wenying},
  journal={Journal of Econometrics},
  volume={219},
  number={2},
  pages={456--477},
  year={2020},
  publisher={Elsevier}
}

@article{smeekes2021automated,
  title={An automated approach towards sparse single-equation cointegration modelling},
  author={Smeekes, Stephan and Wijler, Etienne},
  journal={Journal of Econometrics},
  volume={221},
  number={1},
  pages={247--276},
  year={2021},
  publisher={Elsevier}
}

@article{babii2022machine,
  title={Machine Learning Time Series Regressions with An Application to Nowcasting},
  author={Babii, Andrii and Ghysels, Eric and Striaukas, Jonas},
  journal={Journal of Business \& Economic Statistics},
  volume={40},
  number={3},
  pages={1094--1106},
  year={2022},
  publisher={Taylor \& Francis}
}

@article{phillips2015halbert,
  title={Halbert {White} {Jr.} memorial {JFEC} lecture: Pitfalls and possibilities in predictive regression},
  author={Phillips, Peter CB},
  journal={Journal of Financial Econometrics},
  volume={13},
  number={3},
  pages={521--555},
  year={2015},
  publisher={Oxford University Press}
}

@article{phillips1958relation,
  title={The relation between unemployment and the rate of change of money wage rates in the United Kingdom, 1861-1957},
  author={Phillips, Alban W},
  journal={Economica},
  volume={25},
  number={100},
  pages={283--299},
  year={1958},
  publisher={JSTOR}
}

@article{engemann2020phillips,
  title={What Is the {P}hillips Curve (and Why Has It Flattened)?},
  author={Engemann, KM},
  journal={Federal Reserve Bank of St. Louis, January},
  volume={14},
  year={2020}
}

@article{phillips2016robust,
  title={Robust econometric inference with mixed integrated and mildly explosive regressors},
  author={Phillips, Peter CB and Lee, Ji Hyung},
  journal={Journal of Econometrics},
  volume={192},
  number={2},
  pages={433--450},
  year={2016},
  publisher={Elsevier}
}

@article{daly2019new,
  title={A new balancing act: monetary policy tradeoffs in a changing world},
  author={Daly, Mary C},
  journal={FRBSF Economic Letter},
  volume={2019},
  pages={23},
  year={2019},
  publisher={Federal Reserve Bank of San Francisco}
}

@article{phillips2021boosting,
  title={Boosting: Why you can use the {HP} filter},
  author={Phillips, Peter CB and Shi, Zhentao},
  journal={International Economic Review},
  volume={62},
  number={2},
  pages={521--570},
  year={2021},
  publisher={Wiley Online Library}
}

@article{van2014asymptotically,
  title={On Asymptotically Optimal Confidence Regions And Tests For High-Dimensional Models},
  author={van de Geer, Sara and B{\"u}hlmann, Peter and Ritov, Ya'acov and Dezeure, Ruben},
  journal={The Annals of Statistics},
  volume={42},
  number={3},
  pages={1166--1202},
  year={2014},
  publisher={Institute of Mathematical Statistics}
}

@article{zhang2014confidence,
  title={Confidence intervals for low dimensional parameters in high dimensional linear models},
  author={Zhang, Cun-Hui and Zhang, Stephanie S},
  journal={Journal of the Royal Statistical Society Series B: Statistical Methodology},
  volume={76},
  number={1},
  pages={217--242},
  year={2014},
  publisher={Oxford University Press}
}

@article{javanmard2014confidence,
  title={Confidence intervals and hypothesis testing for high-dimensional regression},
  author={Javanmard, Adel and Montanari, Andrea},
  journal={Journal of Machine Learning Research},
  volume={15},
  number={1},
  pages={2869--2909},
  year={2014},
  publisher={JMLR. org}
}

@article{mei2024boosted,
  title={The boosted {HP} filter is more general than you might think},
  author={Mei, Ziwei and Phillips, Peter CB and Shi, Zhentao},
  journal={arXiv preprint arXiv:2209.09810},
  year={2024}
}

@article{benati2015long,
  title={The long-run {P}hillips curve: A structural {VAR} investigation},
  author={Benati, Luca},
  journal={Journal of Monetary Economics},
  volume={76},
  pages={15--28},
  year={2015},
  publisher={Elsevier}
}

@article{adamek2022lasso,
  title={Lasso inference for high-dimensional time series},
  author={Adamek, Robert and Smeekes, Stephan and Wilms, Ines},
  journal={Journal of Econometrics},
  volume={235},
  number={2},
  pages={1114--1143},
  year={2023},
  publisher={Elsevier}
}

@article{medeiros2021forecasting,
  title={Forecasting inflation in a data-rich environment: the benefits of machine learning methods},
  author={Medeiros, Marcelo C and Vasconcelos, Gabriel FR and Veiga, {\'A}lvaro and Zilberman, Eduardo},
  journal={Journal of Business \& Economic Statistics},
  volume={39},
  number={1},
  pages={98--119},
  year={2021},
  publisher={Taylor \& Francis}
}

@book{lin1997limit,
  title={Limit theory for mixing dependent random variables},
  author={Lin, Zhengyuan and Lu, Chuanrong},
  volume={378},
  year={1997},
  publisher={Springer Science \& Business Media}
}

@article{mei2022lasso,
  title={On {LASSO} for high dimensional predictive regression},
  author={Mei, Ziwei and Shi, Zhentao},
  journal={Journal of Econometrics},
  volume={242},
  number={2},
  pages={105809},
  year={2024},
  publisher={Elsevier}
}

@article{tu2023penetrating,
  title={Penetrating sporadic return predictability},
  author={Tu, Yundong and Xie, Xinling},
  journal={Journal of Econometrics},
  volume={237},
  number={1},
  pages={105509},
  year={2023},
  publisher={Elsevier}
}

@article{chernozhukov2018,
    author = {Chernozhukov, Victor and Chetverikov, Denis and Demirer, Mert and Duflo, Esther and Hansen, Christian and Newey, Whitney and Robins, James},
    title = "{Double/debiased machine learning for treatment and structural parameters}",
    journal = {The Econometrics Journal},
    volume = {21},
    number = {1},
    pages = {C1-C68},
    year = {2018},
    month = {01},
    issn = {1368-4221},
    doi = {10.1111/ectj.12097},
    url = {https://doi.org/10.1111/ectj.12097},
    eprint = {https://academic.oup.com/ectj/article-pdf/21/1/C1/27684918/ectj00c1.pdf},
}

@article{beutner2021justification,
  title={A justification of conditional confidence intervals},
  author={Beutner, E and Heinemann, A and Smeekes, S},
  journal={Electronic Journal of Statistics},
  volume={15},
  number={1},
  pages={2517--2565},
  year={2021},
  publisher={Institute of Mathematical Statistics}
}

@article{zhang2019identifying,
  title={Identifying cointegration by eigenanalysis},
  author={Zhang, Rongmao and Robinson, Peter and Yao, Qiwei},
  journal={Journal of the American Statistical Association},
  volume={114},
  number={526},
  pages={916--927},
  year={2019},
  publisher={Taylor \& Francis}
}

@article{phillips2009econometric,
	author = {Phillips, Peter CB and Magdalinos, Tassos},
	journal = {Singapore Management University, CoFie Working Paper},
	title = {Econometric inference in the vicinity of unity},
	volume = {7},
	year = {2009}}

@article{magdalinos2009limit,
	author = {Magdalinos, Tassos and Phillips, Peter CB},
	journal = {Econometric Theory},
	number = {2},
	pages = {482--526},
	publisher = {Cambridge University Press},
	title = {Limit theory for cointegrated systems with moderately integrated and moderately explosive regressors},
	volume = {25},
	year = {2009}}

@article{lee2022lasso,
  title={On {LASSO} for predictive regression},
  author={Lee, Ji Hyung and Shi, Zhentao and Gao, Zhan},
  journal={Journal of Econometrics},
  volume={229},
  number={2},
  pages={322--349},
  year={2022},
  publisher={Elsevier}
}

@article{zhang2017simultaneous,
  title={Simultaneous inference for high-dimensional linear models},
  author={Zhang, Xianyang and Cheng, Guang},
  journal={Journal of the American Statistical Association},
  volume={112},
  number={518},
  pages={757--768},
  year={2017},
  publisher={Taylor \& Francis}
}

@article{liu2023unified,
  title={A unified inference for predictive quantile regression},
  author={Liu, Xiaohui and Long, Wei and Peng, Liang and Yang, Bingduo},
  journal={Journal of the American Statistical Association},
  pages={1--15},
  year={2023},
  publisher={Taylor \& Francis}
}

@article{fan2019predictive,
  title={Predictive quantile regressions under persistence and conditional heteroskedasticity},
  author={Fan, Rui and Lee, Ji Hyung},
  journal={Journal of Econometrics},
  volume={213},
  number={1},
  pages={261--280},
  year={2019},
  publisher={Elsevier}
}

@Article{fisher1973discovered,
  author    = {Fisher, Irving},
  journal   = {Journal of Political Economy},
  title     = {I Discovered the {Phillips} Curve: `{A} Statistical Relation between Unemployment and Price Changes'},
  year      = {1973},
  number    = {2, Part 1},
  pages     = {496--502},
  volume    = {81},
  publisher = {The University of Chicago Press},
}

@Article{fisher1926statistical,
  author    = {Fisher, Irving},
  journal   = {International Labour Review},
  title     = {A statistical relation between unemployment and price changes},
  year      = {1926},
  pages     = {785-792},
  volume    = {13},
  publisher = {HeinOnline},
}

@article{shi2016estimation,
  title={Estimation of sparse structural parameters with many endogenous variables},
  author={Shi, Zhentao},
  journal={Econometric Reviews},
  volume={35},
  number={8-10},
  pages={1582--1608},
  year={2016},
  publisher={Taylor \& Francis}
}

@Article{Mankiw2024,
  author    = {Mankiw, N Gregory},
  journal   = {Journal of Monetary Economics},
  title     = {Six beliefs {I} have about inflation: Remarks prepared for NBER conference on ``Inflation in the Covid era and beyond''},
  year      = {2024},
  pages     = {103631},
  publisher = {Elsevier},
}

\onehalfspacing
\normalsize

\newpage{}

\setcounter{footnote}{0}
\setcounter{table}{0} 
\setcounter{figure}{0} 
\setcounter{equation}{0} 
\renewcommand{\thefootnote}{\thesection.\arabic{footnote}} 
\renewcommand{\theequation}{\thesection.\arabic{equation}} 
\renewcommand{\thefigure}{\thesection.\arabic{figure}} 
\renewcommand{\thetable}{\thesection.\arabic{table}} 
\setcounter{thm}{0} 
\setcounter{lem}{0} 
\setcounter{rem}{0}
\setcounter{prop}{0} 
\renewcommand{\thethm}{\thesection.\arabic{thm}}
\renewcommand{\thelem}{\thesection.\arabic{lem}} 
\renewcommand{\therem}{\thesection.\arabic{rem}} 
\renewcommand{\theprop}{\thesection.\arabic{prop}} 
\begin{appendices}
\begin{center}
{\LARGE Online Appendices to ``LASSO Inference \\ for High Dimensional Predictive Regressions''}
\par\end{center}

\begin{center}
{\large Zhan Gao$^a$, Ji Hyung Lee$^b$, Ziwei Mei$^c$, Zhentao Shi$^d$ \\
\ \\
$^a$Southern Methodist University\\
$^b$University of Illinois at Urbana-Champaign\\
$^c$University of Macau\\
$^d$The Chinese University of Hong Kong}
\par\end{center}

Section \ref{sec:Technical-Proofs} includes the proofs of all theoretical
statements in the main text. Section \ref{sec:Additional-Simulation-Results}
and \ref{sec:Additional-Empirical-Results} collect additional simulation
and empirical results omitted from the main text. 

\section{Technical Proofs\protect\label{sec:Technical-Proofs}}

In the proofs, we use $c$ and $C$, without superscripts or subscripts,
to denote generic positive constants that may vary place to place.
For any positive sequences $\{a_{n}\}$ and $\{b_{n}\}$, ``$a_{n}\stackrel{\mathrm{p}}{\preccurlyeq}b_{n}$''
means that there is an absolute constant, say $c$, such that the
event $\left\{ a_{n}\leq cb_{n}\right\} $ holds with probability
approaching one (w.p.a.1.). Symmetrically, ``$a_{n}\stackrel{\mathrm{p}}{\succcurlyeq}b_{n}$''
means ``$b_{n}\stackrel{\mathrm{p}}{\preccurlyeq}a_{n}$''. The
integer floor function is denoted as $\left\lfloor \cdot\right\rfloor $.
For an $n$-dimensional vector $x=(x_{t})_{t\in[n]}$, the $L_{2}$-norm
is $\left\Vert x\right\Vert _{2}=\sqrt{\sum_{t=1}^{n}x_{t}^{2}}$.
For notational simplicity, in the proofs we assume $p\geq n^{\nu_{1}}$
for some absolute constant $\nu_{1}$, which is reasonable as we focus
on the high dimensional case with a larger $p$ relative to $n$.\footnote{There is no technical difficulty in allowing $p$ to grow either slowly
at a logarithmic or fast at an exponential rate of $n$, but we have
to compare $\log p$ and $\log n$ in many places, and in many conditions
and rates the term ``$\log p$'' has to be changed into $\log(np)$.}We use $\mathbf{I}_{n}$ to denote the $n$-dimensional identity matrix,
where the subscript may be omitted when there is no ambiguity with
matrix dimensions.

Section \ref{subsec:Proofs-for-Section-1} proves the results in Section
\ref{subsec:Consistency-of-Slasso} for the consistency of Slasso
at the presence of both LUR and stationary regressors. Section \ref{subsec:Proofs-for-Section}
includes the proofs for the results in Section \ref{subsec:The-Auxiliary-Regression}
that constructs consistency of the auxiliary LASSO regression adapted
for bias correction. Section \ref{subsec:Proofs-for-Section-normal}
includes the proofs for the theorems in Section \ref{subsec:Asymptotic-Normality}
about the asymptotic distribution of the XDlasso estimator and the
order of its standard error, which determine the size and power of
the XDlasso inference. 

\subsection{Proofs for Section \ref{subsec:Consistency-of-Slasso}\protect\label{subsec:Proofs-for-Section-1}}

\subsubsection{Technical Lemmas of Gaussian Approximation }

The assumptions imposed in Section \ref{sec:Asymptotic-Theory} slightly
differ from those in \citet{mei2022lasso} (MS24, hereafter). Specifically,
the linear process assumption in MS24 is extended to more general
$\alpha$-mixing and sub-exponential conditions. One of the main modifications
of the proof is the Gaussian approximation error deduced in MS24's
Lemma B.4. The following lemma re-establishes Gaussian approximation
errors for the stationary components $v_{t}$ defined as (\ref{eq:error_transform}).
The results of Gaussian approximation will be useful for the RE condition
required for the Slasso's consistency.
\begin{lem}
\label{lem:Gaussian=000020Approximation} Under Assumptions \ref{assu:tail}--\ref{assu:asym_n},
there exist standard Brownian motions $\{\mathcal{B}_{k}(t)\}$ with
independent increment $\mathcal{B}_{k}(t)-\mathcal{B}_{k}(s)\sim\mathcal{N}(t-s)$
for $t\geq s\geq0$ such that
\[
\sup_{k\in[p],t\in[n]}\left|\frac{1}{\sqrt{n}}\left(\sum_{s=0}^{t-1}\varepsilon_{k,s}-\mathcal{B}_{k}(t\cdot V_{k}^{*})\right)\right|=O_{p}\left(\dfrac{(\log p)^{3/2}}{n^{1/4}}\right)
\]
where $V_{k}^{*}=\mathbb{E}\left[\sum_{d=-\infty}^{\infty}\varepsilon_{k,t}\varepsilon_{k,t-d}\right]$
is the long-run variance of $\{\varepsilon_{k,t}\}.$
\end{lem}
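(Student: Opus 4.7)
The plan is to establish this Gaussian approximation in two stages: first, a single-index strong invariance principle that delivers an exponential-type tail bound on the approximation error; second, a union bound over $k\in[p]$ exploiting the cross-sectional independence granted by Assumption \ref{assu:tail}. The end product is an $O_p$ statement of the desired form.

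For the single-index step, fix $k\in[p]$ and consider $\{\varepsilon_{k,t}\}$. Under the sub-exponential tail bound \eqref{eq:epstail}, the exponential $\alpha$-mixing condition \eqref{eq:alpha_u}, and the lower bound $V_k^*\geq c_\varepsilon>0$ from Assumption \ref{assu:alpha}, I would invoke a standard strong approximation theorem for weakly dependent sequences with exponential moments (e.g., Berkes--Philipp, Shao, or Merlev\`ede--Peligrad--Rio). Such results guarantee the existence of a standard Brownian motion $\mathcal{B}_k$ on a sufficiently rich probability space and absolute constants $C_0,c_0>0$ such that
\[
\Pr\!\left(\sup_{t\in[n]}\Bigl|\sum_{s=0}^{t-1}\varepsilon_{k,s}-\mathcal{B}_k(t\,V_k^*)\Bigr|>C_0\,n^{1/4}\,y\right)\leq C_0\exp(-c_0\,y)
\]
for all $y\geq(\log n)^{a}$ with some $a>0$ depending only on the mixing exponent $r$ and the sub-exponential constants. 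This replaces MS24's Lemma B.4, which assumed a linear-process structure; the core novelty here is that the same kind of tail control holds under the more general mixing hypothesis of Assumption \ref{assu:alpha}.

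For the uniform-in-$k$ step, Assumption \ref{assu:tail} provides cross-sectional independence $\varepsilon_{k,\cdot}\perp\varepsilon_{\ell,\cdot}$ for $k\neq\ell$, so by enlarging the probability space if necessary one may construct the approximating Brownian motions $\{\mathcal{B}_k\}_{k\in[p]}$ jointly independent, each coupled to its own coordinate by the single-index result above. A union bound then yields
\[
\Pr\!\left(\sup_{k\in[p],\,t\in[n]}\Bigl|\sum_{s=0}^{t-1}\varepsilon_{k,s}-\mathcal{B}_k(tV_k^*)\Bigr|>C_0\,n^{1/4}\,y\right)\leq C_0\,p\exp(-c_0\,y).
\]
Setting $y=M(\log p)^{3/2}$ for a sufficiently large absolute constant $M$, and using the standing convention $p\geq n^{\nu_1}$ (so $\log n\lesssim\log p$ absorbs the $(\log n)^{a}$ threshold), the right-hand side tends to zero. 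Dividing the bound by $\sqrt n$ yields the claimed $O_p((\log p)^{3/2}/n^{1/4})$ rate.

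The principal difficulty is securing the single-index statement in exponential-tail rather than almost-sure form: classical strong invariance principles such as Berkes--Philipp give $o(n^{1/4}(\log n)^{\beta})$ bounds a.s., but translating these into an explicit tail bound with sub-exponential decay in $y$ is what makes the union bound effective. To this end I would follow the blocking/coupling construction used in the proofs of those theorems: partition $[n]$ into alternating ``big'' and ``small'' blocks, use exponential mixing to replace the big blocks by independent copies at a cost controlled by $\alpha(d)$, apply a Hungarian-style construction (KMT) to the independent block sums, and control the contribution of the small blocks via Bernstein's inequality together with the sub-exponential bound \eqref{eq:epstail}. Once each of these pieces is given an explicit deviation bound, their combination produces the exponential tail displayed above, and everything else follows from the union bound argument.
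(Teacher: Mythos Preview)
Your overall strategy---single-index strong approximation plus a union bound over $k$ exploiting cross-sectional independence---is sound and would yield the stated rate. The paper, however, takes a less laborious route on two counts. First, rather than reconstructing an exponential tail bound via blocking, coupling, and Bernstein as you propose, it invokes an off-the-shelf almost-sure strong invariance principle (Theorem~9.3.1 of Lin and Lu, \emph{Limit Theory for Mixing Dependent Random Variables}, 1997), which under sub-exponential moments and exponential mixing gives for each $k$ a coupling $\sum_{s=0}^{t-1}\varepsilon_{k,s}-\mathcal{B}_k(V_{k,t})=O_{a.s.}\bigl(V_{k,t}^{1/4}(\log V_{k,t})^{3/2}\bigr)$, where $V_{k,t}=\mathrm{var}\bigl(\sum_{s=0}^{t-1}\varepsilon_{k,s}\bigr)$; since the constants in that theorem depend only on the uniform bounds in Assumptions~\ref{assu:tail}--\ref{assu:alpha}, the supremum over $k$ inherits the same rate without a separate tail-bound argument. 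Second, the paper introduces an intermediate step you skip: it couples the partial sums to $\mathcal{B}_k(V_{k,t})$ rather than directly to $\mathcal{B}_k(tV_k^*)$, and then separately shows $\sup_{k,t}|t\,V_k^*-V_{k,t}|=O(1)$ via summability of autocovariances under mixing, so that the Gaussian increment $\mathcal{B}_k(V_{k,t})-\mathcal{B}_k(tV_k^*)$ contributes only a smaller $O_p\bigl(\sqrt{\log(np)/n}\bigr)$ term after a union bound. Your approach makes the uniformity mechanism more explicit, at the cost of essentially reproving the internals of a strong approximation theorem; the paper's approach is more economical but leaves the passage from ``for each $k$, $O_{a.s.}$'' to ``$\sup_k$, $O_{a.s.}$'' implicit in the uniformity of constants.
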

\begin{rem}[Convergence Rate of Gaussian Approximation Error]
The convergence rate in Lemma \ref{lem:Gaussian=000020Approximation}
is less sharp than that in MS24's Lemma B.4, since we work with general
$\alpha$-mixing sequences without specifying linear processes. 
\end{rem}
\begin{proof}[Proof of Lemma \ref{lem:Gaussian=000020Approximation}]
Define $V_{k,t}=\mathbb{E}\left[\left(\sum_{s=0}^{t-1}\varepsilon_{k,s}\right)^{2}\right].$
The proof includes the following two steps: 
\begin{enumerate}
\item The variance of the partial sum $V_{k,t}$ is well approximated by
the long run variance $V_{k}^{*}$ scaled by $t$, in the sense that
\begin{align}
\sup_{k\in[p],t\in[n]}\left|\dfrac{t\cdot V_{k}^{*}-V_{k,t}}{n}\right| & =O\left(\dfrac{1}{n}\right).\label{eq:approx=000020lr=000020var}
\end{align}
\item Strong Gaussian approximation: There exist standard Brownian motions
$\{\mathcal{B}_{k}(t)\}$ with independent increment $\mathcal{B}_{k}(t)-\mathcal{B}_{k}(s)\sim\mathcal{N}(t-s)$
for $t\geq s\geq0$ such that 
\begin{equation}
\sup_{k\in[p],t\in[n]}\left|\frac{1}{\sqrt{n}}\left(\sum_{s=0}^{t-1}\varepsilon_{k,s}-\mathcal{B}_{k}(V_{k,t})\right)\right|=O_{a.s.}\left(\dfrac{(\log n)^{3/2}}{n^{1/4}}\right).\label{eq:strong=000020approx}
\end{equation}
\end{enumerate}
Given the two steps above, Lemma \ref{lem:Gaussian=000020Approximation}
follows by the triangular inequality 
\begin{align*}
 & \sup_{k\in[p],t\in[n]}\left|\frac{1}{\sqrt{n}}\left(\sum_{s=0}^{t-1}\varepsilon_{k,s}-\mathcal{B}_{k}(t\cdot V_{j}^{*})\right)\right|\\
\leq & \sup_{j\in[p],t\in[n]}\left|\dfrac{1}{\sqrt{n}}\left(\mathcal{B}_{k}(V_{k,t})-\mathcal{B}_{k}(t\cdot V_{j}^{*})\right)\right|+\sup_{k\in[p],t\in[n]}\left|\frac{1}{\sqrt{n}}\left(\sum_{s=0}^{t-1}\varepsilon_{k,s}-\mathcal{B}_{k}(V_{k,t})\right)\right|\\
= & O_{p}\left(\sqrt{\log(np)\cdot\sup_{k\in[p],t\in[n]}\left|\dfrac{t\cdot V_{k}^{*}-V_{k,t}}{n}\right|}\right)+\sup_{k\in[p],t\in[n]}\left|\frac{1}{\sqrt{n}}\left(\sum_{s=0}^{t-1}\varepsilon_{k,s}-\mathcal{B}_{k}(V_{k,t})\right)\right|\\
= & O_{p}\left(\sqrt{\dfrac{\log(np)}{n}}\right)+O_{a.s.}\left(\dfrac{(\log n)^{3/2}}{n^{1/4}}\right)\\
= & O_{p}\left(\dfrac{(\log p)^{3/2}}{n^{1/4}}\right)
\end{align*}
where the second row applies the fact that $n^{-1/2}\left(\mathcal{B}_{k}(V_{k,t})-\mathcal{B}_{k}(t\cdot V_{k}^{*})\right)$
for each $(j,t)$ follows a normal distribution of mean zero and variance
$\left|\left(t\cdot V_{k}^{*}-V_{k,t}\right)/n\right|$, the third
row applies (\ref{eq:approx=000020lr=000020var}), (\ref{eq:strong=000020approx}),
and the last row applies the assumption $p\geq n^{\nu_{1}}$. 

\textbf{Step 1. Verifying (\ref{eq:approx=000020lr=000020var}). }Define
${\rm cov}_{k}(d):={\rm cov}(\varepsilon_{k,t},\varepsilon_{k,t-d})$
as the autocovariance function of $\{\varepsilon_{k,t}\}$. Then by
some fundamental calculations, 
\begin{equation}
V_{k,t}=t\cdot{\rm cov}_{k}(0)+2\sum_{d=1}^{t}(t-d)\cdot{\rm cov}_{k}(d).\label{eq:decom=000020Vjt}
\end{equation}
In addition, 
\[
t\cdot V_{k}^{*}=t\cdot{\rm cov}_{k}(0)+2t\cdot\sum_{d=1}^{\infty}{\rm cov}_{k}(d).
\]
Then 
\begin{align}
t\cdot V_{k}^{*}-V_{k,t} & =2t\sum_{d=1}^{\infty}{\rm cov}_{k}(d)-2\sum_{d=1}^{t}(t-d)\cdot{\rm cov}_{k}(d)\nonumber \\
 & =2t\sum_{d=t+1}^{\infty}{\rm cov}_{k}(d)+2\sum_{d=1}^{t}d\cdot{\rm cov}_{k}(d).\label{eq:sum=000020cov}
\end{align}
Recall that Assumption \ref{assu:alpha} imposes an upper bound for
the $\alpha$-mixing coefficient. With $p=q=3$ in Equation (2.2)
in the Corollary of \citet[pp. 692]{davydov1968convergence}, we have
\begin{equation}
\sup_{j\in[p]}|{\rm cov}_{k}(d)|\leq12\left(\mathbb{E}|\varepsilon_{k,t}|^{3}\right)^{2/3}\sqrt{\alpha(d)}\leq C\cdot\exp\left(-c_{\alpha}d^{r}/2\right).\label{eq:sup=000020cov=000020bound}
\end{equation}
where $C=12\left(\mathbb{E}|\varepsilon_{k,t}|^{3}\right)^{2/3}\sqrt{C_{\alpha}}$
and $C_{\alpha}$ is in Assumption \ref{assu:alpha}. By Equation
(B.78) in MS24's supplement, 
\[
\sup_{t\in[n]}\sum_{d=t+1}^{\infty}\exp\left(-c_{\alpha}d^{r}/2\right)\leq\dfrac{2}{c_{\alpha}}\exp\left(-c_{\alpha}t^{r}/4\right),
\]
and thus uniformly for all $t,$ there exists some absolute constant
$C_{1}$ such that
\begin{equation}
\sup_{j\in[p],t\in[n]}t\cdot\sum_{d=t+1}^{\infty}|{\rm cov}_{k}(d)|\leq C\cdot\sup_{t\in[n]}t\sum_{d=t+1}^{\infty}\exp\left(-c_{\alpha}d^{r}/2\right)\leq\sup_{t\in[n]}\dfrac{2Ct}{c_{\alpha}}\exp\left(-c_{\alpha}t^{r}/4\right)<C_{1}.\label{eq:bound=000020t=000020sum=000020cov}
\end{equation}
 In addition, 
\begin{equation}
\sup_{j\in[p],t\in[n]}\sum_{d=1}^{t}d\cdot|{\rm cov}_{k}(d)|<C\sum_{d=1}^{\infty}d\cdot\exp\left(-c_{\alpha}d^{r}/2\right)<C_{2}\label{eq:bound=000020sum=000020d=000020cov}
\end{equation}
for some absolute constant $C_{2}$. Then
\begin{align*}
\sup_{j\in[p],t\in[n]}\left|\dfrac{t\cdot V_{k}^{*}-V_{k,t}}{n}\right| & \leq\dfrac{2}{n}\cdot\left(\sup_{j\in[p],t\in[n]}t\cdot\sum_{d=t+1}^{\infty}|{\rm cov}_{k}(d)|+\sup_{j\in[p],t\in[n]}\sum_{d=1}^{t}d\cdot|{\rm cov}_{k}(d)|\right)\\
 & \leq\dfrac{2(C_{1}+C_{2})}{n},
\end{align*}
which implies (\ref{eq:approx=000020lr=000020var}). 

\textbf{Step 2. Verifying (\ref{eq:strong=000020approx}). }We use
the strong Gaussian approximation from \citet{lin1997limit}'s Theorem
9.3.1. Specifically, define $g(x)=\exp(x)$. By the sub-exponential
tail imposed by Assumption \ref{assu:tail}, the \emph{sub-exponential
norm }of $\varepsilon_{k,t}$, denoted as $\|\varepsilon_{k,t}\|_{g}$
in \citet{lin1997limit}, is uniformly bounded by an absolute constant.
It then suffices to verify the following two conditions required in
the aforementioned theorem:
\begin{itemize}
\item[(i)] $V_{k,t}\geq ct$ for some absolute constant $c$. 
\item[(ii)] $\sum_{d=1}^{\infty}\alpha(d)^{1/4}\cdot\log(1/\alpha(d))<\infty$,
where the parameter $\delta$ in \citet[Theorem 9.3.1]{lin1997limit}
is taken as 2.
\end{itemize}
Then by Theorem 9.3.1 of \citet{lin1997limit}, for any $j\in[p]$
\begin{equation}
\sum_{s=0}^{t-1}\varepsilon_{k,s}-\mathcal{B}_{k}(V_{k,t})=O_{a.s.}\left(V_{k,t}^{1/4}(\log V_{k,t})^{3/2}\right).\label{eq:temp=000020strong=000020approx}
\end{equation}
By (\ref{eq:decom=000020Vjt}) and (\ref{eq:sup=000020cov=000020bound}),
\begin{align*}
V_{k,t} & =t\left[{\rm cov}_{k}(0)+2\sum_{d=1}^{t}(1-d/t)\cdot{\rm cov}_{k}(d)\right]\\
 & \leq t\left[{\rm cov}_{k}(0)+2C\cdot\sum_{d=1}^{\infty}\exp\left(-c_{\alpha}d^{r}/2\right)\right]=O(t)
\end{align*}
uniformly for all $(k,t)$. Then by (\ref{eq:temp=000020strong=000020approx}),
\[
\sup_{j\in[p],t\in[n]}\left|\frac{1}{\sqrt{n}}\left(\sum_{s=0}^{t-1}\varepsilon_{k,s}-\mathcal{B}_{k}(V_{k,t})\right)\right|=O_{a.s.}\left(\sup_{t\in[n]}\dfrac{t^{1/4}(\log t)^{3/2}}{\sqrt{n}}\right)=O_{a.s.}\left(\dfrac{(\log n)^{3/2}}{n^{1/4}}\right),
\]
which leads to (\ref{eq:strong=000020approx}). It then suffices to
show the Conditions (i) and (ii) above. 

\textbf{Proof of (i). }By (\ref{eq:sum=000020cov}) and (\ref{eq:sup=000020cov=000020bound}),
\begin{align*}
\lim_{t\to\infty}\sup_{k\in[p]}\left|\dfrac{V_{k,t}}{t}-V_{k}^{*}\right| & \leq\lim_{t\to\infty}2\sup_{k\in[p]}\left|\sum_{d=t+1}^{\infty}{\rm cov}_{k}(d)+\dfrac{2}{t}\sum_{d=1}^{t}d\cdot{\rm cov}_{k}(d)\right|\\
 & \leq\lim_{t\to\infty}C\left[\sum_{d=t+1}^{\infty}\exp\left(-c_{\alpha}d^{r}/2\right)+\dfrac{2}{t}\sum_{d=1}^{t}d\cdot{\rm cov}_{k}(d)\right]\\
 & =0,
\end{align*}
where $C$ is an absolute constant, and the limit applies (\ref{eq:bound=000020t=000020sum=000020cov})
and (\ref{eq:bound=000020sum=000020d=000020cov}). By Assumption \ref{assu:alpha},
the long-run variance $V_{k}^{*}$ is bounded from below by some absolute
constant. This result implies that $V_{k,t}/t$ is lower bounded by
some absolute constant uniformly for all $(k,t)$.

\textbf{Proof of (ii). }This is a direct corollary of the exponential
decaying mixing coefficient imposed by Assumption \ref{assu:alpha},
in the sense that 
\[
\sum_{d=1}^{\infty}\alpha(d)^{1/4}\cdot\log(1/\alpha(d))\leq\sum_{d=1}^{\infty}C_{\alpha}^{1/4}\exp\left(-c_{\alpha}d^{r}/4\right)\cdot c_{\alpha}d^{r}<\infty.
\]
This completes to proof of Lemma \ref{lem:Gaussian=000020Approximation}. 
\end{proof}
The next lemma establishes that result that the LUR regressors $X_{t}$
with general weakly dependent innovations can be approximated by another
vector LUR processes with normally distributed innovations.
\begin{lem}
\label{lem:LUR=000020Gaussian=000020Approx} Suppose that Assumptions
\ref{assu:tail}--\ref{assu:asym_n} hold. There exists independent
normally distributed variables $\eta_{j,t}$ for all $j\in\mathcal{M}_{x}$,
such that the LUR processes $\xi_{j,t}=\rho_{j}^{*}\xi_{j,t-1}+\sum_{k=1}^{p}\Phi_{j,k}\eta_{k,t}$
satisfy 
\[
\sup_{j\in\mathcal{M}_{x},t\in[n]}\left|x_{j,t}-\xi_{j,t}\right|=O_{p}(n^{1/4}\log p),
\]
where $\Phi_{j,k}$ is the $(j,k)$-th entry of the matrix $\Phi$
in (\ref{eq:error_transform}).
\end{lem}
\begin{proof}[Proof of Lemma \ref{lem:LUR=000020Gaussian=000020Approx}]
Note that $x_{j,t}-x_{j,t-1}=e_{j,t}+\dfrac{c_{j}^{*}}{n}x_{j,t-1}$.
Without loss of generality, assume that $x_{j,0}=0.$ Then
\[
x_{j,t}=\sum_{s_{1}=1}^{t}(x_{j,t}-x_{j,t-1})=\sum_{s_{1}=1}^{t}e_{j,s_{1}}+\dfrac{c_{j}^{*}}{n}\sum_{s_{1}=1}^{t}x_{j,s_{1}-1}.
\]
By induction, we have for any \emph{fixed} integer $M$, whenever
$t>M$
\begin{align*}
x_{j,t} & =\sum_{s_{1}=1}^{t}e_{j,s_{1}}+\dfrac{c_{j}^{*}}{n}\sum_{s_{1}=1}^{t}\left(\sum_{s_{2}=2}^{s_{1}}e_{j,s_{2}-1}+\dfrac{c_{j}^{*}}{n}\sum_{s_{2}=2}^{s_{1}}x_{j,s_{2}-2}\right)\\
 & =\sum_{s_{1}=1}^{t}e_{j,s_{1}}+\dfrac{c_{j}^{*}}{n}\sum_{s_{1}=1}^{t}\sum_{s_{2}=2}^{s_{1}}e_{j,s_{2}-1}+\dfrac{c_{j}^{*2}}{n^{2}}\sum_{s=1}^{t}\sum_{s_{2}=2}^{s_{1}}x_{j,s_{2}-2}\\
 & =...=\sum_{\ell=1}^{M}\left(\dfrac{c_{j}^{*}}{n}\right)^{\ell-1}\sum_{s_{1}=1}^{t}\sum_{s_{2}=2}^{s_{1}}\cdots\sum_{s_{\ell}=\ell}^{s_{(\ell-1)}}e_{j,s_{\ell}-\ell+1}+\left(\dfrac{c_{j}^{*}}{n}\right)^{M}\sum_{s_{1}=1}^{t}\sum_{s_{2}=2}^{s_{1}}\cdots\sum_{s_{M}=M}^{s_{(M-1)}}x_{j,s_{M}-M},
\end{align*}
where we define $s_{0}=t$. By Assumption \ref{assu:covMat}, we have
$e_{j,t}=\sum_{k=1}^{p}\Phi_{j,k}\varepsilon_{k,t}$, and therefore
\begin{equation}
x_{j,t}=\sum_{k=1}^{p}\Phi_{j,k}\sum_{\ell=1}^{M}\left(\dfrac{c_{j}^{*}}{n}\right)^{\ell-1}\sum_{s_{1}=1}^{t}\sum_{s_{2}=2}^{s_{1}}\cdots\sum_{s_{\ell}=\ell}^{s_{(\ell-1)}}\varepsilon_{k,s_{\ell}-\ell+1}+\left(\dfrac{c_{j}^{*}}{n}\right)^{M}\sum_{s_{1}=1}^{t}\sum_{s_{2}=2}^{s_{1}}\cdots\sum_{s_{M}=M}^{s_{(M-1)}}x_{j,s_{M}-M}.\label{eq:xjt=000020approx}
\end{equation}
Let $\mathcal{B}_{j}$ denote the Brownian motion in the Gaussian
approximation of Lemma \ref{lem:Gaussian=000020Approximation}, and
define $\eta_{k,t}=\mathcal{B}_{k}(tV_{k}^{*})-\mathcal{B}_{j}((t-1)V_{k}^{*}).$
Then $\{\eta_{j,t}\}$ are i.i.d.~distributed, and 
\begin{equation}
\sup_{k\in[p],t\in[n]}\dfrac{1}{\sqrt{n}}\left|\sum_{s=0}^{t-1}\varepsilon_{k,s}-\sum_{s=0}^{t-1}\eta_{k,s}\right|\lep\dfrac{\log p}{n^{1/4}}.\label{eq:Gaussian=000020Approx=000020eps}
\end{equation}
Let $\xi_{j,t}$ be an LUR satisfying $\xi_{j,t}=\rho_{j}^{*}\xi_{j,t-1}+\sum_{k=1}^{p}\Phi_{j,k}\eta_{k,t}$,
where $\rho_{j}^{*}=1+c_{j}^{*}/n$ is the same as the AR coefficient
of the LUR regressor $x_{j,t}$. Following the same arguments for
(\ref{eq:xjt=000020approx}), we have 
\[
\xi_{j,t}=\sum_{k=1}^{p}\Phi_{j,k}\sum_{\ell=1}^{M}\left(\dfrac{c_{j}^{*}}{n}\right)^{\ell-1}\sum_{s_{1}=1}^{t}\sum_{s_{2}=2}^{s_{1}}\cdots\sum_{s_{\ell}=\ell}^{s_{(\ell-1)}}\eta_{k,s_{\ell}-\ell+1}+\left(\dfrac{c_{j}^{*}}{n}\right)^{M}\sum_{s_{1}=1}^{t}\sum_{s_{2}=2}^{s_{1}}\cdots\sum_{s_{M}=M}^{s_{(M-1)}}\xi_{j,s_{M}-M}.
\]
Thus,
\begin{equation}
x_{j,t}-\xi_{j,t}=A_{j,t}^{(1)}+A_{j,t}^{(2)},\label{eq:A1=000020+=000020A2}
\end{equation}
\[
A_{j,t}^{(1)}:=\sum_{k=1}^{p}\Phi_{j,k}\sum_{\ell=1}^{M}\left(\dfrac{c_{j}^{*}}{n}\right)^{\ell-1}\sum_{s_{1}=1}^{t}\sum_{s_{2}=2}^{s_{1}}\cdots\sum_{s_{\ell}=\ell}^{s_{(\ell-1)}}(\varepsilon_{k,s_{\ell}-\ell+1}-\eta_{k,s_{\ell}-\ell+1}),
\]
\[
A_{j,t}^{(2)}:=\left(\dfrac{c_{j}^{*}}{n}\right)^{M}\sum_{s_{1}=1}^{t}\sum_{s_{2}=2}^{s_{1}}\cdots\sum_{s_{M}=M}^{s_{(M-1)}}(x_{j,s_{M}-M}-\xi_{j,s_{M}-M}).
\]

\textbf{We first bound $A_{j,t}^{(1)}.$} Recall that $|c_{j}^{*}|\leq\bar{C}$
for all $j$ by Assumption \ref{assu:asym_n}, and thus $|c_{j}^{*}|^{\ell-1}\leq\max\{\bar{C}^{M},1\}.$
Therefore, 
\begin{align*}
\left|A_{j,t}^{(1)}\right|\leq & \left|\sum_{k=1}^{p}\Phi_{j,k}\right|\cdot\max\{\bar{C}^{M},1\}\cdot\sup_{k\in[p]}\sum_{\ell=1}^{M}\left(\dfrac{1}{n}\right)^{\ell-1}\sum_{s_{1}=1}^{t}\sum_{s_{2}=2}^{s_{1}}\cdots\sum_{s_{(\ell-1)}=\ell-1}^{s_{(\ell-2)}}\left|\sum_{s_{\ell}=\ell}^{s_{(\ell-1)}}(\varepsilon_{k,s_{\ell}}-\eta_{k,s_{\ell}})\right|.
\end{align*}
The summation ``$\sum_{s_{1}=1}^{t}\sum_{s_{2}=2}^{s_{1}}\cdots\sum_{s_{(\ell-1)}=\ell-1}^{s_{(\ell-2)}}$''
for each $\ell$ involves no more than $n^{\ell-1}$ terms. Therefore,
\[
\left(\dfrac{1}{n}\right)^{\ell-1}\sum_{s_{1}=1}^{t}\sum_{s_{2}=2}^{s_{1}}\cdots\sum_{s_{(\ell-1)}=\ell-1}^{s_{(\ell-2)}}\left|\sum_{s_{\ell}=\ell}^{s_{(\ell-1)}}(\varepsilon_{k,s_{\ell}-\ell+1}-\eta_{k,s_{\ell}-\ell+1})\right|\leq\max_{t\in[n]}\left|\sum_{s=1}^{t}(\varepsilon_{k,s}-\eta_{k,s})\right|,
\]
which implies 
\begin{align}
\sup_{j\in\mathcal{M}_{x},t\in[n]}\left|A_{j,t}^{(1)}\right|\leq & \sup_{j\in[p]}\left|\sum_{k=1}^{p}\Phi_{j,k}\right|\cdot\max\{\bar{c}^{M},1\}\cdot\sup_{k\in[p]}\sum_{\ell=1}^{M}\sup_{t\in[n]}\left|\sum_{s=1}^{t}(\varepsilon_{k,s}-\eta_{k,s})\right|\nonumber \\
\leq & C_{M}\sup_{k\in[p],t\in[n]}\left|\sum_{s=1}^{t}(\varepsilon_{k,s}-\eta_{k,s})\right|,\label{eq:A1=000020bound}
\end{align}
where $C_{M}$ is an absolute constant dependent on the integer $M$
only. 

\textbf{We then bound $A_{j,t}^{(2)}.$} Note that 
\[
\left|A_{j,t}^{(2)}\right|\leq\dfrac{\max\{\bar{C}^{M},1\}}{n^{M}}\sum_{s_{1}=1}^{t}\sum_{s_{2}=2}^{s_{1}}\cdots\sum_{s_{M}=M}^{s_{(M-1)}}\left|x_{j,s_{M}-1}-\xi_{j,s_{M}-1}\right|.
\]
The summations ``$\sum_{s_{1}=1}^{t}\sum_{s_{2}=2}^{s_{1}}\cdots\sum_{s_{M}=M}^{s_{(M-1)}}$''
involve no more than $\binom{t}{M}\leq\binom{n}{M}$ terms, and 
\[
\lim_{n\to\infty}\dfrac{1}{n^{M}}\binom{n}{M}=\dfrac{1}{M!}.
\]
Therefore, when $n$ is large enough, 
\[
\sup_{j\in\mathcal{M}_{x},t\in[n]}\left|A_{j,t}^{(2)}\right|\leq\dfrac{2\max\{\bar{C}^{M},1\}}{M!}\sup_{j\in\mathcal{M}_{x},t\in[n]}|x_{j,t}-\xi_{j,t}|.
\]
Note that the upper bound holds for any fixed $M$. Let $M$ be sufficiently
large so that $\dfrac{2\max\{\bar{C}^{M},1\}}{M!}<0.5$. Then 
\begin{equation}
\sup_{j\in\mathcal{M}_{x},t\in[n]}\left|A_{j,t}^{(2)}\right|\leq0.5\sup_{j\in\mathcal{M}_{x},t\in[n]}|x_{j,t}-\xi_{j,t}|.\label{eq:A2=000020bound}
\end{equation}
By (\ref{eq:A1=000020+=000020A2}), (\ref{eq:A1=000020bound}), (\ref{eq:A2=000020bound}),
\[
\sup_{j\in\mathcal{M}_{x},t\in[n]}|x_{j,t}-\xi_{j,t}|\leq C_{M}\sup_{k\in[p],t\in[n]}\left|\sum_{s=1}^{t}(\varepsilon_{k,s}-\eta_{k,s})\right|+0.5\sup_{j\in\mathcal{M}_{x},t\in[n]}|x_{j,t}-\xi_{j,t}|,
\]
which implies 
\[
\sup_{j\in\mathcal{M}_{x},t\in[n]}|x_{j,t}-\xi_{j,t}|\leq2C_{M}\sup_{k\in[p],t\in[n]}\left|\sum_{s=1}^{t}(\varepsilon_{k,s}-\eta_{k,s})\right|.
\]
Then Lemma \ref{lem:LUR=000020Gaussian=000020Approx} is implied by
Lemma \ref{lem:Gaussian=000020Approximation}.
\end{proof}

\subsubsection{Technical Lemmas for DB and RE \protect\label{subsec:Prop} }

Define the sample Gram matrices of the LURs and stationary regressors
as $\hat{\Sigma}^{(x)}=\frac{1}{n}\sum_{t=1}^{n}X_{t-1}X_{t-1}^{\top}$
and $\hat{\Sigma}^{(z)}=\frac{1}{n}\sum_{t=1}^{n}Z_{t-1}Z_{t-1}^{\top}$,
respectively. The following lemma shows that after standardization,
the Gram matrix of all regressors $\hat{\Sigma}$ can be approximated
by the block-diagonal matrix 
\[
\hat{\Delta}={\rm diag}\left(\hat{\Sigma}^{(z)},\hat{\Sigma}^{(x)}\right).
\]

\begin{lem}
\label{lem:diagonal} Under Assumptions \ref{assu:tail}-\ref{assu:asym_n},
\begin{equation}
\|D^{-1}\left(\hat{\Sigma}-\hat{\Delta}\right)D^{-1}\|_{\infty}=O_{p}\left(\dfrac{(\log p)^{\frac{3}{2}+\frac{1}{2r}}}{\sqrt{n}}\right)\label{eq:diag=000020approx}
\end{equation}
as $n\to\infty,$ where $r$ is specified in Assumption \ref{assu:alpha}.
\end{lem}
\begin{proof}[Proof of Lemma \ref{lem:diagonal}]
By Lemma \ref{lem:sd=000020bound}, 
\begin{equation}
\dfrac{1}{\min_{j\in\mathcal{M}_{x}}\hat{\sigma}_{j}\min_{\ell\in\mathcal{M}_{z}}\hat{\sigma}_{\ell}}\lep\dfrac{1}{\sqrt{n/\log p}}.\label{eq:min=000020sigma=000020j=000020ell}
\end{equation}
Therefore,
\begin{align*}
\|D^{-1}\left(\hat{\Sigma}-\hat{\Delta}\right)D^{-1}\|_{\infty} & \leq\dfrac{\|n^{-1}\sum_{t=1}^{n}x_{t-1}z_{t-1}^{\top}\|_{\infty}}{\min_{j\in\mathcal{M}_{x}}\hat{\sigma}_{j}\min_{\ell\in\mathcal{M}_{z}}\hat{\sigma}_{\ell}}\\
 & \lep\dfrac{(\log p)^{1+\frac{1}{2r}}}{\sqrt{n/\log p}}\\
 & \lep\dfrac{(\log p)^{\frac{3}{2}+\frac{1}{2r}}}{\sqrt{n}}=O_{p}\left(\dfrac{(\log p)^{\frac{3}{2}+\frac{1}{2r}}}{\sqrt{n}}\right),
\end{align*}
where the second row applies (\ref{eq:min=000020sigma=000020j=000020ell})
and (\ref{eq:DB=000020e}).
\end{proof}
The following Lemma establishes the RE condition for LURs without
standardization. We focus on the demeaned regressors $X_{t-1}-\bar{X}$
where $\bar{X}=n^{-1}\sum_{t=1}^{n}X_{t-1}$, since the results will
be helpful to bound the standard deviations used in Slasso. Following
(B.33) and (B.34) of MS, define 
\begin{equation}
C_{m}(L)=\lceil4L^{2}\tilde{C}/\tilde{c}\rceil\text{ and }m=C_{m}s,\label{eq:m=000020=00003D=000020Cms}
\end{equation}
 where $\tilde{C}>\tilde{c}>0$ are absolute constants following the
definitions between (B.33) and (B.34) of MS.
\begin{lem}
\label{lem:RE=000020LUR=000020generic=000020L}Suppose that $(1+C_{m}(L))s=o(n\wedge p)$
as $n\to\infty$, and $k=1.$ Define $\ddot{\Sigma}^{(x)}=n^{-1}\sum_{t=1}^{n}(X_{t-1}-\bar{X})(X_{t-1}-\bar{X})^{\top}.$
Then under Assumptions \ref{assu:tail}-\ref{assu:asym_n},
\begin{equation}
\frac{\kappa_{\mathbf{I}}(\ddot{\Sigma}^{(x)},L,s)}{n}\geq\dfrac{c_{\kappa}}{L^{2}s\log p}\label{eq:RE-unit-1-1}
\end{equation}
\label{lem:RE=000020LUR}holds w.p.a.1.~for any $L\geq1$.\footnote{Here we use a generic $L\geq1$ is useful for deducing the lower bound
of $\hat{\kappa}_{D}$ using $\hat{\kappa}_{\mathbf{I}}$.}
\end{lem}
\begin{proof}[Proof of Lemma \ref{lem:RE=000020LUR=000020generic=000020L}]
To simplify the proof, we assume that the values of all time series at $t=0$ are zeros.

\textbf{(a) We first impose the normality assumption} $\varepsilon_{t}\sim i.i.d.\ \mathcal{N}(0,\mathbf{I}_{p})$.
It implies $e_{t}\sim i.i.d.\ \mathcal{N}(0,\Omega_{e})$ with $\Omega_{e}=\Phi_{e}\Phi_{e}^{\top}.$
Note that for the LUR cases,
\[
X_{t}-X_{t-1}=\dfrac{\mathbf{C}}{n}X_{t-1}+e_{t}
\]
for any $t\geq1$, where $\mathbf{C}={\rm diag}(c_{1}^{*},c_{2}^{*},\dots,c_{p}^{*})$.
Define 
\begin{equation}
e_{t}^{\Delta}=\begin{cases}
\dfrac{\mathbf{C}}{n}X_{t-1}+e_{t}, & t\geq1,\\
0, & t=0,
\end{cases}\label{eq:def=000020eDelta}
\end{equation}
and note that $X_{t}=\sum_{s=1}^{t}e_{s}^{\Delta}.$ Let $R$ be an
$n\times n$ lower triangular matrix of ones on and below the diagonal.
Define $X=(X_{0},X_{1},\dots,X_{n-1})^{\top}$, $e=(e_{0},e_{1},\dots,e_{n-1})^{\top}$
and $e^{\Delta}=(e_{0}^{\Delta},e_{1}^{\Delta},\dots,e_{n-1}^{\Delta})^{\top}$.
Note that $\underset{(n\times p)}{X}=\underset{(n\times n)}{R}\underset{(n\times p)}{e^{\Delta}}$.
Therefore, the Gram matrix of demeaned regressors, $\ddot{\Sigma}^{(x)}$,
can be written as 
\[
\ddot{\Sigma}^{(x)}=n^{-1}e^{\Delta}{}^{\top}R^{\top}(\mathbf{I}_{n}-\mathbf{J}_{n})Re^{\Delta}.
\]
Define $\mathbf{J}_{n}=n^{-1}1_{n}1_{n}^{\top}.$ Let $\lambda_{1}\geq\lambda_{2}\geq\cdots\geq\lambda_{n}\geq0$
and $\widetilde{\lambda}_{1}\geq\widetilde{\lambda}_{2}\geq\cdots\geq\widetilde{\lambda}_{n}\geq0$
be the eigenvalues of $R^{\top}(\mathbf{I}_{n}-\mathbf{J}_{n})R$
and $R^{\top}R$, respectively, ordered from large to small. Let $\mu_{\ell}$
be the $\ell$th largest singular value of the idempotent matrix $\mathbf{I}_{n}-\mathbf{J}_{n}$.
Recall $\boldsymbol{1}\left(\cdot\right)$ is the indicator function,
and obviously $\mu_{\ell}=\boldsymbol{1}(1\leq\ell\leq n-1)$ for
$\ell\in[n]$. When $\ell\in[n-1]$, the first inequality of Eq.(15)
in \citet[Theorem 9]{merikoski2004inequalities} gives $\lambda_{\ell}\geq\widetilde{\lambda}_{\ell+1}\mu_{n-1}=\widetilde{\lambda}_{\ell+1}$. 

Following the technique used to prove Remark 3.5 in \citet{zhang2019identifying},
which is also used for Theorem B.2 in \citet{smeekes2021automated},
we diagonalize $R(\mathbf{I}_{n}-\mathbf{J}_{n})R^{\top}=V{\rm diag}(\lambda_{1},\lambda_{2},\cdots,\lambda_{n})V^{\top}$,
where $V$ is an orthonormal matrix. For any $\delta\in\mathbb{R}^{p}$,
$\delta\neq0$, the quadratic form 
\begin{align}
\delta^{\top}\ddot{\Sigma}^{(x)}\delta & =\dfrac{1}{n}\delta^{\top}e^{\Delta\top}R^{\top}(\mathbf{I}_{n}-\mathbf{J}_{n})Re^{\Delta}\delta=\frac{1}{n}\delta^{\top}e^{\Delta\top}V{\rm diag}(\lambda_{1},\lambda_{2},\cdots,\lambda_{n})V^{\top}e^{\Delta}\delta\nonumber \\
 & \geq\frac{1}{n}\delta^{\top}e^{\Delta\top}V_{\cdot[\ell]}{\rm diag}(\lambda_{1},\cdots,\lambda_{\ell})V_{\cdot[\ell]}^{\top}e^{\Delta}\delta\geq\frac{\lambda_{\ell}}{n}\delta^{\top}e^{\Delta\top}V_{\cdot[\ell]}V_{\cdot[\ell]}^{\top}e^{\Delta}\delta\nonumber \\
 & \geq\frac{\ell\tilde{\lambda}_{\ell+1}}{n}\cdot\delta^{\top}\Gamma_{\ell}^{\Delta}\delta\label{eq:LBeigen}
\end{align}
for any $\ell\in[n-1]$, where $V_{\cdot[\ell]}$ is the submatrix
composed of the first $\ell$ columns of $V$ and $\Gamma_{\ell}^{\Delta}=\ell^{-1}e^{\Delta\top}V_{\cdot[\ell]}V_{\cdot[\ell]}^{\top}e^{\Delta}$. 

We first work with the first factor $\ell\tilde\lambda_{\ell+1}/n$ in (\ref{eq:LBeigen}).
\citet{smeekes2021automated} provide the exact formula of $\tilde{\lambda}_{\ell+1}$:
\begin{equation}
\tilde{\lambda}_{\ell+1}=\left[2\left(1-\cos\left(\dfrac{(2\ell+1)\pi}{2n+1}\right)\right)\right]^{-1}\text{ for all }\ell\in[n].\label{eq:lambda_exact}
\end{equation}
 A Taylor expansion of $\cos\left(x\pi\right)$ around $x=0$ yields
\[
\tilde{\lambda}_{\ell+1}^{-1}=\left(\dfrac{(2\ell+1)\pi}{2n+1}\right)^{2}\left(1+o\left(\frac{\ell+1}{n}\right)\right)=\left(\dfrac{\ell\pi}{n}\right)^{2}\left(1+o\left(\frac{\ell}{n}\right)\right)
\]
whenever $\ell=o\left(n\right)$. This implies 
\begin{equation}
\dfrac{\ell\tilde{\lambda}_{\ell+1}}{n}=\dfrac{n}{\pi^{2}\ell\left(1+o\left(\ell/n\right)\right)}\geq\dfrac{n}{2\pi^{2}\ell}\label{eq:eigen_LB}
\end{equation}
for $\ell=o\left(n\right)$ when $n$ is sufficiently large. 

Next, we focus on the second factor $\delta^{\top}\Gamma_{\ell}^{\Delta}\delta$
in (\ref{eq:LBeigen}). Define $X_{\mathbb{L}}:=\left(0_{p},X_{0},X_{1},\cdots,X_{n-2}\right)^{\top}.$
By definition, we have 
\[
e^{\Delta}=X_{\mathbb{L}}\dfrac{{\rm \mathbf{C}}}{n}+e.
\]
We deduce that 
\begin{align*}
\delta^{\top}\Gamma_{\ell}^{\Delta}\delta & =\dfrac{\delta^{\top}e{}^{\top}V_{\cdot[\ell]}V_{\cdot[\ell]}^{\top}e\delta}{\ell}+\dfrac{\delta^{\top}{\rm \mathbf{C}}X_{\mathbb{L}}^{\top}V_{\cdot[\ell]}V_{\cdot[\ell]}^{\top}X_{\mathbb{L}}{\rm \mathbf{C}}\delta}{n^{2}\ell}+\dfrac{2\delta^{\top}{\rm \mathbf{C}}X_{\mathbb{L}}^{\top}V_{\cdot[\ell]}V_{\cdot[\ell]}^{\top}e\delta}{n\ell}\\
 & \geq\dfrac{\delta^{\top}e{}^{\top}V_{\cdot[\ell]}V_{\cdot[\ell]}^{\top}e\delta}{\ell}-\left|\dfrac{2\delta^{\top}{\rm \mathbf{C}}X_{\mathbb{L}}^{\top}V_{\cdot[\ell]}V_{\cdot[\ell]}^{\top}e\delta}{n}\right|.
\end{align*}
Recall the generic inequality $2|a^{\top}b|\leq a^{\top}a+b^{\top}b$
for any vectors $a$ and $b$ of the same dimension. Let $a=\frac{V_{\cdot[\ell]}^{\top}e\delta}{\sqrt{2}}$
and $b=$ $\sqrt{2}n^{-1}V_{\cdot[\ell]}^{\top}X_{\mathbb{L}}{\rm \mathbf{C}}\delta$,
we have 
\[
\left|\dfrac{2\delta^{\top}{\rm \mathbf{C}}X_{\mathbb{L}}^{\top}V_{\cdot[\ell]}V_{\cdot[\ell]}^{\top}e\delta}{n}\right|\leq0.5\delta^{\top}e{}^{\top}V_{\cdot[\ell]}V_{\cdot[\ell]}^{\top}e\delta+\dfrac{2\delta^{\top}{\rm \mathbf{C}}X_{\mathbb{L}}^{\top}V_{\cdot[\ell]}V_{\cdot[\ell]}^{\top}X_{\mathbb{L}}{\rm \mathbf{C}}\delta}{n^{2}}.
\]
It implies 
\[
\delta^{\top}\Gamma_{\ell}^{\Delta}\delta\geq\dfrac{0.5\delta^{\top}e{}^{\top}V_{\cdot[\ell]}V_{\cdot[\ell]}^{\top}e\delta}{\ell}-\dfrac{2\delta^{\top}{\rm \mathbf{C}}X_{\mathbb{L}}^{\top}V_{\cdot[\ell]}V_{\cdot[\ell]}^{\top}X_{\mathbb{L}}{\rm \mathbf{C}}\delta}{n^{2}\ell}.
\]
In addition, $\lambda_{\max}(V_{\cdot[\ell]}V_{\cdot[\ell]}^{\top})\leq\|V_{\cdot[\ell]}\|_{2}^{2}\leq1$
given $V$ is a unitary matrix. Therefore,
\begin{align}
\delta^{\top}\Gamma_{\ell}^{\Delta}\delta & \geq\dfrac{0.5\delta^{\top}e{}^{\top}V_{\cdot[\ell]}V_{\cdot[\ell]}^{\top}e\delta}{\ell}-\dfrac{2\delta^{\top}{\rm \mathbf{C}}X_{\mathbb{L}}^{\top}X_{\mathbb{L}}{\rm \mathbf{C}}\delta}{n^{2}\ell}\nonumber \\
 & =\dfrac{0.5\delta^{\top}e{}^{\top}V_{\cdot[\ell]}V_{\cdot[\ell]}^{\top}e\delta}{\ell}-\dfrac{2\delta^{\top}{\rm \mathbf{C}}\sum_{t=1}^{n-1}X_{t-1}X_{t-1}^{\top}{\rm \mathbf{C}}\delta}{n^{2}\ell}\nonumber \\
 & \geq\dfrac{0.5\delta^{\top}e{}^{\top}V_{\cdot[\ell]}V_{\cdot[\ell]}^{\top}e\delta}{\ell}-\dfrac{2\delta^{\top}{\rm \mathbf{C}}\sum_{t=1}^{n}X_{t-1}X_{t-1}^{\top}{\rm \mathbf{C}}\delta}{n^{2}\ell}\nonumber \\
 & =\dfrac{0.5\delta^{\top}e{}^{\top}V_{\cdot[\ell]}V_{\cdot[\ell]}^{\top}e\delta}{\ell}-\dfrac{2\delta^{\top}{\rm \mathbf{C}}\hat{\Sigma}^{(x)}{\rm \mathbf{C}}\delta}{n\ell},\label{eq:LB=000020Gamma=000020Delta}
\end{align}
where $\hat{\Sigma}^{(x)}=n^{-1}\sum_{t=1}^{n}X_{t-1}X_{t-1}^{\top}$.
Define $\Gamma_{\ell}=\dfrac{e^{\top}V_{\cdot[\ell]}V_{\cdot[\ell]}^{\top}e}{\ell}.$
By (\ref{eq:LBeigen}), (\ref{eq:eigen_LB}), and (\ref{eq:LB=000020Gamma=000020Delta}),
\begin{align}
\delta^{\top}\ddot{\Sigma}^{(x)}\delta & \ensuremath{\geq\dfrac{n}{2\pi^{2}\ell}\left(0.5\delta^{\top}\Gamma_{\ell}\delta-2n^{-1}\ell^{-1}\delta^{\top}{\rm \mathbf{C}}\hat{\Sigma}^{(x)}{\rm \mathbf{C}}\delta\right).}\label{eq:lower=000020bound=000020quad}
\end{align}
We first lower bound the first term. Let $\ell=(16+C_{\ell})\cdot(s+m)\log p$
for some $C_{\ell}>0$ to be determined later. Following the proof
of (B.43) in MS24 utilizing the non-asymptotic bounds for Whishart
matrices, we have 
\begin{align}
\delta^{\top}\Gamma_{\ell}\delta & \geq C_{\kappa}\|\delta\|_{2}^{2},\label{eq:LB-RE}
\end{align}
w.p.a.1, where the absolute constant $C_{\kappa}$ not dependent on
$L$ or $C_{\ell}$. We then bound the second term in (\ref{eq:lower=000020bound=000020quad}).
Note that for any $\delta\in\mathcal{R}(L,s)$ such that for any $|\mathcal{M}|\leq s$
we have $\|\delta_{\mathcal{M}^{c}}\|_{1}\leq L\|\delta_{\mathcal{M}}\|_{1},$
\begin{equation}
\|\delta\|_{1}\leq\|\delta_{\mathcal{M}}\|_{1}+\|\delta_{\mathcal{M}^{c}}\|_{1}\leq(1+L)\|\delta_{\mathcal{M}}\|_{1}\leq(1+L)\sqrt{s}\|\delta\|_{2}.\label{eq:delta=000020RE}
\end{equation}
Therefore,
\[
\delta^{\top}{\rm \mathbf{C}}\hat{\Sigma}^{(x)}{\rm \mathbf{C}}\delta\leq\left(\|\delta\|_{1}\right)^{2}\cdot\|{\rm \mathbf{C}}\|_{1}^{2}\cdot\|\hat{\Sigma}^{(x)}\|_{\infty}\leq(1+L)^{2}s\|\delta\|_{2}^{2}\cdot\|{\rm \mathbf{C}}\|_{1}^{2}\cdot\|\hat{\Sigma}^{(x)}\|_{\infty}.
\]
Note that $x_{j,t}=\sum_{s=1}^{t}(\rho_{j}^{*})^{s}e_{t-s}$ is a
partial sum of a stationary time series. By Lemma B.2 of MS, we have.
\begin{equation}
\max_{j\in\mathcal{M}_{x},t\in[n]}|x_{j,t}|\lep\sqrt{n\log p}.\label{eq:XK=000020LUR=000020bound}
\end{equation}
Therefore, $\|\hat{\Sigma}^{(x)}\|_{\infty}^{2}\leq\max_{j,t}|x_{j,t-1}|^{2}\leq C_{{\rm sup}}n\log p$
w.p.a.1 for some absolute constant $C_{{\rm sup}}$, which implies
\begin{align}
\delta^{\top}{\rm \mathbf{C}}\ddot{\Sigma}^{(x)}{\rm \mathbf{C}}^{\top}\delta & \leq(1+L)^{2}s\|\delta\|_{2}^{2}\cdot\|{\rm \mathbf{C}}\|_{1}^{2}\cdot\|\hat{\Sigma}^{(x)}\|_{\infty}\nonumber \\
 & \leq(1+L)^{2}s\|\delta\|_{2}^{2}\cdot\bar{C}^{2}\cdot C_{\sup}n\log p\nonumber \\
 & \leq4\bar{C}^{2}L^{2}\cdot s\|\delta\|_{2}^{2}\cdot C_{\sup}n\log p,\label{eq:quad=000020ineq}
\end{align}
where the second inequality applies $\|{\rm \mathbf{C}}\|_{1}\leq\sup_{j\in[p]}|c_{j}^{*}|\leq\bar{C}$,
and the third inequality applies $L\geq1$ and $\|\delta_{\mathcal{S}}\|_{1}\leq\sqrt{s}\|\delta\|_{2}.$
Recall that $\ell=(16+C_{\ell})\cdot(s+m)\log p$. Let $C_{\ell}=4\cdot(1\vee(L\bar{C}^{2}C_{\sup}/C_{\kappa}))-16$,
and recall that $m=\lceil4L\tilde{C}/\tilde{c}\rceil s$ by (\ref{eq:m=000020=00003D=000020Cms}).
Then 
\begin{align}
\ell & =4\cdot(1\vee(L\bar{C}^{2}C_{\sup}/C_{\kappa}))(s+m)\log p\nonumber \\
 & \geq\frac{4\cdot L\bar{C}^{2}C_{\sup}}{C_{\kappa}}\cdot4L\tilde{C}/\tilde{c}\cdot s\log p>\frac{16\cdot L^{2}\bar{C}^{2}C_{\sup}}{C_{\kappa}}\cdot s\log p,\label{eq:ell=000020ineq}
\end{align}
where the last inequality applies the fact that $\tilde{C}>\tilde{c}$.
By (\ref{eq:quad=000020ineq}) and (\ref{eq:ell=000020ineq}), we
deduce that 
\begin{equation}
\delta^{\top}{\rm \mathbf{C}}\ddot{\Sigma}^{(x)}{\rm \mathbf{C}}^{\top}\delta\leq\|\delta\|_{2}^{2}\cdot0.25C_{\kappa}n\cdot\ell.\label{eq:bound=000020quad=0000202}
\end{equation}
Insert (\ref{eq:LB-RE}) and (\ref{eq:bound=000020quad=0000202})
into (\ref{eq:lower=000020bound=000020quad}), we have 
\begin{align*}
\ensuremath{\delta^{\top}\ddot{\Sigma}^{(x)}\delta} & \geq\dfrac{n}{2\pi^{2}\ell}\left(0.5C_{\kappa}-0.25C_{\kappa}\right)\|\delta\|_{2}^{2}=\dfrac{nC_{\kappa}}{8\pi^{2}\ell}\|\delta\|_{2}^{2}.
\end{align*}
By $\ell=(16+C_{\ell})\cdot(s+m)\log p$, $m=\lceil4L\tilde{C}/\tilde{c}\rceil s,$
and $L\geq1$
\begin{align}
\frac{\delta^{\top}\text{\ensuremath{\ddot{\Sigma}^{(x)}}}\delta}{n\|\delta\|_{2}^{2}} & \geq\dfrac{C_{\kappa}}{8\pi^{2}(16+C_{\ell})\cdot(s+m)\log p}\nonumber \\
 & \geq\dfrac{C_{\kappa}}{8\pi^{2}(16+C_{\ell})\cdot(1+\lceil4L\tilde{C}/\tilde{c}\rceil)s\log p}\label{eq:LB-RE2}\\
 & \geq\dfrac{C_{\kappa}}{8\pi^{2}\cdot16\cdot(1\vee(L\bar{C}^{2}C_{\sup}/C_{\kappa}))\cdot(8L\tilde{C}/\tilde{c})\cdot s\log p}\nonumber \\
 & \geq\dfrac{\tilde{c}_{\kappa}}{L^{2}\cdot s\log p}\nonumber 
\end{align}
w.p.a.1, where $\tilde{c}_{k}$ is an absolute constant dependent
on $C_{\kappa},C_{{\rm sup}},\tilde{C},\tilde{c},$ and $\bar{C}$.
Then (\ref{eq:RE-unit-1-1}) holds.

\textbf{(b) We then extend the result to non-normal errors.} Let $\ddot{\Upsilon}=n^{-1}\sum_{t=1}^{n}(\xi_{t-1}-\bar{\xi})(\xi_{t-1}-\bar{\xi})^{\top}$,
where $\xi_{t}=(\xi_{j,t})_{j\in\mathcal{M}_{x}}$is the vector of
LUR processes with normally distributed errors as in Lemma \ref{lem:LUR=000020Gaussian=000020Approx}.
\begin{align}
\delta^{\top}\ddot{\Sigma}^{(x)}\delta & \geq\delta^{\top}\ddot{\Upsilon}\delta-\left|\delta^{\top}(\ddot{\Sigma}^{(x)}-\ddot{\Upsilon})\delta\right|\label{eq:norm_app_1}
\end{align}
Notice that $\hat{\Upsilon}$ is the Gram matrix of the LUR processes
$\zeta_{t}$ with normally distributed errors. The procedures as in
Part (a) bounds the first term on the right-hand side of the above
expression 
\begin{equation}
\delta^{\top}\ddot{\Upsilon}\delta\geq\dfrac{c_{\kappa}^{\prime}}{L^{2}s\log p}n\|\delta\|_{2}^{2}\label{eq:boundmaxtemp5}
\end{equation}
w.p.a.1 for some absolute constant $c_{\kappa}^{\prime}$. We move
on to the second term
\begin{align}
\left|\delta^{\top}(\ddot{\Sigma}^{(x)}-\ddot{\Upsilon})\delta\right| & \leq\|\delta\|_{1}^{2}\|\ddot{\Sigma}^{(x)}-\ddot{\Upsilon}\|_{\infty}\leq(1+L)^{2}s\|\delta\|_{2}^{2}\|\ddot{\Sigma}^{(x)}-\ddot{\Upsilon}\|_{\infty}\nonumber \\
 & \leq4L^{2}s\cdot\|\delta\|_{2}^{2}\|\ddot{\Sigma}^{(x)}-\ddot{\Upsilon}\|_{\infty}\label{eq:boundmaxtemp4}
\end{align}
whenever $L\geq1$. Also, note that 
\begin{eqnarray*}
\|\ddot{\Sigma}^{(x)}-\ddot{\Upsilon}\|_{\text{\ensuremath{\infty}}} & \leq & C_{L}^{2}\left(\|n^{-1}\sum_{t=1}^{n}(X_{t-1}X_{t-1}^{\top}-\xi_{t-1}\xi_{t-1}^{\top})\|_{\infty}+\|\bar{X}\bar{X}^{\top}-\bar{\xi}\bar{\xi}^{\top}\|_{\infty}\right).
\end{eqnarray*}
Following the proof of Part (b) in Proposition B.4 of MS24, we can
show that under Lemma \ref{lem:LUR=000020Gaussian=000020Approx},
\[
\|\ddot{\Sigma}^{(x)}-\ddot{\Upsilon}\|_{\text{\ensuremath{\infty}}}=O_{p}\left(n^{3/4+\nu^{\prime}}\sqrt{\log p}\right)
\]
 for any arbitrary small absolute value $\nu^{\prime}$. Inserting
the above expression into (\ref{eq:boundmaxtemp4}), we have 
\begin{equation}
\frac{|\delta^{\top}(\ddot{\Sigma}^{(x)}-\ddot{\Upsilon})\delta|}{n\|\delta\|_{2}^{2}}\leq4L^{2}s\cdot O_{p}\left(n^{-1/4+\nu^{\prime}}\sqrt{\log p}\right)=o_{p}\left(\frac{L^{-2}}{s\log p}\right)\label{eq:boundmaxtemp6}
\end{equation}
given the condition $s^{2}L^{4}(\log p)^{3/2}=o(n^{1/4-\nu^{\prime}})$
implied by Assumption \ref{assu:asym_n}. (\ref{eq:boundmaxtemp5})
and (\ref{eq:boundmaxtemp6}) then provide 
\begin{align*}
\frac{\delta^{\top}\ddot{\Sigma}^{(x)}\delta}{n\|\delta\|_{2}^{2}} & \geq\dfrac{c_{\kappa}^{\prime}}{L^{2}s\log p}-o_{p}\left(\frac{L^{-2}}{s\log p}\right)\geq\dfrac{c_{\kappa}}{L^{2}s\log p}
\end{align*}
w.p.a.1 when $n$ is large enough, where $c_{\kappa}=0.5c_{\kappa}^{\prime}.$
\end{proof}
The following Lemma establishes the RE condition for LUR regressors
without standardization nor demeaning.
\begin{lem}
\label{lem:RE=000020LUR-1} Under Assumptions \ref{assu:tail}-\ref{assu:asym_n},
w.p.a.1
\begin{equation}
\kappa_{{\bf I}}(\hat{\Sigma}^{(x)},3,s)\geq\dfrac{c_{\kappa}^{(x)}n}{s\cdot\log p}.\label{eq:RE=000020I=000020LUR}
\end{equation}
\end{lem}
\begin{proof}[Proof of Lemma \ref{lem:RE=000020LUR-1}]
 This lemma is a direct corollary of Lemma \ref{lem:RE=000020LUR=000020generic=000020L}
by taking $L=3$ and $c_{\kappa}^{(x)}=c_{\kappa}/9$, and the fact
that $\kappa_{{\bf I}}(\hat{\Sigma}^{(x)},3,s)\geq\kappa_{{\bf I}}(\ddot{\Sigma}^{(x)},3,s)$. 
\end{proof}
The following Lemma establishes the RE condition for stationary regressors
without standardization.
\begin{lem}
\label{lem:RE=000020k} Under Assumptions \ref{assu:tail}-\ref{assu:asym_n},
\begin{equation}
\kappa_{{\bf I}}(\hat{\Sigma}^{(z)},3,s)\geq c_{\kappa}^{(z)}\label{eq:RE=000020ST=000020MI}
\end{equation}
w.p.a.1 for some absolute constant $c_{\kappa}^{(z)}$.
\end{lem}
\begin{proof}[Proof of Lemma \ref{lem:RE=000020k}]
 The proof follows standard arguments using concentration inequalities
for weakly dependent time series, like (B.30) in MS24. 
\end{proof}

The following Lemma delivers the bounds of the standard deviations
used for scaling in Slasso.
\begin{lem}
\label{lem:sd=000020bound} Under Assumptions \ref{assu:tail}-\ref{assu:asym_n},
\end{lem}
\begin{itemize}
\item[(a)] For stationary regressors, there exists some absolute constants $\sigma_{\min}<\sigma_{\max}$,
such that with probability approaching one, 
\begin{equation}
\sigma_{\min}\leq\min_{j\in\mathcal{M}_{z}}\hat{\sigma}_{j}\leq\max_{j\in\mathcal{M}_{z}}\hat{\sigma}_{j}\leq\sigma_{\max}.\label{eq:sd=000020bound=000020ST=000020MI}
\end{equation}
\item[(b)] For nonstationary regressors,
\begin{equation}
\sqrt{\dfrac{n}{\log p}}\lep\min_{j\in\mathcal{M}_{x}}\hat{\sigma}_{j}\leq\min_{j\in\mathcal{M}_{x}}\hat{\sigma}_{j}\lep\sqrt{n\log p}.\label{eq:sd=000020bound=000020LUR}
\end{equation}
\end{itemize}
\begin{proof}[Proof of Lemma \ref{lem:sd=000020bound}]
\textbf{\uline{For Part (a)}}, the proof follows (B.60) and (B.61)
in the appendix of MS24. 

\textbf{\uline{For Part (b)}}, the lower bound follows by 
\[
\min_{j\in\mathcal{M}_{x}}\hat{\sigma}_{j}^{2}\geq\kappa_{{\bf I}}(\ddot{\Sigma}^{(x)},3,1)\geq\dfrac{c_{\kappa}n}{9\log p}
\]
where the second inequality applies Lemma \ref{lem:RE=000020LUR=000020generic=000020L}.
For the upper bound, the LUR regressor $x_{j,t}=\sum_{s=0}^{t}\rho_{j}^{*t-s}e_{j,s}$
is a partial sum of a sub-exponential and $\alpha$-mixing sequence.
Therefore, 
\[
\max_{j\in\mathcal{M}_{x}}\hat{\sigma}_{j}^{2}\leq\max_{j\in\mathcal{M}_{x}}n^{-1}\sum_{t=1}^{n}x_{j,t-1}^{2}\leq\max_{j\in\mathcal{M}_{x},t\in[n]}|x_{j,t}|^{2}\lep n\log p,
\]
where the last inequality applies (\ref{eq:XK=000020LUR=000020bound}).
We complete the proof of Lemma \ref{lem:sd=000020bound}.
\end{proof}

Define $D^{(x)}={\rm diag}(\hat{\sigma}_{j})_{j\in\mathcal{M}_{x}}$
as the diagonal matrix that stores the standard deviations of the
LUR regressors. In the following Lemma \ref{lem:std=000020RE=000020LUR},
we establish a lower bound of RE for standardized LURs.
\begin{lem}
\label{lem:std=000020RE=000020LUR} Under Assumptions \ref{assu:tail}-\ref{assu:asym_n},
there exists an absolute constant $c_{\kappa}^{(1)}$ such that 
\begin{equation}
\kappa_{D^{(x)}}(\hat{\Sigma}^{(x)},3,s)\geq\frac{c_{\kappa}^{(1)}}{s(\log p)^{6}}\label{eq:UnitDB-1}
\end{equation}
 w.p.a.1 as $n\to\infty.$
\end{lem}
\begin{proof}[Proof of Lemma \ref{lem:std=000020RE=000020LUR}]
Define $\hat{\sigma}_{\max}^{(x)}=\max_{j\in\mathcal{M}_{x}}\hat{\sigma}_{j}$,
$\hat{\sigma}_{\min}^{(x)}=\min_{j\in\mathcal{M}_{x}}\hat{\sigma}_{j}$,
and $\hat{\varsigma}^{(x)}=\hat{\sigma}_{\max}^{(x)}/\hat{\sigma}_{\min}^{(x)}$.
Further define $\widetilde{\delta}^{(x)}:=(D^{(x)})^{-1}\delta=(\widehat{\sigma}_{j}^{-1}\delta_{j}){}_{j\in\mathcal{M}_{x}}.$
Note that $\widehat{\sigma}_{\min}\|\widetilde{\delta}_{\mathcal{M}^{c}}\|_{1}\leq\|\delta_{\mathcal{M}^{c}}\|_{1}$
and $\|\delta_{\mathcal{M}}\|_{1}\leq\widehat{\sigma}_{\max}\|\widetilde{\delta}_{\mathcal{M}}\|_{1}$.
Therefore, whenever $\delta\in\mathcal{R}(3,s)$ such that $\|\delta_{\mathcal{M}^{c}}\|_{1}\leq3\|\delta_{\mathcal{M}}\|_{1}$
for any $|\mathcal{M}|\leq s$, we have $\|\widetilde{\delta}_{\mathcal{M}^{c}}\|_{1}\leq3\hat{\varsigma}^{(x)}\|\widetilde{\delta}_{\mathcal{M}}\|_{1}$
and $\widetilde{\delta}\in\mathcal{R}\left(3\widehat{\varsigma}^{(x)},s\right)$.
Then 
\begin{align*}
\kappa_{D^{(x)}}(\hat{\Sigma}^{(x)},3,s) & =\inf_{\delta\in\mathcal{R}(3,s)}\dfrac{\delta^{\top}(D^{(x)})^{-1}\hat{\Sigma}^{(x)}(D^{(x)})^{-1}\delta}{\delta^{\top}\delta}=\inf_{\delta\in\mathcal{R}\left(3,s\right)}\dfrac{\widetilde{\delta}^{\top}\hat{\Sigma}^{(x)}\widetilde{\delta}}{\widetilde{\delta}^{\top}(D^{(x)})^{2}\widetilde{\delta}}\\
 & \geq\inf_{\widetilde{\delta}\in\mathcal{R}\left(3\widehat{\varsigma}^{(x)},s\right)}\dfrac{\widetilde{\delta}^{\top}\hat{\Sigma}^{(x)}\widetilde{\delta}}{\widetilde{\delta}^{\top}(D^{(x)})^{2}\widetilde{\delta}}\geq(\widehat{\sigma}_{{\rm \max}}^{(x)})^{-2}\inf_{\widetilde{\delta}\in\mathcal{R}\left(3\widehat{\varsigma}^{(x)},s\right)}\dfrac{\widetilde{\delta}^{\top}\hat{\Sigma}^{(x)}\widetilde{\delta}}{\widetilde{\delta}^{\top}\widetilde{\delta}}=\dfrac{\kappa(\hat{\Sigma}^{(x)},3\widehat{\varsigma}^{(x)},s)}{(\widehat{\sigma}_{{\rm \max}}^{(x)})^{2}}.
\end{align*}
Taking $L=3\widehat{\varsigma}^{(x)}$. By Lemma \ref{lem:RE=000020LUR=000020generic=000020L}
we have $\kappa(\hat{\Sigma}^{(x)},3\widehat{\varsigma}^{(x)},s)\geq\dfrac{cn}{9s\log p(\widehat{\sigma}_{{\rm \max}}^{(x)})^{2}\cdot(\widehat{\varsigma}^{(x)})^{2}}$
w.p.a.1 for some absolute constant $c$. By (\ref{eq:sd=000020bound=000020LUR}),
there exists some absolute constant $c^{\prime}$ such that 
\[
(\widehat{\varsigma}^{(x)})^{2}\geq c^{\prime}(\log p)^{2}\text{ and }(\widehat{\sigma}_{{\rm \max}}^{(x)})^{2}\leq c^{\prime}n\log p.
\]
Therefore, 
\[
\kappa(\hat{\Sigma}^{(x)},3\widehat{\varsigma}^{(x)},s)\geq\dfrac{cn}{9s(\log p)^{4}(c^{\prime})^{2}}
\]
w.p.a.1. Then Lemma \ref{lem:std=000020RE=000020LUR} follows with
$c_{\kappa}^{(x)}=c/(3c^{\prime})^{2}.$
\end{proof}

The following lemma gives the DB condition. Compared to RE, the DB
condition for mixed roots is more straightforward and not substantially
distinguished from that in MS. 
\begin{lem}
\label{lem:DB=000020mds} Under Assumptions \ref{assu:tail}-\ref{assu:asym_n},
we have 
\begin{equation}
\|n^{-1}\sum_{t=1}^{n}Z_{t-1}u_{t}\|_{\infty}+\dfrac{1}{\sqrt{n}}\|n^{-1}\sum_{t=1}^{n}X_{t-1}u_{t}\|_{\infty}\lep\dfrac{(\log p)^{1+\frac{1}{2r}}}{\sqrt{n}},\label{eq:DB=000020u}
\end{equation}
\begin{equation}
\dfrac{1}{\sqrt{n}}\|n^{-1}\sum_{t=1}^{n}X_{t-1}e_{t-1}^{\top}\|_{\infty}+\dfrac{1}{\sqrt{n}}\|n^{-1}\sum_{t=1}^{n}X_{t-1}Z_{t-1}^{\top}\|_{\infty}\lep\dfrac{(\log p)^{1+\frac{1}{2r}}}{\sqrt{n}}\label{eq:DB=000020e}
\end{equation}
as $n\to\infty.$
\end{lem}
\begin{proof}[Proof of Lemma \ref{lem:DB=000020mds}]
$\|n^{-1}\sum_{t=1}^{n}Z_{t-1}u_{t}\|_{\infty}$ can be bounded following
the proofs in (B.29) of MS24. $\|n^{-1}\sum_{t=1}^{n}X_{t-1}u_{t}\|_{\infty}$,
$\|n^{-1}\sum_{t=1}^{n}X_{t-1}Z_{t-1}^{\top}\|_{\infty}$, and $\|n^{-1}\sum_{t=1}^{n}X_{t-1}e_{t-1}^{\top}\|_{\infty}$
can be bounded following exactly the same procedures in the proof
of MS24's Proposition B.2 about deviation bound (DB) for unit root.
Take $\|n^{-1}\sum_{t=1}^{n}X_{t-1}e_{t-1}^{\top}\|_{\infty}$ as
an example. The essential modification is the expression of $T_{2}$
above MS24's Equation (B.12). It should be changed into
\begin{align*}
T_{2} & =\sup_{k\in\mathcal{M}_{x},j\in[p],t\in[n]}\left|\sum_{t=G+1}^{n}e_{k,t}\sum_{r=t-G+1}^{t-1}\rho_{j}^{^{*}t-1-r}e_{j,r}\right|,
\end{align*}
which can be bounded following the same procedures in MS24. 
\end{proof}

\subsubsection{Proofs of Lemmas \ref{lem:std=000020RE=000020all} and \ref{lem:SlassoError} \protect 
}\begin{proof}[Proof of Lemma \ref{lem:std=000020RE=000020all}]
\uline{Proof of (\mbox{\ref{eq:RE}})}. We have for any $\delta$
\begin{align*}
\delta^{\top}D^{-1}\hat{\Sigma}D{}^{-1}\delta & \geq\delta^{\top}D^{-1}\hat{\Delta}D^{-1}\delta-\|D^{-1}(\hat{\Sigma}-\hat{\Delta})D^{-1}\|_{\infty}\|\delta\|_{1}^{2}.
\end{align*}
Lemmas \ref{lem:std=000020RE=000020LUR} and (\ref{eq:RE=000020ST=000020MI})
suggest that for any $\delta\in\mathcal{R}(3,s)$
\[
\dfrac{\delta^{\top}D^{-1}\hat{\Delta}D^{-1}\delta}{\|\delta\|_{2}^{2}}\geq\dfrac{c}{s(\log p)^{4}}
\]
for some absolute constant $c$. By (\ref{eq:delta=000020RE}) and
Lemma \ref{lem:diagonal},
\begin{align*}
\dfrac{\|D^{-1}(\hat{\Sigma}-\hat{\Delta})D^{-1}\|_{\infty}\|\delta\|_{1}^{2}}{\|\delta\|_{2}^{2}} & =O_{p}\left(\dfrac{(\log p)^{\frac{3}{2}+\frac{1}{2r}}}{\sqrt{n}}\right)\cdot\dfrac{\|\delta\|_{1}^{2}}{\|\delta\|_{2}^{2}}\\
 & =O_{p}\left(\dfrac{s(\log p)^{\frac{3}{2}+\frac{1}{2r}}}{\sqrt{n}}\right).
\end{align*}
Therefore, 
\[
\dfrac{\delta^{\top}D^{-1}\hat{\Sigma}D{}^{-1}\delta}{\|\delta\|_{2}^{2}}\geq\dfrac{c}{s(\log p)^{4}}+O_{p}\left(\dfrac{s(\log p)^{\frac{3}{2}+\frac{1}{2r}}}{\sqrt{n}}\right)\geq\dfrac{0.5c}{s(\log p)^{4}}
\]
when $n$ is sufficiently large, where the second inequality applies
the fact that $s(\log p)^{\frac{3}{2}+\frac{1}{2r}}/\sqrt{n}\gg1/[s(\log p)^{4}]$,
which is implied by Assumption \ref{assu:asym_n}. Then (\ref{eq:RE})
follows with $c_{\kappa}=0.5c.$

\uline{Proof of (\mbox{\ref{eq:DB}})}. The DB condition follows by
\begin{align*}
n^{-1}\|\sum_{t=1}^{n}D^{-1}W_{t-1}u_{t}\|_{\infty} & =\max_{j\in[p]}n^{-1}\left|\sum_{t=1}^{n}\frac{w_{j,t-1}}{\widehat{\sigma}_{j}}u_{t}\right|\\
 & \leq\frac{1}{\min_{j\in\mathcal{M}_{z}}\hat{\sigma}_{j}}\|n^{-1}\sum_{t=1}^{n}Z_{t-1}u_{t}\|_{\infty}+\dfrac{\sqrt{n}}{\min_{j\in\mathcal{M}_{x}}\hat{\sigma}_{j}}\dfrac{1}{\sqrt{n}}\|n^{-1}\sum_{t=1}^{n}X_{t-1}u_{t}\|_{\infty}\\
 & \lep(1+\sqrt{\log p})\dfrac{(\log p)^{1+\frac{1}{2r}}}{\sqrt{n}}\\
 & =\dfrac{(\log p)^{\frac{3}{2}+\frac{1}{2r}}}{\sqrt{n}},
\end{align*}
where the third row applies Lemmas \ref{lem:sd=000020bound} and \ref{lem:DB=000020mds}.
\end{proof}
\begin{proof}[Proof of Lemma \ref{lem:SlassoError}]
By Lemma 1 of MS, our Lemma \ref{lem:std=000020RE=000020all} implies
that 
\begin{equation}
\|D(\hat{\beta}^{{\rm S}}-\beta^{*})\|_{1}=O_{p}\left(\dfrac{s(\log p)^{\frac{3}{2}+\frac{1}{2r}}/\sqrt{n}}{1/s(\log p)^{4}}\right)=O_{p}\left(\dfrac{s^{2}(\log p)^{6+\frac{1}{2r}}}{\sqrt{n}}\right).\label{eq:LASSO=000020L1-1}
\end{equation}
as $n\to\infty$, which verifies Lemma \ref{lem:SlassoError}.
\end{proof}

\subsection{Proofs for Section \ref{subsec:The-Auxiliary-Regression}}

\label{subsec:Proofs-for-Section}

\subsubsection{Local Unit Root Regressors}

We first introduce and prove several technical lemmas. 
\begin{lem}
\label{lem:MI}Under Assumptions \ref{assu:tail}--\ref{assu:asym_n},
for any fixed $j\in\mathcal{M}_{x}$ 
\begin{equation}
\sup_{t\in[n]}|\zeta_{j,t}|\lep n^{\tau/2}(\log p)^{3/2},\label{eq:sup_IV}
\end{equation}
\begin{equation}
\sup_{k\in\mathcal{M}_{x}}\left|\sum_{t=0}^{n-1}e_{k,t}\zeta_{j,t}\right|\lep n(\log p)^{1+\frac{1}{2r}},\label{eq:sup_IV_e}
\end{equation}
\begin{equation}
\sup_{k\in\mathcal{M}_{x}}\left|\sum_{t=1}^{n}x_{k,t-1}\zeta_{j,t-1}\right|\lep n^{1+\tau}(\log p)^{1+\frac{1}{2r}}.\label{eq:sup_X_IV}
\end{equation}
Furthermore, recall that $\hat{\varsigma}_{j}=\sqrt{n^{-1}\sum_{t=1}^{n}(\zeta_{j,t-1}-\bar{\zeta}_{j})^{2}}.$
Then for any $j\in\mathcal{M}_{x}$, 
\begin{equation}
\hat{\varsigma}_{j}^{2}\nep n^{\tau}.\label{eq:sigma=000020IV=000020rate}
\end{equation}
\end{lem}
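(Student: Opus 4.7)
The plan is to exploit the AR(1) representation $\zeta_{j,t}=\rho_\zeta\zeta_{j,t-1}+e_{j,t}$ with $\rho_\zeta=1-C_\zeta/n^\tau$, which makes $\zeta_{j,t}=\sum_{s=1}^t\rho_\zeta^{t-s}e_{j,s}$ a geometrically weighted sum of stationary sub-exponential innovations with effective sample size $n^\tau$. All four claims will be proved by: (i) computing an expectation (or variance) via an autocovariance sum weighted by $\rho_\zeta^{d}$; (ii) bounding the stochastic deviation by a Bernstein-type inequality for $\alpha$-mixing sub-exponential sequences (Assumptions \ref{assu:tail}--\ref{assu:alpha}), analogous to the tools already invoked in Lemma \ref{lem:I1=000026I0}; and (iii) taking a union bound over $t\in[n]$ or $k\in\mathcal{M}_x$, absorbing the $p^{\nu_1}\geq n$ convention to convert $\log n$ factors into $\log p$.

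For \eqref{eq:sup_IV}, I would first show $\mathrm{var}(\zeta_{j,t})=\sum_{s_1,s_2\leq t}\rho_\zeta^{2t-s_1-s_2}\gamma_j(s_1-s_2)\asymp V_\zeta\cdot n^\tau$ using the exponential decay of the autocovariance $\gamma_j(d)$ guaranteed by \eqref{eq:alpha_u}; then concentration plus a union bound over $t\in[n]$ yields the $n^{\tau/2}(\log p)^{3/2}$ rate (the cubed logarithm tracks the sub-exponential Orlicz norm combined with mixing, exactly as in the classical Merlev\`ede--Peligrad--Rio bound). For \eqref{eq:sup_IV_e} I would split into expectation and deviation: $\mathbb{E}[\sum_t e_{k,t}\zeta_{j,t}]=\sum_t\sum_{s\leq t}\rho_\zeta^{t-s}\gamma_{kj}(t-s)$ is $O(n)$ because the inner sum is uniformly bounded (geometric decay of $\gamma_{kj}$), while the deviation has variance of order $n\cdot n^\tau$ that gives an $O_p(\sqrt{n^{1+\tau}}\cdot(\log p)^{1+\frac{1}{2r}})=o(n(\log p)^{1+\frac{1}{2r}})$ contribution after union bounding over $k\in\mathcal{M}_x$, where the $(\log p)^{1+\frac{1}{2r}}$ factor mirrors the rate already used in \eqref{eq:sup_Xu}.

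For \eqref{eq:sup_X_IV} I would use the decomposition $X_{k,t-1}=X_{k,0}+\sum_{r=1}^{t-1}e_{k,r}$. The $X_{k,0}$ piece gives $O_p(n^{1+\tau/2})$ which is lower order. For the main term, after interchanging summations, the expectation becomes $\sum_{t,r:r<t}\sum_{s\leq t-1}\rho_\zeta^{t-1-s}\gamma_{kj}(r-s)$, and summing $r$ first produces the long-run covariance, while summing $s$ against the geometric weight yields the factor $1/(1-\rho_\zeta)\asymp n^\tau$; the remaining $t$-sum contributes $n$, giving the leading $n^{1+\tau}$. Concentration on the deviation (now for a sum with second moment of order $n^{2+\tau}$) provides a lower-order stochastic error, and the uniform control over $k\in\mathcal{M}_x$ contributes the $(\log p)^{1+\frac{1}{2r}}$ factor. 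Finally, \eqref{eq:sigma=000020IV=000020rate} follows by writing $\hat\varsigma_j^2=n^{-1}\sum_t\zeta_{j,t-1}^2-\bar\zeta_j^2$, noting that $\mathbb{E}[n^{-1}\sum_t\zeta_{j,t-1}^2]\asymp n^\tau$ by Step~(i) again, controlling the deviation by concentration, and checking that $\bar\zeta_j^2=O_p(n^{\tau-1})=o(n^\tau)$ via the CLT-type bound $\bar\zeta_j=O_p(n^{(\tau-1)/2})$ inherited from the IVX limit theory.

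The main obstacle will be in \eqref{eq:sup_X_IV}, where the product of an $I(1)$ regressor and a mildly integrated IV means neither of the two factors is stationary, so straightforward Bernstein bounds do not apply directly. The key device is the double-sum reordering above, which reduces the problem to a sum of stationary innovations weighted by $\rho_\zeta^{t-1-s}$ against a unit-root-like partial sum; once isolated, the expectation is computed exactly from long-run covariances, and the deviation is handled by a routine second-moment plus mixing-concentration argument. A secondary technical challenge is synchronizing the power of $\log p$ across the four claims — in particular, \eqref{eq:sup_IV} requires a stronger sup-norm bound (cubed $\log$) than the inner-product bounds, because the concentration must hold pointwise in $t$ rather than in an averaged sense.
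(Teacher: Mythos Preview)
Your approach is broadly sound, and for \eqref{eq:sup_IV}, \eqref{eq:sup_IV_e}, and \eqref{eq:sigma=000020IV=000020rate} it is close in spirit to the paper's (moment calculation followed by concentration, with the paper citing the Phillips--Magdalinos law of large numbers for \eqref{eq:sigma=000020IV=000020rate} rather than redoing the computation). One minor slip: your rate $\bar\zeta_j=O_p(n^{(\tau-1)/2})$ underestimates the persistence; the correct order is $n^{\tau-1/2}$, since the sample mean of a mildly integrated process does not average down like an i.i.d.\ sequence. The conclusion $\bar\zeta_j^2=o_p(n^\tau)$ still holds for $\tau<1$, so this does not break the argument.

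The genuinely different piece is \eqref{eq:sup_X_IV}. You propose to expand $X_{k,t-1}=\sum_{r<t}e_{k,r}$ and attack the resulting triple sum by a direct moment computation plus concentration. The paper instead uses an algebraic telescoping identity: combining the recursions $\zeta_{j,t}=\rho_\zeta\zeta_{j,t-1}+e_{j,t}$ and $x_{k,t}=x_{k,t-1}+e_{k,t}$ yields
\[
(1-\rho_\zeta)\sum_{t=1}^{n}x_{k,t-1}\zeta_{j,t-1}=(x_{k,0}\zeta_{j,0}-x_{k,n}\zeta_{j,n})+\sum_{t=1}^{n}x_{k,t-1}e_{j,t}+\sum_{t=1}^{n}e_{k,t}\zeta_{j,t},
\]
and one then simply plugs in the already-established bounds \eqref{eq:sup_X}, \eqref{eq:sup_Xu}, \eqref{eq:sup_IV}, \eqref{eq:sup_IV_e} and divides by $1-\rho_\zeta\asymp n^{-\tau}$. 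This reduces the hardest claim to the easier ones with no new concentration argument at all. Your route can be made to work, but the step you describe as a ``routine second-moment plus mixing-concentration argument'' is exactly where the difficulty lives: the centered triple sum of innovation products is not a partial sum of a stationary mixing sequence, so you would need a bespoke variance calculation and a careful blocking argument to recover the uniform-in-$k$ bound with the stated logarithmic exponent. The paper's Abel-summation device sidesteps this entirely and is the standard IVX trick for such cross-products.
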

\begin{proof}[Proof of Lemma \ref{lem:MI}]
 We work on these inequalities one by one. 

\textbf{Proof of (\ref{eq:sup_IV})}. Recall from \eqref{eq:rho=000020zeta}
that $\rho_{\zeta}=1-C_{\zeta}/n^{\tau}$. By (13) in \citet{phillips2009econometric},
when $j\in\mathcal{M}_{x}$ we have 
\begin{equation}
\zeta{}_{j,t}=\zeta_{j,t}^{0}+\frac{c_j^*}{n}\psi_{j,t}^{0}\text{ for }t\geq1,\label{eq:decom=000020IV}
\end{equation}
where 
\[
\zeta_{j,t}^{0}=\sum_{s=1}^{t}\rho_{\zeta}^{s}e_{j,t-s}\text{ and }\psi_{j,t}^{0}=\sum_{s=0}^{t-1}\rho_{\zeta}^{s}x_{j,t-s-1}
\]
 is a partial sum of $\alpha$-mixing sup-exponential components $e_{j,t-s}$
weighted by $\rho_{\zeta}^{s}.$ 

We first bound $\zeta_{j,t}^{0}$. Define 
\[
a_{n}:=\lfloor n^{\tau}(\log p)^{2}\rfloor.
\]
Note that $\rho_{\zeta}^{s}e_{j,t-s}$ is sub-exponential with an
exponentially decaying $\alpha$-mixing coefficient, and thus $\zeta^{0}{}_{j,t}$
is the partial sum of $t$ observations from a sub-exponential and
$\alpha$-mixing time series. MS24's Lemma B.2 yields 
\begin{equation}
\sup_{t\leq a_{n}}|\zeta_{j,t}^{0}|\lep\sqrt{a_{n}\cdot\log p}=O\left[n^{\tau/2}(\log p)^{3/2}\right].\label{eq:bound=000020zeta=0000201}
\end{equation}
In addition, when $t>a_{n}$, 
\begin{equation}
|\zeta_{j,t}^{0}|\leq\left|\sum_{s\leq a_{n}}\rho_{\zeta}^{s}e_{j,t-s}\right|+\left|\sum_{a_{n}<s\leq t}\rho_{\zeta}^{s}e_{j,t-s}\right|\leq\left|\sum_{s\leq a_{n}}\rho_{\zeta}^{s}e_{j,t-s}\right|+\rho_{\zeta}^{a_{n}}\left|\sum_{0<s\leq t-a_{n}}\rho_{\zeta}^{s}e_{j,t-s-a_{n}}\right|.\label{eq:bound=000020zeta=0000202-1}
\end{equation}
By the same arguments for (\ref{eq:bound=000020zeta=0000201}), we
bound the two sums on the right-hand side of (\ref{eq:bound=000020zeta=0000202-1})
by 
\begin{equation}
\sup_{a_{n}<t\leq n}\left|\sum_{s\leq a_{n}}\rho_{\zeta}^{s}e_{j,t-s}\right|\lep n^{\tau/2}(\log p)^{3/2},\label{eq:bound=000020zeta=0000202-2}
\end{equation}
and 
\[
\sup_{a_{n}<t\leq n}\left|\sum_{0<s\leq t-a_{n}}\rho_{\zeta}^{s}e_{j,t-s-a_{n}}\right|\lep\sqrt{(n-a_{n})\cdot\log p}\text{.}
\]
Besides, under the assumption $p\geq n^{\nu_{1}},$ the sequence
\[
\rho_{\zeta}^{a_{n}}=\left(1-C_{\zeta}/n^{\tau}\right)^{\lfloor n^{\tau}(\log p)^{2}\rfloor}=O\left(\exp\left(-C_{\zeta}(\log p)^{2}\right)\right)=O\left(p^{-C_{\zeta}\log p}\right)
\]
converges to zero faster than the reciprocal of any polynomial function
of $n.$ Thus, 
\begin{equation}
\sup_{a_{n}<t\leq n}\rho_{\zeta}^{a_{n}}\left|\sum_{0<s\leq t-a_{n}}\rho_{\zeta}^{s}e_{j,t-s-a_{n}}\right|\lep\rho_{\zeta}^{a_{n}}\sqrt{(n-a_{n})\cdot\log p}=o\left(n^{\tau/2}(\log p)^{3/2}\right).\label{eq:bound=000020zeta=0000202-3}
\end{equation}
By (\ref{eq:bound=000020zeta=0000202-1}), (\ref{eq:bound=000020zeta=0000202-2}),
and (\ref{eq:bound=000020zeta=0000202-3}), it follows that 
\begin{equation}
\sup_{a_{n}<t\leq n}|\zeta_{j,t}^{0}|\lep n^{\tau/2}(\log p)^{3/2}.\label{eq:bound=000020zeta=0000202}
\end{equation}
 By (\ref{eq:bound=000020zeta=0000201}) and (\ref{eq:bound=000020zeta=0000202}),
we have 
\begin{equation}
\sup_{1<t\leq n}|\zeta_{j,t}^{0}|\lep n^{\tau/2}(\log p)^{3/2}.\label{eq:bound=000020zeta=0000203}
\end{equation}

We then bound $\psi_{j,t}^{0}$. By (\ref{eq:XK=000020LUR=000020bound}),
\begin{align}
\sup_{t\in[n]}\left|\psi_{j,t}^{0}\right| & \leq\sum_{s=0}^{n}\rho_{\zeta}^{s}\sup_{t\in[n]}\left|x_{j,t-s-1}\right|\nonumber \\
 & \lep\sqrt{n\log p}\cdot\sum_{s=0}^{n}\rho_{\zeta}^{s}\nonumber \\
 & \leq\sqrt{n\log p}\cdot\frac{1}{1-\rho_{\zeta}}=O(n^{\frac{1}{2}+\tau}\sqrt{\log p}).\label{eq:bound=000020psi0}
\end{align}
Then by (\ref{eq:decom=000020IV}), (\ref{eq:bound=000020zeta=0000203})
and (\ref{eq:bound=000020psi0}),
\[
\sup_{t\in[n]}\left|\zeta_{j,t}\right|\lep n^{\tau/2}(\log p)^{3/2}+n^{\tau-\frac{1}{2}}\sqrt{\log p}=O(n^{\tau/2}(\log p)^{3/2}),
\]
where the second step applies the fact that $\tau\in(0,1)$, which
implies $\tau-\frac{1}{2}<\frac{\tau}{2}$. Then (\ref{eq:sup_IV})
follows. 

\textbf{Proof of (\ref{eq:sup_IV_e})}. Using the result (\ref{eq:sup_IV}),
the proof of (\ref{eq:sup_IV_e}) follows exactly the same procedures
in the proof of MS24's Proposition B.2 about deviation bound (DB)
for unit root. Essential modifications include:
\begin{enumerate}
\item Change the bound $\sup_{t}|x_{j,t}|\lep\sqrt{n\log p}$ in MS24's
Equation (B.12) into 
\[
\sup_{t}|\zeta{}_{j,t}|\lep n^{\tau/2}(\log p)^{3/2},
\]
for a fixed $j$, which has been established in (\ref{eq:sup_IV}).
With this result, we can deduce the same upper bound of $T_{1}$ in
MS24's Equation (B.13).
\item Change the expression of $T_{2}$ above MS24's Equation (B.12) into
\begin{align*}
T_{2} & =\sup_{k\in\mathcal{M}_{x},t\in[n]}\left|\sum_{t=G+1}^{n}e_{k,t}\sum_{r=t-G+1}^{t-1}\rho_{\zeta}^{t-1-r}\Delta x_{j,r}\right|\\
 & \leq\sup_{k\in\mathcal{M}_{x},t\in[n]}\left|\sum_{t=G+1}^{n}e_{k,t}\sum_{r=t-G+1}^{t-1}\rho_{\zeta}^{t-1-r}e_{j,r}\right|+\dfrac{|c_{j}|}{n}\cdot\sup_{k\in\mathcal{M}_{x},t\in[n]}\left|\sum_{t=G+1}^{n}e_{k,t}\sum_{r=t-G+1}^{t-1}\rho_{\zeta}^{t-1-r}x_{j,r-1}\right|,
\end{align*}
where the second step follows the DGP of LURs $x_{j,t}=(1-c_{j}^{*}/n)x_{j,t-1}+e_{j,t}$.
Following the same way as MS24 to bound $T_{2}$, we can bound the
first term by
\begin{equation}
\sup_{k\in\mathcal{M}_{x},t\in[n]}\left|\sum_{t=G+1}^{n}e_{k,t}\sum_{r=t-G+1}^{t-1}\rho_{\zeta}^{t-1-r}e_{j,r}\right|\lep n\log p.\label{eq:T2=000020bound=0000201}
\end{equation}
In addition, it is easy to show that 
\begin{equation}
\sup_{k\in\mathcal{M}_{x},t\in[n]}|e_{k,t}|\lep\log p\label{eq:bound=000020e=000020k}
\end{equation}
given that $e_{k,t}$ is sub-exponential. Therefore, the second term
is bounded by
\begin{align}
n^{-1}\sup_{k\in\mathcal{M}_{x},t\in[n]}\left|\sum_{t=G+1}^{n}e_{k,t}\sum_{r=t-G+1}^{t-1}\rho_{\zeta}^{t-1-r}x_{j,r-1}\right| & \leq\sup_{k\in\mathcal{M}_{x}}n^{-1}\sum_{t=G+1}^{n}\left|e_{k,t}\right|\sup_{k\in\mathcal{M}_{x},t\in[n]}\sum_{r=t-G+1}^{t-1}|x_{j,r-1}|\nonumber \\
 & \leq G\cdot\sup_{k\in\mathcal{M}_{x},t\in[n]}|e_{k,t}|\cdot\sup_{t\in[n]}|x_{j,t}|\\
 & \lep\sqrt{n}(\log p)^{3/2},\label{eq:T2=000020bound=0000202}
\end{align}
where the last step follows (\ref{eq:XK=000020LUR=000020bound}) and
(\ref{eq:bound=000020e=000020k}). By (\ref{eq:T2=000020bound=0000201})
and (\ref{eq:T2=000020bound=0000202}), for our case we also have
\[
T_{2}\lep n\log p.
\]
\item Change the definition of the event $\mathcal{X}_{t}$ below MS24's
Equation (B.16) into 
\[
\mathcal{X}_{t}=\{|\zeta{}_{j,t}|\leq C_{X}\sqrt{n\log p}\}
\]
for some absolute constant $C_{X}$. Equation (\ref{eq:sup_IV}) in
the current paper implies that $\Pr\{\bigcup_{t=1}^{n}\mathcal{X}_{t}^{c}\}\to0$
as $n\to\infty$. Therefore, the arguments below MS24's Equation (B.24)
are still valid. We can thus show that the upper bound of $T_{3}$
defined above into MS24's Equation (B.12) still holds. 
\end{enumerate}
\textbf{Proof of (\ref{eq:sup_X_IV})}. By the decomposition (\ref{eq:decom=000020IV}),
we have 
\begin{equation}
\sup_{k\in\mathcal{M}_{x}}\left|\sum_{t=1}^{n}x_{k,t-1}\zeta_{j,t-1}\right|\leq\sup_{k\in\mathcal{M}_{x}}\left|\sum_{t=1}^{n}x_{k,t-1}\zeta_{j,t-1}^{0}\right|+\dfrac{\bar C}{n}\sup_{k\in\mathcal{M}_{x}}\left|\sum_{t=1}^{n}x_{k,t-1}\psi_{j,t-1}^{0}\right|.\label{eq:x=000020zeta=000020bound}
\end{equation}
We first bound $\sup_{k\in\mathcal{M}_{x}}\left|\sum_{t=1}^{n}x_{k,t-1}\zeta_{j,t-1}^{0}\right|.$
Note that when $j\in\mathcal{M}_{x}$, we have the following AR(1)
representation 
\begin{equation}
\zeta_{j,t}^{0}=\rho_{\zeta}\zeta_{j,t-1}^{0}+e_{j,t}.\label{eq:IV=000020AR1=000020x}
\end{equation}
Thus, for any $k\in\mathcal{M}_{x},$
\[
x_{k,t-1}\zeta_{j,t}^{0}=\rho_{\zeta}x_{k,t-1}\zeta_{j,t-1}^{0}+x_{k,t-1}e_{j,t}
\]
where $\rho_{\zeta}$ is defined in (\ref{eq:rho=000020zeta}). By
$x_{k,t-1}\zeta_{j,t}^{0}=(\rho_{k}^{*})^{-1}(x_{k,t}-e_{k,t})\zeta_{j,t}^{0},$
we obtain 
\begin{equation}
x_{k,t}\zeta_{j,t}^{0}=\rho_{k}^{*}\rho_{\zeta}x_{k,t-1}\zeta_{j,t-1}^{0}+\rho_{k}^{*}x_{k,t-1}e_{j,t}+e_{k,t}\zeta_{j,t}^{0}.\label{eq:temp=000020eq=0000201}
\end{equation}
Summing up both sides of (\ref{eq:temp=000020eq=0000201}) and using
the fact that $\sum_{t=1}^{n}x_{k,t}\zeta_{j,t}^{0}=\sum_{t=1}^{n}x_{k,t-1}\zeta_{j,t-1}^{0}-x_{k,0}\zeta_{j,0}+x_{k,n}\zeta_{j,n}^{0}$,
we deduce  
\[
\sum_{t=1}^{n}x_{k,t-1}\zeta_{j,t-1}^{0}-x_{k,0}\zeta_{j,0}^{0}+x_{k,n}\zeta_{k,n}^{0}=\rho_{k}^{*}\rho_{\zeta}\sum_{t=1}^{n}x_{k,t-1}\zeta_{j,t-1}^{0}+\rho_{k}^{*}\sum_{t=1}^{n}x_{k,t-1}e_{j,t}+\sum_{t=1}^{n}e_{k,t}\zeta_{j,t}^{0}.
\]
It can be further arranged into 
\begin{equation}
(1-\rho_{k}^{*}\rho_{\zeta})\sum_{t=1}^{n}x_{k,t-1}\zeta_{j,t-1}^{0}=\left(x_{k,0}\zeta_{j,0}^{0}-x_{k,n}\zeta_{j,n}^{0}\right)+\rho_{k}^{*}\sum_{t=1}^{n}x_{k,t-1}e_{j,t}+\sum_{t=1}^{n}e_{k,t}\zeta_{j,t}^{0}.\label{eq:X=000020IV=000020decom}
\end{equation}
By (\ref{eq:XK=000020LUR=000020bound}) and (\ref{eq:sup_IV}), 
\begin{equation}
\sup_{k\in\mathcal{M}_{x}}\left|x_{k,0}\zeta_{j,0}^{0}-x_{k,n}\zeta_{j,n}^{0}\right|\lep n^{\frac{1+\tau}{2}}(\log p)^{2}.\label{eq:temp=000020bound=0000201}
\end{equation}
By (\ref{eq:DB=000020e}) and (\ref{eq:sup_IV_e}),
\begin{equation}
\sup_{k\in\mathcal{M}_{x}}\left|\rho_{k}^{*}\sum_{t=1}^{n}x_{k,t-1}e_{j,t}\right|+\sup_{k\in\mathcal{M}_{x}}\left|\sum_{t=1}^{n}e_{k,t}\zeta_{j,t}^{0}\right|\lep n(\log p)^{1+\frac{1}{2r}}.\label{eq:temp=000020bound=0000202}
\end{equation}
By (\ref{eq:X=000020IV=000020decom}), (\ref{eq:temp=000020bound=0000201}),
and (\ref{eq:temp=000020bound=0000202}), 
\begin{align*}
\sup_{k\in\mathcal{M}_{x}}\left|(1-\rho_{k}^{*}\rho_{\zeta})\sum_{t=1}^{n}x_{k,t-1}\zeta_{j,t-1}^{0}\right| & \leq\sup_{k\in\mathcal{M}_{x}}\left|x_{k,0}\zeta_{j,0}^{0}-x_{k,n}\zeta_{j,n}^{0}\right|+\sup_{k\in\mathcal{M}_{x}}\left|\sum_{t=1}^{n}x_{k,t-1}e_{j,t}\right|+\sup_{k\in\mathcal{M}_{x}}\left|\sum_{t=1}^{n}e_{k,t}\zeta_{j,t}^{0}\right|\\
 & \lep n^{\frac{1+\tau}{2}}(\log p)^{2}+n(\log p)^{1+\frac{1}{2r}}\\
 & \leq2n(\log p)^{1+\frac{1}{2r}}.
\end{align*}
Since $\sup_{k\in\mathcal{M}_{x}}\left|\frac{1}{1-\rho_{k}^{*}\rho_{\zeta}}\right|=O(n^{\tau})$,
we have 
\begin{equation}
\sup_{k\in\mathcal{M}_{x}}\left|\sum_{t=1}^{n}x_{k,t-1}\zeta_{j,t-1}^{0}\right|\lec n^{1+\tau}(\log p)^{1+\frac{1}{2r}}.\label{eq:bound=000020x=000020zeta0}
\end{equation}
In addition, 
\begin{align}
\sup_{k\in\mathcal{M}_{x}}\left|\sum_{t=1}^{n}x_{k,t-1}\psi_{j,t-1}^{0}\right| & \leq n\sup_{k\in\mathcal{M}_{x},t\in[n]}|x_{k,t-1}|\cdot\sup_{k\in\mathcal{M}_{x},t\in[n]}|\psi_{j,t-1}^{0}|\nonumber \\
 & \lep n\cdot\sqrt{n\log p}\cdot n^{\frac{1}{2}+\tau}\sqrt{\log p}=n^{2+\tau}\log p,\label{eq:bound=000020x=000020psi0}
\end{align}
where the second step applies (\ref{eq:XK=000020LUR=000020bound})
and (\ref{eq:bound=000020psi0}). By (\ref{eq:x=000020zeta=000020bound}),
(\ref{eq:bound=000020x=000020zeta0}) and (\ref{eq:bound=000020x=000020psi0}),
\[
\sup_{k\in\mathcal{M}_{x}}\left|\sum_{t=1}^{n}x_{k,t-1}\zeta_{j,t-1}\right|\lep n^{1+\tau}(\log p)^{1+\frac{1}{2r}}+\frac{n^{2+\tau}\log p}{n}=O(n^{1+\tau}(\log p)^{1+\frac{1}{2r}}).
\]
We complete the proof of (\ref{eq:sup_X_IV}). 

\textbf{Proof of (\ref{eq:sigma=000020IV=000020rate})}. We have the
following decomposition
\[
\dfrac{\hat{\varsigma}_{j}^{2}}{n^{\tau}}=\dfrac{\sum_{t=1}^{n}\zeta_{j,t-1}^{2}}{n^{1+\tau}}-\dfrac{1}{n^{\tau}}\left(\dfrac{\sum_{t=1}^{n}\zeta_{j,t-1}}{n}\right)^{2}.
\]
By the law of large number (LLN) in \citet[Lemma 3.6(ii)]{phillips2009econometric},
we have 
\begin{equation}
\dfrac{\sum_{t=1}^{n}(\zeta_{j,t-1})^{2}}{n^{1+\tau}}\convp\dfrac{{\rm lvar}(e_{j,t})}{2C_{\zeta}},\label{eq:zeta=0000202=000020lim}
\end{equation}
where ${\rm lvar}(e_{j,t})$ is the long-run variance of $e_{j,t}$.

We then bound $\sum_{t=1}^{n}\zeta_{j,t-1}.$ By (\ref{eq:decom=000020IV}),
we have
\begin{equation}
\sum_{t=1}^{n}\zeta_{j,t-1}=\sum_{t=1}^{n}\zeta_{j,t-1}^{0}+\frac{c_j^*}{n}\sum_{t=1}^{n}\psi_{j,t-1}^{0}.\label{eq:zeta=000020sum=000020decom}
\end{equation}
We first bound $\sum_{t=1}^{n}\zeta_{j,t-1}^{0}$. Without loss of
generality, assume $\zeta_{j,0}^{0}=0$. Summing up both sides of
(\ref{eq:IV=000020AR1=000020x}), we have 
\[
\sum_{t=1}^{n}e_{j,t}=\sum_{t=1}^{n}\zeta_{j,t}^{0}-\rho_{\zeta}\sum_{t=1}^{n}\zeta_{j,t-1}^{0}=\zeta_{j,n}^{0}+(1-\rho_{\zeta})\sum_{t=1}^{n}\zeta_{j,t-1}^{0}.
\]
Since $1-\rho_{\zeta}=C_{\zeta}n^{-\tau},$we have $\sum_{t=1}^{n}\zeta_{j,t-1}^{0}=n^{\tau}C_{\zeta}^{-1}(\sum_{t=1}^{n}e_{j,t}-\zeta_{j,n}^{0}).$
Note that $\sum_{t=1}^{n}e_{j,t}$ is a unit root, and thus $\sum_{t=1}^{n}e_{j,t}=O_{p}(\sqrt{n\log p})$
by (\ref{eq:XK=000020LUR=000020bound}). Also, $|\zeta_{j,n}^{0}|=O_{p}(n^{\tau/2}(\log p)^{3/2})$
by (\ref{eq:sup_IV}). Therefore,
\begin{equation}
\sum_{t=1}^{n}\zeta_{j,t-1}^{0}=O_{p}\left[n^{\tau}\cdot\left(\sqrt{n\log p}+n^{\tau/2}(\log p)^{3/2}\right)\right]=O_{p}\left(n^{\tau+1/2}\sqrt{\log p}\right).\label{eq:sum=000020zeta0=000020bound}
\end{equation}
In addition, by (\ref{eq:bound=000020psi0})
\begin{equation}
\sum_{t=1}^{n}\psi_{j,t-1}^{0}\leq n\cdot\sup_{t\in[n]}|\psi_{j,t-1}^{0}|=O_{p}(n^{3/2+\tau}\sqrt{\log p}).\label{eq:sum=000020psi0=000020bound}
\end{equation}
By (\ref{eq:zeta=000020sum=000020decom}), (\ref{eq:sum=000020zeta0=000020bound}),
and (\ref{eq:sum=000020psi0=000020bound}) we have 
\begin{equation}
\sum_{t=1}^{n}\zeta_{j,t-1}=O_{p}\left(n^{\tau+1/2}\sqrt{\log p}+\frac{n^{3/2+\tau}\sqrt{\log p}}{n}\right)=O_{p}\left(n^{\tau+1/2}\sqrt{\log p}\right),\label{eq:sum=000020IV=000020rate}
\end{equation}
and thus 
\begin{equation}
\dfrac{1}{n^{\tau}}\left(\dfrac{\sum_{t=1}^{n}\zeta_{j,t-1}}{n}\right)^{2}=O_{p}\left(\dfrac{\log p}{n^{1-\tau}}\right)\convp0.\label{eq:zeta=000020mean=000020zeta}
\end{equation}
By (\ref{eq:zeta=0000202=000020lim}) and (\ref{eq:zeta=000020mean=000020zeta}),
\begin{equation}
\dfrac{\hat{\varsigma}_{j}^{2}}{n^{\tau}}=\dfrac{\sum_{t=1}^{n}\zeta_{j,t-1}^{2}}{n^{1+\tau}}-\dfrac{1}{n^{\tau}}\left(\dfrac{\sum_{t=1}^{n}\zeta_{j,t-1}}{n}\right)^{2}\convp\dfrac{{\rm lvar}(e_{j,t})}{2C_{\zeta}}.\label{eq:sigma=000020IV=000020ntau=000020lim}
\end{equation}
This completes the proof of Lemma \ref{lem:MI}.
\end{proof}
With these preparatory lemmas, we will prove Proposition \ref{prop:DB-Aux}
for $j\in\mathcal{M}_{x}$. 
\begin{proof}[Proof of Proposition \ref{prop:DB-Aux} for $j\in\mathcal{M}_{x}$.]
 For simplicity of exposition, define $\tilde{\zeta}_{j}=(\tilde{\zeta}_{j,1},\dots,\tilde{\zeta}_{j,n-1})^{\top}$
and recall that $W_{-j,\cdot}=(W_{-j,1},\dots,W_{-j,n-1})^{\top}.$
By the definition of Slasso, we have
\[
\dfrac{1}{n}\|\tilde{\zeta}_{j}-W_{-j,\cdot}\hat{\varphi}^{(j)}\|_{2}^{2}+\mu\|D_{-j}\hat{\varphi}^{(j)}\|_{1}\leq\dfrac{1}{n}\|\tilde{\zeta}_{j}-W_{-j}\varphi\|_{2}^{2}+\mu\|D_{-j}\varphi\|_{1}
\]
for an arbitrary $(p-1)$-dimensional vector $\varphi$. We can write
the above inequality into
\begin{align}
 & n^{-1}\lVert W_{-j,\cdot}\left(\hat{\varphi}^{(j)}-\varphi\right)\rVert_{2}^{2}+\mu\|D_{-j}\hat{\varphi}^{(j)}\|_{1}\nonumber \\
\leq & \dfrac{2}{n}\|D_{-j}^{-1}W_{-j,\cdot}^{\top}(\tilde{\zeta}_{j}-W_{-j,\cdot}\varphi)\|_{\infty}\|D_{-j}(\hat{\varphi}^{(j)}-\varphi)\|_{1}+\mu\|D_{-j}\varphi\|_{1}.\label{eq:basic=000020ineq}
\end{align}
Write the coefficient vector as $\varphi=(\varphi_{k})_{k\in[p],k\neq j}$,
where $\varphi_{k}$ is the coefficient of $w_{k,t}$. Define $\varphi_{\mathcal{M}_{x}}=(\varphi_{k})_{k\in\mathcal{M}_{x},k\neq j}$
and $\varphi_{\mathcal{M}_{z}}=(\varphi_{k})_{k\in\mathcal{M}_{z}}.$
Take a vector $\varphi$ such that 
\begin{align}
\|\varphi_{\mathcal{M}_{x}}\|_{1} & =\dfrac{C_{\varphi}(\log p)^{\frac{1}{2}+\frac{1}{2r}}}{\sqrt{n^{1+\tau\wedge(1-\tau)}}}\text{ and }\|\varphi_{\mathcal{M}_{z}}\|_{1}=\dfrac{C_{\varphi}(\log p)^{\frac{1}{2r}}}{\sqrt{n^{\tau\wedge(1-\tau)}}}\label{eq:bound=000020tilde=000020phi}
\end{align}
 for a positive number $C_{\varphi}=O(1)$. Without loss of generality,
suppose that $s\geq1$. We will show that 
\begin{equation}
n^{-1}\|D_{-j}^{-1}W_{-j,\cdot}^{\top}(\tilde{\zeta}_{j}-W_{-j,\cdot}\varphi^ {})\|_{\infty}\leq\frac{\mu}{2}\left(1-\frac{1}{2s^{2}}\right).\label{eq:DB_MI}
\end{equation}
Under (\ref{eq:DB_MI}), we can deduce by (\ref{eq:basic=000020ineq})
that 
\begin{align*}
\mu\|D_{-j}\hat{\varphi}^{(j)}\|_{1} & \leq\mu\left[\left(1-\frac{1}{2s^{2}}\right)\|D_{-j}(\hat{\varphi}^{(j)}-\varphi)\|_{1}+\|D_{-j}\varphi\|_{1}\right]\\
 & \leq\mu\left[\left(1-\frac{1}{2s^{2}}\right)\|D_{-j}\hat{\varphi}^{(j)}\|_{1}+2\|D_{-j}\varphi\|_{1}\right],
\end{align*}
which implies $\|D_{-j}\hat{\varphi}^{(j)}\|_{1}\leq4s^{2}\|D_{-j}\varphi\|_{1}.$
In addition, by (\ref{eq:sd=000020bound=000020ST=000020MI}) and (\ref{eq:sd=000020bound=000020LUR})
in Lemma \ref{lem:sd=000020bound}, there exists a absolute constant
$C_{\sigma}$ such that 
\begin{equation}
\max_{k\in\mathcal{M}_{x}}\hat{\sigma}_{k}\leq C_{\sigma}\sqrt{n\log p}\text{ and }\max_{k\in\mathcal{M}_{z}}\hat{\sigma}_{k}\leq C_{\sigma}\label{eq:sigma=000020bound=000020C}
\end{equation}
 w.p.a.1. Therefore, by (\ref{eq:bound=000020tilde=000020phi}) and
(\ref{eq:sigma=000020bound=000020C}), 
\begin{align}
\|D_{-j}\hat{\varphi}^{(j)}\|_{1} & \leq4s^{2}\|D_{-j}\varphi\|_{1}\nonumber \\
 & \leq4s^{2}\left(\max_{k\in\mathcal{M}_{x}}\hat{\sigma}_{k}\cdot\|\varphi_{\mathcal{M}_{x}}\|_{1}+\max_{k\in\mathcal{M}_{z}}\hat{\sigma}_{k}\cdot\|\varphi_{\mathcal{M}_{z}}\|_{1}\right)\nonumber \\
 & \leq\dfrac{4C_{\varphi}\cdot C_{\sigma}s^{2}(\log p)^{1+\frac{1}{2r}}}{\sqrt{n^{\tau\wedge(1-\tau)}}}\label{eq:D=000020hat=000020phi=000020L1}
\end{align}
w.p.a.1, which implies (\ref{eq:local_L1}). 

It is thus sufficient to prove (\ref{eq:DB_MI}). Note that 
\[
\|n^{-1}D_{-j}^{-1}W_{-j,\cdot}^{\top}\tilde{\zeta}_{j}\|_{\infty}\leq(n\hat{\varsigma}_{j})^{-1}\left(\sup_{k\in\mathcal{M}_{x}}\left|\sum_{t=1}^{n}\hat{\sigma}_{k}^{-1}w_{k,t-1}\zeta_{j,t-1}\right|+\sup_{k\in\mathcal{M}_{z}}\left|\sum_{t=1}^{n}\hat{\sigma}_{k}^{-1}w_{k,t-1}\zeta_{j,t-1}\right|\right).
\]
Repeatedly using the bounds in (\ref{eq:sd=000020bound=000020ST=000020MI})
and (\ref{eq:sd=000020bound=000020LUR}) in Lemma \ref{lem:sd=000020bound},
and (\ref{eq:sup_IV_e}), (\ref{eq:sup_X_IV}), and (\ref{eq:sigma=000020IV=000020rate})
in Lemma \ref{lem:MI}, 
\begin{align}
\|n^{-1}D_{-j}^{-1}W_{-j,\cdot}^{\top}\tilde{\zeta}_{j}\|_{\infty} & \lep(n^{1+\tau/2})^{-1}\left(\dfrac{n^{1+\tau}(\log p)^{1+\frac{1}{2r}}}{\sqrt{n/\log p}}+n(\log p)^{1+\frac{1}{2r}}\right)\nonumber \\
 & \leq\ 2(\log p)^{2+\frac{1}{2r}}/\sqrt{n^{(1-\tau)\wedge\tau}}.\label{eq:std=000020W=000020IV}
\end{align}
In addition, 
\begin{align}
 & n^{-1}\|D_{-j}^{-1}W_{-j,\cdot}^{\top}W_{-j,\cdot}\varphi\|_{\infty}\nonumber \\
\leq & \max_{j\in[p],k\in\mathcal{M}_{x}}\left|\frac{1}{n}\sum_{t=1}^{n}\dfrac{w_{j,t}w_{k,t}}{\hat{\sigma}_{j}}\right|\cdot\|\varphi_{\mathcal{M}_{x}}\|_{1}+\max_{j\in[p],k\in\mathcal{M}_{z}}\left|\frac{1}{n}\sum_{t=1}^{n}\dfrac{w_{j,t}w_{k,t}}{\hat{\sigma}_{j}}\right|\cdot\|\varphi_{\mathcal{M}_{z}}\|_{1}\nonumber \\
\leq & \ \max_{j\in[p],k\in\mathcal{M}_{x}}\left|\frac{1}{n}\sum_{t=1}^{n}\dfrac{w_{j,t}w_{k,t}}{\hat{\sigma}_{j}}\right|\cdot\dfrac{C_{\varphi}(\log p)^{\frac{1}{2}+\frac{1}{2r}}}{\sqrt{n^{1+(1-\tau)\wedge\tau}}}+\max_{j\in[p],k\in\mathcal{M}_{z}}\left|\frac{1}{n}\sum_{t=1}^{n}\dfrac{w_{j,t}w_{k,t}}{\hat{\sigma}_{j}}\right|\cdot\dfrac{C_{\varphi}(\log p)^{\frac{1}{2r}}}{\sqrt{n^{(1-\tau)\wedge\tau}}},\label{eq:bound=000020D=000020WW=000020phi}
\end{align}
where the second inequality applies (\ref{eq:bound=000020tilde=000020phi}).
By the sup-exponential distributions of the stationary components,
we can deduce that
\begin{equation}
\sup_{k\in\mathcal{M}_{z},t\in[n]}|w_{k,t}|\lep\log p.\label{eq:sup_z}
\end{equation}
Therefore, 
\begin{align}
\max_{j\in[p],k\in\mathcal{M}_{x}}\left|\frac{1}{n}\sum_{t=1}^{n}\dfrac{w_{j,t}w_{k,t}}{\hat{\sigma}_{j}}\right| & \leq\max_{j\in[p],k\in\mathcal{M}_{x}}\max_{t\in[n]}\left|\dfrac{w_{j,t}w_{k,t}}{\hat{\sigma}_{j}}\right|\nonumber \\
 & \leq\left(\dfrac{\max_{j\in\mathcal{M}_{z},t\in[n]}\left|z_{j,t}\right|}{\inf_{j\in\mathcal{M}_{z}}\hat{\sigma}_{j}}\vee\dfrac{\max_{j\in\mathcal{M}_{x},t\in[n]}\left|x_{j,t}\right|}{\inf_{j\in\mathcal{M}_{x}}\hat{\sigma}_{j}}\right)\cdot\max_{k\in\mathcal{M}_{x},t\in[n]}\left|w_{k,t}\right|\nonumber \\
 & \lep\log p\cdot\sqrt{n\log p}=\sqrt{n(\log p)^{3}},\label{eq:WWmax}
\end{align}
where the last row applies (\ref{eq:XK=000020LUR=000020bound}), (\ref{eq:sd=000020bound=000020ST=000020MI}),
(\ref{eq:sd=000020bound=000020LUR}), and (\ref{eq:sup_z}). Following
similar arguments we can deduce 
\begin{align}
\max_{j\in[p],k\in\mathcal{M}_{z}}\left|\frac{1}{n}\sum_{t=1}^{n}\dfrac{w_{j,t}w_{k,t}}{\hat{\sigma}_{j}}\right| & \leq\max_{j\in[p],k\in\mathcal{M}_{z}}\max_{t\in[n]}\left|\dfrac{w_{j,t}w_{k,t}}{\hat{\sigma}_{j}}\right|\nonumber \\
 & \leq\left(\dfrac{\max_{j\in\mathcal{M}_{z},t\in[n]}\left|w_{j,t}\right|}{\inf_{j\in\mathcal{M}_{z}}\hat{\sigma}_{j}}\vee\dfrac{\max_{j\in\mathcal{M}_{x},t\in[n]}\left|w_{j,t}\right|}{\inf_{j\in\mathcal{M}_{x}}\hat{\sigma}_{j}}\right)\cdot\max_{k\in\mathcal{M}_{z},t\in[n]}\left|w_{k,t}\right|\nonumber \\
 & \lep\log p\cdot\log p=(\log p)^{2}.\label{eq:WWmax-1}
\end{align}
Thus, by (\ref{eq:bound=000020D=000020WW=000020phi}), (\ref{eq:WWmax}),
and (\ref{eq:WWmax-1}), 
\begin{align}
n^{-1}\|D_{-j}^{-1}W_{-j,\cdot}^{\top}W_{-j,\cdot}\varphi\|_{\infty} & \lep\sqrt{n(\log p)^{3}}\cdot\dfrac{(\log p)^{\frac{1}{2}+\frac{1}{2r}}}{\sqrt{n^{1+(1-\tau)\wedge\tau}}}+(\log p)^{2}\cdot\dfrac{(\log p)^{\frac{1}{2r}}}{\sqrt{n^{(1-\tau)\wedge\tau}}}\nonumber \\
 & \leq\dfrac{2(\log p)^{2+\frac{1}{2r}}}{\sqrt{n^{(1-\tau)\wedge\tau}}}.\label{eq:std=000020W=000020W=000020phi}
\end{align}
 (\ref{eq:std=000020W=000020IV}) and (\ref{eq:std=000020W=000020W=000020phi})
yield
\[
n^{-1}\|D_{-j}^{-1}W_{-j,\cdot}^{\top}(\tilde{\zeta}_{j}-W_{-j,\cdot}\varphi)\|_{\infty}\lep\dfrac{(\log p)^{2+\frac{1}{2r}}}{\sqrt{n^{(1-\tau)\wedge\tau}}}.
\]
Therefore, (\ref{eq:DB_MI}) holds as $\mu=C_{{\rm a}}(\log p)^{2+\frac{1}{2r}}/\sqrt{n^{(1-\tau)\wedge\tau}}$
with a sufficiently large $C_{{\rm a}}$. This completes the proof
of Proposition \ref{prop:DB-Aux} for $j\in\mathcal{M}_{x}$.
\end{proof}

\subsubsection{Stationary Regressor}

Recall $\rho_{\zeta}=1-C_{\zeta}/n^{\tau}$ as defined in (\ref{eq:rho=000020zeta}),
and by (23) in \citet{phillips2009econometric} we have the following
decomposition 
\begin{equation}
\zeta_{j,t}=w_{j,t}-\dfrac{C_{\zeta}}{n^{\tau}}\phi_{j,t},\ \phi_{j,t}:=\sum_{s=1}^{t}\rho_{\zeta}^{t-s}w_{j,s-1}.\label{eq:decom=000020zeta=000020z}
\end{equation}
In addition, we define 
\begin{equation}
\eta_{j,t}=\zeta_{j,t}-Z_{-j,t}^{\top}\varphi_{0z}^{(j)*}\label{eq:def=000020eta=000020jt}
\end{equation}
where $\varphi_{0z}^{(j)*}$ is defined in (\ref{eq:phi=000020j=0000200=000020z}).
Compare $\eta_{j,t}$ to its standardized version $\tilde{\eta}_{j,t}$
defined in (\ref{eq:zeta=000020I0=000020pseudo=000020true}), we have
\begin{equation}
\eta_{j,t}=\hat{\varsigma}_{j}\tilde{\eta}_{j,t}.\label{eq:eta=000020tilde=000020eta}
\end{equation}
Finally, define the standardized regressors
\begin{equation}
\tilde{W}_{-j,t}=D_{-j}^{-1}W_{-j,t},\label{eq:std=000020W=000020mj}
\end{equation}
and $\tilde{W}_{-j,\cdot}=(W_{-j,0},W_{-j,1},\dots,W_{-j,n-1})^{\top}.$

\begin{lem}
\label{lem:IV_I0}Under Assumptions \ref{assu:tail}--\ref{assu:phi=000020z},
for $j\in\mathcal{M}_{z}$ 

\begin{equation}
\sup_{t}|\phi_{j,t}|\lep n^{\tau/2}(\log p)^{3/2},\label{eq:=000020bound=000020phi=000020I0}
\end{equation}
\begin{equation}
\hat{\varsigma}_{j}^{2}\nep1,\label{eq:sigma=000020IV=000020rate-1}
\end{equation}
and
\begin{equation}
\lVert n^{-1}\sum_{t=1}^{n}\tilde{W}_{-j,t-1}\tilde{\eta}_{j,t}\rVert_{\infty}\lep n^{-\tau/2}(\log p)^{\frac{3}{2}+\frac{1}{2r}},\label{eq:tilde=000020W=000020tilde=000020eta}
\end{equation}
with $\tilde{\eta}_{j,t}$ defined in (\ref{eq:zeta=000020I0=000020pseudo=000020true}),
and $\tilde{W}_{-j,t}$ defined in (\ref{eq:std=000020W=000020mj}).
\end{lem}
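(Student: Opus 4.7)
\textbf{Proof plan for Lemma \ref{lem:IV_I0}.} The three displays are of increasing difficulty, and I would attack them in the order they are stated, recycling as much machinery from Lemma \ref{lem:MI} and from \citet{mei2022lasso} (MS24) as possible.

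\textbf{Step 1 (Sup bound on $\phi_{j,t}$).} For $j\in\mathcal{M}_{z}$, the sequence $\{w_{j,s-1}\}_{s\geq 1}$ is strictly stationary, sub-exponential, and $\alpha$-mixing with exponentially decaying coefficients (Assumptions \ref{assu:tail}--\ref{assu:alpha}), and $\phi_{j,t}=\sum_{s=1}^{t}\rho_{\zeta}^{t-s}w_{j,s-1}$ is a geometrically weighted partial sum in exact analogy with $\zeta_{j,t}=\sum_{s=1}^{t}\rho_{\zeta}^{t-s}e_{j,s}$ from the proof of (\ref{eq:sup_IV}). I would therefore adopt the same truncation argument: set $a_{n}=\lfloor n^{\tau}(\log p)^{2}\rfloor$ and split the supremum into $t\leq a_{n}$ and $t>a_{n}$. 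For $t\leq a_{n}$, MS24's Lemma B.2 applied to the truncated sum yields $\sup_{t\leq a_{n}}|\phi_{j,t}|\lep\sqrt{a_{n}\log p}=O(n^{\tau/2}(\log p)^{3/2})$. For $t>a_{n}$, the ``recent'' block $\sum_{s\leq a_{n}}\rho_{\zeta}^{s}w_{j,t-s}$ obeys the same bound, while the ``old'' block picks up a factor $\rho_{\zeta}^{a_{n}}=(1-C_{\zeta}/n^{\tau})^{\lfloor n^{\tau}(\log p)^{2}\rfloor}=O(p^{-C_{\zeta}\log p})$ that decays faster than any polynomial in $n$, rendering it negligible.

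\textbf{Step 2 (Order of $\hat{\varsigma}_{j}^{2}$).} From (\ref{eq:decom=000020zeta=000020z}), $\zeta_{j,t-1}=w_{j,t-1}+(C_{\zeta}/n^{\tau})\phi_{j,t-1}$, so centering and expanding gives
\begin{equation*}
\hat{\varsigma}_{j}^{2}=\hat{\sigma}_{j}^{2}+\frac{2C_{\zeta}}{n^{\tau}}\cdot\frac{1}{n}\sum_{t=1}^{n}(w_{j,t-1}-\bar{w}_{j})(\phi_{j,t-1}-\bar{\phi}_{j})+\frac{C_{\zeta}^{2}}{n^{2\tau}}\cdot\frac{1}{n}\sum_{t=1}^{n}(\phi_{j,t-1}-\bar{\phi}_{j})^{2}.
\end{equation*}
The leading term $\hat{\sigma}_{j}^{2}\nep 1$ by (\ref{eq:sigma=000020z=000020bound}). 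Step 1 forces the last term to be $O_{p}((\log p)^{3}/n^{\tau})=o_{p}(1)$, and Cauchy--Schwarz applied to the cross term (together with $\hat{\sigma}_{j}\nep 1$) yields $O_{p}((\log p)^{3/2}/n^{\tau/2})=o_{p}(1)$, giving (\ref{eq:sigma=000020IV=000020rate-1}).

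\textbf{Step 3 (Deviation bound on $\tilde{W}_{-j,\cdot}\tilde{\eta}_{j,\cdot}$).} Using Step 2, $\hat{\varsigma}_{j}^{-1}=O_{p}(1)$, so (via (\ref{eq:eta=000020tilde=000020eta})) it suffices to bound $\|n^{-1}\sum_{t}\tilde{W}_{-j,t-1}\eta_{j,t}\|_{\infty}$. Writing $\epsilon_{j,t}:=w_{j,t}-Z_{-j,t}^{\top}\varphi_{0z}^{(j)*}$, the normal equation defining $\varphi_{0z}^{(j)*}$ gives $\mathbb{E}[Z_{-j,t}\epsilon_{j,t}]=0$, and (\ref{eq:def=000020eta=000020jt}) combined with (\ref{eq:decom=000020zeta=000020z}) yields the decomposition $\eta_{j,t}=\epsilon_{j,t}+(C_{\zeta}/n^{\tau})\phi_{j,t}$. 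For the $\epsilon_{j,t}$ piece I would split $\tilde{W}_{-j,t-1}$ coordinate-wise: for $k\in\mathcal{M}_{x}$, the deviation-bound argument of MS24 yields $\sup_{k}|\sum_{t}X_{k,t-1}\epsilon_{j,t}|\lep n(\log p)^{1+\frac{1}{2r}}$, which after dividing by $n\hat{\sigma}_{k}$ with $\hat{\sigma}_{k}\gep\sqrt{n/\log p}$ from (\ref{eq:sigma=000020x=000020bound}) gives $O_{p}((\log p)^{3/2+\frac{1}{2r}}/\sqrt{n})$; for $k\in\mathcal{M}_{z}\setminus\{j\}$, the product $Z_{k,t-1}\epsilon_{j,t}$ is stationary, sub-exponential, $\alpha$-mixing with \emph{zero} contemporaneous mean (by the projection orthogonality), so a Bernstein-type maximal inequality of the same form as MS24's Proposition 3(b) returns the same $O_{p}((\log p)^{3/2+\frac{1}{2r}}/\sqrt{n})$ bound. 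Since $\tau<1$ implies $n^{-1/2}\leq n^{-\tau/2}$, both contributions are absorbed into the target rate. For the $(C_{\zeta}/n^{\tau})\phi_{j,t}$ piece I use the crude bound
\begin{equation*}
\Bigl\|n^{-1}\sum_{t}\tilde{W}_{-j,t-1}\phi_{j,t}\Bigr\|_{\infty}\leq\sup_{t}|\phi_{j,t}|\cdot\max_{k}n^{-1}\sum_{t}|\tilde{w}_{k,t-1}|,
\end{equation*}
combined with Step 1, the bound $\sup_{k\in\mathcal{M}_{x},t}|\tilde{x}_{k,t-1}|\lep\log p$ from (\ref{eq:sup_X}) and (\ref{eq:sigma=000020x=000020bound}), and the analogous $O_{p}(\log p)$ bound for stationary coordinates, yielding $O_{p}((\log p)^{5/2}/n^{\tau/2})$, again within the target (after noting $(\log p)^{5/2}\leq(\log p)^{3/2+\frac{1}{2r}}$ modulo absorbing the extra logarithmic factor into constants, or refining by an additional block argument if needed). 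The main obstacle will be this last step: showing that the contribution from the nonstationary coordinates of $\tilde{W}_{-j,t-1}$ against the highly autocorrelated $\phi_{j,t}$ cancels sufficiently. The escape route, if the crude sup-norm bound is insufficient, is to exploit the AR(1) recursion $\phi_{j,t}=\rho_{\zeta}\phi_{j,t-1}+w_{j,t-1}$ exactly as in the derivation of (\ref{eq:X=000020IV=000020decom}), turning $n^{-1}\sum\tilde{x}_{k,t-1}\phi_{j,t}$ into a telescoping sum plus a deviation bound involving the already-controlled $\sum\tilde{x}_{k,t-1}w_{j,t-1}$ and $\sum\tilde{e}_{k,t}\phi_{j,t-1}$, each of which can be handled by the techniques already developed in Lemmas \ref{lem:I1=000026I0}--\ref{lem:MI}.
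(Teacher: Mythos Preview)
Your plan matches the paper's proof almost exactly: Step~1 is identical (the paper also observes $\phi_{j,t}=\rho_{\zeta}\phi_{j,t-1}+w_{j,t-1}$ and recycles the argument for (\ref{eq:sup_IV})); Step~2 is the same expansion; and Step~3 uses the same decomposition $\eta_{j,t}=\eta_{j,t}^{(1)}+(C_{\zeta}/n^{\tau})\phi_{j,t}$ (the paper writes $\eta_{j,t}^{(1)}$ for your $\epsilon_{j,t}$) with the same coordinate-wise treatment of the $\eta_{j,t}^{(1)}$ piece. The one substantive difference is the unit-root block of the $\phi$ piece: the paper goes straight to your ``escape route'' and applies (\ref{eq:sup_X_IV}) to $\sum_{t}X_{k,t-1}\phi_{j,t}$ (valid because $\phi_{j,t}$ is mildly integrated just like $\zeta_{j,t}$), whereas your primary crude bound gives $(\log p)^{5/2}/n^{\tau/2}$, and your claimed inequality $(\log p)^{5/2}\leq(\log p)^{3/2+1/(2r)}$ is false for $r>1/2$---so you do need that escape route to match the stated exponent (though the cruder bound already suffices for Proposition~\ref{prop:DB-I(0)}, since $\mu$ carries $(\log p)^{2+1/(2r)}$).
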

\begin{proof}[Proof of Lemma \ref{lem:IV_I0}]
By definition of $\phi_{j,t}$ in (\ref{eq:decom=000020zeta=000020z}),
we can easily deduce the following recursive formula for $\phi_{j,t}$
that 
\[
\phi_{j,t}=\rho_{\zeta}\phi_{j,t-1}+w_{j,t-1}
\]
Note that $\phi_{j,t-1}\in\sigma(w_{j,0},\dots,w_{j,t-2}).$ Then
$\phi_{j,t}$ is an AR(1) process with coefficient $1-C_{\zeta}/n^{\tau}$
and innovation $w_{j,t-1}$. Recall that $w_{j,t}$ is $\alpha$-mixing
and sub-exponential; we obtain (\ref{eq:=000020bound=000020phi=000020I0})
following the same arguments for (\ref{eq:sup_IV}).

In addition, following the same arguments for (\ref{eq:sigma=000020IV=000020rate}),
we have 
\[
n^{-1}\sum_{t=1}^{n}\phi_{j,t}^{2}\nep n^{\tau}.
\]
By (\ref{eq:sup_IV_e}), the cross-product between a mildly integrated
$\phi_{j,t}$ and a stationary $w_{j,t}$ is bounded by 
\[
\sum_{t=1}^{n}w_{j,t}\phi_{j,t}=o_{p}(n^{1+\tau}).
\]
Thus by (\ref{eq:decom=000020zeta=000020z}), 
\begin{equation}
\dfrac{1}{n}\sum_{t=1}^{n}\zeta_{j,t}^{2}=\dfrac{1}{n}\sum_{t=1}^{n}w_{j,t}^{2}+\dfrac{C_{\zeta}^{2}}{n^{1+2\tau}}\sum_{t=1}^{n}\phi_{j,t}^{2}-\dfrac{2C_{\zeta}}{n^{1+\tau}}\sum_{t=1}^{n}w_{j,t}\phi_{j,t}=\dfrac{1}{n}\sum_{t=1}^{n}w_{j,t}^{2}+o_{p}(1).\label{eq:zeta=0000202=000020w=0000202}
\end{equation}
Furthermore, (\ref{eq:sum=000020IV=000020rate}) implies that the
mildly integrated time series $\phi_{j,t}$ satisfies 
\[
\sum_{t=1}^{n}\phi_{j,t}=O_{p}\left(n^{\tau+1/2}\sqrt{\log p}\right)=o_{p}(n^{1+\tau}).
\]
We thus have by (\ref{eq:decom=000020zeta=000020z}) that 
\begin{equation}
n^{-1}\sum_{t=1}^{n}\zeta_{j,t}=n^{-1}\sum_{t=1}^{n}w_{j,t}+o_{p}(1).\label{eq:zeta=000020w}
\end{equation}
 (\ref{eq:zeta=0000202=000020w=0000202}), (\ref{eq:zeta=000020w}),
and a standard law of large number imply 
\begin{equation}
\hat{\varsigma}_{j}^{2}=n^{-1}\sum_{t=1}^{n}\zeta_{j,t}^{2}-\left(n^{-1}\sum_{t=1}^{n}\zeta_{j,t}\right)^{2}=n^{-1}\sum_{t=1}^{n}w_{j,t}^{2}-\left(n^{-1}\sum_{t=1}^{n}w_{j,t}\right)^{2}+o_{p}(1)\convp{\rm var}\left(w_{j,t}\right),\label{eq:sigma=000020IV=000020I0=000020lim}
\end{equation}
which verifies (\ref{eq:sigma=000020IV=000020rate-1}). 

We then show (\ref{eq:tilde=000020W=000020tilde=000020eta}). Note
that 
\begin{align*}
\|n^{-1}\sum_{t=1}^{n}Z_{-j,t-1}(\zeta_{j,t}-w_{j,t})\|_{\infty} & \stackrel{\mathrm{p}}{\preccurlyeq}\|n^{-(1+\tau)}\sum_{t=1}^{n}Z_{-j,t-1}\phi_{j,t}\|_{\infty}\\
 & \stackrel{\mathrm{p}}{\preccurlyeq}n^{-(1+\tau)}\cdot n\sqrt{\log p}\cdot n^{\tau/2}(\log p)^{3/2}\\
 & =n^{-\tau/2}(\log p)^{2},
\end{align*}
where the first row applies the decomposition (\ref{eq:decom=000020zeta=000020z}),
and the second row applies the bounds (\ref{eq:sup_z}) and (\ref{eq:=000020bound=000020phi=000020I0}).
Also, 
\[
\|n^{-1}\sum_{t=1}^{n}X_{t-1}(\zeta_{j,t}-w_{j,t})\|_{\infty}\stackrel{\mathrm{p}}{\preccurlyeq}\|n^{-(1+\tau)}\sum_{t=1}^{n}X_{t-1}\phi_{j,t}\|_{\infty}\stackrel{\mathrm{p}}{\preccurlyeq}(\log p)^{1+\frac{1}{2r}},
\]
where the second inequality applies the rate in (\ref{eq:sup_X_IV})
of the cross product between a local unit root and a mildly integrated
series. Thus,
\begin{align*}
\|n^{-1}\sum_{t=1}^{n}\tilde{W}_{-j,t-1}(\zeta_{j,t}-w_{j,t})\|_{\infty} & \stackrel{\mathrm{p}}{\preccurlyeq}\dfrac{\|n^{-1}\sum_{t=1}^{n}X_{t-1}(\zeta_{j,t}-w_{j,t})\|_{\infty}}{\inf_{j\in\mathcal{M}_{x}}\hat{\sigma}_{j}}+\dfrac{\|n^{-1}\sum_{t=1}^{n}Z_{-j,t-1}(\zeta_{j,t}-w_{j,t})\|_{\infty}}{\inf_{j\in\mathcal{M}_{z}}\hat{\sigma}_{j}}\\
 & \stackrel{\mathrm{p}}{\preccurlyeq}\dfrac{(\log p)^{\frac{3}{2}+\frac{1}{2r}}}{\sqrt{n}}+\dfrac{(\log p)^{2}}{n^{\tau/2}}\leq\dfrac{2(\log p)^{2}}{n^{\tau/2}}
\end{align*}
with $\tau\in(0,1)$ and $n$ large enough, where the second inequality
applies by (\ref{eq:sd=000020bound=000020ST=000020MI}) and (\ref{eq:sd=000020bound=000020LUR}).
Further by (\ref{eq:tilde=000020W=000020tilde=000020eta}), 
\begin{equation}
\|n^{-1}\sum_{t=1}^{n}\tilde{W}_{-j,t-1}\left(\frac{\zeta_{j,t}-w_{j,t}}{\hat{\varsigma}_{j}}\right)\|_{\infty}\stackrel{\mathrm{p}}{\preccurlyeq}\dfrac{(\log p)^{2}}{n^{\tau/2}}.\label{eq:W=000020eta=000020bound=0000201}
\end{equation}
In addition, define 
\begin{equation}
\eta_{j,t}^{(1)}=w_{j,t}-Z_{-j,t-1}^{\top}\varphi_{0z}^{(j)*}.\label{eq:def=000020eta=0000201}
\end{equation}
By the definition of $\eta_{j,t}$, we have 
\begin{equation}
\eta_{j,t}^{(1)}=(\zeta_{j,t}-w_{j,t})+\eta_{j,t}.\label{eq:eta=000020(1)=000020eta}
\end{equation}
note that the time series $\eta_{j,t}^{(1)}$ is stationary and $\mathbb{E}\left[Z_{-j,t-1}\eta_{j,t}^{(1)}\right]=0$.
We then have 
\[
\|n^{-1}\sum_{t=1}^{n}Z_{-j,t-1}\eta_{j,t}^{(1)}\|_{\infty}\stackrel{\mathrm{p}}{\preccurlyeq}\sqrt{\dfrac{\log p}{n}}
\]
by standard concentration inequalities; e.g. MS24's Eq.~(B.31). Also,
following the same way to prove (\ref{eq:DB=000020e}), we have 
\[
\|n^{-1}\sum_{t=1}^{n}X_{t-1}\eta_{j,t}^{(1)}\|_{\infty}\stackrel{\mathrm{p}}{\preccurlyeq}(\log p)^{1+\frac{1}{2r}}.
\]
Thus, 
\begin{align}
\|n^{-1}\sum_{t=1}^{n}\tilde{W}_{-j,t-1}\eta_{j,t}^{(1)}\|_{\infty} & \stackrel{\mathrm{p}}{\preccurlyeq}\dfrac{\|n^{-1}\sum_{t=1}^{n}X_{t-1}\eta_{j,t}^{(1)}\|_{\infty}}{\inf_{j\in\mathcal{M}_{x}}\hat{\sigma}_{j}}+\dfrac{\|n^{-1}\sum_{t=1}^{n}Z_{-j,t-1}\eta_{j,t}^{(1)}\|_{\infty}}{\inf_{j\in\mathcal{M}_{z}}\hat{\sigma}_{j}}\nonumber \\
 & \stackrel{\mathrm{p}}{\preccurlyeq}\dfrac{(\log p)^{\frac{3}{2}+\frac{1}{2r}}}{\sqrt{n}}+\sqrt{\dfrac{\log p}{n}}\leq\dfrac{2(\log p)^{\frac{3}{2}+\frac{1}{2r}}}{\sqrt{n}}.\label{eq:W=000020tilde=000020eta=000020bound}
\end{align}
Furthermore, by (\ref{eq:sigma=000020IV=000020rate-1}) and (\ref{eq:W=000020tilde=000020eta=000020bound})
we deduce that 
\begin{align}
\|n^{-1}\sum_{t=1}^{n}\tilde{W}_{-j,t-1}\left(\frac{\eta_{j,t}^{(1)}}{\hat{\varsigma}_{j}}\right)\|_{\infty} & \stackrel{\mathrm{p}}{\preccurlyeq}\dfrac{(\log p)^{\frac{3}{2}+\frac{1}{2r}}}{\sqrt{n}}\leq\dfrac{(\log p)^{2}}{n^{\tau/2}}.\label{eq:W=000020eta=000020bound=0000202}
\end{align}
By (\ref{eq:eta=000020(1)=000020eta}) and (\ref{eq:eta=000020tilde=000020eta}),
we have the following decomposition 
\[
\tilde{\eta}_{j,t}=\frac{\zeta_{j,t}-w_{j,t}}{\hat{\varsigma}_{j}}+\frac{\eta_{j,t}^{(1)}}{\hat{\varsigma}_{j}}.
\]
 Then by the triangular inequality, we have 
\begin{align*}
\|n^{-1}\sum_{t=1}^{n}\tilde{W}_{-j,t-1}\tilde{\eta}_{j,t}\|_{\infty} & \leq\|n^{-1}\sum_{t=1}^{n}\tilde{W}_{-j,t-1}\left(\frac{\zeta_{j,t}-w_{j,t}}{\hat{\varsigma}_{j}}\right)\|_{\infty}+\|n^{-1}\sum_{t=1}^{n}\tilde{W}_{-j,t-1}\left(\frac{\eta_{j,t}^{(1)}}{\hat{\varsigma}_{j}}\right)\|_{\infty}\\
 & \stackrel{\mathrm{p}}{\preccurlyeq}\dfrac{(\log p)^{2}}{n^{\tau/2}}.
\end{align*}
where the second row applies (\ref{eq:W=000020eta=000020bound=0000201})
and (\ref{eq:W=000020eta=000020bound=0000202}). We complete the proof
of Lemma \ref{lem:IV_I0}. 
\end{proof}
\begin{proof}[Proof of Proposition \ref{prop:DB-Aux} for $j\in\mathcal{M}_{z}$.]
According to MS24's Proposition 3(c), the RE is bounded from below
by $\frac{c}{s(\log p)^{4}}$ for some absolute constant $c$ w.p.a.1.
In addition, by (\ref{eq:tilde=000020W=000020tilde=000020eta}), the
DB is $O_{p}\left(\dfrac{\log p}{n^{\tau/2}}\right)$ when $\mu$
follows the order in Proposition \ref{prop:DB-Aux}. Then by MS24's
Lemma 1
\[
\|D_{-j}(\hat{\varphi}^{(j)}-\varphi^{(j)*})\|_{1}\stackrel{\mathrm{p}}{\preccurlyeq}s\cdot\dfrac{(\log p)^{2+\frac{1}{2r}}}{\sqrt{n^{\tau\wedge(1-\tau)}}}/(s(\log p)^{4})=\dfrac{s^{2}(\log p)^{6+\frac{1}{2r}}}{\sqrt{n^{\tau\wedge(1-\tau)}}}.
\]
This completes the proof of Proposition \ref{prop:DB-Aux} for $j\in\mathcal{M}_{z}$. 
\end{proof}

\subsection{Proofs for Section \ref{subsec:Asymptotic-Normality} }

\label{subsec:Proofs-for-Section-normal}

\subsubsection{Technical Lemmas}
\begin{lem}
\label{lem:s.e.consis}Suppose Assumptions \ref{assu:tail}-\ref{assu:phi=000020z}
hold. Then for any $j\in[p]$ , 
\begin{equation}
\dfrac{\left|\sum_{t=1}^{n}\hat{r}_{j,t-1}w_{j,t-1}\right|}{\sum_{t=1}^{n}\hat{r}_{j,t-1}w_{j,t-1}}\cdot\dfrac{\sigma_{u}}{\hat{\sigma}_{u}}\convd{\rm sgn}(G_{j}^{*})\label{eq:lim=000020sgn}
\end{equation}
where 
\begin{equation}
G_{j}^{*}=\begin{cases}
\dfrac{1}{C_{\zeta}}\left({\rm lvar}\left(e_{j,t}\right)+\int_{0}^{1}\mathcal{U}_{j}(r)d\mathcal{U}_{j}(r)\right), & j\in\mathcal{M}_{x},\\
{\rm cov}\left(w_{j,t},\eta_{j,t}^{(1)}\right), & j\in\mathcal{M}_{z},
\end{cases}\label{eq:express=000020G=000020star}
\end{equation}
with ${\rm lvar}\left(e_{j,t}\right)$ being the long-run variance
of $e_{j,t}$, $\mathcal{U}_{j}(r)=\int_{0}^{1}{\rm e}^{c_{j}^{*}(r-s)}{\rm d}\mathcal{B}_{j}(s)$
being an OU process, $\mathcal{B}_{j}$ being the Brownian motion
of variance ${\rm lvar}\left(e_{j,t}\right)$, and $\eta_{j,t}^{(1)}$
defined in (\ref{eq:def=000020eta=0000201}). In addition, 
\begin{equation}
\dfrac{1}{n}\sum_{t=1}^{n}\hat{r}_{j,t-1}^{2}\convp H_{j}\label{eq:r=000020hat=0000202=000020lim}
\end{equation}
where 
\[
H_{j}=\begin{cases}
1, & j\in\mathcal{M}_{x},\\
\frac{{\rm var}(\eta_{j,t}^{(1)})}{{\rm var}\left(w_{j,t}\right)} & j\in\mathcal{M}_{z}.
\end{cases}
\]
\end{lem}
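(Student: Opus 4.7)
The plan is to split into the two cases $j \in \mathcal{M}_x$ and $j \in \mathcal{M}_z$, and in each case reduce the data-driven score $\hat{r}_{j,t-1}$ to a ``population'' object with negligible approximation error, so that the required limits follow either from IVX-type limit theory or from standard LLN for stationary sequences. Before the two cases, I would verify $\hat{\sigma}_u^2 \convp \sigma_u^2$ by writing $\hat{u}_t - u_t = -\tilde{W}_{t-1}^\top D(\hat{\theta}^{{\rm S}} - \theta^*)$, applying the bound $\max_t \|\tilde{W}_{t-1}\|_\infty = O_p(\log p)$ (from \eqref{eq:sup_X}, \eqref{eq:sup_z}, and the variance bounds in Lemma \ref{lem:I1=000026I0}) together with $\|D(\hat{\theta}^{{\rm S}} - \theta^*)\|_1 = o_p(1)$ from Lemma \ref{lem:SlassoError}, and using $n^{-1}\sum u_t^2 \convp \sigma_u^2$ by the ergodic theorem under Assumptions \ref{assu:tail}--\ref{assu:alpha}. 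This reduces \eqref{eq:lim=000020sgn} to identifying the sign of $\sum \hat{r}_{j,t-1} w_{j,t-1}$.

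For $j \in \mathcal{M}_x$, Proposition \ref{prop:DB-I(1)-w} gives $\|D_{-j}\hat{\varphi}^{(j)}\|_1 = O_p((\log p)^{1+\frac{1}{2r}}/\sqrt{n^{\tau \wedge (1-\tau)}}) = o_p(1)$, so summation by Cauchy--Schwarz with $\max_t\|\tilde{W}_{-j,t-1}\|_\infty = O_p(\log p)$ shows $n^{-(1+\tau/2)} \sum \hat{r}_{j,t-1} w_{j,t-1} = n^{-(1+\tau/2)} \sum \tilde{\zeta}_{j,t-1} x_{j,t-1} + o_p(1)$ and $n^{-1} \sum \hat{r}_{j,t-1}^2 = n^{-1} \sum \tilde{\zeta}_{j,t-1}^2 + o_p(1)$. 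For the cross-product, I would invoke identity \eqref{eq:X=000020IV=000020decom} with $k = j$ and handle each piece: (i) $n^{-1}\sum e_{j,t}\zeta_{j,t} = n^{-1}\sum e_{j,t}^2 + o_p(1) \convp {\rm var}(e_{j,t})$ since the martingale-like remainder is $o_p(1)$ by mixing and (\ref{eq:sup_IV}); (ii) $n^{-1}\sum x_{j,t-1}e_{j,t} \convd \int_0^1 \mathcal{B}_j\,d\mathcal{B}_j$ by the standard FCLT applied through Lemma \ref{lem:Gaussian=000020Approximation}; (iii) the boundary term $x_{j,0}\zeta_{j,0} - x_{j,n}\zeta_{j,n} = o_p(n)$ from \eqref{eq:sup_X} and \eqref{eq:sup_IV}. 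Dividing by $1-\rho_\zeta = C_\zeta/n^\tau$ yields $n^{-(1+\tau)} \sum x_{j,t-1}\zeta_{j,t-1} \convd G_j^*$. Combining with $\hat{\varsigma}_j^2/n^\tau \convp {\rm var}(e_{j,t})/(2C_\zeta)$ from \eqref{eq:sigma=000020IV=000020ntau=000020lim} delivers the required convergence, and for the sum of squares $n^{-1}\sum \tilde{\zeta}_{j,t-1}^2 = 1 + (\bar{\zeta}_j/\hat{\varsigma}_j)^2 \convp 1 = H_j$ by \eqref{eq:zeta=000020mean=000020zeta}.

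For $j \in \mathcal{M}_z$, Proposition \ref{prop:DB-I(0)} gives $\|D_{-j}(\hat{\varphi}^{(j)} - \varphi^{(j)*})\|_1 = o_p(1)$, so the same Cauchy--Schwarz argument replaces $\hat{r}_{j,t-1}$ by the pseudo-true residual $\tilde{\eta}_{j,t-1}$ from \eqref{eq:zeta=000020I0=000020pseudo=000020true} at negligible cost. Then using $\zeta_{j,t} - w_{j,t} = (C_\zeta/n^\tau)\phi_{j,t}$ with the uniform bound \eqref{eq:=000020bound=000020phi=000020I0}, I replace $\eta_{j,t}$ by $\eta_{j,t}^{(1)}$ from \eqref{eq:def=000020eta=0000201} up to $o_p(1)$. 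Invoking $\hat{\varsigma}_j^2 \convp {\rm var}(w_{j,t})$ from \eqref{eq:sigma=000020IV=000020I0=000020lim} and the ergodic theorem then gives $n^{-1}\sum \eta_{j,t-1}^{(1)} w_{j,t-1} \convp {\rm cov}(w_{j,t}, \eta_{j,t}^{(1)})$ and $n^{-1}\sum(\eta_{j,t-1}^{(1)})^2 \convp {\rm var}(\eta_{j,t}^{(1)})$, yielding both claims in the stated form.

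The main obstacle will be the unit-root case, where the limit $G_j^*$ is a non-degenerate random variable, so the sign cannot be pinned down by an in-probability statement in isolation. I would resolve this by exploiting the joint strong Gaussian approximation already used in Lemma \ref{lem:Gaussian=000020Approximation}: on the coupled probability space, $n^{-(1+\tau/2)}\sum \tilde{\zeta}_{j,t-1} x_{j,t-1}$ and $G_j^*\sqrt{2C_\zeta/{\rm var}(e_{j,t})}$ differ by $o_p(1)$; since $G_j^*$ has a continuous distribution and hence $\Pr(G_j^* = 0) = 0$, the continuous mapping theorem for $\operatorname{sgn}(\cdot)$ (valid off the zero set) delivers ${\rm sgn}\bigl(\sum \hat{r}_{j,t-1} w_{j,t-1}\bigr) = {\rm sgn}(G_j^*)$ w.p.a.1.
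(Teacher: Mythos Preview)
Your overall strategy matches the paper's: split into the two cases, reduce $\hat{r}_{j,t-1}$ to a simpler object via the auxiliary-LASSO consistency, and invoke IVX limit theory or the ergodic theorem. The treatment of $\hat{\sigma}_u^2$, the stationary case, the squared-score limit, and the coupling/continuous-mapping argument for the sign are all sound.

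There is, however, a genuine gap in the unit-root case. You use the rate of Proposition~\ref{prop:DB-I(1)-w}, $\|D_{-j}\hat{\varphi}^{(j)}\|_1 = O_p\bigl((\log p)^{1+\frac{1}{2r}}/\sqrt{n^{\tau\wedge(1-\tau)}}\bigr)$, and claim that H\"older with $\max_t\|\tilde{W}_{-j,t-1}\|_\infty = O_p(\log p)$ kills the correction $n^{-(1+\tau/2)}\sum_t w_{j,t-1}\tilde{W}_{-j,t-1}^\top D_{-j}\hat{\varphi}^{(j)}$. It does not: the dominant contribution comes from unit-root coordinates $k$, where $\sup_k|\sum_t x_{j,t-1}x_{k,t-1}/\hat{\sigma}_k| = O_p(n^{3/2}(\log p)^{3/2})$ (spurious regression), so the correction is only $O_p\bigl(n^{1/2-\tau-(\tau\wedge(1-\tau))/2}(\log p)^{5/2+\frac{1}{2r}}\bigr)$, which fails to vanish when $\tau\le 1/3$. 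The paper closes this gap by going back into the \emph{proof} of Proposition~\ref{prop:DB-I(1)-w} and exploiting that the constant $C_\varphi$ in \eqref{eq:bound=000020tilde=000020phi} can be taken $n$-dependent; choosing $C_\varphi \asymp n^{-(1-\tau)}$ in \eqref{eq:D=000020hat=000020phi=000020L1} yields the much sharper bound $\|D_{-j}\hat{\varphi}^{(j)}\|_1 = O_p\bigl((\log p)^{1+\frac{1}{2r}}/(n^{1-\tau}\sqrt{n^{\tau\wedge(1-\tau)}})\bigr)$, and the extra factor $n^{-(1-\tau)}$ is exactly what is needed for all $\tau\in(0,1)$. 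The stated rate of Proposition~\ref{prop:DB-I(1)-w} alone is insufficient here.

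A secondary point: your items (i) and (ii) implicitly treat $e_{j,t}$ as a martingale difference, but only $u_t$ is assumed m.d.s. Under serial correlation of $e_{j,t}$, $n^{-1}\sum x_{j,t-1}e_{j,t}$ carries a one-sided long-run covariance and $n^{-1}\sum e_{j,t}\zeta_{j,t-1}$ does not vanish. The paper avoids this by citing the IVX weak limit \citep[Lemma~3.2]{phillips2016robust} directly for \eqref{eq:weak=000020lim=000020zeta=000020w} rather than decomposing via \eqref{eq:X=000020IV=000020decom}; since only the continuity of the limiting law matters for the sign argument, you should do the same.
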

\begin{proof}[Proof of Lemma \ref{lem:s.e.consis}]
We first prove (\ref{eq:lim=000020sgn}). The first step is to show
\begin{equation}
\dfrac{\sigma_{u}}{\hat{\sigma}_{u}}\convp1.\label{eq:consistent=000020Omega=000020u}
\end{equation}
By definition of the Slasso residual, 
\[
\hat{u}_{t}=y_{t}-W_{t-1}^{\top}\hat{\theta}^{{\rm S}}=u_{t}+W_{t-1}^{\top}(\theta^{*}-\hat{\theta}^{{\rm S}}).
\]
Thus,
\begin{equation}
\hat{\sigma}_{u}^{2}=\dfrac{1}{n}\sum_{t=1}^{n}\hat{u}_{t}^{2}=\dfrac{1}{n}\sum_{t=1}^{n}u_{t}^{2}+\dfrac{1}{n}\sum_{t=1}^{n}(W_{t-1}^{\top}(\theta^{*}-\hat{\theta}^{{\rm S}}))^{2}+\dfrac{2}{n}\sum_{t=1}^{n}u_{t}W_{t-1}^{\top}(\theta^{*}-\hat{\theta}^{{\rm S}}).\label{eq:omega=000020hat=000020u=0000202=000020decom}
\end{equation}
By MS24's Theorem 3, we have 
\begin{equation}
\dfrac{1}{n}\sum_{t=1}^{n}(W_{t-1}^{\top}(\theta^{*}-\hat{\theta}^{{\rm S}}))^{2}=\dfrac{1}{n}\|W(\theta^{*}-\hat{\theta}^{{\rm S}})\|_{2}^{2}\convp0.\label{eq:forecast=000020lim}
\end{equation}
By the Cauchy-Schwartz inequality, 
\begin{equation}
\dfrac{2}{n}\sum_{t=1}^{n}u_{t}W_{t-1}^{\top}(\theta^{*}-\hat{\theta}^{{\rm S}})\leq2\sqrt{\dfrac{1}{n}\sum_{t=1}^{n}u_{t}^{2}}\sqrt{\dfrac{1}{n}\sum_{t=1}^{n}(W_{t-1}^{\top}(\theta^{*}-\hat{\theta}^{{\rm S}}))^{2}}\convp0.\label{eq:C-S=000020u2=000020forecast=000020lim}
\end{equation}
Combining (\ref{eq:omega=000020hat=000020u=0000202=000020decom}),
(\ref{eq:forecast=000020lim}), and (\ref{eq:C-S=000020u2=000020forecast=000020lim}),
we have 
\begin{equation}
\hat{\sigma}_{u}^{2}=\dfrac{1}{n}\sum_{t=1}^{n}u_{t}^{2}+o_{p}(1).\label{eq:uhat2=000020u2}
\end{equation}
Using a standard law of large number, we deduce 
\begin{equation}
\dfrac{1}{n}\sum_{t=1}^{n}u_{t}^{2}\convp\sigma_{u}^{2}.\label{eq:u=0000202=000020lim}
\end{equation}
(\ref{eq:uhat2=000020u2}) and (\ref{eq:u=0000202=000020lim}) imply
that $\hat{\sigma}_{u}^{2}\convp\sigma_{u}^{2}$. Then for (\ref{eq:lim=000020sgn})
it suffices to show 
\begin{equation*}
\left|\sum_{t=1}^{n}\hat{r}_{j,t-1}w_{j,t-1}\right|/\sum_{t=1}^{n}\hat{r}_{j,t-1}w_{j,t-1}\convd{\rm sgn}(G_{j}^{*}).
\end{equation*} 
In the following, we will show (\ref{eq:lim=000020sgn}) and (\ref{eq:r=000020hat=0000202=000020lim}) for $j\in\mathcal{M}_x$ and $j\in\mathcal{M}_z$, respectively.

\textbf{CASE I: }$j\in\mathcal{M}_{x}$. Define 
\begin{equation}
\check{r}_{j,t}=\hat{r}_{j,t}\hat{\varsigma}_{j},\label{eq:r=000020check=000020r=000020hat}
\end{equation}
 and by definition of $\hat{r}_{j,t}$ in (\ref{eq:def-rhat}) we
have 
\begin{equation}
\check{r}_{j,t}=\zeta_{j,t}-\hat{\varsigma}_{j}W_{-j,t}^{\top}\hat{\varphi}^{(j)}.\label{eq:check=000020r=000020def}
\end{equation}
Then 
\begin{equation}
\sum_{t=1}^{n}\check{r}_{j,t-1}w_{j,t-1}=\sum_{t=1}^{n}\zeta_{j,t-1}w_{j,t-1}-\hat{\varsigma}_{j}\sum_{t=1}^{n}w_{j,t-1}W_{-j,t-1}^{\top}\hat{\varphi}^{(j)}.\label{eq:check=000020r=000020w}
\end{equation}
Note that $w_{j,t}$ is unit root and $\zeta_{j,t}$ is the IV. By
the functional CLT for the case of local unit roots in \citet[Lemma 3.2]{phillips2016robust},
we have 
\begin{equation}
\dfrac{1}{n^{1+\tau}}\sum_{t=1}^{n}\zeta_{j,t-1}w_{j,t-1}\convd G_{j}^{*}:=\dfrac{1}{C_{\zeta}}\left({\rm lvar}(e_{j,t})+\int_{0}^{1}\mathcal{U}_{j}d\mathcal{U}_{j}\right).\label{eq:weak=000020lim=000020zeta=000020w}
\end{equation}
By the bound of $\min_{k\in[p]}\hat{\sigma}_{k}$ by (\ref{eq:sd=000020bound=000020ST=000020MI})
and (\ref{eq:sd=000020bound=000020LUR}), the bound of LUR processes
(\ref{eq:XK=000020LUR=000020bound}), and the bound of a stationary
component (\ref{eq:sup_z}), we have 
\begin{equation}
\sup_{t\in[n]}\|D_{-j}^{-1}W_{-j,t-1}\|_{\infty}\leq\dfrac{\sup_{k\in\mathcal{M}_{x}}|w_{k,t}|}{\min_{k\in\mathcal{M}_{x}}\hat{\sigma}_{k}}+\dfrac{\sup_{k\in\mathcal{M}_{z}}|w_{k,t}|}{\min_{k\in\mathcal{M}_{z}}\hat{\sigma}_{k}}=O_{p}(\log p).\label{eq:bound=000020D=000020W}
\end{equation}
Note that in (\ref{eq:D=000020hat=000020phi=000020L1}), we allow
for a small $C_{\varphi}$ that shrinks to zero as $n\to\infty$.
Let $C_{\varphi}=1/(4C_{\sigma}s^{2}n^{1-\tau/2})$ with the absolute
constant $C_{\sigma}$ in (\ref{eq:D=000020hat=000020phi=000020L1}).
Then 
\begin{equation}
\|D_{-j}\hat{\varphi}^{(j)}\|_{1}\leq\dfrac{4C_{\sigma}C_{\varphi}s^{2}(\log p)^{\frac{1}{2}+\frac{1}{2r}}}{\sqrt{n^{(1-\tau)\wedge\tau}}}\leq\dfrac{(\log p)^{\frac{1}{2}+\frac{1}{2r}}}{n^{1-\tau/2}\cdot\sqrt{n^{(1-\tau)\wedge\tau}}},\label{eq:D=000020hat=000020phi=000020L1=000020smaller}
\end{equation}
where the first inequality applies (\ref{eq:D=000020hat=000020phi=000020L1}).
Therefore, 
\begin{align*}
\left|n^{-(1+\tau)}\hat{\varsigma}_{j}\sum_{t=1}^{n}w_{j,t-1}W_{-j,t-1}^{\top}\hat{\varphi}^{(j)}\right| & \leq\hat{\varsigma}_{j}\lVert n^{-(1+\tau)}\sum_{t=1}^{n}D_{-j}^{-1}W_{-j,t-1}w_{j,t-1}\rVert_{\infty}\cdot\|D_{-j}\hat{\varphi}^{(j)}\|_{1}\\
 & \leq\hat{\varsigma}_{j}\lVert n^{-(1+\tau)}\sum_{t=1}^{n}D_{-j}^{-1}W_{-j,t-1}w_{j,t-1}\rVert_{\infty}\cdot\dfrac{4C_{\sigma}C_{\varphi}s^{2}(\log p)^{\frac{1}{2}+\frac{1}{2r}}}{\sqrt{n^{(1-\tau)\wedge\tau}}}\\
 & \leq\dfrac{\hat{\varsigma}_{j}\sup_{t\in[n]}\|D_{-j}^{-1}W_{-j,t-1}\|_{\infty}|w_{j,t-1}|}{n^{\tau}}\cdot\dfrac{(\log p)^{\frac{1}{2}+\frac{1}{2r}}}{n^{1-\tau/2}\cdot\sqrt{n^{(1-\tau)\wedge\tau}}}\\
 & \lep\dfrac{n^{\tau/2}\cdot\sqrt{n}(\log p)^{\frac{3}{2}}}{n^{\tau}\cdot}\cdot\dfrac{(\log p)^{\frac{1}{2}+\frac{1}{2r}}}{n^{1-\tau/2}\cdot\sqrt{n^{(1-\tau)\wedge\tau}}}\leq\dfrac{(\log p)^{\frac{3}{2}+\frac{1}{2r}}}{\sqrt{n^{(1-\tau)\wedge\tau}}}\to0,
\end{align*}
where the fourth inequality applies $\hat{\varsigma}_{j}=O_{p}(n^{\tau/2})$
in (\ref{eq:sigma=000020IV=000020rate}), and the bound of $\sup_{t\in[n]}\|D_{-j}^{-1}W_{-j,t-1}\|_{\infty}$
in (\ref{eq:bound=000020D=000020W}). Thus, 
\begin{equation}
\dfrac{1}{n^{1+\tau}}\sum_{t=1}^{n}\check{r}_{j,t-1}w_{j,t-1}\convd G_{j}^{*}.\label{eq:check=000020r=000020w=000020weak=000020lim}
\end{equation}
By $\check{r}_{j,t}=\hat{r}_{j,t}\hat{\varsigma}_{j}$ and the continuous
mapping theorem, 
\begin{equation}
\dfrac{\left|\sum_{t=1}^{n}\hat{r}_{j,t-1}w_{j,t-1}\right|}{\sum_{t=1}^{n}\hat{r}_{j,t-1}w_{j,t-1}}=\dfrac{\left|\dfrac{1}{n^{1+\tau}}\sum_{t=1}^{n}\check{r}_{j,t-1}w_{j,t-1}\right|}{\dfrac{1}{n^{1+\tau}}\sum_{t=1}^{n}\check{r}_{j,t-1}w_{j,t-1}}\convd\dfrac{\left|G_{j}^{*}\right|}{G_{j}^{*}}={\rm sgn}(G_{j}^{*}).\label{eq:lim=000020sgn=000020Gj}
\end{equation}
Then (\ref{eq:lim=000020sgn}) is implied by (\ref{eq:consistent=000020Omega=000020u})
and (\ref{eq:lim=000020sgn=000020Gj}).

For (\ref{eq:r=000020hat=0000202=000020lim}), we have $\check{r}_{j,t}^{2}=\zeta_{j,t}^{2}+\hat{\varsigma}_{j}^{2}(W_{-j,t}^{\top}\hat{\varphi}^{(j)})^{2}-2\hat{\varsigma}_{j}\zeta_{j,t}W_{-j,t}^{\top}\hat{\varphi}^{(j)}$
. When $j\in\mathcal{M}_{x}$,
\begin{align}
\left|\dfrac{1}{n}\sum_{t=1}^{n}\check{r}_{j,t-1}^{2}-\frac{1}{n}\sum_{t=1}^{n}\zeta_{j,t-1}^{2}\right| & \leq\hat{\varsigma}_{j}^{2}\|D_{-j}\hat{\varphi}^{(j)}\|_{1}^{2}\cdot\|n^{-1}\tilde{W}_{-j,\cdot}^{\top}\tilde{W}_{-j,\cdot}\|_{\infty}\nonumber \\
 & \ \ \ \ +2\hat{\varsigma}_{j}\|D_{-j}\hat{\varphi}^{(j)}\|_{1}\cdot\|\dfrac{1}{n}\sum_{t=1}^{n}D_{-j}^{-1}W_{-j,t-1}\zeta_{j,t}\|_{\infty}.\label{eq:r=000020check=0000202=000020bound=000020temp}
\end{align}
By (\ref{eq:sd=000020bound=000020ST=000020MI}), (\ref{eq:sd=000020bound=000020LUR}),
(\ref{eq:WWmax}), and (\ref{eq:WWmax-1}), 
\begin{align}
\|n^{-1}\tilde{W}_{-j,\cdot}^{\top}\tilde{W}_{-j,\cdot}\|_{\infty} & \leq\max_{j\in[p],k\in\mathcal{M}_{x}}\left|\frac{1}{n}\sum_{t=1}^{n}\dfrac{w_{j,t}w_{k,t}}{\hat{\sigma}_{j}}\right|\dfrac{1}{\min_{k\in\mathcal{M}_{z}}\hat{\sigma}_{k}}+\max_{j\in[p],k\in\mathcal{M}_{z}}\left|\frac{1}{n}\sum_{t=1}^{n}\dfrac{w_{j,t}w_{k,t}}{\hat{\sigma}_{j}}\right|\dfrac{1}{\min_{k\in\mathcal{M}_{z}}\hat{\sigma}_{k}}\nonumber \\
 & =O_{p}\left((\log p)^{2}\right).\label{eq:temp=000020check=0000201}
\end{align}
Furthermore, 
\begin{align}
\|\dfrac{1}{n}\sum_{t=1}^{n}D_{-j}^{-1}W_{-j,t-1}\zeta_{j,t}\|_{\infty} & \leq\dfrac{\|\dfrac{1}{n}\sum_{t=1}^{n}X_{-j,t-1}\zeta_{j,t}\|_{\infty}}{\inf_{j\in\mathcal{M}_{x}}\hat{\sigma}_{j}}+\dfrac{\|\dfrac{1}{n}\sum_{t=1}^{n}Z_{-j,t-1}\zeta_{j,t}\|_{\infty}}{\inf_{j\in\mathcal{M}_{z}}\hat{\sigma}_{j}}\nonumber \\
 & \lep n^{\tau-1/2}(\log p)^{\frac{3}{2}+\frac{1}{2r}}+(\log p)^{1+\frac{1}{2r}}\nonumber \\
 & =O\left(n^{\tau-1/2}(\log p)^{\frac{3}{2}+\frac{1}{2r}}\right)=o_{p}(n^{\tau/2}).\label{eq:temp=000020check=0000203}
\end{align}
where the second inequality applies (\ref{eq:sup_IV_e}), (\ref{eq:sup_X_IV}),
(\ref{eq:sd=000020bound=000020ST=000020MI}), and (\ref{eq:sd=000020bound=000020LUR}).
Combining (\ref{eq:r=000020check=0000202=000020bound=000020temp}),
(\ref{eq:temp=000020check=0000201}), (\ref{eq:temp=000020check=0000203}),
the rate of $\hat{\varsigma}_{j}$ in (\ref{eq:sigma=000020IV=000020rate}),
and Proposition \ref{prop:DB-Aux}, we have 
\begin{align}
\left|\dfrac{1}{n}\sum_{t=1}^{n}\check{r}_{j,t-1}^{2}-\frac{1}{n}\sum_{t=1}^{n}\zeta_{j,t-1}^{2}\right| & =O_{p}\left(n^{\tau-\tau\wedge(1-\tau)}(\log p)^{4+1/r}\right)+O_{p}(n^{\tau/2})\cdot o_{p}(n^{\tau/2})\nonumber \\
 & =o_{p}\left(n^{\tau}\right).\label{eq:temp=000020check=0000204}
\end{align}
In addition, (\ref{eq:zeta=0000202=000020lim}) implies $\frac{1}{n}\sum_{t=1}^{n}\zeta_{j,t-1}^{2}\nep n^{\tau}$.
Then by (\ref{eq:temp=000020check=0000204}), 
\[
\left|\dfrac{\dfrac{1}{n}\sum_{t=1}^{n}\check{r}_{j,t-1}^{2}}{\frac{1}{n}\sum_{t=1}^{n}\zeta_{j,t-1}^{2}}-1\right|\convp0
\]
as $n\to\infty$, or equivalently
\begin{equation}
\dfrac{1}{n}\sum_{t=1}^{n}\check{r}_{j,t-1}^{2}\bigg/\frac{1}{n}\sum_{t=1}^{n}\zeta_{j,t-1}^{2}\convp1.\label{eq:check=000020r2=000020zeta=0000202=0000201}
\end{equation}
Recall from (\ref{eq:zeta=0000202=000020lim}) and (\ref{eq:sigma=000020IV=000020ntau=000020lim})
that $\frac{1}{n^{1+\tau}}\sum_{t=1}^{n}\zeta_{j,t-1}^{2}$ and $\frac{\hat{\varsigma}_{j}^{2}}{n^{\tau}}$
have the same probability limit, and thus 
\begin{equation}
\frac{1}{n}\sum_{t=1}^{n}\zeta_{j,t-1}^{2}\bigg/\hat{\varsigma}_{j}^{2}\convp1.\label{eq:zeta=000020std=000020lim=0000201}
\end{equation}
Recall that $\hat{r}_{j,t}=\check{r}_{j,t-1}/\hat{\varsigma}_{j}$
as shown in (\ref{eq:r=000020check=000020r=000020hat}). Thus, (\ref{eq:check=000020r2=000020zeta=0000202=0000201})
and (\ref{eq:zeta=000020std=000020lim=0000201}) imply 
\[
\dfrac{1}{n}\sum_{t=1}^{n}\hat{r}_{j,t-1}^{2}=\dfrac{1}{n}\sum_{t=1}^{n}\check{r}_{j,t-1}^{2}/\hat{\varsigma}_{j}^{2}\convp1.
\]
Then (\ref{eq:r=000020hat=0000202=000020lim}) is verified for CASE
I.

\textbf{CASE II: }$j\in\mathcal{M}_{z}$. The definition of $\hat{r}_{j,t}$
gives 
\begin{equation}
\sum_{t=1}^{n}\hat{r}_{j,t-1}w_{j,t-1}=\sum_{t=1}^{n}\tilde{\eta}_{j,t}w_{j,t-1}+\sum_{t=1}^{n}w_{j,t-1}W_{-j,t-1}^{\top}\left(\hat{\varphi}^{(j)}-\varphi^{*(j)}\right),\label{eq:r=000020hat=000020w=000020decom}
\end{equation}
where $\tilde{\eta}_{j,t}$ is defined below (\ref{eq:zeta=000020I0=000020pseudo=000020true}).
Note that 
\begin{align*}
\dfrac{1}{n}\sum_{t=1}^{n}\tilde{\eta}_{j,t}w_{j,t-1} & =\frac{1}{n\hat{\varsigma}_{j}}\sum_{t=1}^{n}w_{j,t-1}(w_{j,t-1}-Z_{-j,t-1}^{\top}\varphi_{0z}^{(j)*})\\
 & =\frac{1}{n\hat{\varsigma}_{j}}\sum_{t=1}^{n}w_{j,t-1}\eta_{j,t-1}^{(1)}.
\end{align*}
By (\ref{eq:sigma=000020IV=000020I0=000020lim}), we have 
\begin{equation}
\dfrac{1}{n}\sum_{t=1}^{n}\tilde{\eta}_{j,t}w_{j,t-1}\convp\dfrac{{\rm cov}\left(w_{j,t},\eta_{j,t}^{(1)}\right)}{\sqrt{{\rm var}\left(w_{j,t}\right)}}\label{eq:limit=000020eta=000020w}
\end{equation}
where $\eta_{j,t}^{(1)}=w_{j,t}-Z_{-j,t}^{\top}\varphi_{0z}^{(j)*}$
was defined in (\ref{eq:def=000020eta=0000201}). In addition, we
deduce that 
\begin{align*}
\|n^{-1}\sum_{t=1}^{n}D_{-j}^{-1}W_{-j,t-1}w_{j,t-1}\|_{\infty} & \leq\dfrac{\max_{k\in\mathcal{M}_{x}}\left|n^{-1}\sum_{t=1}^{n}w_{k,t-1}w_{j,t-1}\right|}{\min_{k\in\mathcal{M}_{x}}\hat{\sigma}_{k}}+\dfrac{\max_{k\in\mathcal{M}_{z}}\left|n^{-1}\sum_{t=1}^{n}w_{k,t-1}w_{j,t-1}\right|}{\min_{k\in\mathcal{M}_{z}}\hat{\sigma}_{k}}\\
 & =O_{p}\left(\sqrt{\dfrac{\log p}{n}}\cdot(\log p)^{1+\frac{1}{2r}}\right)+O_{p}(1)=O_{p}(1),
\end{align*}
where the second step follows (\ref{eq:DB=000020e}) bounding the
cross product between LURs and a stationary component, (\ref{eq:sd=000020bound=000020ST=000020MI})
and (\ref{eq:sd=000020bound=000020ST=000020MI}) bounding the standard
deviations, and the fact that 
\[
\max_{k\in\mathcal{M}_{z}}\left|n^{-1}\sum_{t=1}^{n}w_{k,t-1}w_{j,t-1}\right|=O_{p}(1)
\]
following MS24's Eq.(B.28). Thus, by Proposition \ref{prop:DB-Aux}
in our main text,
\begin{align}
\left|\dfrac{1}{n}\sum_{t=1}^{n}w_{j,t-1}W_{-j,t-1}^{\top}\left(\hat{\varphi}^{(j)}-\varphi^{*(j)}\right)\right| & \leq\|n^{-1}\sum_{t=1}^{n}D_{-j}^{-1}W_{-j,t-1}w_{j,t-1}\|_{\infty}\|D_{-j}(\hat{\varphi}^{(j)}-\varphi^{*(j)})\|_{1}\nonumber \\
 & =o_{p}(1).\label{eq:=000020small=000020term=0000201}
\end{align}
Combining (\ref{eq:r=000020hat=000020w=000020decom}), (\ref{eq:limit=000020eta=000020w}),
and (\ref{eq:=000020small=000020term=0000201}), 
\begin{equation}
\dfrac{1}{n}\sum_{t=1}^{n}\hat{r}_{j,t-1}w_{j,t-1}\convp\dfrac{{\rm cov}\left(w_{j,t},\eta_{j,t}^{(1)}\right)}{\sqrt{{\rm var}\left(w_{j,t}\right)}}.\label{eq:=000020hat=000020r=000020w=000020lim}
\end{equation}
Thus,
\[
\dfrac{\left|\sum_{t=1}^{n}\hat{r}_{j,t-1}w_{j,t-1}\right|}{\sum_{t=1}^{n}\hat{r}_{j,t-1}w_{j,t-1}}\convp\dfrac{\left|{\rm cov}\left(w_{j,t},\eta_{j,t}^{(1)}\right)\right|}{{\rm cov}\left(w_{j,t},\eta_{j,t}^{(1)}\right)}={\rm sgn}\left({\rm cov}\left(w_{j,t},\eta_{j,t}^{(1)}\right)\right),
\]
which together with (\ref{eq:consistent=000020Omega=000020u}) implies
(\ref{eq:lim=000020sgn}) for $j\in\mathcal{M}_{z}$. 

For (\ref{eq:r=000020hat=0000202=000020lim}), note that $\hat{r}_{j,t}$
is the LASSO residual of regression (\ref{eq:def-rhat}). Recall that
$\tilde{\eta}_{j,t}$ is the error term of the pseudo-true model (\ref{eq:zeta=000020I0=000020pseudo=000020true}).
Following the ideas in the proof of (\ref{eq:uhat2=000020u2}), we
can show that $n^{-1}\sum_{t=1}^{n}\hat{r}_{j,t}^{2}$ and $n^{-1}\sum_{t=1}^{n}\tilde{\eta}_{j,t}^{2}$
share the same probability limit, which is 
\begin{align}
{\rm plim}_{n\to\infty}\dfrac{1}{n}\sum_{t=1}^{n}\hat{r}_{j,t}^{2} & ={\rm plim}_{n\to\infty}\dfrac{1}{n}\sum_{t=1}^{n}\tilde{\eta}_{j,t}^{2}.\nonumber \\
 & ={\rm plim}_{n\to\infty}\dfrac{\dfrac{1}{n}\sum_{t=1}^{n}\eta_{j,t}^{2}}{\hat{\varsigma}_{j}^{2}}=\dfrac{{\rm plim}_{n\to\infty}\dfrac{1}{n}\sum_{t=1}^{n}\eta_{j,t}^{2}}{{\rm var}(w_{j,t})},\label{eq:plim=000020hat=000020r=0000202}
\end{align}
where the last step applies (\ref{eq:sigma=000020IV=000020I0=000020lim}).
In addition, note that 
\begin{align}
\eta_{j,t} & =\eta_{j,t-1}^{(1)}+\zeta_{j,t}-w_{j,t}=\eta_{j,t-1}^{(1)}-n^{-\tau}C_{\zeta}\phi_{j,t-1},\label{eq:eta=000020eta(1)=000020phi}
\end{align}
where the first equality applies (\ref{eq:eta=000020(1)=000020eta}),
and the second equality applies (\ref{eq:decom=000020zeta=000020z}).
Therefore,
\begin{align*}
\dfrac{1}{n}\sum_{t=1}^{n}\eta_{j,t}^{2} & =\dfrac{1}{n}\sum_{t=1}^{n}(\eta_{j,t-1}^{(1)})^{2}+\dfrac{C_{\zeta}^{2}}{n^{2\tau+1}}\sum_{t=1}^{n}\phi_{j,t-1}^{2}-\dfrac{2C_{\zeta}}{n^{1+\tau}}\sum_{t=1}^{n}\phi_{j,t-1}\eta_{j,t-1}^{(1)}\\
 & =\dfrac{1}{n}\sum_{t=1}^{n}(\eta_{j,t-1}^{(1)})^{2}+o_{p}(1)\\
 & \convp{\rm var}(\eta_{j,t}^{(1)}),
\end{align*}
where the second row applies the same arguments for (\ref{eq:zeta=0000202=000020w=0000202}).
Then we have 
\[
{\rm plim}_{n\to\infty}\dfrac{1}{n}\sum_{t=1}^{n}\hat{r}_{j,t}^{2}=\dfrac{{\rm plim}_{n\to\infty}\dfrac{1}{n}\sum_{t=1}^{n}\eta_{j,t}^{2}}{{\rm var}(w_{j,t})}=\dfrac{{\rm var}(\eta_{j,t}^{(1)})}{{\rm var}(w_{j,t})}.
\]
This completes the proof of Lemma \ref{lem:s.e.consis}.
\end{proof}

\subsubsection{Proofs of main results in Section \ref{subsec:Asymptotic-Normality}}
\begin{proof}[Proof of Theorem \ref{thm:CLT}]
By the definition of the XDlasso estimator, 
\begin{align*}
\hat{\theta}_{j}^{{\rm XD}}-\theta_{j}^{*} & =\frac{\sum_{t=1}^{n}\hat{r}_{j,t-1}u_{t}}{\sum_{t=1}^{n}\hat{r}_{j,t-1}w_{j,t-1}}+\frac{\sum_{t=1}^{n}\hat{r}_{j,t-1}W_{-j,t-1}^{\top}(\theta_{-j}^{*}-\hat{\theta}_{-j})}{\sum_{t=1}^{n}\hat{r}_{j,t-1}w_{j,t-1}}.
\end{align*}
Then the $t$-statistic can be decomposed as 
\begin{align}
\dfrac{\hat{\theta}_{j}^{{\rm XD}}-\theta_{j}^{*}}{\hat{\omega}_{j}^{{\rm XD}}} & =\mathcal{Z}_{j}+\varDelta_{j},\text{ where }\nonumber \\
\mathcal{Z}_{j} & =\dfrac{\left|\sum_{t=1}^{n}\hat{r}_{j,t-1}w_{j,t}\right|}{\sum_{t=1}^{n}\hat{r}_{j,t-1}w_{j,t}}\cdot\dfrac{\sigma_{u}}{\hat{\sigma}_{u}}\cdot\dfrac{\sum_{t=1}^{n}\hat{r}_{j,t-1}u_{t}}{\text{\ensuremath{\sigma_{u}\sqrt{\sum_{t=1}^{n}\hat{r}_{j,t-1}^{2}}}}},\label{eq:Zj=000020def}\\
\varDelta_{j} & =\dfrac{\sum_{t=1}^{n}\hat{r}_{j,t-1}W_{-j,t-1}^{\top}(\theta_{-j}^{*}-\hat{\theta}_{-j})}{\ensuremath{\sqrt{\sum_{t=1}^{n}\hat{r}_{j,t-1}^{2}}}}.\label{eq:Delta=000020j=000020def}
\end{align}
We first bound $\varDelta_{j}$. By the Karush-Kuhn-Tucker condition,
we can establish
\begin{equation}
\|\sum_{t=1}^{n}D_{-j}W_{-j,t-1}\hat{r}_{j,t-1}\|_{\infty}\leq\dfrac{C_{{\rm a}}(\log p)^{2+\frac{1}{2r}}}{\sqrt{n^{(1-\tau)\wedge\tau}}}\label{eq:KKT=000020app}
\end{equation}
 w.p.a.1 as in (\ref{eq:KKT}). Thus, 

\begin{align}
\left|\varDelta_{j}\right| & \leq\sqrt{n}\dfrac{\left|n^{-1}\sum_{t=1}^{n}\hat{r}_{j,t-1}W_{-j,t-1}^{\top}D_{-j}^{-1}D_{-j}(\theta_{-j}^{*}-\hat{\theta}_{-j})\right|}{\ensuremath{\sqrt{n^{-1}\sum_{t=1}^{n}\hat{r}_{j,t-1}^{2}}}}\nonumber \\
 & \leq\dfrac{\sqrt{n}\cdot\|n^{-1}\sum_{t=1}^{n}D_{-j}^{-1}W_{-j,t-1}\hat{r}_{j,t-1}\|_{\infty}\cdot\|D_{-j}(\theta{}_{-j}^{*}-\hat{\theta}_{-j})\|_{1}}{\ensuremath{\sqrt{n^{-1}\sum_{t=1}^{n}\hat{r}_{j,t-1}^{2}}}}\nonumber \\
 & \lep\dfrac{\sqrt{n}}{\ensuremath{\sqrt{n^{-1}\sum_{t=1}^{n}\hat{r}_{j,t-1}^{2}}}}\cdot\left(\dfrac{(\log p)^{2+\frac{1}{2r}}}{\sqrt{n^{(1-\tau)\wedge\tau}}}\right)\cdot\left(\dfrac{s^{2}}{\sqrt{n}}(\log p)^{6+\frac{1}{2r}}\right)\nonumber \\
 & \lep\dfrac{s^{2}(\log p)^{8+\frac{1}{r}}}{\sqrt{n^{\tau\wedge(1-\tau)}}}\to0\label{eq:Delta=000020j=000020rate}
\end{align}
where the third inequality applies (\ref{eq:KKT=000020app}) and Lemma
\ref{lem:SlassoError}, and the fourth inequality applies
\begin{equation}
n^{-1}\sum_{t=1}^{n}\hat{r}_{j,t-1}^{2}\nep1\label{eq:r=000020hat=0000202=0000201}
\end{equation}
 implied by (\ref{eq:r=000020hat=0000202=000020lim}). 

It then suffices to show that $\mathcal{Z}_{j}\convd\mathcal{N}(0,1).$\textbf{
}Given the limit in (\ref{eq:lim=000020sgn}), when $j\in\mathcal{M}_{z}$
it suffices to show 
\begin{equation}
\mathcal{Z}_{j}^{(1)}:=\dfrac{\sum_{t=1}^{n}\hat{r}_{j,t-1}u_{t}}{\text{\ensuremath{\sigma_{u}\sqrt{\sum_{t=1}^{n}\hat{r}_{j,t-1}^{2}}}}}\convd\mathcal{N}(0,1).\label{eq:proof=000020normal}
\end{equation}
When $j\in\mathcal{M}_{x}$, we need to additionally show the asymptotic
distribution in (\ref{eq:proof=000020normal}) is independent of the
Brownian motion in the $G_{j}^{*}$ of limit (\ref{eq:lim=000020sgn}),
so that 
\begin{align}
\mathcal{Z}_{j} & =\dfrac{\left|\sum_{t=1}^{n}\hat{r}_{j,t-1}w_{j,t}\right|}{\sum_{t=1}^{n}\hat{r}_{j,t-1}w_{j,t}}\cdot\dfrac{\sigma_{u}}{\hat{\sigma}_{u}}\cdot\mathcal{Z}_{j}^{(1)}\label{eq:z=000020z1}\\
 & \convd{\rm sgn}(G_{j}^{*})\cdot\mathcal{N}(0,1)=\mathcal{MN}\left(0,{\rm sgn}(G_{j}^{*})^{2}\right)=\mathcal{N}(0,1),\nonumber 
\end{align}
where $\mathcal{MN}$ denotes a mixed normal distribution. 

\textbf{CASE I}. When $j\in\mathcal{M}_{x}$, 
\[
\mathcal{Z}_{j}^{(1)}=-\dfrac{n^{-1/2}\sum_{t=1}^{n}u_{t}W_{-j,t-1}^{\top}\hat{\varphi}^{(j)}}{\text{\ensuremath{\sqrt{\dfrac{1}{n}\sum_{t=1}^{n}\hat{r}_{j,t-1}^{2}}}}}+\dfrac{n^{-1/2}\sum_{t=1}^{n}\tilde{\zeta}_{j,t-1}u_{t}}{\text{\ensuremath{\sqrt{\dfrac{1}{n}\sum_{t=1}^{n}\hat{r}_{j,t-1}^{2}}}}}\cdot
\]
We bound the first term by 
\begin{align}
\left|n^{-1/2}\sum_{t=1}^{n}u_{t}W_{-j,t-1}^{\top}\hat{\varphi}^{(j)}\right| & \lep\|n^{-1/2}\sum_{t=1}^{n}D_{-j}^{-1}W_{-j,t-1}u_{t}\|_{\infty}\cdot\|D_{-j}\hat{\varphi}^{(j)}\|_{1}\nonumber \\
 & \lep(\log p)^{3/2+1/(2r)}\cdot\dfrac{s^{2}(\log p)^{6+\frac{1}{2r}}}{\sqrt{n^{\tau\wedge(1-\tau)}}}\label{eq:first=000020term}
\end{align}
where the second step applies (\ref{eq:DB}) and Proposition \ref{prop:DB-Aux}.
Therefore, by (\ref{eq:r=000020hat=0000202=000020lim})
\[
\dfrac{\left|n^{-1/2}\sum_{t=1}^{n}u_{t}W_{-j,t-1}^{\top}\hat{\varphi}^{(j)}\right|}{\text{\ensuremath{\sqrt{\dfrac{1}{n}\sum_{t=1}^{n}\hat{r}_{j,t-1}^{2}}}}}\lep\dfrac{s^{2}(\log p)^{\frac{15}{2}+\frac{1}{r}}}{\sqrt{n^{\tau\wedge(1-\tau)}}}\to0,
\]
which implies 
\begin{equation}
\mathcal{Z}_{j}^{(1)}=\dfrac{n^{-1/2}\sum_{t=1}^{n}\tilde{\zeta}_{j,t-1}u_{t}}{\text{\ensuremath{\sqrt{\dfrac{1}{n}\sum_{t=1}^{n}\hat{r}_{j,t-1}^{2}}}}}+o_{p}(1)\cdot\label{eq:Z1=000020MX}
\end{equation}
In addition, by the central limit theorem in Lemma B4(ii) of \citet{kostakis2015robust}
and Eq. (28) in their appendix, the law of large numbers Eq. (13)
and (21) in the appendix of the same reference, and the Slutsky's
Theorem, we have
\begin{equation}
\frac{n^{-1/2}\sum_{t=1}^{n}\zeta_{j,t-1}u_{t}}{\sigma_{u}\sqrt{\frac{1}{n}\sum_{t=1}^{n}\zeta_{j,t-1}^{2}}}\convd\mathcal{N}(0,1).\label{eq:generic=000020CLT}
\end{equation}
Besides, \citet{phillips2009econometric}'s Lemma 3.2 shows that the
asymptotic distribution in (\ref{eq:generic=000020CLT}) is independent
of the Brownian motion in the expression (\ref{eq:express=000020G=000020star})
of $G_{j}^{*}$. Also, recall that $\tilde{\zeta}_{j,t-1}=\zeta_{j,t-1}/\hat{\varsigma}_{j}$
and $\check{r}_{j,t-1}=\hat{r}_{j,t-1}\hat{\varsigma}_{j}.$ By (\ref{eq:check=000020r2=000020zeta=0000202=0000201})
and the Slutsky's Theorem, we have 
\[
\dfrac{n^{-1/2}\sum_{t=1}^{n}\tilde{\zeta}_{j,t-1}u_{t}}{\sigma_{u}\text{\ensuremath{\sqrt{\dfrac{1}{n}\sum_{t=1}^{n}\hat{r}_{j,t-1}^{2}}}}}=\dfrac{n^{-1/2}\sum_{t=1}^{n} {\zeta}_{j,t-1}u_{t}}{\sigma_{u}\text{\ensuremath{\sqrt{\dfrac{1}{n}\sum_{t=1}^{n}\check{r}_{j,t-1}^{2}}}}}=\sqrt{\dfrac{\frac{1}{n}\sum_{t=1}^{n}\zeta_{j,t-1}^{2}}{\dfrac{1}{n}\sum_{t=1}^{n}\check{r}_{j,t-1}^{2}}}\cdot\dfrac{n^{-1/2}\sum_{t=1}^{n}\zeta_{j,t-1}u_{t}}{\sigma_{u}\sqrt{\frac{1}{n}\sum_{t=1}^{n}\zeta_{j,t-1}^{2}}}\convd\mathcal{N}(0,1).
\]
This completes the proof of (\ref{eq:proof=000020normal}) when $j\in\mathcal{M}_{x}.$

\textbf{CASE II}. When $j\in\mathcal{M}_{z}$, recall that we have
defined $\eta_{j,t}=\zeta_{j,t}-Z_{-j,t}^{\top}\varphi_{0,z}^{(j)*}$
in (\ref{eq:def=000020eta=000020jt}), with $\varphi_{0,z}^{(j)*}$
defined as (\ref{eq:phi=000020j=0000200=000020z}). Then 
\begin{align}
\hat{r}_{j,t} & =\tilde{\zeta}_{j,t}-W_{-j,t}^{\top}\hat{\varphi}^{(j)}\nonumber \\
 & =W_{-j,t}^{\top}\left(\varphi^{*(j)}-\hat{\varphi}^{(j)}\right)+\dfrac{\eta_{j,t}}{\hat{\varsigma}_{j}}\nonumber \\
 & =W_{-j,t}^{\top}\left(\varphi^{*(j)}-\hat{\varphi}^{(j)}\right)+\dfrac{\eta_{j,t-1}^{(1)}-n^{-\tau}C_{\zeta}\phi_{j,t-1}}{\hat{\varsigma}_{j}},\label{eq:hat=000020r=000020Lasso}
\end{align}
where the first equality is by the definition of $\hat{r}_{j,t}$
in (\ref{eq:def-rhat}), the second row applies the pseudo-true regression
model (\ref{eq:zeta=000020I0=000020pseudo=000020true}) and the equality
(\ref{eq:eta=000020tilde=000020eta}), and the third row applies (\ref{eq:eta=000020eta(1)=000020phi}).
Then by the definition of $\mathcal{Z}_{0,j}$ in (\ref{eq:proof=000020normal}),
we have the following decomposition 
\begin{equation}
\mathcal{Z}_{j}^{(1)}=\dfrac{n^{-1/2}\sum_{t=1}^{n}u_{t}W_{-j,t-1}^{\top}\left(\varphi^{*(j)}-\hat{\varphi}^{(j)}\right)}{\text{\ensuremath{\sqrt{\dfrac{1}{n}\sum_{t=1}^{n}\hat{r}_{j,t-1}^{2}}}}}+\dfrac{n^{-1/2}\sum_{t=1}^{n}\left(\eta_{j,t-1}^{(1)}+n^{-\tau}\phi_{j,t-1}\right)u_{t}}{\hat{\varsigma}_{j}\text{\ensuremath{\sqrt{\dfrac{1}{n}\sum_{t=1}^{n}\hat{r}_{j,t-1}^{2}}}}}.\label{eq:Z=000020j=0000203=000020terms}
\end{equation}
We first bound the first term by 

\begin{align}
\left|n^{-1/2}\sum_{t=1}^{n}u_{t}W_{-j,t-1}^{\top}\left(\varphi^{*(j)}-\hat{\varphi}^{(j)}\right)\right| & \lep\|n^{-1/2}\sum_{t=1}^{n}D_{-j}^{-1}W_{-h,t-1}u_{t}\|_{\infty}\cdot\|D_{-j}\left(\varphi^{*(j)}-\hat{\varphi}^{(j)}\right)\|_{1}\nonumber \\
 & \lep(\log p)^{3/2+1/(2r)}\cdot\dfrac{s^{2}(\log p)^{6+\frac{1}{2r}}}{\sqrt{n^{\tau\wedge(1-\tau)}}}\to0,\label{eq:term=0000201=000020lim=0000200-1}
\end{align}
where the second step applies MS24's (B.63), and Proposition \ref{prop:DB-Aux}
in this current paper. By (\ref{eq:r=000020hat=0000202=000020lim}),
we have $\dfrac{1}{n}\sum_{t=1}^{n}\hat{r}_{j,t-1}^{2}\nep1$ and
thus 
\begin{equation}
\left|\dfrac{n^{-1/2}\sum_{t=1}^{n}u_{t}W_{-j,t-1}^{\top}\left(\varphi^{*(j)}-\hat{\varphi}^{(j)}\right)}{\text{\ensuremath{\sqrt{\dfrac{1}{n}\sum_{t=1}^{n}\hat{r}_{j,t-1}^{2}}}}}\right|=o_{p}(1).\label{eq:term=0000201=000020lim=0000200}
\end{equation}

We then show the central limit theorem for the second term. Recall
that $u_{t}$ is m.d.s.~by Assumption \ref{assu:tail}, and $\eta_{j,t}^{(1)}$
is stationary and strong mixing. By a standard martingale central
limit theorem we have $\frac{n^{-1/2}\sum_{t=1}^{n}\eta_{j,t-1}^{(1)}u_{t}}{\sigma_{u}\sqrt{{\rm var}(\eta_{j,t}^{(1)})}}\convd\mathcal{N}(0,1).$
By (\ref{eq:r=000020hat=0000202=000020lim}) and (\ref{eq:sigma=000020IV=000020I0=000020lim}),
we have $\hat{\varsigma}_{j}^{2}\cdot\dfrac{1}{n}\sum_{t=1}^{n}\hat{r}_{j,t-1}^{2}\convp{\rm var}(\eta_{j,t}^{(1)})$.
Thus, by the Slutsky's Theorem we have 
\begin{equation}
\dfrac{n^{-1/2}\sum_{t=1}^{n}\eta_{j,t-1}^{(1)}u_{t}}{\text{\ensuremath{\sigma_{u}\hat{\varsigma}_{j}\cdot\sqrt{\dfrac{1}{n}\sum_{t=1}^{n}\hat{r}_{j,t-1}^{2}}}}}=\sqrt{\dfrac{{\rm var}(\eta_{j,t}^{(1)})}{\hat{\varsigma}_{j}^{2}\cdot\dfrac{1}{n}\sum_{t=1}^{n}\hat{r}_{j,t-1}^{2}}}\cdot\frac{n^{-1/2}\sum_{t=1}^{n}\eta_{j,t-1}^{(1)}u_{t}}{\sigma_{u}\sqrt{{\rm var}(\eta_{j,t}^{(1)})}}\convd\mathcal{N}(0,1).\label{eq:term=0000202}
\end{equation}
Finally, note that $\phi_{j,t-1}$ is mildly integrated. Again by
\citet[Lemma B4(ii)]{kostakis2015robust} we have $\sum_{t=1}^{n}\phi_{j,t-1}u_{t}=O_{p}\left(n^{(\tau+1)/2}\right),$
and thus 
\begin{equation}
n^{-1/2}\sum_{t=1}^{n}n^{-\tau}\phi_{j,t-1}u_{t}\convp0.\label{eq:phi=000020u=000020lim=0000200}
\end{equation}
By (\ref{eq:r=000020hat=0000202=000020lim}) and (\ref{eq:sigma=000020IV=000020I0=000020lim}),
we further have $\hat{\varsigma}_{j}\text{\ensuremath{\sqrt{\dfrac{1}{n}\sum_{t=1}^{n}\hat{r}_{j,t-1}^{2}}}}\nep1$
and thus 
\begin{equation}
\dfrac{n^{-1/2}\sum_{t=1}^{n}n^{-\tau}\phi_{j,t-1}u_{t}}{\hat{\varsigma}_{j}\text{\ensuremath{\sqrt{\dfrac{1}{n}\sum_{t=1}^{n}\hat{r}_{j,t-1}^{2}}}}}\convp0.\label{eq:term=0000203=000020lim=0000200}
\end{equation}
By (\ref{eq:Z=000020j=0000203=000020terms}), (\ref{eq:term=0000201=000020lim=0000200-1}),
(\ref{eq:term=0000202}), and (\ref{eq:term=0000203=000020lim=0000200}),
we have 
\begin{align}
\mathcal{Z}_{j}^{(1)} & =o_{p}(1)+\dfrac{n^{-1/2}\sum_{t=1}^{n}\eta_{j,t-1}^{(1)}u_{t}}{\text{\ensuremath{\sigma_{u}\hat{\varsigma}_{j}\cdot\sqrt{\dfrac{1}{n}\sum_{t=1}^{n}\hat{r}_{j,t-1}^{2}}}}}+o_{p}(1)\label{eq:Z1=000020Mz}\\
 & \convd \mathcal{N}(0,1),
\end{align}
which verifies (\ref{eq:proof=000020normal}) when $j\in\mathcal{M}_{z}.$
This completes the proof of Theorem \ref{thm:CLT}. 
\end{proof}
\begin{proof}[Proof of Theorem \ref{thm:length}]
Recall that 
\begin{equation}
\hat{\omega}_{j}^{{\rm XD}}=\frac{\text{\ensuremath{\sqrt{\sum_{t=1}^{n}\hat{r}_{j,t-1}^{2}}}}}{\left|\sum_{t=1}^{n}\hat{r}_{j,t-1}w_{j,t-1}\right|}=\frac{\text{\ensuremath{\sqrt{\sum_{t=1}^{n}\check{r}_{j,t-1}^{2}}}}}{\left|\sum_{t=1}^{n}\check{r}_{j,t-1}w_{j,t-1}\right|}\label{eq:robustse}
\end{equation}
where $\check{r}_{j,t}=\hat{r}_{j,t}\cdot\hat{\varsigma}_{j}$ as
defined in (\ref{eq:r=000020check=000020r=000020hat}). 

\textbf{CASE I}. When $j\in\mathcal{M}_{x}$, (\ref{eq:check=000020r2=000020zeta=0000202=0000201})
and (\ref{eq:zeta=0000202=000020lim}) implies
\begin{equation}
\dfrac{1}{n^{1+\tau}}\sum_{t=1}^{n}\check{r}_{j,t-1}^{2}=\dfrac{\sum_{t=1}^{n}\check{r}_{j,t-1}^{2}}{\sum_{t=1}^{n}\zeta_{j,t-1}^{2}}\cdot\dfrac{1}{n^{1+\tau}}\sum_{t=1}^{n}\zeta_{j,t-1}^{2}\convp\dfrac{{\rm lvar}(e_{j,t})}{2C_{\zeta}},\label{eq:r=000020check=000020limit}
\end{equation}
which implies $\sum_{t=1}^{n}\hat{r}_{j,t-1}^{2}=O_{p}(n^{1+\tau}).$
In addition, the weak convergence (\ref{eq:check=000020r=000020w=000020weak=000020lim})
implies that $\frac{1}{\left|\sum_{t=1}^{n}\check{r}_{j,t-1}w_{j,t-1}\right|}=O_{p}\left(\frac{1}{n^{1+\tau}}\right)$.
Then 
\[
\hat{\omega}_{j}^{{\rm XD}}=O_{p}\left(\dfrac{\sqrt{n^{1+\tau}}}{n^{1+\tau}}\right)=O_{p}\left(\dfrac{1}{n^{(1+\tau)/2}}\right).
\]

\textbf{CASE II}. When $j\in\mathcal{M}_{z}$, by (\ref{eq:=000020hat=000020r=000020w=000020lim})
and (\ref{eq:r=000020hat=0000202=000020lim}), 
\[
\hat{\omega}_{j}^{{\rm XD}}=O_{p}\left(\dfrac{\sqrt{n}}{n}\right)=O_{p}\left(\dfrac{1}{\sqrt{n}}\right).
\]
We complete the proof of Theorem \ref{thm:length}. 
\end{proof}
\begin{proof}[Proof of Theorem \ref{thm:wald}]
Define $\hat{\Pi}_{j}=\sum_{t=1}^{n}\check{r}_{j,t-1}\left[u_{t}+W_{-j,t-1}^{\top}(\theta_{-j}^{*}-\hat{\theta}_{-j})\right]$,
and $\hat{\Pi}_{\mathcal{A}}=(\hat{\Pi}_{j})_{j\in\mathcal{A}}$for
any subset $\mathcal{A}\in[p]$. Note that 
\begin{align}
\hat{\theta}_{j}^{{\rm XD}}-\theta_{j}^{*} & =\dfrac{\sum_{t=1}^{n}\hat{r}_{j,t-1}\left[u_{t}+W_{-j,t-1}^{\top}(\theta_{-j}^{*}-\hat{\theta}_{-j})\right]}{\sum_{t=1}^{n}\hat{r}_{j,t-1}w_{j,t-1}}=\dfrac{\hat{\Pi}_{j}}{\sum_{t=1}^{n}\check{r}_{j,t-1}w_{j,t-1}},\label{eq:theta=000020XD=000020theta}
\end{align}
where the second row applies the fact the equality $\check{r}_{j,t}=\hat{r}_{j,t}\hat{\varsigma}_{j}$
in (\ref{eq:r=000020check=000020r=000020hat}). Furthermore, define
the matrix $\hat{\Theta}_{\mathcal{J}}=(\sum_{t=1}^{n}\check{r}_{j,t-1}\check{r}_{k,t-1})_{j\in\mathcal{J},k\in\mathcal{J}}.$
Also, note that 
\begin{align}
\hat{\Omega}_{\mathcal{J}}^{{\rm XD}} & =\hat{\sigma}_{u}^{2}\left(\dfrac{\sum_{t=1}^{n}\hat{r}_{j,t-1}\hat{r}_{k,t-1}}{\sum_{t=1}^{n}\hat{r}_{j,t-1}w_{j,t-1}\sum_{t=1}^{n}\hat{r}_{k,t-1}w_{k,t-1}}\right)_{j,k\in\mathcal{J}}\nonumber \\
 & =\hat{\sigma}_{u}^{2}\left(\dfrac{\sum_{t=1}^{n}\check{r}_{j,t-1}\check{r}_{k,t-1}}{\sum_{t=1}^{n}\check{r}_{j,t-1}w_{j,t-1}\sum_{t=1}^{n}\check{r}_{k,t-1}w_{k,t-1}}\right)_{j,k\in\mathcal{J}}\nonumber \\
 & =\hat{\sigma}_{u}^{2}\left[{\rm diag}(\sum_{t=1}^{n}\check{r}_{j,t-1}w_{j,t-1})_{j\in\mathcal{J}}\right]^{-1}\hat{\Theta}_{\mathcal{J}}\left[{\rm diag}(\sum_{t=1}^{n}\check{r}_{j,t-1}w_{j,t-1})_{j\in\mathcal{J}}\right]^{-1}.\label{eq:Omega=000020XD=000020Theta}
\end{align}
By (\ref{eq:theta=000020XD=000020theta}) and (\ref{eq:Omega=000020XD=000020Theta}),
some fundamental calculation yields that under $\mathbb{H}_{0}:\theta_{\mathcal{J}}^{*}=\theta_{0,\mathcal{J}},$
\[
{\rm Wald}_{\mathcal{J}}^{{\rm XD}}=\dfrac{1}{\hat{\sigma}_{u}^{2}}\hat{\Pi}_{\mathcal{J}}^{\top}\hat{\Theta}_{\mathcal{J}}^{-1}\hat{\Pi}_{\mathcal{J}}.
\]
The proof will consist of the following essential steps:

1. Show that 
\begin{equation}
\dfrac{1}{\sqrt{n^{1+\tau}}}\left|\hat{\Pi}_{j}-\sum_{t=1}^{n}\zeta_{j,t-1}u_{t}\right|=o_{p}(1)\text{ for any fixed }j\in\mathcal{M}_{x}.\label{eq:pi=000020x=000020rate}
\end{equation}

2. Show that 
\begin{equation}
\dfrac{1}{\sqrt{n}}\left|\hat{\Pi}_{j}-\sum_{t=1}^{n}\eta_{j,t-1}^{(1)}u_{t}\right|=o_{p}(1)\text{ for any fixed }j\in\mathcal{M}_{z}.\label{eq:pi=000020z=000020rate}
\end{equation}

3. Show that 
\begin{equation}
\left(\begin{array}{c}
\dfrac{1}{\sqrt{n^{1+\tau}}}\sum_{t=1}^{n}\zeta_{\mathcal{J}_{x},t-1}u_{t}\\
\dfrac{1}{\sqrt{n}}\sum_{t=1}^{n}\eta_{\mathcal{J}_{z},t-1}^{(1)}u_{t}
\end{array}\right)\convd\mathcal{N}\left(0,\sigma_{u}^{2}\Theta_{\mathcal{J}}\right),\label{eq:joint=000020normal}
\end{equation}
where $\zeta_{\mathcal{J}_{x},t-1}=(\zeta_{j,t-1})_{j\in\mathcal{J}_{x}}$,
$\eta_{\mathcal{J}_{z},t-1}^{(1)}=(\eta_{j,t-1}^{(1)})_{j\in\mathcal{M}_{z}}$,
and $\Theta_{\mathcal{J}}$ is a nonrandom positive definite matrix.

4. Show that 
\begin{equation}
\tilde{\Theta}_{\mathcal{J}}:=\left(\begin{array}{cc}
\sqrt{n^{1+\tau}}\mathbf{I}_{|\mathcal{J}_{x}|}\\
 & \sqrt{n}\mathbf{I}_{|\mathcal{J}_{z}|}
\end{array}\right)^{-1}\hat{\Theta}_{\mathcal{J}}\left(\begin{array}{cc}
\sqrt{n^{1+\tau}}\mathbf{I}_{|\mathcal{J}_{x}|}\\
 & \sqrt{n}\mathbf{I}_{|\mathcal{J}_{z}|}
\end{array}\right)^{-1}\convp\Theta_{\mathcal{J}}.\label{eq:cov=000020mat=000020convp}
\end{equation}
Equations (\ref{eq:pi=000020x=000020rate}), (\ref{eq:pi=000020z=000020rate}),
and (\ref{eq:joint=000020normal}) imply that 
\begin{equation}
\left(\begin{array}{c}
\dfrac{1}{\sqrt{n^{1+\tau}}}\hat{\Pi}_{\mathcal{J}_{x}}\\
\dfrac{1}{\sqrt{n}}\hat{\Pi}_{\mathcal{J}_{z}}
\end{array}\right)\convd\mathcal{N}\left(0,\sigma_{u}^{2}\Theta_{\mathcal{J}}\right).\label{eq:joint=000020normal=000020pi}
\end{equation}
Recall that we have shown $\hat{\sigma}_{u}^{2}/\sigma_{u}^{2}\convp1$
in (\ref{eq:consistent=000020Omega=000020u}). By (\ref{eq:cov=000020mat=000020convp}),
(\ref{eq:joint=000020normal=000020pi}), and the Slutsky's Theorem,
we have 
\[
\hat{\sigma}_{u}^{-1}\tilde{\Theta}_{\mathcal{J}}^{-1/2}\left(\begin{array}{c}
\dfrac{1}{\sqrt{n^{1+\tau}}}\hat{\Pi}_{\mathcal{J}_{x}}\\
\dfrac{1}{\sqrt{n}}\hat{\Pi}_{\mathcal{J}_{z}}
\end{array}\right)\convd\mathcal{N}\left(0,\mathbf{I}_{|\mathcal{J}|}\right),
\]
and thus 
\[
{\rm Wald}_{\mathcal{J}}^{{\rm XD}}=\dfrac{1}{\hat{\sigma}_{u}^{2}}\left(\begin{array}{c}
\dfrac{1}{\sqrt{n^{1+\tau}}}\hat{\Pi}_{\mathcal{J}_{x}}\\
\dfrac{1}{\sqrt{n}}\hat{\Pi}_{\mathcal{J}_{z}}
\end{array}\right)^{\top}\tilde{\Theta}_{\mathcal{J}}^{-1}\left(\begin{array}{c}
\dfrac{1}{\sqrt{n^{1+\tau}}}\hat{\Pi}_{\mathcal{J}_{x}}\\
\dfrac{1}{\sqrt{n}}\hat{\Pi}_{\mathcal{J}_{z}}
\end{array}\right)\convd\chi_{|\mathcal{J}|}^{2},
\]
which verifies Theorem \ref{thm:wald}.

\textbf{Proof of (\ref{eq:pi=000020x=000020rate})}. By (\ref{eq:check=000020r=000020def}),
we have $\check{r}_{j,t}=\zeta_{j,t}-\hat{\varsigma}_{j}W_{-j,t}^{\top}\hat{\varphi}^{(j)}$.
Therefore, 
\begin{align*}
\hat{\Pi}_{j} & =\sum_{t=1}^{n}\check{r}_{j,t-1}u_{t}+\sum_{t=1}^{n}\check{r}_{j,t-1}W_{-j,t-1}^{\top}(\theta_{-j}^{*}-\hat{\theta}_{-j})\\
 & =\sum_{t=1}^{n}\zeta_{j,t-1}u_{t}-\hat{\varsigma}_{j}\sum_{t=1}^{n}u_{t}W_{-j,t}^{\top}\hat{\varphi}^{(j)}+\hat{\varsigma}_{j}\sqrt{\sum_{t=1}^{n}\hat{r}_{j,t-1}^{2}}\varDelta_{j}\\
 & =\sum_{t=1}^{n}\zeta_{j,t-1}u_{t}+O_{p}(\sqrt{n^{\tau}})o_{p}(\sqrt{n})+O_{p}(\sqrt{n^{\tau}})O_{p}(\sqrt{n})o_{p}(1)\\
 & =\sum_{t=1}^{n}\zeta_{j,t-1}u_{t}+o_{p}(\sqrt{n^{1+\tau}}),
\end{align*}
where the second row applies the definition of $\varDelta_{j}$ in
(\ref{eq:Delta=000020j=000020def}), and third row applies $\hat{\varsigma}_{j}=O_{p}(\sqrt{n^{\tau}})$
by (\ref{eq:sigma=000020IV=000020rate}), $\sum_{t=1}^{n}\hat{r}_{j,t-1}^{2}=O_{p}(n)$
by (\ref{eq:plim=000020hat=000020r=0000202}), the rate of $\varDelta_{j}$
by (\ref{eq:Delta=000020j=000020rate}), and the rate of $\sum_{t=1}^{n}u_{t}W_{-j,t}^{\top}\hat{\varphi}^{(j)}$
by (\ref{eq:first=000020term}). This verifies (\ref{eq:pi=000020x=000020rate}).

\textbf{Proof of (\ref{eq:pi=000020z=000020rate})}. By (\ref{eq:hat=000020r=000020Lasso})
and the definition $\check{r}_{j,t}=\hat{\varsigma}_{j}\hat{r}_{j,t}$,
we have
\begin{equation}
\check{r}_{j,t}=\hat{\varsigma}_{j}W_{-j,t}^{\top}\left(\varphi^{*(j)}-\hat{\varphi}^{(j)}\right)+\eta_{j,t-1}^{(1)}-n^{-\tau}C_{\zeta}\phi_{j,t-1}.\label{eq:check=000020r=000020decom=000020z}
\end{equation}
Similar to the proof of (\ref{eq:pi=000020x=000020rate}), we have
\begin{align*}
\hat{\Pi}_{j} & =\sum_{t=1}^{n}\check{r}_{j,t-1}u_{t}+\sum_{t=1}^{n}\check{r}_{j,t-1}W_{-j,t-1}^{\top}(\theta_{-j}^{*}-\hat{\theta}_{-j})\\
 & =\sum_{t=1}^{n}\eta_{j,t-1}^{(1)}u_{t}+\hat{\varsigma}_{j}\sum_{t=1}^{n}u_{t}W_{-j,t}^{\top}\left(\varphi^{*(j)}-\hat{\varphi}^{(j)}\right)-\dfrac{C_{\zeta}}{n^{\tau}}\sum_{t=1}^{n}\phi_{j,t-1}u_{t}+\hat{\varsigma}_{j}\sqrt{\sum_{t=1}^{n}\hat{r}_{j,t-1}^{2}}\varDelta_{j}\\
 & =\sum_{t=1}^{n}\eta_{j,t-1}^{(1)}u_{t}+o_{p}(\sqrt{n}),
\end{align*}
where the third row applies $\hat{\varsigma}_{j}=O_{p}(1)$ by (\ref{eq:sigma=000020IV=000020rate-1}),
the rate of $\varDelta_{j}$ by (\ref{eq:Delta=000020j=000020rate}),
$\sum_{t=1}^{n}u_{t}W_{-j,t}^{\top}\left(\varphi^{*(j)}-\hat{\varphi}^{(j)}\right)=o_{p}(\sqrt{n})$
by (\ref{eq:term=0000201=000020lim=0000200-1}), and the rate of $n^{-\tau}\sum_{t=1}^{n}\phi_{j,t-1}u_{t}$
by (\ref{eq:phi=000020u=000020lim=0000200}). This verifies (\ref{eq:pi=000020z=000020rate}). 

\textbf{Proof of (\ref{eq:joint=000020normal})}. Following the proof
of (31) in \citet{phillips2009econometric}, we can show the following
Lindeberg condition for the IVs of the LURs:
\[
\lim_{n\to\infty}\mathbb{E}\left(\|n^{-\frac{1+\tau}{2}}\zeta_{\mathcal{J}_{x},t}\|_{2}\cdot\textbf{1}\{\|n^{-\frac{1+\tau}{2}}\zeta_{\mathcal{J}_{x},t}\|_{2}>\epsilon\}\right)=0
\]
for any fixed $\epsilon>0$. In addition, by standard argument it
can be shown that parallel Lindeberg condition holds for $n^{-1/2}\eta_{\mathcal{J}_{z},t}^{(1)}$,
since $\eta_{\mathcal{J}_{z},t}^{(1)}$ is a vector of stationary
and weakly dependent components.

Let ${\rm var}_{t-1}(\cdot)$ denote the conditional covariance matrix
given the information up to time $t-1$. According to the martingale
central limit theorem \citet[Corollary 3.1]{hall1980martingale},
it suffices to show that 
\begin{align}
\sum_{t=1}^{n}{\rm var}_{t-1}\left(\begin{array}{c}
\dfrac{1}{\sqrt{n^{1+\tau}}}\zeta_{\mathcal{J}_{x},t-1}u_{t}\\
\dfrac{1}{\sqrt{n}}\eta_{\mathcal{J}_{z},t-1}^{(1)}u_{t}
\end{array}\right) & \convp\sigma_{u}^{2}\Theta_{\mathcal{J}},\nonumber \\
\text{ where }\Theta_{\mathcal{J}} & =\left(\begin{array}{cc}
\dfrac{1}{2C_{\zeta}}{\rm lvar}(e_{\mathcal{J}_{x},t})\\
 & {\rm var}(\eta_{\mathcal{J}_{z},t}^{(1)})
\end{array}\right).\label{eq:Theta=000020J=000020def}
\end{align}
By Lemma 3.1 (iii) and Equation (14) of \citet{phillips2009econometric},
we have $\dfrac{1}{n^{1+\tau}}\sum_{t=1}^{n}\zeta_{\mathcal{J}_{x},t-1}\zeta_{\mathcal{J}_{x},t-1}^{\top}\convp\dfrac{1}{2C_{\zeta}}{\rm lvar}(e_{\mathcal{J}_{x},t}).$
By standard LLN, we can show that $\dfrac{1}{n}\sum_{t=1}^{n}\eta_{\mathcal{J}_{z},t-1}^{(1)}\eta_{\mathcal{J}_{z},t-1}^{(1)\top}\convp{\rm var}(\eta_{\mathcal{J}_{z},t}^{(1)})$.
Since $\sum_{t=1}^{n}\zeta_{\mathcal{J}_{x},t-1}\eta_{\mathcal{J}_{z},t-1}^{(1)\top}$
is the cross-product of the mildly integrated IVs $\zeta_{\mathcal{J}_{x},t-1}$
and the stationary components $\eta_{\mathcal{J}_{z},t-1}^{(1)},$
we have $\sum_{t=1}^{n}\zeta_{\mathcal{J}_{x},t-1}\eta_{\mathcal{J}_{z},t-1}^{(1)\top}=O_{p}(n)$
by Lemma B2 (i) of \citet{kostakis2015robust}. Therefore, 
\begin{align}
\sum_{t=1}^{n}{\rm var}_{t-1}\left(\begin{array}{c}
\dfrac{1}{\sqrt{n^{1+\tau}}}\zeta_{\mathcal{J}_{x},t-1}u_{t}\\
\dfrac{1}{\sqrt{n}}\sum_{t=1}^{n}\eta_{\mathcal{J}_{z},t-1}^{(1)}u_{t}
\end{array}\right) & =\sigma_{u}^{2}\left(\begin{array}{cc}
\dfrac{1}{n^{1+\tau}}\sum_{t=1}^{n}\zeta_{\mathcal{J}_{x},t-1}\zeta_{\mathcal{J}_{x},t-1}^{\top} & \dfrac{1}{n^{1+\tau/2}}\sum_{t=1}^{n}\zeta_{\mathcal{J}_{x},t-1}\eta_{\mathcal{J}_{z},t-1}^{(1)\top}\\
\dfrac{1}{n^{1+\tau/2}}\sum_{t=1}^{n}\eta_{\mathcal{J}_{z},t-1}^{(1)}\zeta_{\mathcal{J}_{x},t-1}^{\top} & \dfrac{1}{n}\sum_{t=1}^{n}\eta_{\mathcal{J}_{z},t-1}^{(1)}\eta_{\mathcal{J}_{z},t-1}^{(1)\top}
\end{array}\right)\nonumber \\
 & \convp\sigma_{u}^{2}\Theta_{\mathcal{J}},\label{eq:limit=000020conditional=000020var}
\end{align}
where $\Theta_{\mathcal{J}}$is defined in (\ref{eq:Theta=000020J=000020def}).
We complete the proof of (\ref{eq:joint=000020normal}).

\textbf{Proof of (\ref{eq:cov=000020mat=000020convp})}. By (\ref{eq:check=000020r=000020def}),
we have for any $j\in\mathcal{M}_{x}$
\begin{align*}
\sup_{t\in[n]}\left|\check{r}_{j,t}-\zeta_{j,t}\right| & =\hat{\varsigma}_{j}\sup_{t\in[n]}\left|W_{-j,t}^{\top}\hat{\varphi}^{(j)}\right|\\
 & \lep\sqrt{n^{\tau}}\cdot\sup_{t\in[n]}\|D_{-j}^{-1}W_{-j,t}\|_{\infty}\|D_{-j}\hat{\varphi}^{(j)}\|_{1}\\
 & \leq\sqrt{n^{\tau}}\cdot O_{p}(\log p)\cdot o_{p}(n^{-1+\tau/2})=o_{p}(n^{\tau-1/2}).
\end{align*}
where second row applies (\ref{eq:sigma=000020IV=000020rate}) and
(\ref{eq:D=000020hat=000020phi=000020L1}). In addition, for any $j_{1},j_{2}\in\mathcal{M}_{x}$,
\[\check{r}_{j_{1},t-1}\check{r}_{j_{2},t-1}-\zeta_{j_{1},t-1}\zeta_{j_{2},t-1} = (\check{r}_{j_{1},t-1}-\zeta_{j_{1},t-1})\zeta_{j_{2},t-1} + (\check{r}_{j_{2},t-1}-\zeta_{j_{2},t-1})\zeta_{j_{1},t-1} + (\check{r}_{j_{1},t-1}-\zeta_{j_{1},t-1})(\check{r}_{j_{2},t-1}-\zeta_{j_{2},t-1}).\]
Therefore, 
\begin{align}
 & \dfrac{1}{n^{1+\tau}}\sum_{t=1}^{n}\check{r}_{j_{1},t-1}\check{r}_{j_{2},t-1}-\dfrac{1}{n^{1+\tau}}\sum_{t=1}^{n}\zeta_{j_{1},t-1}\zeta_{j_{2},t-1}\nonumber \\
= & \dfrac{o_{p}(n^{\tau-1/2})}{n^{1+\tau}}\sum_{t=1}^{n}|\zeta_{j_{1},t-1}|+\dfrac{o_{p}(n^{\tau-1/2})}{n^{1+\tau}}\sum_{t=1}^{n}|\zeta_{j_{2},t-1}|+\dfrac{1}{n^{1+\tau}}\sum_{t=1}^{n}o_{p}(n^{2\tau-1})\nonumber \\
= & o_{p}(\sqrt{n^{(\tau-1)}(\log p)^{3}})+o_{p}(n^{\tau-1})=o_{p}(1),\label{eq:check=000020r=000020j1=000020j2}
\end{align}
where the second step applies $\sup_{t\in[n]}|\zeta_{j_{1},t}|\lep\sqrt{n^{\tau}(\log p)^{3}}$
by (\ref{eq:sup_IV}). In addition, by (\ref{eq:check=000020r=000020decom=000020z}),
we have for any $k\in\mathcal{M}_{z}$, 
\begin{align*}
\sup_{t\in[n]}\left|\check{r}_{k,t}-\eta_{k,t-1}^{(1)}\right| & \leq\hat{\varsigma}_{k}\cdot\sup_{t\in[n]}\left|W_{-k,t}^{\top}\left(\varphi^{*(k)}-\hat{\varphi}^{(k)}\right)+n^{-\tau}C_{\zeta}\phi_{k,t-1}\right|\\
 & \leq O_{p}(1)\cdot\sup_{t\in[n]}\|D_{-k}^{-1}W_{-k,t}\|_{\infty}\|D_{-j}(\hat{\varphi}^{(k)}-\varphi^{*(k)})\|_{1}\\
 & =O_{p}(1)\cdot O_{p}(\log p)\cdot O_{p}\left(\dfrac{s^{2}(\log p)^{6+\frac{1}{2r}}}{\sqrt{n^{\tau\wedge(1-\tau)}}}\right)=o_{p}(1/(\log p)^{3}),
\end{align*}
where the second row applies the rate $\hat{\varsigma}_{k}=O_{p}(1)$
for stationary regressors by (\ref{eq:sigma=000020IV=000020rate-1}),
and the third row applies $(\ref{eq:bound=000020D=000020W})$ and
Proposition \ref{prop:DB-Aux}. Therefore, following the arguments for (\ref{eq:check=000020r=000020j1=000020j2}) we have for any $k_{1},k_{2}\in\mathcal{M}_{z}$,
\begin{align}
 & \dfrac{1}{n}\sum_{t=1}^{n}\check{r}_{k_{1},t-1}\check{r}_{k_{2},t-1}-\dfrac{1}{n}\sum_{t=1}^{n}\eta_{k_{1},t-1}^{(1)}\eta_{k_{2},t-1}^{(1)}\nonumber \\
= & o_{p}\left(\frac{1}{n(\log p)^{3}}\right)\sum_{t=1}^{n}|\eta_{k_{1},t-1}^{(1)}|+o_{p}\left(\frac{1}{n(\log p)^{3}}\right)\sum_{t=1}^{n}|\eta_{k_{2},t-1}^{(1)}|+\dfrac{1}{n}\sum_{t=1}^{n}o_{p}\left(\frac{1}{(\log p)^{3}}\right)\label{eq:check=000020r=000020k1=000020k2}\\
= & o_{p}(1).\nonumber 
\end{align}
 In addition, similar arguments yield that for any $j\in\mathcal{M}_{x}$ and $k\in\mathcal{M}_{z}$,
\begin{align}
 & \dfrac{1}{n^{1+\tau/2}}\sum_{t=1}^{n}\check{r}_{j,t-1}\check{r}_{k,t-1}-\dfrac{1}{n^{1+\tau/2}}\sum_{t=1}^{n}\zeta_{j,t-1}\eta_{k,t-1}^{(1)}\nonumber \\
= & \dfrac{o_{p}(n^{\tau-1/2})}{n^{1+\tau/2}}\sum_{t=1}^{n}|\eta_{k,t-1}^{(1)}|+o_{p}\left(\frac{1}{n^{1+\tau/2}(\log p)^{3}}\right)\sum_{t=1}^{n}|\zeta_{j,t-1}|+\dfrac{1}{n^{1+\tau/2}}\sum_{t=1}^{n}o_{p}\left(\frac{n^{\tau-1/2}}{(\log p)^{3}}\right)\label{eq:check=000020r=000020j=000020k}\\
= & o_{p}(n^{(\tau-1)/2})+o_{p}\left(\frac{n^{1+\tau/2}(\log p)^{3/2}}{n^{1+\tau/2}(\log p)^{3}}\right)+o_{p}(1),
\end{align}
where the second row applies $\sup_{t\in[n]}|\zeta_{j_{1},t}|\lep\sqrt{n^{\tau}(\log p)^{3}}$
by (\ref{eq:sup_IV}). Therefore, 
\begin{align*}
\tilde{\Theta}_{\mathcal{J}} & =\left(\begin{array}{cc}
\dfrac{1}{n^{1+\tau}}\sum_{t=1}^{n}\check{r}_{\mathcal{J}_{x},t-1}\check{r}_{\mathcal{J}_{x},t-1}^{\top} & \dfrac{1}{n^{1+\tau/2}}\sum_{t=1}^{n}\check{r}_{\mathcal{J}_{x},t-1}\check{r}_{\mathcal{J}_{z},t-1}^{\top}\\
\dfrac{1}{n^{1+\tau/2}}\sum_{t=1}^{n}\check{r}_{\mathcal{J}_{z},t-1}\check{r}_{\mathcal{J}_{x},t-1}^{\top} & \dfrac{1}{n}\sum_{t=1}^{n}\check{r}_{\mathcal{J}_{z},t-1}\check{r}_{\mathcal{J}_{z},t-1}^{\top}
\end{array}\right)\\
 & =\left(\begin{array}{cc}
\dfrac{1}{n^{1+\tau}}\sum_{t=1}^{n}\zeta_{\mathcal{J}_{x},t-1}\zeta_{\mathcal{J}_{x},t-1}^{\top} & \dfrac{1}{n^{1+\tau/2}}\sum_{t=1}^{n}\zeta_{\mathcal{J}_{x},t-1}\eta_{\mathcal{J}_{z},t-1}^{(1)\top}\\
\dfrac{1}{n^{1+\tau/2}}\sum_{t=1}^{n}\eta_{\mathcal{J}_{z},t-1}^{(1)}\zeta_{\mathcal{J}_{x},t-1}^{\top} & \dfrac{1}{n}\sum_{t=1}^{n}\eta_{\mathcal{J}_{z},t-1}^{(1)}\eta_{\mathcal{J}_{z},t-1}^{(1)\top}
\end{array}\right)+o_{p}(1)\\
 & \convp\Theta_{\mathcal{J}},
\end{align*}
where the $o_{p}(1)$ in the second row applies (\ref{eq:check=000020r=000020j1=000020j2}),
(\ref{eq:check=000020r=000020k1=000020k2}), and (\ref{eq:check=000020r=000020j=000020k}),
and the limit follows (\ref{eq:limit=000020conditional=000020var}).
With the essential equations (\ref{eq:pi=000020x=000020rate}), (\ref{eq:pi=000020z=000020rate}),
(\ref{eq:joint=000020normal}) , and (\ref{eq:cov=000020mat=000020convp})
verified, we complete the proof of Theorem \ref{thm:wald}.
\end{proof}
\clearpage

\section{Additional Simulation Results\protect\label{sec:Additional-Simulation-Results}}

\subsection{Simulation Results with More Nonzero Coefficients\protect\label{sec:mc_more_nonzero_coef}}

We follow the same setup in Section \ref{subsec:Setup}, with a modification
in \eqref{eq:true_coef} to have 
\begin{equation}
\gamma^{*}=(\gamma_{1}^{*},0.5\times1_{2}^{\top},0.25\times1_{2}^{\top},\frac{0.25}{6^{2}},\ldots,\frac{0.25}{10^{2}},0_{p_{z}-10}^{\top})^{\top}.\label{eq:add_simulation_new_gamma}
\end{equation}
The empirical sizes are reported in Table \ref{tab:mc_size_comp_iid}
and \ref{tab:mc_size_comp_ar1}, and the empirical power is depicted
in Figure \ref{fig:power_all_comp}. The results mirror those 
of the benchmark setup in Section \ref{sec:Simulation}, which demonstrates
the robust performance of XDlasso in finite sample with more control
variables associated with nonzero coefficients. 
\begin{table}[hp]
\begin{centering}
\caption{Empirical size and length of confidence interval: IID innovations\protect\label{tab:mc_size_comp_iid}}
{\small\begin{tabular}{c|cccc|cccc|cccc}
\hline 
\multirow{3}{*}{$n$ } & \multicolumn{4}{c|}{Oracle} & \multicolumn{4}{c|}{Calibrated} & \multicolumn{4}{c}{CV}\tabularnewline
 & \multicolumn{2}{c}{IVX Oracle} & \multicolumn{2}{c|}{OLS Oracle} & \multicolumn{2}{c}{XDlasso} & \multicolumn{2}{c|}{Dlasso} & \multicolumn{2}{c}{XDlasso} & \multicolumn{2}{c}{Dlasso}\tabularnewline
\cline{2-13}
 & Size  & Len.  & Size  & Len.  & Size  & Len.  & Size  & Len.  & Size  & Len.  & Size  & Len. \tabularnewline
\hline 
\multicolumn{13}{c}{$\mathbb{H}_{0}:\beta_{1}^{*}=0$ for nonstationary regressor}\tabularnewline
\hline 
\multicolumn{13}{c}{$\left(p_{x},p_{z}\right)=\left(50,100\right)$}\tabularnewline
\hline 
200  & 0.037  & 0.217  & 0.143  & 0.099  & 0.047  & 0.223  & 0.377  & 0.104  & 0.060  & 0.230  & 0.436  & 0.156 \tabularnewline
300  & 0.047  & 0.155  & 0.142  & 0.066  & 0.047  & 0.164  & 0.430  & 0.078  & 0.064  & 0.169  & 0.524  & 0.120 \tabularnewline
400  & 0.046  & 0.122  & 0.140  & 0.050  & 0.052  & 0.133  & 0.479  & 0.064  & 0.067  & 0.135  & 0.547  & 0.096 \tabularnewline
500  & 0.047  & 0.101  & 0.143  & 0.040  & 0.054  & 0.112  & 0.498  & 0.054  & 0.072  & 0.115  & 0.577  & 0.077 \tabularnewline
600  & 0.044  & 0.087  & 0.135  & 0.033  & 0.045  & 0.097  & 0.509  & 0.047  & 0.057  & 0.099  & 0.579  & 0.065 \tabularnewline
\hline 
\multicolumn{13}{c}{$\left(p_{x},p_{z}\right)=\left(100,150\right)$}\tabularnewline
\hline 
200  & 0.046  & 0.215  & 0.147  & 0.099  & 0.046  & 0.220  & 0.371  & 0.101  & 0.057  & 0.228  & 0.498  & 0.159 \tabularnewline
300  & 0.033  & 0.154  & 0.145  & 0.066  & 0.044  & 0.162  & 0.452  & 0.076  & 0.052  & 0.169  & 0.620  & 0.130 \tabularnewline
400  & 0.039  & 0.122  & 0.142  & 0.050  & 0.048  & 0.129  & 0.517  & 0.062  & 0.064  & 0.134  & 0.689  & 0.110 \tabularnewline
500  & 0.046  & 0.101  & 0.141  & 0.040  & 0.048  & 0.111  & 0.557  & 0.053  & 0.069  & 0.114  & 0.704  & 0.089 \tabularnewline
600  & 0.039  & 0.088  & 0.148  & 0.033  & 0.049  & 0.095  & 0.605  & 0.046  & 0.070  & 0.100  & 0.738  & 0.077 \tabularnewline
\hline 
\multicolumn{13}{c}{$\left(p_{x},p_{z}\right)=\left(150,300\right)$}\tabularnewline
\hline 
200  & 0.042  & 0.218  & 0.141  & 0.100  & 0.041  & 0.215  & 0.361  & 0.096  & 0.047  & 0.222  & 0.495  & 0.140 \tabularnewline
300  & 0.051  & 0.155  & 0.134  & 0.066  & 0.045  & 0.157  & 0.435  & 0.072  & 0.055  & 0.166  & 0.594  & 0.114 \tabularnewline
400  & 0.045  & 0.122  & 0.146  & 0.049  & 0.047  & 0.127  & 0.485  & 0.059  & 0.062  & 0.135  & 0.649  & 0.095 \tabularnewline
500  & 0.040  & 0.101  & 0.146  & 0.039  & 0.048  & 0.108  & 0.532  & 0.050  & 0.060  & 0.114  & 0.690  & 0.084 \tabularnewline
600  & 0.037  & 0.087  & 0.153  & 0.033  & 0.050  & 0.092  & 0.581  & 0.044  & 0.056  & 0.099  & 0.736  & 0.073 \tabularnewline
\hline 
\multicolumn{13}{c}{$\mathbb{H}_{0}:\gamma_{1}^{*}=0$ for stationary regressor}\tabularnewline
\hline 
\multicolumn{13}{c}{$\left(p_{x},p_{z}\right)=\left(50,100\right)$}\tabularnewline
\hline 
200  & 0.044  & 0.379  & 0.054  & 0.327  & 0.066  & 0.325  & 0.078  & 0.288  & 0.069  & 0.323  & 0.080  & 0.287 \tabularnewline
300  & 0.047  & 0.298  & 0.055  & 0.265  & 0.067  & 0.265  & 0.062  & 0.240  & 0.070  & 0.264  & 0.065  & 0.240 \tabularnewline
400  & 0.048  & 0.253  & 0.054  & 0.229  & 0.059  & 0.229  & 0.064  & 0.210  & 0.062  & 0.229  & 0.062  & 0.210 \tabularnewline
500  & 0.044  & 0.223  & 0.050  & 0.204  & 0.054  & 0.204  & 0.057  & 0.189  & 0.056  & 0.205  & 0.063  & 0.190 \tabularnewline
600  & 0.045  & 0.201  & 0.050  & 0.186  & 0.054  & 0.186  & 0.056  & 0.174  & 0.053  & 0.187  & 0.057  & 0.174 \tabularnewline
\hline 
\multicolumn{13}{c}{$\left(p_{x},p_{z}\right)=\left(100,150\right)$}\tabularnewline
\hline 
200  & 0.050  & 0.377  & 0.062  & 0.326  & 0.069  & 0.325  & 0.065  & 0.289  & 0.074  & 0.320  & 0.068  & 0.284 \tabularnewline
300  & 0.047  & 0.297  & 0.060  & 0.265  & 0.063  & 0.264  & 0.064  & 0.239  & 0.067  & 0.261  & 0.063  & 0.237 \tabularnewline
400  & 0.044  & 0.252  & 0.057  & 0.229  & 0.053  & 0.228  & 0.056  & 0.209  & 0.054  & 0.227  & 0.060  & 0.209 \tabularnewline
500  & 0.052  & 0.223  & 0.051  & 0.204  & 0.059  & 0.203  & 0.057  & 0.188  & 0.060  & 0.203  & 0.061  & 0.188 \tabularnewline
600  & 0.045  & 0.202  & 0.051  & 0.186  & 0.062  & 0.185  & 0.060  & 0.173  & 0.059  & 0.186  & 0.060  & 0.173 \tabularnewline
\hline 
\multicolumn{13}{c}{$\left(p_{x},p_{z}\right)=\left(150,300\right)$}\tabularnewline
\hline 
200  & 0.042  & 0.379  & 0.053  & 0.328  & 0.062  & 0.328  & 0.055  & 0.291  & 0.069  & 0.315  & 0.057  & 0.281 \tabularnewline
300  & 0.041  & 0.299  & 0.050  & 0.266  & 0.062  & 0.265  & 0.060  & 0.241  & 0.059  & 0.260  & 0.060  & 0.237 \tabularnewline
400  & 0.040  & 0.253  & 0.052  & 0.229  & 0.058  & 0.228  & 0.066  & 0.210  & 0.056  & 0.226  & 0.065  & 0.208 \tabularnewline
500  & 0.048  & 0.223  & 0.062  & 0.205  & 0.064  & 0.203  & 0.069  & 0.189  & 0.067  & 0.202  & 0.071  & 0.188 \tabularnewline
600  & 0.050  & 0.202  & 0.061  & 0.187  & 0.064  & 0.185  & 0.064  & 0.173  & 0.061  & 0.185  & 0.067  & 0.173 \tabularnewline
\hline 
\end{tabular}

}
\par\end{centering}
{

{\footnotesize\textit{Notes}}{\footnotesize : The data generating process
corresponds to \eqref{eq:dgp_iid_inno}. The coefficients are specified
in \eqref{eq:add_simulation_new_gamma}. The upper and lower panels
report the empirical size of testing the null hypotheses $\mathbb{H}_{0}:\beta_{1}^{*}=0$
and $\mathbb{H}_{0}:\gamma_{1}^{*}=0$, respectively, at a 5\% nominal
significance level. ``Size'' is calculated as $R^{-1}\sum_{r=1}^{R}\mathbf{1}\left[|t^{\text{XD}(r)}|>\mathrm{\Phi}_{0.975}\right]$
across $R=2,000$ replications, where $t^{\text{XD}(r)}$ is computed
based on \eqref{eq:XD=000020t=000020stat} for the $r$-th replication,
and the critical value $\mathrm{\Phi}_{0.975}\left(\approx1.96\right)$
is the 97.5-th percentile of the standard normal distribution. ``Len.''
refers to the median length of the 95\% confidence intervals across
replications. The IVX oracle and OLS oracle are infeasible estimators.
The ``Calibrated'' and ``CV'' columns refer to the methods used
for choosing the tuning parameters through calibration and cross-validation,
respectively. }}{\footnotesize\par}
\end{table}
 
\begin{table}[hp]
\begin{centering}
\caption{Empirical size and length of confidence: AR(1) innovations\protect\label{tab:mc_size_comp_ar1}}
{\small\begin{tabular}{c|cccc|cccc|cccc}
\hline 
\multirow{3}{*}{$n$ } & \multicolumn{4}{c|}{Oracle} & \multicolumn{4}{c|}{Calibrated } & \multicolumn{4}{c}{CV}\tabularnewline
 & \multicolumn{2}{c}{IVX Oracle} & \multicolumn{2}{c|}{OLS Oracle} & \multicolumn{2}{c}{XDlasso} & \multicolumn{2}{c|}{Dlasso} & \multicolumn{2}{c}{XDlasso} & \multicolumn{2}{c}{Dlasso}\tabularnewline
\cline{2-13}
 & Size  & Len.  & Size  & Len.  & Size  & Len.  & Size  & Len.  & Size  & Len.  & Size  & Len. \tabularnewline
\hline 
\multicolumn{13}{c}{$\mathbb{H}_{0}:\beta_{1}^{*}=0$ for nonstationary regressor}\tabularnewline
\hline 
\multicolumn{13}{c}{$\left(p_{x},p_{z}\right)=\left(50,100\right)$}\tabularnewline
\hline 
200  & 0.046  & 0.164  & 0.151  & 0.074  & 0.048  & 0.168  & 0.419  & 0.079  & 0.073  & 0.173  & 0.484  & 0.134 \tabularnewline
300  & 0.046  & 0.112  & 0.140  & 0.048  & 0.051  & 0.122  & 0.460  & 0.059  & 0.078  & 0.125  & 0.563  & 0.096 \tabularnewline
400  & 0.047  & 0.088  & 0.150  & 0.036  & 0.051  & 0.098  & 0.521  & 0.048  & 0.081  & 0.100  & 0.590  & 0.072 \tabularnewline
500  & 0.040  & 0.073  & 0.151  & 0.028  & 0.048  & 0.083  & 0.553  & 0.041  & 0.071  & 0.083  & 0.609  & 0.056 \tabularnewline
600  & 0.049  & 0.062  & 0.141  & 0.024  & 0.049  & 0.071  & 0.561  & 0.035  & 0.071  & 0.072  & 0.606  & 0.046 \tabularnewline
\hline 
\multicolumn{13}{c}{$\left(p_{x},p_{z}\right)=\left(100,150\right)$}\tabularnewline
\hline 
200  & 0.044  & 0.159  & 0.144  & 0.073  & 0.055  & 0.165  & 0.393  & 0.076  & 0.090  & 0.174  & 0.544  & 0.131 \tabularnewline
300  & 0.039  & 0.113  & 0.147  & 0.048  & 0.052  & 0.120  & 0.494  & 0.057  & 0.084  & 0.125  & 0.650  & 0.106 \tabularnewline
400  & 0.035  & 0.088  & 0.140  & 0.036  & 0.055  & 0.096  & 0.556  & 0.047  & 0.085  & 0.098  & 0.698  & 0.087 \tabularnewline
500  & 0.043  & 0.073  & 0.153  & 0.029  & 0.060  & 0.081  & 0.605  & 0.040  & 0.089  & 0.084  & 0.739  & 0.069 \tabularnewline
600  & 0.036  & 0.063  & 0.147  & 0.023  & 0.055  & 0.070  & 0.631  & 0.035  & 0.079  & 0.072  & 0.753  & 0.057 \tabularnewline
\hline 
\multicolumn{13}{c}{$\left(p_{x},p_{z}\right)=\left(150,300\right)$}\tabularnewline
\hline 
200  & 0.045  & 0.162  & 0.144  & 0.073  & 0.053  & 0.162  & 0.388  & 0.072  & 0.088  & 0.169  & 0.543  & 0.113 \tabularnewline
300  & 0.051  & 0.112  & 0.149  & 0.048  & 0.049  & 0.117  & 0.477  & 0.054  & 0.079  & 0.126  & 0.605  & 0.087 \tabularnewline
400  & 0.034  & 0.088  & 0.150  & 0.035  & 0.044  & 0.093  & 0.547  & 0.044  & 0.066  & 0.100  & 0.676  & 0.074 \tabularnewline
500  & 0.044  & 0.074  & 0.149  & 0.028  & 0.052  & 0.077  & 0.577  & 0.037  & 0.072  & 0.083  & 0.717  & 0.065 \tabularnewline
600  & 0.044  & 0.063  & 0.155  & 0.023  & 0.049  & 0.067  & 0.631  & 0.033  & 0.067  & 0.072  & 0.746  & 0.057 \tabularnewline
\hline 
\multicolumn{13}{c}{$\mathbb{H}_{0}:\gamma_{1}^{*}=0$ for AR(1) regressor}\tabularnewline
\hline 
\multicolumn{13}{c}{$\left(p_{x},p_{z}\right)=\left(50,100\right)$}\tabularnewline
\hline 
200  & 0.040  & 0.385  & 0.062  & 0.316  & 0.067  & 0.334  & 0.075  & 0.275  & 0.072  & 0.331  & 0.081  & 0.273 \tabularnewline
300  & 0.048  & 0.301  & 0.054  & 0.255  & 0.065  & 0.269  & 0.068  & 0.229  & 0.069  & 0.268  & 0.074  & 0.227 \tabularnewline
400  & 0.052  & 0.253  & 0.050  & 0.220  & 0.066  & 0.231  & 0.066  & 0.200  & 0.068  & 0.231  & 0.071  & 0.200 \tabularnewline
500  & 0.048  & 0.223  & 0.050  & 0.196  & 0.054  & 0.205  & 0.060  & 0.181  & 0.058  & 0.205  & 0.063  & 0.180 \tabularnewline
600  & 0.046  & 0.200  & 0.053  & 0.179  & 0.054  & 0.186  & 0.064  & 0.166  & 0.055  & 0.187  & 0.068  & 0.165 \tabularnewline
\hline 
\multicolumn{13}{c}{$\left(p_{x},p_{z}\right)=\left(100,150\right)$}\tabularnewline
\hline 
200  & 0.041  & 0.386  & 0.056  & 0.318  & 0.067  & 0.334  & 0.069  & 0.275  & 0.071  & 0.328  & 0.076  & 0.271 \tabularnewline
300  & 0.044  & 0.301  & 0.049  & 0.256  & 0.060  & 0.268  & 0.067  & 0.228  & 0.065  & 0.265  & 0.067  & 0.225 \tabularnewline
400  & 0.039  & 0.254  & 0.047  & 0.220  & 0.058  & 0.230  & 0.057  & 0.199  & 0.057  & 0.228  & 0.060  & 0.198 \tabularnewline
500  & 0.041  & 0.223  & 0.045  & 0.196  & 0.051  & 0.204  & 0.060  & 0.179  & 0.052  & 0.204  & 0.060  & 0.179 \tabularnewline
600  & 0.045  & 0.201  & 0.049  & 0.178  & 0.053  & 0.185  & 0.056  & 0.164  & 0.056  & 0.185  & 0.057  & 0.164 \tabularnewline
\hline 
\multicolumn{13}{c}{$\left(p_{x},p_{z}\right)=\left(150,300\right)$}\tabularnewline
\hline 
200  & 0.035  & 0.387  & 0.044  & 0.319  & 0.064  & 0.337  & 0.057  & 0.278  & 0.061  & 0.326  & 0.067  & 0.269 \tabularnewline
300  & 0.042  & 0.302  & 0.059  & 0.257  & 0.068  & 0.269  & 0.067  & 0.229  & 0.071  & 0.265  & 0.067  & 0.225 \tabularnewline
400  & 0.047  & 0.254  & 0.057  & 0.220  & 0.065  & 0.230  & 0.072  & 0.199  & 0.064  & 0.228  & 0.070  & 0.197 \tabularnewline
500  & 0.047  & 0.223  & 0.061  & 0.196  & 0.060  & 0.204  & 0.066  & 0.179  & 0.057  & 0.203  & 0.065  & 0.178 \tabularnewline
600  & 0.044  & 0.201  & 0.058  & 0.178  & 0.061  & 0.184  & 0.065  & 0.164  & 0.058  & 0.184  & 0.066  & 0.164 \tabularnewline
\hline 
\end{tabular}
}
\par\end{centering}
{

{\footnotesize\textit{Notes}}{\footnotesize : The data generating process
corresponds to \eqref{eq:dgp_iid_inno}. The coefficients are specified
in \eqref{eq:add_simulation_new_gamma}. The upper and lower panels
report the empirical size of testing the null hypotheses $\mathbb{H}_{0}:\beta_{1}^{*}=0$
and $\mathbb{H}_{0}:\gamma_{1}^{*}=0$ at a 5\% nominal significance
level, respectively. ``Size'' is calculated as $R^{-1}\sum_{r=1}^{R}\mathbf{1}\left[|t^{\text{XD}(r)}|>\mathrm{\Phi}_{0.975}\right]$
across $R=2,000$ replications, where $t^{\text{XD}(r)}$ is computed
based on \eqref{eq:XD=000020t=000020stat} for the $r$-th replication,
and the critical value $\mathrm{\Phi}_{0.975}\left(\approx1.96\right)$
is the 97.5-th percentile of the standard normal distribution. ``Len.''
refers to the median length of the 95\% confidence intervals across
replications. The IVX oracle and OLS oracle are infeasible estimators.
The ``Calibrated'' and ``CV'' columns refer to the methods used
for choosing the tuning parameters through calibration and cross-validation,
respectively. }}{\footnotesize\par}
\end{table}

\begin{figure}[h]
\begin{centering}
\begin{subfigure}[b]{0.48\textwidth} \centering 
\includegraphics[height=0.36\textheight]{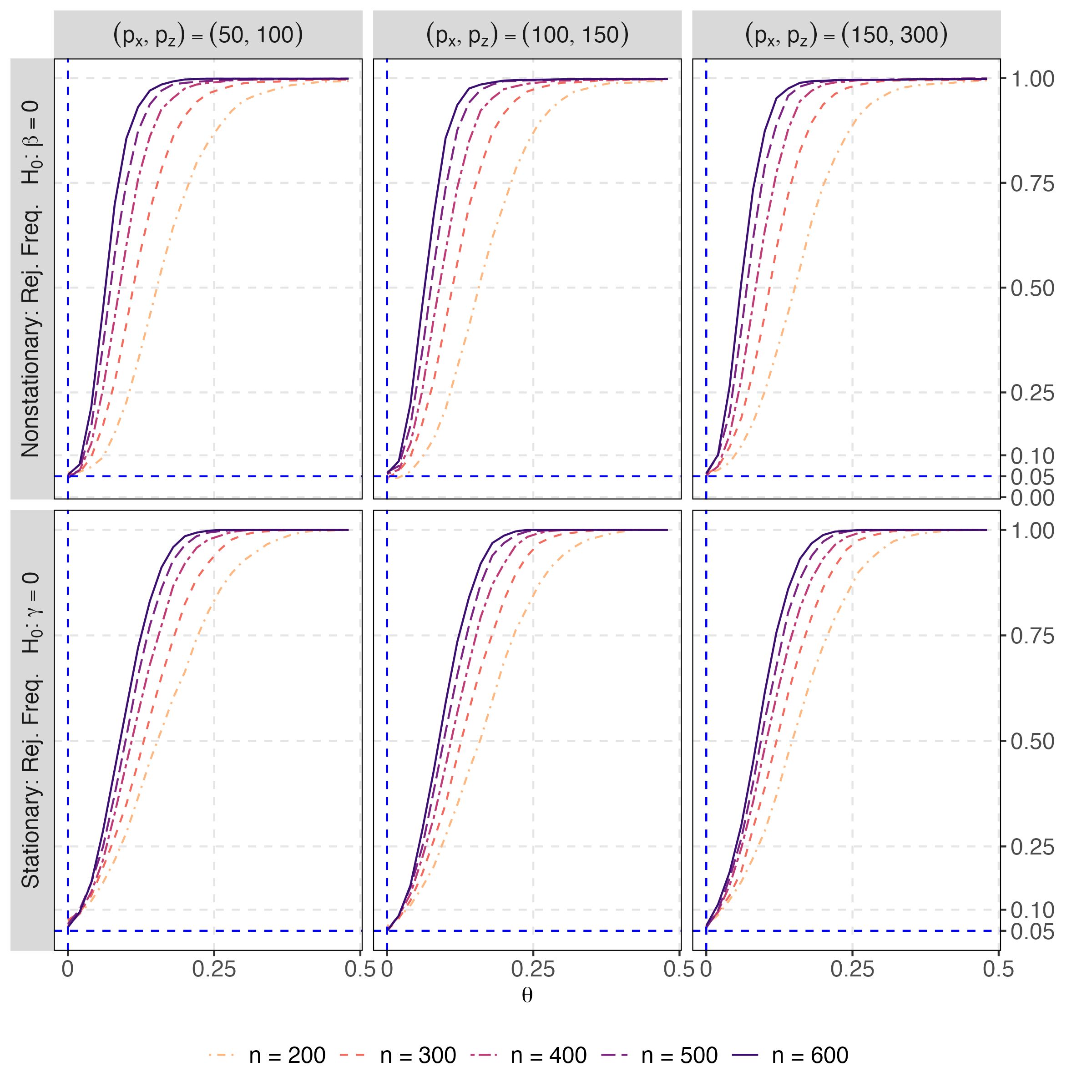} \label{fig:iid_power_comp}
\caption{IID Innovations}
\end{subfigure} \begin{subfigure}[b]{0.48\textwidth} \centering
\includegraphics[height=0.36\textheight]{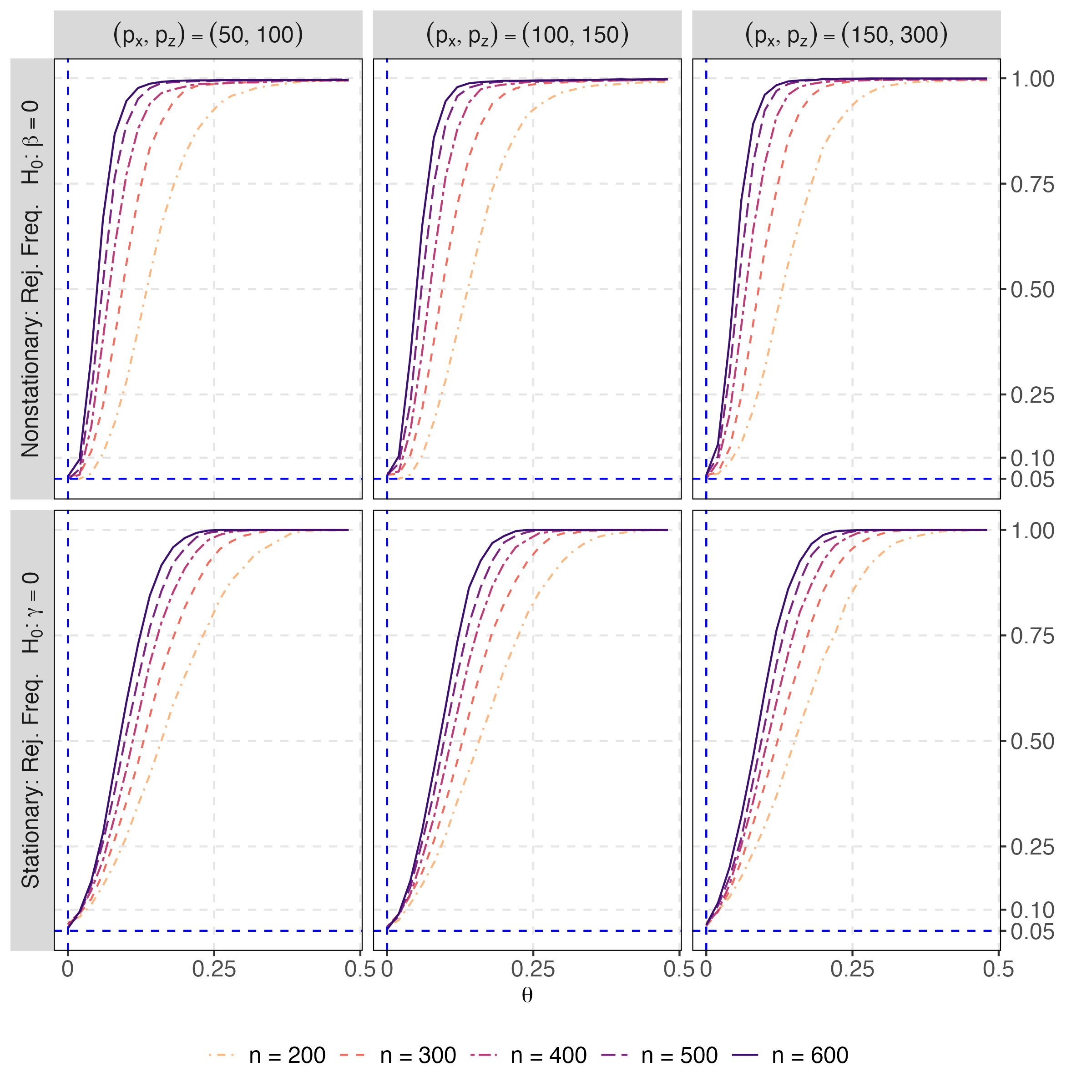} \label{fig:ar1_power_comp}
\caption{AR(1) Innovations}
\end{subfigure}
\caption{Power curves of XDlasso inference\protect\label{fig:power_all_comp}}
\par\end{centering}
\smallskip
{

{\footnotesize\textit{Notes}}{\footnotesize : The left and right panels
correspond to DGPs \eqref{eq:dgp_iid_inno} and \eqref{eq:dgp_ar1_inno},
respectively. The coefficients are specified in \eqref{eq:add_simulation_new_gamma}.
In each subplot, the first row depicts the empirical power function
for $\beta_{1}^{*}$, associated with a nonstationary regressor, across
various $\left(p_{x},p_{z}\right)$ configurations, while the second
row pertains to $\gamma_{1}^{*}$, associated with a stationary regressor.
The empirical power is calculated as $R^{-1}\sum_{r=1}^{R}\mathbf{1}\left[|t^{\text{XD}(r)}|>\mathrm{\Phi}_{0.975}\right]$
across $R=2,000$ replications, where $t^{\text{XD}(r)}$ is computed
based on \eqref{eq:XD=000020t=000020stat} for the $r$-th replication,
and the critical value $\mathrm{\Phi}_{0.975}\left(\approx1.96\right)$
is the 97.5-th percentile of the standard normal distribution.}}{\footnotesize\par}
\end{figure}

\subsection{Simulation Results with Cointegrated Regressors\protect\label{sec:mc_cointegration}}

In this section, we follow the data generating process in Section
\ref{subsec:Setup} with the same innovation processes. The LUR regressors
are generated by $x_{j,t}=\rho_{j}^{*}x_{j,t-1}+e_{j,t},$ for $j=1,2,\cdots,p_{x}-3,p_{x}-1$
with $\rho^{*}=(1,1-1/n,1+1/n,1,1-1/n,1+1/n,\cdots)^{\top}\in\mathbb{R}^{p_{x}-2}$,
and $x_{j,t}=x_{j-1,t}-e_{j,t},$ for $j=p_{x}-2$ and $p_{x}$, so
that the last four LURs are cointegrated. The true coefficient vectors
are: 
\begin{equation}
\beta^{*}=\left(\beta_{1}^{*},\dfrac{0.5}{\sqrt{n}}\times1_{4}^{\top},0_{p_{x}-7}^{\top},0.5,-0.5\right)^{\top},\ \ \gamma^{*}=(\gamma_{1}^{*},0.5\times1_{2}^{\top},0.25\times1_{2}^{\top},0_{p_{z}-5}^{\top})^{\top},\label{eq:true_coef-coint}
\end{equation}
so that we include one cointegration residual with nonzero coefficients,
while the other is treated as redundant control. The empirical sizes
are reported in Table \ref{tab:mc_size_coint_iid} and \ref{tab:mc_size_coint_ar1}.
In this setting, XDlasso continues to exhibit good size control in
finite sample with similar performance as in the benchmark setup.
\begin{table}[hp]
\begin{centering}
\caption{Empirical size and length of confidence interval with cointegrated
regressors: IID innovations\protect\label{tab:mc_size_coint_iid}}
{\small\begin{tabular}{c|cccc|cccc|cccc}
\hline 
\multirow{3}{*}{$n$ } & \multicolumn{4}{c|}{Oracle} & \multicolumn{4}{c|}{Calibrated } & \multicolumn{4}{c}{CV}\tabularnewline
 & \multicolumn{2}{c}{IVX Oracle} & \multicolumn{2}{c|}{OLS Oracle} & \multicolumn{2}{c}{XDlasso} & \multicolumn{2}{c|}{Dlasso} & \multicolumn{2}{c}{XDlasso} & \multicolumn{2}{c}{Dlasso}\tabularnewline
\cline{2-13}
 & Size  & Len.  & Size  & Len.  & Size  & Len.  & Size  & Len.  & Size  & Len.  & Size  & Len. \tabularnewline
\hline 
\multicolumn{13}{c}{$\mathbb{H}_{0}:\beta_{1}^{*}=0$ for I(1) regressor (raw data)}\tabularnewline
\hline 
\multicolumn{13}{c}{$\left(p_{x},p_{z}\right)=\left(50,100\right)$}\tabularnewline
\hline 
200  & 0.051  & 0.217  & 0.146  & 0.099  & 0.040  & 0.224  & 0.366  & 0.106  & 0.050  & 0.230  & 0.408  & 0.154 \tabularnewline
300  & 0.053  & 0.154  & 0.150  & 0.066  & 0.046  & 0.167  & 0.449  & 0.080  & 0.060  & 0.170  & 0.528  & 0.121 \tabularnewline
400  & 0.056  & 0.121  & 0.140  & 0.049  & 0.050  & 0.134  & 0.490  & 0.065  & 0.060  & 0.137  & 0.575  & 0.095 \tabularnewline
500  & 0.054  & 0.101  & 0.142  & 0.039  & 0.043  & 0.113  & 0.523  & 0.055  & 0.059  & 0.114  & 0.593  & 0.078 \tabularnewline
600  & 0.046  & 0.087  & 0.137  & 0.033  & 0.049  & 0.098  & 0.540  & 0.048  & 0.065  & 0.099  & 0.594  & 0.064 \tabularnewline
\hline 
\multicolumn{13}{c}{$\left(p_{x},p_{z}\right)=\left(100,150\right)$}\tabularnewline
\hline 
200  & 0.051  & 0.217  & 0.149  & 0.100  & 0.062  & 0.224  & 0.379  & 0.103  & 0.067  & 0.225  & 0.497  & 0.160 \tabularnewline
300  & 0.043  & 0.155  & 0.139  & 0.066  & 0.054  & 0.165  & 0.484  & 0.077  & 0.073  & 0.170  & 0.596  & 0.130 \tabularnewline
400  & 0.041  & 0.121  & 0.140  & 0.050  & 0.057  & 0.132  & 0.535  & 0.063  & 0.076  & 0.138  & 0.675  & 0.107 \tabularnewline
500  & 0.044  & 0.101  & 0.130  & 0.040  & 0.053  & 0.111  & 0.560  & 0.054  & 0.074  & 0.114  & 0.702  & 0.089 \tabularnewline
600  & 0.044  & 0.086  & 0.125  & 0.033  & 0.046  & 0.096  & 0.609  & 0.047  & 0.066  & 0.099  & 0.747  & 0.078 \tabularnewline
\hline 
\multicolumn{13}{c}{$\left(p_{x},p_{z}\right)=\left(150,300\right)$}\tabularnewline
\hline 
200  & 0.041  & 0.215  & 0.150  & 0.099  & 0.057  & 0.219  & 0.346  & 0.097  & 0.067  & 0.223  & 0.491  & 0.142 \tabularnewline
300  & 0.036  & 0.154  & 0.132  & 0.067  & 0.054  & 0.158  & 0.441  & 0.073  & 0.059  & 0.166  & 0.585  & 0.113 \tabularnewline
400  & 0.043  & 0.121  & 0.136  & 0.050  & 0.056  & 0.129  & 0.505  & 0.060  & 0.062  & 0.134  & 0.647  & 0.096 \tabularnewline
500  & 0.048  & 0.101  & 0.130  & 0.040  & 0.054  & 0.107  & 0.545  & 0.051  & 0.067  & 0.112  & 0.679  & 0.085 \tabularnewline
600  & 0.044  & 0.087  & 0.129  & 0.033  & 0.059  & 0.092  & 0.600  & 0.044  & 0.072  & 0.098  & 0.729  & 0.074 \tabularnewline
\hline 
\multicolumn{13}{c}{$\mathbb{H}_{0}:\gamma_{1}^{*}=0$ for stationary regressor}\tabularnewline
\hline 
\multicolumn{13}{c}{$\left(p_{x},p_{z}\right)=\left(50,100\right)$}\tabularnewline
\hline 
200  & 0.045  & 0.371  & 0.052  & 0.322  & 0.066  & 0.325  & 0.066  & 0.288  & 0.072  & 0.322  & 0.072  & 0.287 \tabularnewline
300  & 0.045  & 0.295  & 0.047  & 0.263  & 0.066  & 0.265  & 0.061  & 0.240  & 0.069  & 0.264  & 0.063  & 0.239 \tabularnewline
400  & 0.037  & 0.251  & 0.054  & 0.228  & 0.054  & 0.229  & 0.061  & 0.210  & 0.056  & 0.229  & 0.061  & 0.210 \tabularnewline
500  & 0.039  & 0.222  & 0.045  & 0.203  & 0.054  & 0.204  & 0.058  & 0.189  & 0.056  & 0.205  & 0.060  & 0.190 \tabularnewline
600  & 0.046  & 0.201  & 0.051  & 0.185  & 0.058  & 0.187  & 0.060  & 0.174  & 0.059  & 0.187  & 0.061  & 0.174 \tabularnewline
\hline 
\multicolumn{13}{c}{$\left(p_{x},p_{z}\right)=\left(100,150\right)$}\tabularnewline
\hline 
200  & 0.037  & 0.372  & 0.044  & 0.323  & 0.059  & 0.326  & 0.062  & 0.289  & 0.065  & 0.321  & 0.067  & 0.285 \tabularnewline
300  & 0.047  & 0.294  & 0.054  & 0.263  & 0.064  & 0.264  & 0.058  & 0.240  & 0.064  & 0.261  & 0.064  & 0.237 \tabularnewline
400  & 0.039  & 0.251  & 0.047  & 0.228  & 0.056  & 0.228  & 0.056  & 0.210  & 0.057  & 0.227  & 0.058  & 0.209 \tabularnewline
500  & 0.045  & 0.222  & 0.051  & 0.204  & 0.060  & 0.204  & 0.058  & 0.189  & 0.059  & 0.204  & 0.058  & 0.189 \tabularnewline
600  & 0.040  & 0.201  & 0.044  & 0.186  & 0.057  & 0.186  & 0.057  & 0.173  & 0.056  & 0.186  & 0.057  & 0.173 \tabularnewline
\hline 
\multicolumn{13}{c}{$\left(p_{x},p_{z}\right)=\left(150,300\right)$}\tabularnewline
\hline 
200  & 0.035  & 0.372  & 0.042  & 0.323  & 0.056  & 0.329  & 0.054  & 0.292  & 0.058  & 0.317  & 0.057  & 0.282 \tabularnewline
300  & 0.038  & 0.296  & 0.056  & 0.263  & 0.058  & 0.265  & 0.057  & 0.241  & 0.061  & 0.260  & 0.061  & 0.236 \tabularnewline
400  & 0.043  & 0.251  & 0.053  & 0.228  & 0.065  & 0.228  & 0.060  & 0.210  & 0.061  & 0.226  & 0.059  & 0.208 \tabularnewline
500  & 0.039  & 0.221  & 0.044  & 0.203  & 0.057  & 0.203  & 0.052  & 0.188  & 0.056  & 0.202  & 0.053  & 0.188 \tabularnewline
600  & 0.039  & 0.201  & 0.043  & 0.185  & 0.051  & 0.185  & 0.050  & 0.173  & 0.053  & 0.185  & 0.051  & 0.172 \tabularnewline
\hline 
\end{tabular}
}
\par\end{centering}
{

{\footnotesize\textit{Notes}}{\footnotesize : The data generating process
corresponds to \eqref{eq:dgp_iid_inno} with cointegrated regressors
described in Section \ref{sec:mc_cointegration}. The upper and lower
panels report the empirical size of testing the null hypotheses $\mathbb{H}_{0}:\beta_{1}^{*}=0$
and $\mathbb{H}_{0}:\gamma_{1}^{*}=0$, respectively, at a 5\% nominal
significance level. ``Size'' is calculated as $R^{-1}\sum_{r=1}^{R}\mathbf{1}\left[|t^{\text{XD}(r)}|>\mathrm{\Phi}_{0.975}\right]$
across $R=2,000$ replications, where $t^{\text{XD}(r)}$ is computed
based on \eqref{eq:XD=000020t=000020stat} for the $r$-th replication,
and the critical value $\mathrm{\Phi}_{0.975}\left(\approx1.96\right)$
is the 97.5-th percentile of the standard normal distribution. ``Len.''
refers to the median length of the 95\% confidence intervals across
replications. The IVX oracle and OLS oracle are infeasible estimators.
The ``Calibrated'' and ``CV'' columns refer to the methods used
for choosing the tuning parameters through calibration and cross-validation,
respectively. }}{\footnotesize\par}
\end{table}
 
\begin{table}[hp]
\begin{centering}
\caption{Empirical size and length of confidence interval with cointegrated
regressors: AR(1) innovations\protect\label{tab:mc_size_coint_ar1}}
{\small\begin{tabular}{c|cccc|cccc|cccc}
\hline 
\multirow{3}{*}{$n$ } & \multicolumn{4}{c|}{Oracle} & \multicolumn{4}{c|}{Calibrated } & \multicolumn{4}{c}{CV}\tabularnewline
 & \multicolumn{2}{c}{IVX Oracle} & \multicolumn{2}{c|}{OLS Oracle} & \multicolumn{2}{c}{XDlasso} & \multicolumn{2}{c|}{Dlasso} & \multicolumn{2}{c}{XDlasso} & \multicolumn{2}{c}{Dlasso}\tabularnewline
\cline{2-13}
 & Size  & Len.  & Size  & Len.  & Size  & Len.  & Size  & Len.  & Size  & Len.  & Size  & Len. \tabularnewline
\hline 
\multicolumn{13}{c}{$\mathbb{H}_{0}:\beta_{1}^{*}=0$ for I(1) regressor (raw data)}\tabularnewline
\hline 
\multicolumn{13}{c}{$\left(p_{x},p_{z}\right)=\left(50,100\right)$}\tabularnewline
\hline 
200  & 0.048  & 0.161  & 0.157  & 0.072  & 0.065  & 0.169  & 0.411  & 0.079  & 0.092  & 0.175  & 0.500  & 0.138 \tabularnewline
300  & 0.047  & 0.113  & 0.154  & 0.048  & 0.061  & 0.123  & 0.461  & 0.059  & 0.079  & 0.127  & 0.549  & 0.095 \tabularnewline
400  & 0.044  & 0.087  & 0.153  & 0.035  & 0.052  & 0.096  & 0.518  & 0.048  & 0.076  & 0.100  & 0.598  & 0.073 \tabularnewline
500  & 0.045  & 0.073  & 0.144  & 0.028  & 0.053  & 0.082  & 0.551  & 0.040  & 0.079  & 0.083  & 0.610  & 0.058 \tabularnewline
600  & 0.044  & 0.062  & 0.139  & 0.023  & 0.053  & 0.070  & 0.565  & 0.035  & 0.071  & 0.072  & 0.619  & 0.047 \tabularnewline
\hline 
\multicolumn{13}{c}{$\left(p_{x},p_{z}\right)=\left(100,150\right)$}\tabularnewline
\hline 
200  & 0.046  & 0.158  & 0.150  & 0.072  & 0.058  & 0.166  & 0.419  & 0.076  & 0.100  & 0.173  & 0.550  & 0.134 \tabularnewline
300  & 0.046  & 0.112  & 0.144  & 0.047  & 0.054  & 0.120  & 0.512  & 0.057  & 0.089  & 0.126  & 0.643  & 0.105 \tabularnewline
400  & 0.051  & 0.088  & 0.142  & 0.035  & 0.052  & 0.095  & 0.562  & 0.047  & 0.086  & 0.100  & 0.681  & 0.083 \tabularnewline
500  & 0.043  & 0.073  & 0.134  & 0.028  & 0.042  & 0.080  & 0.606  & 0.040  & 0.076  & 0.083  & 0.737  & 0.068 \tabularnewline
600  & 0.043  & 0.063  & 0.124  & 0.023  & 0.056  & 0.069  & 0.639  & 0.035  & 0.084  & 0.072  & 0.736  & 0.056 \tabularnewline
\hline 
\multicolumn{13}{c}{$\left(p_{x},p_{z}\right)=\left(150,300\right)$}\tabularnewline
\hline 
200  & 0.050  & 0.162  & 0.153  & 0.072  & 0.057  & 0.160  & 0.385  & 0.071  & 0.090  & 0.170  & 0.530  & 0.112 \tabularnewline
300  & 0.048  & 0.113  & 0.144  & 0.047  & 0.061  & 0.116  & 0.454  & 0.054  & 0.088  & 0.126  & 0.615  & 0.087 \tabularnewline
400  & 0.039  & 0.088  & 0.142  & 0.035  & 0.057  & 0.093  & 0.522  & 0.044  & 0.079  & 0.099  & 0.667  & 0.071 \tabularnewline
500  & 0.049  & 0.073  & 0.133  & 0.029  & 0.058  & 0.077  & 0.580  & 0.038  & 0.082  & 0.084  & 0.723  & 0.065 \tabularnewline
600  & 0.044  & 0.063  & 0.143  & 0.023  & 0.050  & 0.066  & 0.622  & 0.033  & 0.071  & 0.071  & 0.753  & 0.057 \tabularnewline
\hline 
\multicolumn{13}{c}{$\mathbb{H}_{0}:\gamma_{1}^{*}=0$ for stationary regressor (raw data)}\tabularnewline
\hline 
\multicolumn{13}{c}{$\left(p_{x},p_{z}\right)=\left(50,100\right)$}\tabularnewline
\hline 
200  & 0.041  & 0.380  & 0.051  & 0.311  & 0.069  & 0.334  & 0.073  & 0.275  & 0.076  & 0.330  & 0.079  & 0.272 \tabularnewline
300  & 0.045  & 0.297  & 0.049  & 0.253  & 0.058  & 0.269  & 0.061  & 0.229  & 0.059  & 0.268  & 0.065  & 0.227 \tabularnewline
400  & 0.037  & 0.251  & 0.044  & 0.219  & 0.052  & 0.231  & 0.053  & 0.200  & 0.055  & 0.231  & 0.056  & 0.200 \tabularnewline
500  & 0.047  & 0.221  & 0.048  & 0.195  & 0.058  & 0.206  & 0.061  & 0.181  & 0.060  & 0.206  & 0.061  & 0.180 \tabularnewline
600  & 0.051  & 0.200  & 0.053  & 0.177  & 0.057  & 0.187  & 0.060  & 0.166  & 0.057  & 0.187  & 0.062  & 0.166 \tabularnewline
\hline 
\multicolumn{13}{c}{$\left(p_{x},p_{z}\right)=\left(100,150\right)$}\tabularnewline
\hline 
200  & 0.038  & 0.379  & 0.046  & 0.314  & 0.058  & 0.334  & 0.052  & 0.276  & 0.068  & 0.329  & 0.059  & 0.272 \tabularnewline
300  & 0.041  & 0.297  & 0.051  & 0.253  & 0.057  & 0.268  & 0.062  & 0.228  & 0.066  & 0.266  & 0.068  & 0.225 \tabularnewline
400  & 0.040  & 0.252  & 0.050  & 0.219  & 0.064  & 0.230  & 0.058  & 0.200  & 0.067  & 0.230  & 0.067  & 0.198 \tabularnewline
500  & 0.049  & 0.222  & 0.056  & 0.195  & 0.066  & 0.204  & 0.066  & 0.180  & 0.066  & 0.205  & 0.070  & 0.180 \tabularnewline
600  & 0.041  & 0.200  & 0.055  & 0.178  & 0.067  & 0.186  & 0.063  & 0.165  & 0.067  & 0.186  & 0.068  & 0.165 \tabularnewline
\hline 
\multicolumn{13}{c}{$\left(p_{x},p_{z}\right)=\left(150,300\right)$}\tabularnewline
\hline 
200  & 0.035  & 0.380  & 0.062  & 0.312  & 0.067  & 0.335  & 0.071  & 0.277  & 0.069  & 0.325  & 0.077  & 0.267 \tabularnewline
300  & 0.042  & 0.297  & 0.059  & 0.253  & 0.061  & 0.268  & 0.067  & 0.228  & 0.067  & 0.263  & 0.069  & 0.224 \tabularnewline
400  & 0.039  & 0.251  & 0.054  & 0.219  & 0.067  & 0.230  & 0.061  & 0.199  & 0.065  & 0.227  & 0.062  & 0.197 \tabularnewline
500  & 0.038  & 0.221  & 0.043  & 0.195  & 0.056  & 0.203  & 0.058  & 0.179  & 0.060  & 0.202  & 0.053  & 0.178 \tabularnewline
600  & 0.043  & 0.200  & 0.047  & 0.178  & 0.063  & 0.184  & 0.054  & 0.164  & 0.057  & 0.184  & 0.055  & 0.163 \tabularnewline
\hline 

\end{tabular}

}
\par\end{centering}
{

{\footnotesize\textit{Notes}}{\footnotesize : The data generating process
corresponds to \eqref{eq:dgp_ar1_inno} with cointegrated regressors
described in Section \ref{sec:mc_cointegration}. The upper and lower
panels report the empirical size of testing the null hypotheses $\mathbb{H}_{0}:\beta_{1}^{*}=0$
and $\mathbb{H}_{0}:\gamma_{1}^{*}=0$ at a 5\% nominal significance
level, respectively. ``Size'' is calculated as $R^{-1}\sum_{r=1}^{R}\mathbf{1}\left[|t^{\text{XD}(r)}|>\mathrm{\Phi}_{0.975}\right]$
across $R=2,000$ replications, where $t^{\text{XD}(r)}$ is computed
based on \eqref{eq:XD=000020t=000020stat} for the $r$-th replication,
and the critical value $\mathrm{\Phi}_{0.975}\left(\approx1.96\right)$
is the 97.5-th percentile of the standard normal distribution. ``Len.''
refers to the median length of the 95\% confidence intervals across
replications. The IVX oracle and OLS oracle are infeasible estimators.
The ``Calibrated'' and ``CV'' columns refer to the methods used
for choosing the tuning parameters through calibration and cross-validation,
respectively. }}{\footnotesize\par}
\end{table}

\subsection{Simulation Results on Conditional Heteroskedasticity\protect\label{sec:mc_robustse}}

In this section, we conduct simulation experiments to investigate
the finite sample properties of XDlasso with conditional heteroskedasticity
and heteroskedastic-robust standard error. In the experiment, we adapt
the data generating process in Section \ref{subsec:Setup} to incorporate
possibly heteroskedastic error terms. The innovations $v_{t}=\left(u_{0,t},e_{t}^{\top},Z_{t}^{\top}\right)^{\top}$
are generated following \eqref{eq:dgp_iid_inno} and \eqref{eq:dgp_ar1_inno}.
We examine two cases for the error term $u_{t}$:
\begin{align}
\text{IID Error Term:\,}u_{t} & =u_{0,t},\label{eq:iid_error_dgp}\\
\text{GARCH(1,1) Error Term:}\:u_{t} & =\sqrt{h_{t}}u_{0,t},\,h_{t}=\alpha_{0}+\alpha_{u}u_{t-1}^{2}+\alpha_{h}h_{t-1},\label{eq:garch_error}
\end{align}
where we specify $\alpha_{0}=0.6,\alpha_{u}=\alpha_{h}=0.2$, and
initialize $h_{1}=1$. 

We consider both the homoskedasticity-only standard error as in \eqref{eq:s.e.}
and the heteroskedasticity-robust standard error given as 
\begin{align}
\hat{\omega}_{j}^{{\rm XD,Robust}} & =\dfrac{\sqrt{\sum_{t=1}^{n}\hat{r}_{j,t-1}^{2}\hat{u}_{t}^{2}}}{\left|\sum_{t=1}^{n}\hat{r}_{j,t-1}w_{j,t-1}\right|},\label{eq:heterose}
\end{align}
for the construction of the test statistic in \eqref{eq:XD=000020t=000020stat}. 

The empirical sizes based on homoskedastic standard errors in DGPs
with GARCH(1,1) error term $u_{t}$ are reported in Table \ref{tab:mc_size_robustse_iid_garch_error_homo}
and \ref{tab:mc_size_robustse_ar1_garch_error_homo}. The results
echo our conjecture in Remark \ref{rem:hetero} that the homoskedastic
standard error \eqref{eq:s.e.} is robust to conditional heteroskedasticity
as in \citet{kostakis2015robust}.

The empirical sizes based on the heteroskedastic-robust standard error
\eqref{eq:heterose} are reported in Table \ref{tab:mc_size_robustse_iid}
and \ref{tab:mc_size_robustse_ar1} when the error term $u_{t}$ is
IID, and in Table \ref{tab:mc_size_robustse_iid_garch_error} and
\ref{tab:mc_size_robustse_ar1_garch_error} when $u_{t}$ follows
a GARCH model. The finite sample performance of XDlasso with robust
standard error demonstrates good size control, and motivates our practice
in the empirical analysis carried out in Section \ref{sec:robustse}.

\begin{table}[hp]
\begin{centering}
\caption{Empirical size and length of confidence interval with homoskedastic
S.E.: IID innovations and GARCH error terms\protect\label{tab:mc_size_robustse_iid_garch_error_homo}}
{\small\begin{tabular}{c|cccc|cccc|cccc}
\hline 
\multirow{3}{*}{$n$ } & \multicolumn{4}{c|}{Oracle} & \multicolumn{4}{c|}{Calibrated } & \multicolumn{4}{c}{CV}\tabularnewline
 & \multicolumn{2}{c}{IVX Oracle} & \multicolumn{2}{c|}{OLS Oracle} & \multicolumn{2}{c}{XDlasso} & \multicolumn{2}{c|}{Dlasso} & \multicolumn{2}{c}{XDlasso} & \multicolumn{2}{c}{Dlasso}\tabularnewline
\cline{2-13}
 & Size  & Len.  & Size  & Len.  & Size  & Len.  & Size  & Len.  & Size  & Len.  & Size  & Len. \tabularnewline
\hline
\multicolumn{13}{c}{$\mathbb{H}_{0}:\beta_{1}^{*}=0$ for nonstationary regressor, \ $\left(p_{x},p_{z}\right)=\left(50,100\right)$}\tabularnewline
\hline
200  & 0.043  & 0.216  & 0.141  & 0.097  & 0.057  & 0.226  & 0.367  & 0.107  & 0.064  & 0.226  & 0.422  & 0.154 \tabularnewline
300  & 0.048  & 0.153  & 0.148  & 0.066  & 0.054  & 0.167  & 0.447  & 0.080  & 0.062  & 0.168  & 0.524  & 0.117 \tabularnewline
400  & 0.040  & 0.122  & 0.146  & 0.050  & 0.056  & 0.135  & 0.478  & 0.065  & 0.070  & 0.134  & 0.541  & 0.093 \tabularnewline
500  & 0.040  & 0.102  & 0.153  & 0.040  & 0.054  & 0.114  & 0.522  & 0.055  & 0.067  & 0.114  & 0.597  & 0.078 \tabularnewline
600  & 0.051  & 0.088  & 0.143  & 0.033  & 0.060  & 0.099  & 0.541  & 0.048  & 0.072  & 0.099  & 0.604  & 0.065 \tabularnewline
\hline
\multicolumn{13}{c}{$\left(p_{x},p_{z}\right)=\left(100,150\right)$}\tabularnewline
\hline
200  & 0.052  & 0.216  & 0.148  & 0.099  & 0.050  & 0.222  & 0.384  & 0.104  & 0.055  & 0.225  & 0.497  & 0.162 \tabularnewline
300  & 0.039  & 0.153  & 0.137  & 0.065  & 0.051  & 0.164  & 0.483  & 0.078  & 0.061  & 0.169  & 0.598  & 0.128 \tabularnewline
400  & 0.048  & 0.121  & 0.148  & 0.049  & 0.042  & 0.133  & 0.529  & 0.064  & 0.060  & 0.137  & 0.672  & 0.108 \tabularnewline
500  & 0.046  & 0.101  & 0.147  & 0.039  & 0.048  & 0.112  & 0.581  & 0.054  & 0.064  & 0.115  & 0.714  & 0.092 \tabularnewline
600  & 0.049  & 0.086  & 0.138  & 0.033  & 0.047  & 0.097  & 0.616  & 0.047  & 0.059  & 0.099  & 0.720  & 0.076 \tabularnewline
\hline
\multicolumn{13}{c}{$\left(p_{x},p_{z}\right)=\left(150,300\right)$}\tabularnewline
\hline
200  & 0.041  & 0.214  & 0.162  & 0.098  & 0.057  & 0.221  & 0.351  & 0.099  & 0.063  & 0.225  & 0.474  & 0.141 \tabularnewline
300  & 0.046  & 0.153  & 0.148  & 0.066  & 0.053  & 0.159  & 0.455  & 0.074  & 0.065  & 0.165  & 0.582  & 0.111 \tabularnewline
400  & 0.049  & 0.121  & 0.147  & 0.049  & 0.056  & 0.128  & 0.516  & 0.060  & 0.069  & 0.133  & 0.638  & 0.098 \tabularnewline
500  & 0.048  & 0.101  & 0.153  & 0.039  & 0.058  & 0.108  & 0.567  & 0.051  & 0.068  & 0.113  & 0.703  & 0.083 \tabularnewline
600  & 0.054  & 0.087  & 0.151  & 0.033  & 0.066  & 0.094  & 0.600  & 0.045  & 0.074  & 0.099  & 0.736  & 0.078 \tabularnewline
\hline
\multicolumn{13}{c}{$\mathbb{H}_{0}:\gamma_{1}^{*}=0$ for stationary regressor, \ $\left(p_{x},p_{z}\right)=\left(50,100\right)$}\tabularnewline
\hline
200  & 0.036  & 0.372  & 0.056  & 0.323  & 0.064  & 0.325  & 0.061  & 0.289  & 0.065  & 0.323  & 0.066  & 0.287 \tabularnewline
300  & 0.042  & 0.294  & 0.057  & 0.262  & 0.055  & 0.265  & 0.062  & 0.240  & 0.056  & 0.264  & 0.063  & 0.240 \tabularnewline
400  & 0.047  & 0.251  & 0.052  & 0.227  & 0.061  & 0.229  & 0.059  & 0.210  & 0.063  & 0.228  & 0.064  & 0.210 \tabularnewline
500  & 0.043  & 0.221  & 0.042  & 0.203  & 0.056  & 0.204  & 0.058  & 0.189  & 0.060  & 0.204  & 0.057  & 0.189 \tabularnewline
600  & 0.041  & 0.200  & 0.049  & 0.185  & 0.054  & 0.186  & 0.056  & 0.174  & 0.056  & 0.186  & 0.056  & 0.174 \tabularnewline
\hline
\multicolumn{13}{c}{$\left(p_{x},p_{z}\right)=\left(100,150\right)$}\tabularnewline
\hline
200  & 0.045  & 0.371  & 0.052  & 0.323  & 0.062  & 0.326  & 0.059  & 0.289  & 0.071  & 0.319  & 0.066  & 0.284 \tabularnewline
300  & 0.048  & 0.294  & 0.059  & 0.263  & 0.065  & 0.264  & 0.067  & 0.240  & 0.066  & 0.262  & 0.071  & 0.238 \tabularnewline
400  & 0.051  & 0.251  & 0.058  & 0.228  & 0.065  & 0.228  & 0.064  & 0.209  & 0.066  & 0.227  & 0.068  & 0.209 \tabularnewline
500  & 0.043  & 0.222  & 0.056  & 0.203  & 0.059  & 0.203  & 0.067  & 0.189  & 0.062  & 0.203  & 0.067  & 0.189 \tabularnewline
600  & 0.050  & 0.200  & 0.055  & 0.185  & 0.060  & 0.185  & 0.060  & 0.173  & 0.061  & 0.186  & 0.059  & 0.173 \tabularnewline
\hline
\multicolumn{13}{c}{$\left(p_{x},p_{z}\right)=\left(150,300\right)$}\tabularnewline
\hline
200  & 0.033  & 0.372  & 0.050  & 0.323  & 0.062  & 0.329  & 0.059  & 0.292  & 0.066  & 0.316  & 0.066  & 0.281 \tabularnewline
300  & 0.039  & 0.294  & 0.043  & 0.264  & 0.063  & 0.266  & 0.057  & 0.241  & 0.060  & 0.260  & 0.056  & 0.236 \tabularnewline
400  & 0.041  & 0.251  & 0.041  & 0.228  & 0.060  & 0.228  & 0.050  & 0.210  & 0.057  & 0.226  & 0.051  & 0.208 \tabularnewline
500  & 0.036  & 0.222  & 0.037  & 0.204  & 0.055  & 0.203  & 0.049  & 0.189  & 0.053  & 0.202  & 0.048  & 0.187 \tabularnewline
600  & 0.042  & 0.201  & 0.050  & 0.186  & 0.058  & 0.185  & 0.051  & 0.173  & 0.054  & 0.185  & 0.050  & 0.173 \tabularnewline
\hline
\end{tabular}
}
\par\end{centering}
{

{\footnotesize\textit{Notes}}{\footnotesize : The data generating process
corresponds to \eqref{eq:dgp_iid_inno} and \eqref{eq:garch_error}.
The upper and lower panels report the empirical size of testing the
null hypotheses $\mathbb{H}_{0}:\beta_{1}^{*}=0$ and $\mathbb{H}_{0}:\gamma_{1}^{*}=0$,
respectively, at a 5\% nominal significance level. ``Size'' is calculated
as $R^{-1}\sum_{r=1}^{R}\mathbf{1}\left[|t^{\text{XD}(r)}|>\mathrm{\Phi}_{0.975}\right]$
across $R=2,000$ replications, where $t^{\text{XD}(r)}$ is computed
based on \eqref{eq:XD=000020t=000020stat} for the $r$-th replication,
and the critical value $\mathrm{\Phi}_{0.975}\left(\approx1.96\right)$
is the 97.5-th percentile of the standard normal distribution. ``Len.''
refers to the median length of the 95\% confidence intervals across
replications. The IVX oracle and OLS oracle are infeasible estimators.
The ``Calibrated'' and ``CV'' columns refer to the methods used
for choosing the tuning parameters through calibration and cross-validation,
respectively. }}{\footnotesize\par}
\end{table}
 
\begin{table}[hp]
\begin{centering}
\caption{Empirical size and length of confidence with homoskedastic S.E.: AR(1)
innovations and GARCH error terms\protect\label{tab:mc_size_robustse_ar1_garch_error_homo}}
{\small\begin{tabular}{c|cccc|cccc|cccc}
\hline 
\multirow{3}{*}{$n$ } & \multicolumn{4}{c|}{Oracle} & \multicolumn{4}{c|}{Calibrated } & \multicolumn{4}{c}{CV}\tabularnewline
 & \multicolumn{2}{c}{IVX Oracle} & \multicolumn{2}{c|}{OLS Oracle} & \multicolumn{2}{c}{XDlasso} & \multicolumn{2}{c|}{Dlasso} & \multicolumn{2}{c}{XDlasso} & \multicolumn{2}{c}{Dlasso}\tabularnewline
\cline{2-13}
 & Size  & Len.  & Size  & Len.  & Size  & Len.  & Size  & Len.  & Size  & Len.  & Size  & Len. \tabularnewline
\hline
\multicolumn{13}{c}{$\mathbb{H}_{0}:\beta_{1}^{*}=0$ for nonstationary regressor, \ $\left(p_{x},p_{z}\right)=\left(50,100\right)$}\tabularnewline
\hline 
200  & 0.055  & 0.160  & 0.167  & 0.072  & 0.067  & 0.168  & 0.403  & 0.078  & 0.084  & 0.172  & 0.493  & 0.134 \tabularnewline
300  & 0.045  & 0.113  & 0.162  & 0.047  & 0.048  & 0.121  & 0.472  & 0.058  & 0.076  & 0.124  & 0.576  & 0.092 \tabularnewline
400  & 0.050  & 0.088  & 0.163  & 0.036  & 0.056  & 0.096  & 0.517  & 0.048  & 0.085  & 0.098  & 0.593  & 0.071 \tabularnewline
500  & 0.047  & 0.073  & 0.151  & 0.028  & 0.054  & 0.081  & 0.550  & 0.040  & 0.084  & 0.083  & 0.625  & 0.055 \tabularnewline
600  & 0.048  & 0.063  & 0.151  & 0.024  & 0.058  & 0.071  & 0.568  & 0.035  & 0.080  & 0.072  & 0.607  & 0.046 \tabularnewline
\hline 
\multicolumn{13}{c}{$\left(p_{x},p_{z}\right)=\left(100,150\right)$}\tabularnewline
\hline 
200  & 0.044  & 0.159  & 0.162  & 0.072  & 0.060  & 0.166  & 0.433  & 0.075  & 0.093  & 0.174  & 0.542  & 0.132 \tabularnewline
300  & 0.044  & 0.111  & 0.153  & 0.047  & 0.048  & 0.121  & 0.510  & 0.057  & 0.080  & 0.127  & 0.646  & 0.104 \tabularnewline
400  & 0.048  & 0.087  & 0.149  & 0.035  & 0.048  & 0.096  & 0.543  & 0.046  & 0.082  & 0.100  & 0.702  & 0.084 \tabularnewline
500  & 0.048  & 0.073  & 0.149  & 0.028  & 0.053  & 0.080  & 0.600  & 0.039  & 0.074  & 0.083  & 0.737  & 0.067 \tabularnewline
600  & 0.049  & 0.063  & 0.141  & 0.023  & 0.049  & 0.069  & 0.639  & 0.034  & 0.072  & 0.072  & 0.747  & 0.056 \tabularnewline
\hline 
\multicolumn{13}{c}{$\left(p_{x},p_{z}\right)=\left(150,300\right)$}\tabularnewline
\hline 
200  & 0.052  & 0.160  & 0.167  & 0.071  & 0.060  & 0.158  & 0.361  & 0.071  & 0.097  & 0.171  & 0.518  & 0.110 \tabularnewline
300  & 0.057  & 0.113  & 0.157  & 0.047  & 0.055  & 0.115  & 0.460  & 0.054  & 0.072  & 0.124  & 0.622  & 0.087 \tabularnewline
400  & 0.048  & 0.088  & 0.142  & 0.035  & 0.054  & 0.091  & 0.528  & 0.044  & 0.085  & 0.099  & 0.689  & 0.077 \tabularnewline
500  & 0.055  & 0.074  & 0.151  & 0.028  & 0.061  & 0.076  & 0.584  & 0.037  & 0.089  & 0.082  & 0.729  & 0.065 \tabularnewline
600  & 0.049  & 0.063  & 0.153  & 0.023  & 0.061  & 0.065  & 0.623  & 0.032  & 0.073  & 0.071  & 0.750  & 0.054 \tabularnewline
\hline
\multicolumn{13}{c}{$\mathbb{H}_{0}:\gamma_{1}^{*}=0$ for stationary regressor, \ $\left(p_{x},p_{z}\right)=\left(50,100\right)$}\tabularnewline
\hline 
200  & 0.040  & 0.379  & 0.054  & 0.311  & 0.062  & 0.331  & 0.073  & 0.273  & 0.066  & 0.330  & 0.075  & 0.272 \tabularnewline
300  & 0.043  & 0.296  & 0.050  & 0.252  & 0.058  & 0.267  & 0.062  & 0.228  & 0.059  & 0.268  & 0.062  & 0.227 \tabularnewline
400  & 0.043  & 0.251  & 0.048  & 0.218  & 0.051  & 0.230  & 0.067  & 0.200  & 0.051  & 0.230  & 0.068  & 0.199 \tabularnewline
500  & 0.042  & 0.221  & 0.053  & 0.194  & 0.059  & 0.204  & 0.058  & 0.180  & 0.065  & 0.205  & 0.061  & 0.179 \tabularnewline
600  & 0.040  & 0.199  & 0.047  & 0.177  & 0.055  & 0.186  & 0.057  & 0.165  & 0.056  & 0.187  & 0.057  & 0.165 \tabularnewline
\hline 
\multicolumn{13}{c}{$\left(p_{x},p_{z}\right)=\left(100,150\right)$}\tabularnewline
\hline 
200  & 0.046  & 0.381  & 0.057  & 0.311  & 0.069  & 0.332  & 0.071  & 0.273  & 0.072  & 0.328  & 0.074  & 0.270 \tabularnewline
300  & 0.043  & 0.297  & 0.049  & 0.254  & 0.062  & 0.267  & 0.059  & 0.227  & 0.065  & 0.266  & 0.061  & 0.225 \tabularnewline
400  & 0.045  & 0.252  & 0.059  & 0.218  & 0.062  & 0.229  & 0.065  & 0.198  & 0.065  & 0.229  & 0.070  & 0.198 \tabularnewline
500  & 0.049  & 0.221  & 0.058  & 0.195  & 0.062  & 0.204  & 0.069  & 0.179  & 0.062  & 0.204  & 0.069  & 0.179 \tabularnewline
600  & 0.045  & 0.199  & 0.055  & 0.178  & 0.053  & 0.185  & 0.063  & 0.164  & 0.056  & 0.186  & 0.064  & 0.164 \tabularnewline
\hline 
\multicolumn{13}{c}{$\left(p_{x},p_{z}\right)=\left(150,300\right)$}\tabularnewline
\hline 
200  & 0.038  & 0.380  & 0.046  & 0.314  & 0.064  & 0.332  & 0.059  & 0.275  & 0.063  & 0.325  & 0.062  & 0.268 \tabularnewline
300  & 0.037  & 0.298  & 0.040  & 0.253  & 0.065  & 0.267  & 0.062  & 0.227  & 0.065  & 0.264  & 0.057  & 0.224 \tabularnewline
400  & 0.040  & 0.253  & 0.040  & 0.219  & 0.056  & 0.228  & 0.054  & 0.198  & 0.054  & 0.228  & 0.052  & 0.197 \tabularnewline
500  & 0.039  & 0.222  & 0.042  & 0.195  & 0.058  & 0.203  & 0.055  & 0.178  & 0.057  & 0.203  & 0.053  & 0.178 \tabularnewline
600  & 0.046  & 0.200  & 0.052  & 0.178  & 0.063  & 0.184  & 0.056  & 0.164  & 0.060  & 0.184  & 0.057  & 0.164 \tabularnewline
\hline
\end{tabular}

}
\par\end{centering}
{

{\footnotesize\textit{Notes}}{\footnotesize : The data generating process
corresponds to \eqref{eq:dgp_ar1_inno} and \eqref{eq:garch_error}.
The upper and lower panels report the empirical size of testing the
null hypotheses $\mathbb{H}_{0}:\beta_{1}^{*}=0$ and $\mathbb{H}_{0}:\gamma_{1}^{*}=0$
at a 5\% nominal significance level, respectively. ``Size'' is calculated
as $R^{-1}\sum_{r=1}^{R}\mathbf{1}\left[|t^{\text{XD}(r)}|>\mathrm{\Phi}_{0.975}\right]$
across $R=2,000$ replications, where $t^{\text{XD}(r)}$ is computed
based on \eqref{eq:XD=000020t=000020stat} for the $r$-th replication,
and the critical value $\mathrm{\Phi}_{0.975}\left(\approx1.96\right)$
is the 97.5-th percentile of the standard normal distribution. ``Len.''
refers to the median length of the 95\% confidence intervals across
replications. The IVX oracle and OLS oracle are infeasible estimators.
The ``Calibrated'' and ``CV'' columns refer to the methods used
for choosing the tuning parameters through calibration and cross-validation,
respectively. }}{\footnotesize\par}
\end{table}
\begin{table}[hp]
\begin{centering}
\caption{Empirical size and length of confidence interval with robust S.E.:
IID innovations and IID error terms\protect\label{tab:mc_size_robustse_iid}}
{\small\begin{tabular}{c|cccc|cccc|cccc}
\hline 
\multirow{3}{*}{$n$ } & \multicolumn{4}{c|}{Oracle} & \multicolumn{4}{c|}{Calibrated } & \multicolumn{4}{c}{CV}\tabularnewline
 & \multicolumn{2}{c}{IVX Oracle} & \multicolumn{2}{c|}{OLS Oracle} & \multicolumn{2}{c}{XDlasso} & \multicolumn{2}{c|}{Dlasso} & \multicolumn{2}{c}{XDlasso} & \multicolumn{2}{c}{Dlasso}\tabularnewline
\cline{2-13}
 & Size  & Len.  & Size  & Len.  & Size  & Len.  & Size  & Len.  & Size  & Len.  & Size  & Len. \tabularnewline
\hline 
\multicolumn{13}{c}{$\mathbb{H}_{0}:\beta_{1}^{*}=0$ for nonstationary regressor}\tabularnewline
\hline 
\multicolumn{13}{c}{$\left(p_{x},p_{z}\right)=\left(50,100\right)$}\tabularnewline
\hline 
200  & 0.040  & 0.216  & 0.145  & 0.099  & 0.056  & 0.220  & 0.373  & 0.105  & 0.057  & 0.227  & 0.434  & 0.159 \tabularnewline
300  & 0.041  & 0.155  & 0.140  & 0.066  & 0.046  & 0.166  & 0.450  & 0.079  & 0.060  & 0.171  & 0.533  & 0.120 \tabularnewline
400  & 0.043  & 0.121  & 0.141  & 0.050  & 0.052  & 0.132  & 0.472  & 0.064  & 0.063  & 0.136  & 0.551  & 0.093 \tabularnewline
500  & 0.051  & 0.102  & 0.159  & 0.040  & 0.049  & 0.113  & 0.511  & 0.054  & 0.057  & 0.115  & 0.597  & 0.077 \tabularnewline
600  & 0.043  & 0.088  & 0.154  & 0.033  & 0.049  & 0.098  & 0.547  & 0.047  & 0.066  & 0.100  & 0.608  & 0.064 \tabularnewline
\hline 
\multicolumn{13}{c}{$\left(p_{x},p_{z}\right)=\left(100,150\right)$}\tabularnewline
\hline 
200  & 0.048  & 0.219  & 0.138  & 0.098  & 0.050  & 0.219  & 0.381  & 0.101  & 0.058  & 0.227  & 0.504  & 0.158 \tabularnewline
300  & 0.042  & 0.154  & 0.155  & 0.066  & 0.040  & 0.160  & 0.460  & 0.076  & 0.059  & 0.168  & 0.596  & 0.126 \tabularnewline
400  & 0.039  & 0.121  & 0.150  & 0.050  & 0.046  & 0.129  & 0.519  & 0.062  & 0.063  & 0.135  & 0.653  & 0.108 \tabularnewline
500  & 0.051  & 0.101  & 0.151  & 0.039  & 0.046  & 0.110  & 0.564  & 0.053  & 0.063  & 0.114  & 0.711  & 0.090 \tabularnewline
600  & 0.047  & 0.088  & 0.149  & 0.033  & 0.044  & 0.096  & 0.607  & 0.046  & 0.065  & 0.100  & 0.746  & 0.077 \tabularnewline
\hline 
\multicolumn{13}{c}{$\left(p_{x},p_{z}\right)=\left(150,300\right)$}\tabularnewline
\hline 
200  & 0.044  & 0.219  & 0.158  & 0.100  & 0.051  & 0.213  & 0.339  & 0.097  & 0.056  & 0.223  & 0.475  & 0.140 \tabularnewline
300  & 0.048  & 0.156  & 0.159  & 0.067  & 0.055  & 0.158  & 0.430  & 0.073  & 0.063  & 0.167  & 0.571  & 0.113 \tabularnewline
400  & 0.055  & 0.123  & 0.158  & 0.050  & 0.057  & 0.125  & 0.494  & 0.059  & 0.067  & 0.133  & 0.651  & 0.096 \tabularnewline
500  & 0.045  & 0.102  & 0.155  & 0.040  & 0.057  & 0.105  & 0.545  & 0.050  & 0.068  & 0.113  & 0.702  & 0.086 \tabularnewline
600  & 0.045  & 0.087  & 0.138  & 0.033  & 0.060  & 0.090  & 0.573  & 0.044  & 0.070  & 0.098  & 0.737  & 0.077 \tabularnewline
\hline 
\multicolumn{13}{c}{$\mathbb{H}_{0}:\gamma_{1}^{*}=0$ for stationary regressor}\tabularnewline
\hline 
\multicolumn{13}{c}{$\left(p_{x},p_{z}\right)=\left(50,100\right)$}\tabularnewline
\hline 
200  & 0.037  & 0.374  & 0.046  & 0.324  & 0.055  & 0.325  & 0.068  & 0.288  & 0.059  & 0.324  & 0.064  & 0.287 \tabularnewline
300  & 0.038  & 0.295  & 0.043  & 0.263  & 0.053  & 0.265  & 0.056  & 0.240  & 0.055  & 0.265  & 0.058  & 0.240 \tabularnewline
400  & 0.038  & 0.251  & 0.045  & 0.227  & 0.057  & 0.228  & 0.062  & 0.210  & 0.060  & 0.229  & 0.065  & 0.210 \tabularnewline
500  & 0.038  & 0.222  & 0.051  & 0.203  & 0.052  & 0.204  & 0.055  & 0.189  & 0.055  & 0.204  & 0.059  & 0.189 \tabularnewline
600  & 0.040  & 0.201  & 0.046  & 0.185  & 0.057  & 0.186  & 0.057  & 0.174  & 0.059  & 0.187  & 0.058  & 0.174 \tabularnewline
\hline 
\multicolumn{13}{c}{$\left(p_{x},p_{z}\right)=\left(100,150\right)$}\tabularnewline
\hline 
200  & 0.054  & 0.372  & 0.062  & 0.324  & 0.074  & 0.326  & 0.078  & 0.289  & 0.079  & 0.320  & 0.085  & 0.285 \tabularnewline
300  & 0.052  & 0.294  & 0.059  & 0.264  & 0.070  & 0.264  & 0.071  & 0.239  & 0.074  & 0.262  & 0.070  & 0.238 \tabularnewline
400  & 0.048  & 0.250  & 0.056  & 0.228  & 0.071  & 0.228  & 0.066  & 0.210  & 0.074  & 0.228  & 0.069  & 0.209 \tabularnewline
500  & 0.045  & 0.222  & 0.058  & 0.204  & 0.061  & 0.204  & 0.066  & 0.189  & 0.064  & 0.204  & 0.068  & 0.189 \tabularnewline
600  & 0.045  & 0.200  & 0.054  & 0.185  & 0.055  & 0.185  & 0.056  & 0.173  & 0.056  & 0.186  & 0.056  & 0.173 \tabularnewline
\hline 
\multicolumn{13}{c}{$\left(p_{x},p_{z}\right)=\left(150,300\right)$}\tabularnewline
\hline 
200  & 0.039  & 0.375  & 0.054  & 0.324  & 0.062  & 0.328  & 0.061  & 0.292  & 0.071  & 0.316  & 0.074  & 0.281 \tabularnewline
300  & 0.036  & 0.295  & 0.044  & 0.263  & 0.053  & 0.265  & 0.056  & 0.240  & 0.055  & 0.260  & 0.059  & 0.237 \tabularnewline
400  & 0.046  & 0.251  & 0.051  & 0.227  & 0.060  & 0.228  & 0.064  & 0.210  & 0.055  & 0.226  & 0.059  & 0.208 \tabularnewline
500  & 0.043  & 0.222  & 0.049  & 0.203  & 0.050  & 0.203  & 0.054  & 0.188  & 0.049  & 0.202  & 0.052  & 0.188 \tabularnewline
600  & 0.038  & 0.200  & 0.049  & 0.185  & 0.054  & 0.185  & 0.061  & 0.173  & 0.050  & 0.185  & 0.059  & 0.173 \tabularnewline
\hline 
\end{tabular}}
\par\end{centering}
{

{\footnotesize\textit{Notes}}{\footnotesize : The data generating process
corresponds to \eqref{eq:dgp_iid_inno}. The upper and lower panels
report the empirical size of testing the null hypotheses $\mathbb{H}_{0}:\beta_{1}^{*}=0$
and $\mathbb{H}_{0}:\gamma_{1}^{*}=0$, respectively, at a 5\% nominal
significance level. ``Size'' is calculated as $R^{-1}\sum_{r=1}^{R}\mathbf{1}\left[|t^{\text{XD}(r)}|>\mathrm{\Phi}_{0.975}\right]$
across $R=2,000$ replications, where $t^{\text{XD}(r)}$ is computed
based on \eqref{eq:heterose} for the $r$-th replication, and the
critical value $\mathrm{\Phi}_{0.975}\left(\approx1.96\right)$ is
the 97.5-th percentile of the standard normal distribution. ``Len.''
refers to the median length of the 95\% confidence intervals across
replications. The IVX oracle and OLS oracle are infeasible estimators.
The ``Calibrated'' and ``CV'' columns refer to the methods used
for choosing the tuning parameters through calibration and cross-validation,
respectively. }}{\footnotesize\par}
\end{table}
 
\begin{table}[hp]
\begin{centering}
\caption{Empirical size and length of confidence with robust S.E.: AR(1) innovations
and IID error terms\protect\label{tab:mc_size_robustse_ar1}}
{\small\begin{tabular}{c|cccc|cccc|cccc}
\hline 
\multirow{3}{*}{$n$ } & \multicolumn{4}{c|}{Oracle} & \multicolumn{4}{c|}{Calibrated } & \multicolumn{4}{c}{CV}\tabularnewline
 & \multicolumn{2}{c}{IVX Oracle} & \multicolumn{2}{c|}{OLS Oracle} & \multicolumn{2}{c}{XDlasso} & \multicolumn{2}{c|}{Dlasso} & \multicolumn{2}{c}{XDlasso} & \multicolumn{2}{c}{Dlasso}\tabularnewline
\cline{2-13}
 & Size  & Len.  & Size  & Len.  & Size  & Len.  & Size  & Len.  & Size  & Len.  & Size  & Len. \tabularnewline
\hline 
\multicolumn{13}{c}{$\mathbb{H}_{0}:\beta_{1}^{*}=0$ for nonstationary regressor}\tabularnewline
\hline 
\multicolumn{13}{c}{$\left(p_{x},p_{z}\right)=\left(50,100\right)$}\tabularnewline
\hline 
200  & 0.044  & 0.161  & 0.146  & 0.071  & 0.057  & 0.169  & 0.412  & 0.078  & 0.090  & 0.173  & 0.487  & 0.135 \tabularnewline
300  & 0.044  & 0.112  & 0.156  & 0.047  & 0.058  & 0.123  & 0.459  & 0.059  & 0.075  & 0.126  & 0.570  & 0.096 \tabularnewline
400  & 0.052  & 0.089  & 0.149  & 0.036  & 0.054  & 0.098  & 0.521  & 0.048  & 0.076  & 0.100  & 0.604  & 0.072 \tabularnewline
500  & 0.046  & 0.074  & 0.156  & 0.029  & 0.050  & 0.083  & 0.549  & 0.041  & 0.071  & 0.084  & 0.605  & 0.057 \tabularnewline
600  & 0.053  & 0.063  & 0.150  & 0.024  & 0.052  & 0.072  & 0.584  & 0.035  & 0.073  & 0.073  & 0.615  & 0.047 \tabularnewline
\hline 
\multicolumn{13}{c}{$\left(p_{x},p_{z}\right)=\left(100,150\right)$}\tabularnewline
\hline 
200  & 0.051  & 0.163  & 0.164  & 0.072  & 0.053  & 0.169  & 0.431  & 0.076  & 0.094  & 0.175  & 0.548  & 0.136 \tabularnewline
300  & 0.045  & 0.112  & 0.150  & 0.048  & 0.049  & 0.119  & 0.501  & 0.057  & 0.084  & 0.124  & 0.636  & 0.102 \tabularnewline
400  & 0.047  & 0.088  & 0.156  & 0.036  & 0.051  & 0.095  & 0.555  & 0.047  & 0.086  & 0.100  & 0.709  & 0.084 \tabularnewline
500  & 0.057  & 0.073  & 0.153  & 0.028  & 0.049  & 0.080  & 0.617  & 0.040  & 0.080  & 0.084  & 0.746  & 0.067 \tabularnewline
600  & 0.057  & 0.063  & 0.145  & 0.024  & 0.052  & 0.069  & 0.655  & 0.035  & 0.075  & 0.072  & 0.746  & 0.057 \tabularnewline
\hline 
\multicolumn{13}{c}{$\left(p_{x},p_{z}\right)=\left(150,300\right)$}\tabularnewline
\hline 
200  & 0.049  & 0.162  & 0.168  & 0.073  & 0.059  & 0.160  & 0.372  & 0.072  & 0.079  & 0.169  & 0.512  & 0.108 \tabularnewline
300  & 0.050  & 0.113  & 0.156  & 0.048  & 0.062  & 0.117  & 0.458  & 0.054  & 0.080  & 0.124  & 0.603  & 0.089 \tabularnewline
400  & 0.052  & 0.088  & 0.145  & 0.036  & 0.058  & 0.091  & 0.525  & 0.044  & 0.078  & 0.097  & 0.672  & 0.075 \tabularnewline
500  & 0.048  & 0.073  & 0.152  & 0.028  & 0.066  & 0.077  & 0.592  & 0.037  & 0.084  & 0.082  & 0.716  & 0.065 \tabularnewline
600  & 0.051  & 0.062  & 0.155  & 0.024  & 0.057  & 0.066  & 0.622  & 0.033  & 0.071  & 0.071  & 0.765  & 0.057 \tabularnewline
\hline 
\multicolumn{13}{c}{$\mathbb{H}_{0}:\gamma_{1}^{*}=0$ for stationary regressor}\tabularnewline
\hline 
\multicolumn{13}{c}{$\left(p_{x},p_{z}\right)=\left(50,100\right)$}\tabularnewline
\hline 
200  & 0.037  & 0.380  & 0.053  & 0.313  & 0.066  & 0.332  & 0.067  & 0.274  & 0.072  & 0.332  & 0.071  & 0.273 \tabularnewline
300  & 0.047  & 0.297  & 0.049  & 0.253  & 0.063  & 0.268  & 0.064  & 0.228  & 0.066  & 0.268  & 0.068  & 0.228 \tabularnewline
400  & 0.041  & 0.251  & 0.052  & 0.218  & 0.061  & 0.230  & 0.071  & 0.200  & 0.065  & 0.230  & 0.074  & 0.199 \tabularnewline
500  & 0.046  & 0.221  & 0.052  & 0.195  & 0.062  & 0.205  & 0.066  & 0.180  & 0.062  & 0.205  & 0.065  & 0.179 \tabularnewline
600  & 0.048  & 0.199  & 0.051  & 0.178  & 0.056  & 0.186  & 0.056  & 0.165  & 0.056  & 0.186  & 0.056  & 0.165 \tabularnewline
\hline 
\multicolumn{13}{c}{$\left(p_{x},p_{z}\right)=\left(100,150\right)$}\tabularnewline
\hline 
200  & 0.045  & 0.380  & 0.057  & 0.312  & 0.078  & 0.332  & 0.063  & 0.274  & 0.089  & 0.328  & 0.076  & 0.271 \tabularnewline
300  & 0.047  & 0.297  & 0.049  & 0.253  & 0.068  & 0.267  & 0.064  & 0.227  & 0.073  & 0.266  & 0.066  & 0.226 \tabularnewline
400  & 0.050  & 0.251  & 0.050  & 0.218  & 0.066  & 0.229  & 0.057  & 0.199  & 0.064  & 0.229  & 0.059  & 0.198 \tabularnewline
500  & 0.050  & 0.221  & 0.052  & 0.195  & 0.060  & 0.203  & 0.059  & 0.179  & 0.057  & 0.204  & 0.059  & 0.179 \tabularnewline
600  & 0.049  & 0.199  & 0.053  & 0.178  & 0.060  & 0.185  & 0.063  & 0.164  & 0.060  & 0.186  & 0.063  & 0.164 \tabularnewline
\hline 
\multicolumn{13}{c}{$\left(p_{x},p_{z}\right)=\left(150,300\right)$}\tabularnewline
\hline 
200  & 0.034  & 0.380  & 0.053  & 0.311  & 0.060  & 0.334  & 0.065  & 0.275  & 0.061  & 0.323  & 0.073  & 0.267 \tabularnewline
300  & 0.039  & 0.297  & 0.044  & 0.252  & 0.057  & 0.267  & 0.059  & 0.227  & 0.054  & 0.263  & 0.058  & 0.224 \tabularnewline
400  & 0.035  & 0.251  & 0.049  & 0.218  & 0.059  & 0.228  & 0.056  & 0.198  & 0.059  & 0.227  & 0.055  & 0.196 \tabularnewline
500  & 0.036  & 0.221  & 0.044  & 0.194  & 0.052  & 0.202  & 0.059  & 0.178  & 0.054  & 0.203  & 0.060  & 0.178 \tabularnewline
600  & 0.038  & 0.199  & 0.045  & 0.177  & 0.051  & 0.184  & 0.054  & 0.164  & 0.051  & 0.185  & 0.056  & 0.164 \tabularnewline
\hline 
\end{tabular}

}
\par\end{centering}
{

{\footnotesize\textit{Notes}}{\footnotesize : The data generating process
corresponds to \eqref{eq:dgp_ar1_inno}. The upper and lower panels
report the empirical size of testing the null hypotheses $\mathbb{H}_{0}:\beta_{1}^{*}=0$
and $\mathbb{H}_{0}:\gamma_{1}^{*}=0$ at a 5\% nominal significance
level, respectively. ``Size'' is calculated as $R^{-1}\sum_{r=1}^{R}\mathbf{1}\left[|t^{\text{XD}(r)}|>\mathrm{\Phi}_{0.975}\right]$
across $R=2,000$ replications, where $t^{\text{XD}(r)}$ is computed
based on \eqref{eq:heterose} for the $r$-th replication, and the
critical value $\mathrm{\Phi}_{0.975}\left(\approx1.96\right)$ is
the 97.5-th percentile of the standard normal distribution. ``Len.''
refers to the median length of the 95\% confidence intervals across
replications. The IVX oracle and OLS oracle are infeasible estimators.
The ``Calibrated'' and ``CV'' columns refer to the methods used
for choosing the tuning parameters through calibration and cross-validation,
respectively. }}{\footnotesize\par}
\end{table}
\begin{table}[hp]
\begin{centering}
\caption{Empirical size and length of confidence interval with robust S.E.:
IID innovations and GARCH error terms\protect\label{tab:mc_size_robustse_iid_garch_error}}
{\small\begin{tabular}{c|cccc|cccc|cccc}
\hline 
\multirow{3}{*}{$n$ } & \multicolumn{4}{c|}{Oracle} & \multicolumn{4}{c|}{Calibrated } & \multicolumn{4}{c}{CV}\tabularnewline
 & \multicolumn{2}{c}{IVX Oracle} & \multicolumn{2}{c|}{OLS Oracle} & \multicolumn{2}{c}{XDlasso} & \multicolumn{2}{c|}{Dlasso} & \multicolumn{2}{c}{XDlasso} & \multicolumn{2}{c}{Dlasso}\tabularnewline
\cline{2-13}
 & Size  & Len.  & Size  & Len.  & Size  & Len.  & Size  & Len.  & Size  & Len.  & Size  & Len. \tabularnewline

 \hline 
 \multicolumn{13}{c}{$\mathbb{H}_{1}:\gamma_{1}^{*}=0.1$ for nonstationary regressor}\tabularnewline
 \hline 
 \multicolumn{13}{c}{$\left(p_{x},p_{z}\right)=\left(50,100\right)$}\tabularnewline
 \hline 
 200  & 0.042  & 0.218  & 0.135  & 0.100  & 0.045  & 0.222  & 0.340  & 0.105  & 0.052  & 0.227  & 0.415  & 0.153 \tabularnewline
 300  & 0.039  & 0.154  & 0.155  & 0.066  & 0.051  & 0.164  & 0.434  & 0.079  & 0.059  & 0.167  & 0.517  & 0.118 \tabularnewline
 400  & 0.045  & 0.122  & 0.157  & 0.050  & 0.050  & 0.132  & 0.462  & 0.064  & 0.067  & 0.135  & 0.555  & 0.095 \tabularnewline
 500  & 0.045  & 0.102  & 0.147  & 0.040  & 0.054  & 0.112  & 0.512  & 0.054  & 0.060  & 0.115  & 0.591  & 0.077 \tabularnewline
 600  & 0.049  & 0.088  & 0.142  & 0.033  & 0.053  & 0.098  & 0.531  & 0.047  & 0.064  & 0.099  & 0.610  & 0.065 \tabularnewline
 \hline 
 \multicolumn{13}{c}{$\left(p_{x},p_{z}\right)=\left(100,150\right)$}\tabularnewline
 \hline 
 200  & 0.042  & 0.220  & 0.162  & 0.101  & 0.043  & 0.224  & 0.390  & 0.104  & 0.055  & 0.229  & 0.516  & 0.160 \tabularnewline
 300  & 0.048  & 0.154  & 0.148  & 0.067  & 0.045  & 0.164  & 0.471  & 0.077  & 0.060  & 0.170  & 0.602  & 0.131 \tabularnewline
 400  & 0.045  & 0.123  & 0.159  & 0.051  & 0.046  & 0.130  & 0.531  & 0.063  & 0.064  & 0.135  & 0.675  & 0.106 \tabularnewline
 500  & 0.050  & 0.103  & 0.154  & 0.040  & 0.055  & 0.110  & 0.589  & 0.053  & 0.077  & 0.114  & 0.720  & 0.091 \tabularnewline
 600  & 0.051  & 0.089  & 0.151  & 0.034  & 0.056  & 0.097  & 0.598  & 0.047  & 0.072  & 0.100  & 0.734  & 0.077 \tabularnewline
 \hline 
 \multicolumn{13}{c}{$\left(p_{x},p_{z}\right)=\left(150,300\right)$}\tabularnewline
 \hline 
 200  & 0.044  & 0.214  & 0.138  & 0.098  & 0.054  & 0.219  & 0.335  & 0.097  & 0.064  & 0.226  & 0.498  & 0.139 \tabularnewline
 300  & 0.040  & 0.154  & 0.157  & 0.066  & 0.056  & 0.158  & 0.423  & 0.073  & 0.055  & 0.166  & 0.594  & 0.112 \tabularnewline
 400  & 0.044  & 0.123  & 0.153  & 0.050  & 0.055  & 0.127  & 0.500  & 0.060  & 0.061  & 0.135  & 0.672  & 0.095 \tabularnewline
 500  & 0.049  & 0.101  & 0.150  & 0.040  & 0.052  & 0.106  & 0.546  & 0.051  & 0.056  & 0.113  & 0.703  & 0.084 \tabularnewline
 600  & 0.050  & 0.088  & 0.152  & 0.033  & 0.059  & 0.093  & 0.589  & 0.044  & 0.061  & 0.099  & 0.726  & 0.076 \tabularnewline
 \hline 
 \multicolumn{13}{c}{$\mathbb{H}_{0}:\gamma_{1}^{*}=0$ for nonstationary regressor}\tabularnewline
 \hline 
 \multicolumn{13}{c}{$\left(p_{x},p_{z}\right)=\left(50,100\right)$}\tabularnewline
 \hline 
 200  & 0.044  & 0.370  & 0.057  & 0.323  & 0.067  & 0.325  & 0.065  & 0.288  & 0.065  & 0.323  & 0.070  & 0.287 \tabularnewline
 300  & 0.038  & 0.295  & 0.054  & 0.263  & 0.055  & 0.265  & 0.059  & 0.241  & 0.055  & 0.264  & 0.058  & 0.239 \tabularnewline
 400  & 0.050  & 0.251  & 0.057  & 0.228  & 0.055  & 0.229  & 0.056  & 0.211  & 0.059  & 0.229  & 0.058  & 0.211 \tabularnewline
 500  & 0.044  & 0.222  & 0.056  & 0.203  & 0.054  & 0.205  & 0.064  & 0.190  & 0.058  & 0.205  & 0.064  & 0.190 \tabularnewline
 600  & 0.045  & 0.201  & 0.053  & 0.185  & 0.054  & 0.187  & 0.058  & 0.174  & 0.054  & 0.187  & 0.061  & 0.174 \tabularnewline
 \hline 
 \multicolumn{13}{c}{$\left(p_{x},p_{z}\right)=\left(100,150\right)$}\tabularnewline
 \hline 
 200  & 0.043  & 0.374  & 0.061  & 0.324  & 0.061  & 0.327  & 0.064  & 0.290  & 0.065  & 0.322  & 0.066  & 0.286 \tabularnewline
 300  & 0.043  & 0.295  & 0.056  & 0.263  & 0.059  & 0.265  & 0.057  & 0.240  & 0.067  & 0.263  & 0.062  & 0.238 \tabularnewline
 400  & 0.044  & 0.251  & 0.053  & 0.227  & 0.053  & 0.229  & 0.057  & 0.210  & 0.053  & 0.228  & 0.058  & 0.209 \tabularnewline
 500  & 0.048  & 0.221  & 0.055  & 0.203  & 0.057  & 0.204  & 0.058  & 0.189  & 0.058  & 0.204  & 0.059  & 0.189 \tabularnewline
 600  & 0.047  & 0.201  & 0.047  & 0.185  & 0.058  & 0.186  & 0.058  & 0.173  & 0.061  & 0.186  & 0.057  & 0.173 \tabularnewline
 \hline 
 \multicolumn{13}{c}{$\left(p_{x},p_{z}\right)=\left(150,300\right)$}\tabularnewline
 \hline 
 200  & 0.040  & 0.371  & 0.053  & 0.325  & 0.063  & 0.329  & 0.064  & 0.292  & 0.066  & 0.316  & 0.076  & 0.282 \tabularnewline
 300  & 0.042  & 0.295  & 0.052  & 0.263  & 0.057  & 0.266  & 0.062  & 0.241  & 0.054  & 0.260  & 0.064  & 0.236 \tabularnewline
 400  & 0.038  & 0.251  & 0.049  & 0.227  & 0.058  & 0.228  & 0.062  & 0.210  & 0.055  & 0.226  & 0.062  & 0.208 \tabularnewline
 500  & 0.042  & 0.222  & 0.049  & 0.203  & 0.056  & 0.203  & 0.063  & 0.189  & 0.050  & 0.202  & 0.060  & 0.188 \tabularnewline
 600  & 0.042  & 0.200  & 0.048  & 0.185  & 0.058  & 0.185  & 0.060  & 0.173  & 0.055  & 0.185  & 0.059  & 0.173 \tabularnewline
 \hline
\end{tabular}

}
\par\end{centering}
{

{\footnotesize\textit{Notes}}{\footnotesize : The data generating process
corresponds to \eqref{eq:dgp_iid_inno} and \eqref{eq:garch_error}.
The upper and lower panels report the empirical size of testing the
null hypotheses $\mathbb{H}_{0}:\beta_{1}^{*}=0$ and $\mathbb{H}_{0}:\gamma_{1}^{*}=0$,
respectively, at a 5\% nominal significance level. ``Size'' is calculated
as $R^{-1}\sum_{r=1}^{R}\mathbf{1}\left[|t^{\text{XD}(r)}|>\mathrm{\Phi}_{0.975}\right]$
across $R=2,000$ replications, where $t^{\text{XD}(r)}$ is computed
based on \eqref{eq:heterose} for the $r$-th replication, and the
critical value $\mathrm{\Phi}_{0.975}\left(\approx1.96\right)$ is
the 97.5-th percentile of the standard normal distribution. ``Len.''
refers to the median length of the 95\% confidence intervals across
replications. The IVX oracle and OLS oracle are infeasible estimators.
The ``Calibrated'' and ``CV'' columns refer to the methods used
for choosing the tuning parameters through calibration and cross-validation,
respectively. }}{\footnotesize\par}
\end{table}
 
\begin{table}[hp]
\begin{centering}
\caption{Empirical size and length of confidence with robust S.E.: AR(1) innovations
and GARCH error terms\protect\label{tab:mc_size_robustse_ar1_garch_error}}
{\small\begin{tabular}{c|cccc|cccc|cccc}
\hline 
\multirow{3}{*}{$n$ } & \multicolumn{4}{c|}{Oracle} & \multicolumn{4}{c|}{Calibrated } & \multicolumn{4}{c}{CV}\tabularnewline
 & \multicolumn{2}{c}{IVX Oracle} & \multicolumn{2}{c|}{OLS Oracle} & \multicolumn{2}{c}{XDlasso} & \multicolumn{2}{c|}{Dlasso} & \multicolumn{2}{c}{XDlasso} & \multicolumn{2}{c}{Dlasso}\tabularnewline
\cline{2-13}
 & Size  & Len.  & Size  & Len.  & Size  & Len.  & Size  & Len.  & Size  & Len.  & Size  & Len. \tabularnewline
\hline 
\multicolumn{13}{c}{$\mathbb{H}_{0}:\beta_{1}^{*}=0$ for nonstationary regressor}\tabularnewline
\hline 
\multicolumn{13}{c}{$\left(p_{x},p_{z}\right)=\left(50,100\right)$}\tabularnewline
\hline 
200  & 0.055  & 0.161  & 0.157  & 0.072  & 0.063  & 0.170  & 0.403  & 0.079  & 0.090  & 0.174  & 0.497  & 0.137 \tabularnewline
300  & 0.050  & 0.113  & 0.157  & 0.047  & 0.056  & 0.121  & 0.478  & 0.059  & 0.085  & 0.123  & 0.573  & 0.093 \tabularnewline
400  & 0.061  & 0.088  & 0.155  & 0.035  & 0.059  & 0.097  & 0.526  & 0.048  & 0.085  & 0.099  & 0.601  & 0.072 \tabularnewline
500  & 0.055  & 0.074  & 0.153  & 0.028  & 0.052  & 0.083  & 0.552  & 0.041  & 0.062  & 0.083  & 0.613  & 0.057 \tabularnewline
600  & 0.045  & 0.063  & 0.141  & 0.024  & 0.051  & 0.072  & 0.559  & 0.036  & 0.072  & 0.072  & 0.604  & 0.046 \tabularnewline
\hline 
\multicolumn{13}{c}{$\left(p_{x},p_{z}\right)=\left(100,150\right)$}\tabularnewline
\hline 
200  & 0.039  & 0.162  & 0.158  & 0.073  & 0.057  & 0.172  & 0.411  & 0.077  & 0.095  & 0.178  & 0.545  & 0.136 \tabularnewline
300  & 0.050  & 0.114  & 0.161  & 0.048  & 0.055  & 0.122  & 0.512  & 0.058  & 0.090  & 0.127  & 0.677  & 0.115 \tabularnewline
400  & 0.048  & 0.090  & 0.165  & 0.036  & 0.064  & 0.096  & 0.581  & 0.047  & 0.096  & 0.099  & 0.706  & 0.083 \tabularnewline
500  & 0.052  & 0.074  & 0.157  & 0.028  & 0.059  & 0.080  & 0.616  & 0.040  & 0.089  & 0.084  & 0.737  & 0.072 \tabularnewline
600  & 0.054  & 0.063  & 0.154  & 0.024  & 0.053  & 0.070  & 0.648  & 0.035  & 0.080  & 0.072  & 0.751  & 0.054 \tabularnewline
\hline 
\multicolumn{13}{c}{$\left(p_{x},p_{z}\right)=\left(150,300\right)$}\tabularnewline
\hline 
200  & 0.041  & 0.161  & 0.147  & 0.072  & 0.057  & 0.165  & 0.363  & 0.073  & 0.087  & 0.172  & 0.556  & 0.109 \tabularnewline
300  & 0.044  & 0.115  & 0.161  & 0.048  & 0.060  & 0.118  & 0.458  & 0.055  & 0.083  & 0.124  & 0.628  & 0.087 \tabularnewline
400  & 0.047  & 0.089  & 0.159  & 0.036  & 0.054  & 0.092  & 0.544  & 0.045  & 0.079  & 0.100  & 0.666  & 0.072 \tabularnewline
500  & 0.041  & 0.074  & 0.155  & 0.028  & 0.051  & 0.077  & 0.587  & 0.038  & 0.075  & 0.083  & 0.715  & 0.065 \tabularnewline
600  & 0.050  & 0.064  & 0.162  & 0.024  & 0.054  & 0.067  & 0.628  & 0.033  & 0.071  & 0.072  & 0.744  & 0.058 \tabularnewline
\hline 
\multicolumn{13}{c}{$\mathbb{H}_{0}:\gamma_{1}^{*}=0$ for stationary regressor}\tabularnewline
\hline 
\multicolumn{13}{c}{$\left(p_{x},p_{z}\right)=\left(50,100\right)$}\tabularnewline
\hline 
200  & 0.041  & 0.379  & 0.055  & 0.312  & 0.060  & 0.334  & 0.064  & 0.274  & 0.062  & 0.330  & 0.068  & 0.271 \tabularnewline
300  & 0.043  & 0.298  & 0.054  & 0.252  & 0.056  & 0.269  & 0.056  & 0.228  & 0.057  & 0.268  & 0.060  & 0.227 \tabularnewline
400  & 0.045  & 0.251  & 0.053  & 0.218  & 0.053  & 0.231  & 0.061  & 0.200  & 0.058  & 0.231  & 0.062  & 0.200 \tabularnewline
500  & 0.044  & 0.221  & 0.053  & 0.195  & 0.057  & 0.205  & 0.065  & 0.180  & 0.060  & 0.205  & 0.067  & 0.180 \tabularnewline
600  & 0.046  & 0.199  & 0.052  & 0.178  & 0.059  & 0.187  & 0.059  & 0.165  & 0.060  & 0.187  & 0.064  & 0.165 \tabularnewline
\hline 
\multicolumn{13}{c}{$\left(p_{x},p_{z}\right)=\left(100,150\right)$}\tabularnewline
\hline 
200  & 0.046  & 0.382  & 0.059  & 0.313  & 0.061  & 0.334  & 0.071  & 0.275  & 0.072  & 0.330  & 0.078  & 0.271 \tabularnewline
300  & 0.046  & 0.299  & 0.053  & 0.254  & 0.064  & 0.269  & 0.063  & 0.228  & 0.064  & 0.267  & 0.063  & 0.226 \tabularnewline
400  & 0.039  & 0.252  & 0.054  & 0.219  & 0.051  & 0.230  & 0.055  & 0.200  & 0.054  & 0.230  & 0.056  & 0.198 \tabularnewline
500  & 0.041  & 0.221  & 0.049  & 0.195  & 0.052  & 0.204  & 0.054  & 0.179  & 0.051  & 0.204  & 0.058  & 0.178 \tabularnewline
600  & 0.040  & 0.200  & 0.045  & 0.178  & 0.048  & 0.186  & 0.052  & 0.165  & 0.049  & 0.186  & 0.058  & 0.165 \tabularnewline
\hline 
\multicolumn{13}{c}{$\left(p_{x},p_{z}\right)=\left(150,300\right)$}\tabularnewline
\hline 
200  & 0.036  & 0.380  & 0.054  & 0.314  & 0.060  & 0.335  & 0.064  & 0.276  & 0.059  & 0.323  & 0.067  & 0.267 \tabularnewline
300  & 0.039  & 0.298  & 0.054  & 0.253  & 0.060  & 0.268  & 0.063  & 0.228  & 0.061  & 0.263  & 0.064  & 0.224 \tabularnewline
400  & 0.040  & 0.251  & 0.053  & 0.218  & 0.060  & 0.229  & 0.072  & 0.198  & 0.056  & 0.227  & 0.065  & 0.197 \tabularnewline
500  & 0.038  & 0.221  & 0.055  & 0.195  & 0.062  & 0.203  & 0.065  & 0.179  & 0.058  & 0.203  & 0.063  & 0.178 \tabularnewline
600  & 0.047  & 0.199  & 0.049  & 0.178  & 0.059  & 0.184  & 0.058  & 0.164  & 0.053  & 0.184  & 0.057  & 0.164 \tabularnewline
\hline 
\end{tabular}

}
\par\end{centering}
{

{\footnotesize\textit{Notes}}{\footnotesize : The data generating process
corresponds to \eqref{eq:dgp_ar1_inno} and \eqref{eq:garch_error}.
The upper and lower panels report the empirical size of testing the
null hypotheses $\mathbb{H}_{0}:\beta_{1}^{*}=0$ and $\mathbb{H}_{0}:\gamma_{1}^{*}=0$
at a 5\% nominal significance level, respectively. ``Size'' is calculated
as $R^{-1}\sum_{r=1}^{R}\mathbf{1}\left[|t^{\text{XD}(r)}|>\mathrm{\Phi}_{0.975}\right]$
across $R=2,000$ replications, where $t^{\text{XD}(r)}$ is computed
based on \eqref{eq:heterose} for the $r$-th replication, and the
critical value $\mathrm{\Phi}_{0.975}\left(\approx1.96\right)$ is
the 97.5-th percentile of the standard normal distribution. ``Len.''
refers to the median length of the 95\% confidence intervals across
replications. The IVX oracle and OLS oracle are infeasible estimators.
The ``Calibrated'' and ``CV'' columns refer to the methods used
for choosing the tuning parameters through calibration and cross-validation,
respectively. }}{\footnotesize\par}
\end{table}

\newpage

\subsection{Variance Ratio Test on Slasso Residual\protect\label{sec:mc_vrtest}}

In this section, we examine the finite sample rejection rate of performing
the automatic variance ratio test based on wild bootstrap \citep{kim2009automatic}
on the first step Slasso residuals. We follow the data generating
process in Section \ref{subsec:Setup} with AR(1) innovations, where
$u_{t}\sim i.i.d.N\left(0.1\right)$, and focus on $\left(p_{x},p_{z}\right)=\left(50,100\right)$
and $n\in\left\{ 200,400,600\right\} $. Table \ref{tab:vr_test_mc}
reports the proportion of rejection at nominal significance levels
1\%, 5\% and 10\% across $2,000$ replications. Even though Slasso
is consistent, we still observe severe over-rejection of performing
the variance ratio test on $\hat{u}_{t}$ in finite sample. 

\begin{table}
\caption{Rej. Rate of the Wild Bootstrapped Automatic Variance Ratio Test on
Slasso Residual\protect\label{tab:vr_test_mc}}

\medskip{}

\begin{centering}
\begin{tabular}{cccc}
\toprule 
$n$ & 1\% & 5\% & 10\%\tabularnewline
\midrule 
$200$ & 4.4\% & 14.0\% & 22.0\%\tabularnewline
\midrule 
$400$ & 3.4\% & 10.7\% & 17.2\%\tabularnewline
\midrule 
$600$ & 2.2\% & 10.0\% & 15.9\%\tabularnewline
\bottomrule
\end{tabular}\medskip{}
\par\end{centering}
{

{\footnotesize\textit{Notes}}{\footnotesize : We follow the data generating
process in Section 4.1 with AR(1) innovations, where $u_{t}\sim i.i.d.N\left(0.1\right)$,
and focus on $\left(p_{x},p_{z}\right)=\left(50,100\right)$ and $n\in\left\{ 200,400,600\right\} $.
This table reports the rejection rates of the automatic variance ratio
test based on the wild bootstrap \citep{kim2009automatic}. The test
is applied to the first-step Slasso residuals. Rejection rates are
shown at nominal significance levels of 1\%, 5\%, and 10\%, based
on 2,000 replications.}}{\footnotesize\par}
\end{table}

\setcounter{table}{0} 
\setcounter{figure}{0} 
\setcounter{equation}{0} 

\section{Additional Empirical Results\protect\label{sec:Additional-Empirical-Results}}

\subsection{Sensitivity to the Classification of I(2) Time Series and Logarithmic
Transformation\protect\label{sec:app_sens}}

To further evaluate the robustness of our findings, we consider two
alternative specifications in addition to the main analysis in Section
\ref{sec:Empirical-Application}. First, we exclude nonstationary
variables based on their integration orders as determined by the bootstrap
sequential testing procedure of \citet{smeekes2015bootstrap}, following
the summary in \citet{smeekes2020unit}. Second, we apply only logarithmic
transformations as indicated by TCODE, without differencing, and remove
highly nonstationary time series according to both TCODE and the classifications
in \citet{smeekes2020unit}. Table \ref{tab:app_est_sens_return}
shows that, across all specifications, XDlasso consistently finds
no evidence that the log earnings-price ratio
has predictive power for stock returns, which confirms
our main findings in Section \ref{sec:emp-return-ep}. Similarly,
the regression results for inflation predictability in Table \ref{tab:app_est_sens_pc}
are largely in line with those in Section \ref{subsec:emp_inflation_unrate}.
The only exception occurs in the full-sample estimation using untransformed
data and the classification of \citet{smeekes2020unit} when labor
variables are excluded. In these exceptions, diagnostic tests suggest
a violation of the m.d.s.~condition, which undermines the validity
of the inference.

\begin{table}
\begin{centering}
\caption{Test $\mathbb{H}_{0}:\theta_{1}^{*}=0$ for stock return predictability:
Alternative set of I(2) variables and transformation \protect\label{tab:app_est_sens_return}}
\par\end{centering}
{\small 
\begin{subtable}[t]{1\textwidth}
\centering
\caption{Untransformed Data: Excluding I(2) Variables Based on  \citet{smeekes2020unit}}

\begin{tabular}{c|ccc|ccc}
  \hline
  & \multicolumn{3}{c|}{Without \(\text{Return}_{t-1}\)} & \multicolumn{3}{c}{Include \(\text{Return}_{t-1}\)} \tabularnewline
  \hline
Sample Period & Dlasso & XDlasso & VR Test  & Dlasso & XDlasso & VR Test \tabularnewline
  \hline
Full Sample & 0.020 & -0.008 & \multirow{2}{*}{0.000} & 0.025$^{*}$ & 0.012 & \multirow{2}{*}{0.831} \tabularnewline
{\footnotesize \textit{(Jan. 1960 - Apr. 2025)}} & \textit{(0.016)} & \textit{(0.011)} & & \textit{(0.015)} & \textit{(0.010)} & \tabularnewline
\hline
Pre-1994 & 0.035 & -0.208 & \multirow{2}{*}{0.049} & 0.044 & 0.168 & \multirow{2}{*}{0.494} \tabularnewline
{\footnotesize \textit{(Jan. 1960 - Dec. 1993)}} & \textit{(0.055)} & \textit{(0.215)} & & \textit{(0.042)} & \textit{(0.154)} & \tabularnewline
Post-1994 & -0.003 & -0.022 & \multirow{2}{*}{0.011} & -0.000 & -0.009 & \multirow{2}{*}{0.163} \tabularnewline
{\footnotesize \textit{(Jan. 1994 - Apr. 2025)}} & \textit{(0.009)} & \textit{(0.017)} & & \textit{(0.009)} & \textit{(0.017)} & \tabularnewline
  \hline
\end{tabular}

\end{subtable}
\hfill
\medskip

\begin{subtable}[t]{1\textwidth}
\centering
\caption{Log Transformation: Excluding I(2) Variables Based on \texttt{TCODE}}

\begin{tabular}{c|ccc|ccc}
  \hline
  & \multicolumn{3}{c|}{Without \(\text{Return}_{t-1}\)} & \multicolumn{3}{c}{Include \(\text{Return}_{t-1}\)} \tabularnewline
  \hline
Sample Period & Dlasso & XDlasso & VR Test  & Dlasso & XDlasso & VR Test \tabularnewline
  \hline
Full Sample & 0.011 & -0.005 & \multirow{2}{*}{0.003} & 0.021 & 0.012 & \multirow{2}{*}{0.576} \tabularnewline
{\footnotesize \textit{(Jan. 1960 - Apr. 2025)}} & \textit{(0.014)} & \textit{(0.011)} & & \textit{(0.014)} & \textit{(0.010)} & \tabularnewline
\hline
Pre-1994 & 0.039 & -0.360 & \multirow{2}{*}{0.033} & 0.033 & 0.087 & \multirow{2}{*}{0.489} \tabularnewline
{\footnotesize \textit{(Jan. 1960 - Dec. 1993)}} & \textit{(0.049)} & \textit{(0.291)} & & \textit{(0.048)} & \textit{(0.213)} & \tabularnewline
Post-1994 & 0.041$^{**}$ & 0.020 & \multirow{2}{*}{0.010} & 0.045$^{**}$ & 0.033 & \multirow{2}{*}{0.258} \tabularnewline
{\footnotesize \textit{(Jan. 1994 - Apr. 2025)}} & \textit{(0.020)} & \textit{(0.030)} & & \textit{(0.020)} & \textit{(0.029)} & \tabularnewline
  \hline
\end{tabular}

\end{subtable}
\hfill
\medskip

\begin{subtable}[t]{1\textwidth}
\centering
\caption{Log Transformation: Excluding I(2) Variables Based on  \citet{smeekes2020unit}}

\begin{tabular}{c|ccc|ccc}
  \hline
  & \multicolumn{3}{c|}{Without \(\text{Return}_{t-1}\)} & \multicolumn{3}{c}{Include \(\text{Return}_{t-1}\)} \tabularnewline
  \hline
Sample Period & Dlasso & XDlasso & VR Test  & Dlasso & XDlasso & VR Test \tabularnewline
  \hline
Full Sample & 0.016 & -0.009 & \multirow{2}{*}{0.001} & 0.022$^{*}$ & 0.013 & \multirow{2}{*}{0.559} \tabularnewline
{\footnotesize \textit{(Jan. 1960 - Apr. 2025)}} & \textit{(0.012)} & \textit{(0.016)} &  & \textit{(0.012)} & \textit{(0.010)} & \tabularnewline
\hline
Pre-1994 & 0.106$^{*}$ & -0.394 & \multirow{2}{*}{0.032} & 0.108$^{**}$ & 0.101 & \multirow{2}{*}{0.476} \tabularnewline
{\footnotesize \textit{(Jan. 1960 - Dec. 1993)}} & \textit{(0.055)} & \textit{(0.317)} &  & \textit{(0.054)} & \textit{(0.292)} & \tabularnewline
Post-1994 & 0.018 & -0.001 & \multirow{2}{*}{0.010} & 0.021 & 0.016 & \multirow{2}{*}{0.216} \tabularnewline
{\footnotesize \textit{(Jan. 1994 - Apr. 2025)}} & \textit{(0.017)} & \textit{(0.037)} &  & \textit{(0.017)} & \textit{(0.027)} & \tabularnewline
  \hline
\end{tabular}

\end{subtable}
}
\medskip

{ {\footnotesize\textit{{}Notes:}}{\footnotesize{} We report estimates
and the standard error (in parentheses below the estimates) across
methods and setups. The symbols {*}, {*}{*}, and {*}{*}{*} indicate
significance levels at 10\%, 5\%, and 1\%, respectively. ``VR Test''
represents the $p$-value of the variance ratio test \citep{kim2009automatic}
on the LASSO residual. The tuning parameter for LASSO estimation is
selected through 10-fold block cross-validation. In XDlasso, instruments
are generated based on \eqref{eq:IVX_original} and \eqref{eq:rho=000020zeta}
with $C_{\zeta}=5$ and $\tau=0.5$.}}
\end{table}

\begin{table}
\begin{centering}
\caption{Test $\mathbb{H}_{0}:\theta_{1}^{*}=0$ for inflation predictability: Alternative
set of I(2) variables and transformation\protect\label{tab:app_est_sens_pc}}
\par\end{centering}
{\small 
\begin{subtable}[t]{1\textwidth}
\centering
\caption{Untransformed Data: Excluding I(2) Variables Based on \citet{smeekes2020unit}}

\begin{tabular}{c|ccc|ccc}
  \hline
  & \multicolumn{3}{c|}{Include Labor Variables} & \multicolumn{3}{c}{Exclude Labor Variables}   \tabularnewline 
  \hline
Sample Period & Dlasso & XDlasso & VR Test & Dlasso & XDlasso & VR Test \tabularnewline 
  \hline
Full Sample & -0.172** & 0.118 & \multirow{2}{*}{0.104} & 0.069** & 0.116*** & \multirow{2}{*}{0.015} \tabularnewline
 {\footnotesize \textit{(Jan. 1960 - Apr. 2025)}} & \textit{(0.070)} & \textit{(0.231)} & & \textit{(0.029)} & \textit{(0.041)} & \tabularnewline 
 \hline
Pre-Volcker & -0.169 & -0.022 & \multirow{2}{*}{0.516} & 0.018 & 0.106 & \multirow{2}{*}{0.555} \tabularnewline
 {\footnotesize \textit{(Jan. 1960 - Jul. 1979)}} & \textit{(0.150)} & \textit{(0.285)} & & \textit{(0.053)} & \textit{(0.108)} & \tabularnewline
Volcker-Greenspan & -0.041 & -0.200 & \multirow{2}{*}{0.669} & -0.009 & -0.189 & \multirow{2}{*}{0.641} \tabularnewline
 {\footnotesize \textit{(Aug. 1979 - Jan. 2006)}} & \textit{(0.207)} & \textit{(0.270)} & & \textit{(0.058)} & \textit{(0.123)} & \tabularnewline
Bernanke-Yellen-Powell & 0.550* & 0.294 & \multirow{2}{*}{0.220} & 0.081 & 0.162** & \multirow{2}{*}{0.094} \tabularnewline
 {\footnotesize \textit{(Feb. 2006 - Apr. 2025)}} & \textit{(0.309)} & \textit{(0.510)} & & \textit{(0.050)} & \textit{(0.076)} & \tabularnewline
  \hline
\end{tabular}

\end{subtable}
\hfill
\medskip

\begin{subtable}[t]{1\textwidth}
\centering
\caption{Log Transformation: Excluding I(2) Variables Based on \texttt{TCODE}}

\begin{tabular}{c|ccc|ccc}
  \hline
  & \multicolumn{3}{c|}{Include Labor Variables} & \multicolumn{3}{c}{Exclude Labor Variables}   \tabularnewline 
  \hline
Sample Period & Dlasso & XDlasso & VR Test & Dlasso & XDlasso & VR Test \tabularnewline 
  \hline
Full Sample & 0.014 & -0.103 & \multirow{2}{*}{0.118} & -0.063*** & 0.044 & \multirow{2}{*}{0.005} \tabularnewline
 {\footnotesize \textit{(Jan. 1960 - Apr. 2025)}} & \textit{(0.048)} & \textit{(0.103)} & & \textit{(0.016)} & \textit{(0.056)} & \tabularnewline 
 \hline
Pre-Volcker & -0.155 & -0.056 & \multirow{2}{*}{0.516} & -0.029 & 0.131 & \multirow{2}{*}{0.827} \tabularnewline
 {\footnotesize \textit{(Jan. 1960 - Jul. 1979)}} & \textit{(0.164)} & \textit{(0.266)} & & \textit{(0.048)} & \textit{(0.117)} & \tabularnewline
Volcker-Greenspan & 0.064 & -0.068 & \multirow{2}{*}{0.448} & -0.082 & -0.057 & \multirow{2}{*}{0.427} \tabularnewline
 {\footnotesize \textit{(Aug. 1979 - Jan. 2006)}} & \textit{(0.141)} & \textit{(0.198)} & & \textit{(0.060)} & \textit{(0.332)} & \tabularnewline
Bernanke-Yellen-Powell & 0.121 & 1.231 & \multirow{2}{*}{0.398} & -0.040 & 0.142 & \multirow{2}{*}{0.119} \tabularnewline
 {\footnotesize \textit{(Feb. 2006 - Apr. 2025)}} & \textit{(0.182)} & \textit{(0.754)} & & \textit{(0.057)} & \textit{(0.102)} & \tabularnewline
  \hline
\end{tabular}

\end{subtable}
\hfill
\medskip

\begin{subtable}[t]{1\textwidth}
\centering
\caption{Log Transformation: Excluding I(2) Variables Based on  \citet{smeekes2020unit}}

\begin{tabular}{c|ccc|ccc}
  \hline
  & \multicolumn{3}{c|}{Include Labor Variables} & \multicolumn{3}{c}{Exclude Labor Variables}   \tabularnewline 
  \hline
Sample Period & Dlasso & XDlasso & VR Test & Dlasso & XDlasso & VR Test \tabularnewline 
  \hline
Full Sample & -0.016 & -0.101 & \multirow{2}{*}{0.110} & -0.030** & 0.016 & \multirow{2}{*}{0.020} \tabularnewline
 {\footnotesize \textit{(Jan. 1960 - Apr. 2025)}} & \textit{(0.049)} & \textit{(0.102)} & & \textit{(0.013)} & \textit{(0.028)} & \tabularnewline 
 \hline
Pre-Volcker & -0.137 & -0.068 & \multirow{2}{*}{0.430} & -0.017 & 0.052 & \multirow{2}{*}{0.556} \tabularnewline
 {\footnotesize \textit{(Jan. 1960 - Jul. 1979)}} & \textit{(0.166)} & \textit{(0.250)} & & \textit{(0.052)} & \textit{(0.110)} & \tabularnewline
Volcker-Greenspan & -0.054 & -0.136 & \multirow{2}{*}{0.871} & -0.014 & -0.172 & \multirow{2}{*}{0.702} \tabularnewline
 {\footnotesize \textit{(Aug. 1979 - Jan. 2006)}} & \textit{(0.135)} & \textit{(0.304)} & & \textit{(0.064)} & \textit{(0.124)} & \tabularnewline
Bernanke-Yellen-Powell & 0.018 & 0.555 & \multirow{2}{*}{0.541} & 0.087* & 0.098* & \multirow{2}{*}{0.180} \tabularnewline
 {\footnotesize \textit{(Feb. 2006 - Apr. 2025)}} & \textit{(0.133)} & \textit{(0.439)} & & \textit{(0.047)} & \textit{(0.055)} & \tabularnewline
  \hline
\end{tabular}

\end{subtable}
}
\medskip

{ {\footnotesize\textit{{}Notes:}}{\footnotesize{} We report estimates
and the standard error (in parentheses below the estimates) across
methods and setups. The symbols {*}, {*}{*}, and {*}{*}{*} indicate
significance levels at 10\%, 5\%, and 1\%, respectively. ``VR Test''
represents the $p$-value of the variance ratio test \citep{kim2009automatic}
on the LASSO residual. The tuning parameter for LASSO estimation is
selected through 10-fold block cross-validation. In XDlasso, instruments
are generated based on \eqref{eq:IVX_original} and \eqref{eq:rho=000020zeta}
with $C_{\zeta}=5$ and $\tau=0.5$.}}
\end{table}

\clearpage

\subsection{Heteroskedasticity-Robust Standard Errors\protect\label{sec:robustse}}

To assess the robustness of our results to conditional heteroskedasticity,
we recompute the heteroskedasticity-robust standard errors in \eqref{eq:heterose}.
Tables \ref{tab:app_est_return_robust} and \ref{tab:app_est_pc_robust}
present the robust results for stock return and inflation predictability
parallel to those in Section \ref{sec:Empirical-Application} but
with heteroskedasticity-robust standard errors. Tables \ref{tab:app_est_sens_robust_return}
and \ref{tab:app_est_sens_robust_pc} report the sensitivity analyses
like Section \ref{sec:app_sens} with robust standard errors. Across
all cases, our results remain consistent with those in the baseline
analysis.

\begin{table}
\begin{centering}
\caption{Test $\mathbb{H}_{0}:\theta_{1}^{*}=0$ for stock return predictability
with heteroskedasticity-robust standard errors\protect\label{tab:app_est_return_robust}}
\par\end{centering}
{\small 
\begin{subtable}[t]{1\textwidth}
\centering
\caption{\texttt{TCODE} Transformed Data}

\begin{tabular}{c|ccc|ccc}
    \hline
    & \multicolumn{3}{c|}{Without \(\text{Return}_{t-1}\)} & \multicolumn{3}{c}{Include \(\text{Return}_{t-1}\)}   \tabularnewline 
    \hline
  Sample Period & Dlasso & XDlasso & VR Test  & Dlasso & XDlasso & VR Test \tabularnewline 
    \hline
  Full Sample & 0.009 & 0.003 & \multirow{2}{*}{0.074} & 0.009 & 0.005 & \multirow{2}{*}{0.216} \tabularnewline 
   {\footnotesize \textit{(Jan. 1960 - Apr. 2025)}} & \textit{(0.006)} & \textit{(0.014)} & \multirow{2}{*}{} & \textit{(0.006)} & \textit{(0.014)} & \multirow{2}{*}{} \tabularnewline 
   \hline
  Pre-1994 & 0.025*** & 0.059** & \multirow{2}{*}{0.227} & 0.024** & 0.062** & \multirow{2}{*}{0.296} \tabularnewline 
   {\footnotesize \textit{(Jan. 1960 - Dec. 1993)}} & \textit{(0.010)} & \textit{(0.029)} & \multirow{2}{*}{} & \textit{(0.009)} & \textit{(0.029)} & \multirow{2}{*}{} \tabularnewline 
  Post-1994 & 0.002 & -0.001 & \multirow{2}{*}{0.053} & 0.002 & -0.001 & \multirow{2}{*}{0.049} \tabularnewline 
   {\footnotesize \textit{(Jan. 1994 - Apr. 2025)}} & \textit{(0.006)} & \textit{(0.015)} & \multirow{2}{*}{} & \textit{(0.006)} & \textit{(0.015)} & \multirow{2}{*}{} \tabularnewline 
    \hline
  \end{tabular}

\end{subtable}
\hfill
\medskip

\begin{subtable}[t]{1\textwidth}
\centering
\caption{Untransformed Data: Excluding I(2) Variables Based on \texttt{TCODE}}

  \begin{tabular}{c|ccc|ccc}
    \hline
    & \multicolumn{3}{c|}{Without \(\text{Return}_{t-1}\)} & \multicolumn{3}{c}{Include \(\text{Return}_{t-1}\)}   \tabularnewline 
    \hline
  Sample Period & Dlasso & XDlasso & VR Test  & Dlasso & XDlasso & VR Test \tabularnewline 
    \hline
  Full Sample & 0.013 & -0.008 & \multirow{2}{*}{0.001} & 0.019 & 0.012 & \multirow{2}{*}{0.811} \tabularnewline 
   {\footnotesize \textit{(Jan. 1960 - Apr. 2025)}} & \textit{(0.014)} & \textit{(0.010)} & \multirow{2}{*}{} & \textit{(0.014)} & \textit{(0.010)} & \multirow{2}{*}{} \tabularnewline 
   \hline
  Pre-1994 & 0.064** & -0.312 & \multirow{2}{*}{0.046} & 0.055* & 0.096 & \multirow{2}{*}{0.467} \tabularnewline 
   {\footnotesize \textit{(Jan. 1960 - Dec. 1993)}} & \textit{(0.030)} & \textit{(0.296)} & \multirow{2}{*}{} & \textit{(0.031)} & \textit{(0.070)} & \multirow{2}{*}{} \tabularnewline 
  Post-1994 & -0.003 & -0.022 & \multirow{2}{*}{0.016} & -0.000 & -0.004 & \multirow{2}{*}{0.280} \tabularnewline 
   {\footnotesize \textit{(Jan. 1994 - Apr. 2025)}} & \textit{(0.008)} & \textit{(0.015)} & \multirow{2}{*}{} & \textit{(0.007)} & \textit{(0.014)} & \multirow{2}{*}{} \tabularnewline 
    \hline
\end{tabular}

\end{subtable}
}
\medskip

{ {\footnotesize\textit{{}Notes:}}{\footnotesize{} We report estimates
and the standard error (in parentheses below the estimates) across
methods and setups. The symbols {*}, {*}{*}, and {*}{*}{*} indicate
significance levels at 10\%, 5\%, and 1\%, respectively. ``VR Test''
represents the $p$-value of the variance ratio test \citep{kim2009automatic}
on the LASSO residual. The tuning parameter for LASSO estimation is
selected through 10-fold block cross-validation. In XDlasso, instruments
are generated based on \eqref{eq:IVX_original} and \eqref{eq:rho=000020zeta}
with $C_{\zeta}=5$ and $\tau=0.5$.}}
\end{table}

\begin{table}
\begin{centering}
\caption{Test $\mathbb{H}_{0}:\theta_{1}^{*}=0$ for inflation predictability with
heteroskedasticity-robust standard errors\protect\label{tab:app_est_pc_robust}}
\par\end{centering}
{\small 
\begin{subtable}[t]{1\textwidth}
\centering
\caption{\texttt{TCODE} Transformed Data}

\begin{tabular}{c|ccc}
  \hline
Sample Period & Dlasso & XDlasso & VR Test \tabularnewline 
  \hline
Full Sample & 0.018*** & -0.024 & \multirow{2}{*}{0.000} \tabularnewline 
 {\footnotesize \textit{(Jan. 1960 - Apr. 2025)}} & \textit{(0.006)} & \textit{(0.074)} \tabularnewline 
 \hline
Pre-Volcker & 0.074*** & 0.013 & \multirow{2}{*}{0.125} \tabularnewline 
 {\footnotesize \textit{(Jan. 1960 - Jul. 1979)}} & \textit{(0.018)} & \textit{(0.211)} \tabularnewline 
Volcker-Greenspan & -0.020 & 0.161 & \multirow{2}{*}{0.025} \tabularnewline 
 {\footnotesize \textit{(Aug. 1979 - Jan. 2006)}} & \textit{(0.020)} & \textit{(0.137)} \tabularnewline 
Bernanke/Yellen/Powell & -0.002 & -0.054 & \multirow{2}{*}{0.002} \tabularnewline 
 {\footnotesize \textit{(Feb. 2006 - Apr. 2025)}} & \textit{(0.010)} & \textit{(0.098)} \tabularnewline 
  \hline
\end{tabular}

\end{subtable}
\hfill
\medskip

\begin{subtable}[t]{1\textwidth}
\centering
\caption{Untransformed Data: Excluding I(2) Variables Based on \texttt{TCODE}}

\begin{tabular}{c|ccc|ccc}
  \hline
  & \multicolumn{3}{c|}{Include Labor Variables} & \multicolumn{3}{c}{Exclude Labor Variables}   \tabularnewline 
  \hline
Sample Period & Dlasso & XDlasso & VR Test  & Dlasso & XDlasso & VR Test \tabularnewline 
  \hline
Full Sample & -0.077 & 0.068 & \multirow{2}{*}{0.080} & -0.050** & 0.033 & \multirow{2}{*}{0.065} \tabularnewline 
 {\footnotesize \textit{(Jan. 1960 - Apr. 2025)}} & \textit{(0.068)} & \textit{(0.200)} &  & \textit{(0.022)} & \textit{(0.031)} & \tabularnewline 
 \hline
Pre-Volcker & -0.129 & -0.014 & \multirow{2}{*}{0.522} & 0.007 & 0.113 & \multirow{2}{*}{0.548} \tabularnewline 
 {\footnotesize \textit{(Jan. 1960 - Jul. 1979)}} & \textit{(0.102)} & \textit{(0.274)} & & \textit{(0.051)} & \textit{(0.089)} & \tabularnewline 
Volcker-Greenspan & 0.094 & -0.272 & \multirow{2}{*}{0.741} & -0.092* & -0.259 & \multirow{2}{*}{0.339} \tabularnewline 
 {\footnotesize \textit{(Aug. 1979 - Jan. 2006)}} & \textit{(0.185)} & \textit{(0.279)} &  & \textit{(0.054)} & \textit{(0.224)} &  \tabularnewline 
Bernanke-Yellen-Powell & 0.550 & 0.585 & \multirow{2}{*}{0.283} & 0.001 & 0.040 & \multirow{2}{*}{0.230} \tabularnewline 
 {\footnotesize \textit{(Feb. 2006 - Apr. 2025)}} & \textit{(0.522)} & \textit{(0.796)} & & \textit{(0.076)} & \textit{(0.080)} & \tabularnewline 
  \hline
\end{tabular}

\end{subtable}
}
\medskip

{ {\footnotesize\textit{{}Notes:}}{\footnotesize{} We report estimates
and the standard error (in parentheses below the estimates) across
methods and setups. The symbols {*}, {*}{*}, and {*}{*}{*} indicate
significance levels at 10\%, 5\%, and 1\%, respectively. ``VR Test''
represents the $p$-value of the variance ratio test \citep{kim2009automatic}
on the LASSO residual. The tuning parameter for LASSO estimation is
selected through 10-fold block cross-validation. In XDlasso, instruments
are generated based on \eqref{eq:IVX_original} and \eqref{eq:rho=000020zeta}
with $C_{\zeta}=5$ and $\tau=0.5$.}}
\end{table}

\begin{table}
\begin{centering}
\caption{Test $\mathbb{H}_{0}:\theta_{1}^{*}=0$ for stock return predictability:
Alternative set of I(2) variables and transformation with heteroskedasticity-robust
standard errors\protect\label{tab:app_est_sens_robust_return}}
\par\end{centering}
{\small 
\begin{subtable}[t]{1\textwidth}
\centering
\caption{Untransformed Data: Excluding I(2) Variables Based on \citet{smeekes2020unit}}

\begin{tabular}{c|ccc|ccc}
  \hline
  & \multicolumn{3}{c|}{Without $\text{Return}_{t-1}$} & \multicolumn{3}{c}{Include $\text{Return}_{t-1}$} \tabularnewline
  \hline
Sample Period & Dlasso & XDlasso & VR Test & Dlasso & XDlasso & VR Test \tabularnewline
  \hline
Full Sample & 0.020 & -0.008 & 0.000 & 0.025 & 0.012 & 0.831 \tabularnewline
{\footnotesize \textit{(Jan. 1960 - Apr. 2025)}} & \textit{(0.018)} & \textit{(0.010)} & & \textit{(0.017)} & \textit{(0.010)} & \tabularnewline
\hline
Pre-1994 & 0.035 & -0.208 & 0.049 & 0.044 & 0.168 & 0.494 \tabularnewline
{\footnotesize \textit{(Jan. 1960 - Dec. 1993)}} & \textit{(0.051)} & \textit{(0.229)} & & \textit{(0.042)} & \textit{(0.143)} & \tabularnewline
Post-1994 & -0.003 & -0.022 & 0.011 & -0.000 & -0.009 & 0.163 \tabularnewline
{\footnotesize \textit{(Jan. 1994 - Apr. 2025)}} & \textit{(0.008)} & \textit{(0.015)} & & \textit{(0.007)} & \textit{(0.015)} & \tabularnewline
  \hline
\end{tabular}

\end{subtable}
\hfill
\medskip

\begin{subtable}[t]{1\textwidth}
\centering
\caption{Log Transformation: Excluding I(2) Variables Based on \texttt{TCODE}}

\begin{tabular}{c|ccc|ccc}
  \hline
  & \multicolumn{3}{c|}{Without $\text{Return}_{t-1}$} & \multicolumn{3}{c}{Include $\text{Return}_{t-1}$} \tabularnewline
  \hline
Sample Period & Dlasso & XDlasso & VR Test & Dlasso & XDlasso & VR Test \tabularnewline
  \hline
Full Sample & 0.011 & -0.005 & 0.003 & 0.021 & 0.012 & 0.576 \tabularnewline
{\footnotesize \textit{(Jan. 1960 - Apr. 2025)}} & \textit{(0.014)} & \textit{(0.010)} & & \textit{(0.014)} & \textit{(0.010)} & \tabularnewline
\hline
Pre-1994 & 0.039 & -0.360 & 0.033 & 0.033 & 0.087 & 0.489 \tabularnewline
{\footnotesize \textit{(Jan. 1960 - Dec. 1993)}} & \textit{(0.043)} & \textit{(0.294)} & & \textit{(0.041)} & \textit{(0.206)} & \tabularnewline
Post-1994 & 0.041* & 0.020 & 0.010 & 0.045** & 0.033 & 0.258 \tabularnewline
{\footnotesize \textit{(Jan. 1994 - Apr. 2025)}} & \textit{(0.021)} & \textit{(0.028)} & & \textit{(0.020)} & \textit{(0.028)} & \tabularnewline
  \hline
\end{tabular}

\end{subtable}
\hfill
\medskip

\begin{subtable}[t]{1\textwidth}
\centering
\caption{Log Transformation: Excluding I(2) Variables Based on \citet{smeekes2020unit}}

\begin{tabular}{c|ccc|ccc}
  \hline
  & \multicolumn{3}{c|}{Without $\text{Return}_{t-1}$} & \multicolumn{3}{c}{Include $\text{Return}_{t-1}$} \tabularnewline
  \hline
Sample Period & Dlasso & XDlasso & VR Test & Dlasso & XDlasso & VR Test \tabularnewline
  \hline
Full Sample & 0.016 & -0.009 & 0.001 & 0.022* & 0.013 & 0.559 \tabularnewline
{\footnotesize \textit{(Jan. 1960 - Apr. 2025)}} & \textit{(0.012)} & \textit{(0.013)} & & \textit{(0.012)} & \textit{(0.010)} & \tabularnewline
\hline
Pre-1994 & 0.106* & -0.394 & 0.032 & 0.108** & 0.101 & 0.476 \tabularnewline
{\footnotesize \textit{(Jan. 1960 - Dec. 1993)}} & \textit{(0.056)} & \textit{(0.329)} & & \textit{(0.055)} & \textit{(0.275)} & \tabularnewline
Post-1994 & 0.018 & -0.001 & 0.010 & 0.021 & 0.016 & 0.216 \tabularnewline
{\footnotesize \textit{(Jan. 1994 - Apr. 2025)}} & \textit{(0.016)} & \textit{(0.035)} & & \textit{(0.016)} & \textit{(0.025)} & \tabularnewline
  \hline
\end{tabular}

\end{subtable}
}
\medskip

{ {\footnotesize\textit{{}Notes:}}{\footnotesize{} We report estimates
and the standard error (in parentheses below the estimates) across
methods and setups. The symbols {*}, {*}{*}, and {*}{*}{*} indicate
significance levels at 10\%, 5\%, and 1\%, respectively. ``VR Test''
represents the $p$-value of the variance ratio test \citep{kim2009automatic}
on the LASSO residual. The tuning parameter for LASSO estimation is
selected through 10-fold block cross-validation. In XDlasso, instruments
are generated based on \eqref{eq:IVX_original} and \eqref{eq:rho=000020zeta}
with $C_{\zeta}=5$ and $\tau=0.5$.}}
\end{table}

\begin{table}
\begin{centering}
\caption{Test $\mathbb{H}_{0}:\theta_{1}^{*}=0$ for inflation predictability: Alternative
set of I(2) variables and transformation with heteroskedasticity-robust
standard errors \protect\label{tab:app_est_sens_robust_pc}}
\par\end{centering}
{\small 
\begin{subtable}[t]{1\textwidth}
\centering
\caption{Untransformed Data: Excluding I(2) Variables Based on \citet{smeekes2020unit}}

\begin{tabular}{c|ccc|ccc}
  \hline
  & \multicolumn{3}{c|}{Include Labor Variables} & \multicolumn{3}{c}{Exclude Labor Variables}   \tabularnewline 
  \hline
Sample Period & Dlasso & XDlasso & VR Test & Dlasso & XDlasso & VR Test \tabularnewline 
  \hline
Full Sample & -0.172** & 0.118 & \multirow{2}{*}{0.104} & 0.069* & 0.116*** & \multirow{2}{*}{0.015} \tabularnewline
 {\footnotesize \textit{(Jan. 1960 - Apr. 2025)}} & \textit{(0.072)} & \textit{(0.206)} & & \textit{(0.036)} & \textit{(0.042)} & \tabularnewline 
 \hline
Pre-Volcker & -0.169 & -0.022 & \multirow{2}{*}{0.516} & 0.018 & 0.106 & \multirow{2}{*}{0.555} \tabularnewline
 {\footnotesize \textit{(Jan. 1960 - Jul. 1979)}} & \textit{(0.134)} & \textit{(0.281)} & & \textit{(0.054)} & \textit{(0.097)} & \tabularnewline
Volcker-Greenspan & -0.041 & -0.200 & \multirow{2}{*}{0.669} & -0.009 & -0.189 & \multirow{2}{*}{0.641} \tabularnewline
 {\footnotesize \textit{(Aug. 1979 - Jan. 2006)}} & \textit{(0.188)} & \textit{(0.265)} & & \textit{(0.053)} & \textit{(0.120)} & \tabularnewline
Bernanke-Yellen-Powell & 0.550 & 0.294 & \multirow{2}{*}{0.220} & 0.081 & 0.162* & \multirow{2}{*}{0.094} \tabularnewline
 {\footnotesize \textit{(Feb. 2006 - Apr. 2025)}} & \textit{(0.395)} & \textit{(0.394)} & & \textit{(0.057)} & \textit{(0.083)} & \tabularnewline
  \hline
\end{tabular}

\end{subtable}
\hfill
\medskip

\begin{subtable}[t]{1\textwidth}
\centering
\caption{Log Transformation: Excluding I(2) Variables Based on \texttt{TCODE}}

\begin{tabular}{c|ccc|ccc}
  \hline
  & \multicolumn{3}{c|}{Include Labor Variables} & \multicolumn{3}{c}{Exclude Labor Variables}   \tabularnewline 
  \hline
Sample Period & Dlasso & XDlasso & VR Test & Dlasso & XDlasso & VR Test \tabularnewline 
  \hline
Full Sample & 0.014 & -0.103 & \multirow{2}{*}{0.118} & -0.063*** & 0.044 & \multirow{2}{*}{0.005} \tabularnewline
 {\footnotesize \textit{(Jan. 1960 - Apr. 2025)}} & \textit{(0.056)} & \textit{(0.102)} & & \textit{(0.021)} & \textit{(0.057)} & \tabularnewline 
 \hline
Pre-Volcker & -0.155 & -0.056 & \multirow{2}{*}{0.516} & -0.029 & 0.131 & \multirow{2}{*}{0.827} \tabularnewline
 {\footnotesize \textit{(Jan. 1960 - Jul. 1979)}} & \textit{(0.153)} & \textit{(0.256)} & & \textit{(0.048)} & \textit{(0.103)} & \tabularnewline
Volcker-Greenspan & 0.064 & -0.068 & \multirow{2}{*}{0.448} & -0.082 & -0.057 & \multirow{2}{*}{0.427} \tabularnewline
 {\footnotesize \textit{(Aug. 1979 - Jan. 2006)}} & \textit{(0.131)} & \textit{(0.197)} & & \textit{(0.065)} & \textit{(0.306)} & \tabularnewline
Bernanke-Yellen-Powell & 0.121 & 1.231* & \multirow{2}{*}{0.398} & -0.040 & 0.142 & \multirow{2}{*}{0.119} \tabularnewline
 {\footnotesize \textit{(Feb. 2006 - Apr. 2025)}} & \textit{(0.183)} & \textit{(0.729)} & & \textit{(0.084)} & \textit{(0.100)} & \tabularnewline
  \hline
\end{tabular}

\end{subtable}
\hfill
\medskip

\begin{subtable}[t]{1\textwidth}
\centering
\caption{Log Transformation: Excluding I(2) Variables Based on \citet{smeekes2020unit}}

\begin{tabular}{c|ccc|ccc}
  \hline
  & \multicolumn{3}{c|}{Include Labor Variables} & \multicolumn{3}{c}{Exclude Labor Variables}   \tabularnewline 
  \hline
Sample Period & Dlasso & XDlasso & VR Test & Dlasso & XDlasso & VR Test \tabularnewline 
  \hline
Full Sample & -0.016 & -0.101 & \multirow{2}{*}{0.110} & -0.030* & 0.016 & \multirow{2}{*}{0.020} \tabularnewline
 {\footnotesize \textit{(Jan. 1960 - Apr. 2025)}} & \textit{(0.057)} & \textit{(0.102)} & & \textit{(0.016)} & \textit{(0.023)} & \tabularnewline 
 \hline
Pre-Volcker & -0.137 & -0.068 & \multirow{2}{*}{0.430} & -0.017 & 0.052 & \multirow{2}{*}{0.556} \tabularnewline
 {\footnotesize \textit{(Jan. 1960 - Jul. 1979)}} & \textit{(0.154)} & \textit{(0.246)} & & \textit{(0.054)} & \textit{(0.097)} & \tabularnewline
Volcker-Greenspan & -0.054 & -0.136 & \multirow{2}{*}{0.871} & -0.014 & -0.172 & \multirow{2}{*}{0.702} \tabularnewline
 {\footnotesize \textit{(Aug. 1979 - Jan. 2006)}} & \textit{(0.131)} & \textit{(0.314)} & & \textit{(0.060)} & \textit{(0.118)} & \tabularnewline
Bernanke-Yellen-Powell & 0.018 & 0.555 & \multirow{2}{*}{0.541} & 0.087* & 0.098* & \multirow{2}{*}{0.180} \tabularnewline
 {\footnotesize \textit{(Feb. 2006 - Apr. 2025)}} & \textit{(0.134)} & \textit{(0.438)} & & \textit{(0.049)} & \textit{(0.055)} & \tabularnewline
  \hline
\end{tabular}

\end{subtable}
}
\medskip

{ {\footnotesize\textit{{}Notes:}}{\footnotesize{} We report estimates
and the standard error (in parentheses below the estimates) across
methods and setups. The symbols {*}, {*}{*}, and {*}{*}{*} indicate
significance levels at 10\%, 5\%, and 1\%, respectively. ``VR Test''
represents the $p$-value of the variance ratio test \citep{kim2009automatic}
on the LASSO residual. The tuning parameter for LASSO estimation is
selected through 10-fold block cross-validation. In XDlasso, instruments
are generated based on \eqref{eq:IVX_original} and \eqref{eq:rho=000020zeta}
with $C_{\zeta}=5$ and $\tau=0.5$.}}
\end{table}

\end{appendices}
\end{document}